\theoremstyle{definition}
\newtheorem{notation}{Notation}
\newtheorem{definition}{Definition}
\newtheorem{proposition}{Proposition}
\newtheorem{theorem}{Theorem}
\newtheorem{corollary}{Corollary}
\newtheorem{lemma}{Lemma}
\newtheorem{remark}{Remark}
\newcommand{\arrowflow}{
\tikz \draw[-{to}] (-1pt,0) -- (1pt,0);
}
\newcommand{\invertedarrowflow}{
\tikz \draw[-{to}] (1pt,0) -- (-1pt,0);
}
\newcommand{\flow}[4]{ 
	\draw[#3] (#1) to [#4] node [pos=0.5] {\arrowflow} (#2);
}
\newcommand{\flowdiag}[5]{ 
	\draw[#3] (#1) to [#4] node [#5] {\arrowflow} (#2);
}
\tikzset{fill fraction/.style n args={1}{path picture={
 \fill[black] (path picture bounding box.south west) rectangle
 ($(path picture bounding box.north west)!0.5!(path picture bounding box.north
 east)$);}}
}
\newcommand{\qmatch}[4]{ 
\node[draw,fill=white,fill fraction={black}{0.5},inner sep=1.2pt,minimum size=3pt,label=left:{\scriptsize #4}] (#1) at (#2,#3) {};
}
\newcommand{\dmatch}[5]{ 
\node[circle,draw,fill=white,fill fraction={black}{0.5},inner sep=1.2pt,minimum size=3pt,label=#5:{\scriptsize #4}] (#1) at (#2,#3) {};
}
\newcommand{\qinplain}[4]{ 
\node[draw=black,fill=white,inner sep=0pt,minimum size=3pt,label=\scriptsize #4] (#1) at (#2,#3) {};
}
\newcommand{\qin}[4]{ 
\flow{{#2,#3}}{$(#2,#3)+(0.8,0)$}{dashed}{};
\qinplain{#1}{#2}{#3}{#4};
}
\newcommand{\dinplain}[4]{ 
\node[circle,draw=black,fill=white,inner sep=0pt,minimum size=3pt,label=left:\scriptsize #4] (#1) at (#2,#3) {};
}
\newcommand{\din}[4]{ 
\flow{{#2,#3}}{$(#2,#3)+(0.8,0)$}{}{};
\dinplain{#1}{#2}{#3}{#4};
}
\newcommand{\qoutplain}[4]{ 
\node[fill=black,inner sep=0pt,minimum size=3pt,label=\scriptsize #4] (#1) at (#2+0.8,#3) {};
}
\newcommand{\qout}[4]{ 
\flow{{#2,#3}}{$(#2,#3)+(0.8,0)$}{dashed}{};
\qoutplain{#1}{#2}{#3}{#4};
}
\newcommand{\doutplain}[4]{ 
\node[circle,fill=black,inner sep=0pt,minimum size=3pt,label=right:\scriptsize #4] (#1) at (#2+0.8,#3) {};
}
\newcommand{\dout}[4]{ 
\flow{{#2,#3}}{$(#2,#3)+(0.8,0)$}{}{};
\doutplain{#1}{#2}{#3}{#4};
}
\newcommand{\forkplain}[2]{ 
\draw[fill=gray!30] (#1+1,#2+0.25) node[anchor=north]{}
  -- (#1+1.25,#2+0.5) node[anchor=north]{}
  -- (#1+1.25,#2) node[anchor=south]{}
  -- cycle;
}
\newcommand{\fork}[5]{ 
\forkplain{#1}{#2};
\flow{{#1,#2+0.25}}{{#1+1,#2+0.25}}{dashed}{};
\flow{{#1+1.25,#2+0.4}}{{#1+1.25+0.95,#2+0.4}}{dashed}{};
\flow{{#1+1.25,#2+0.1}}{{#1+1.25+0.95,#2+0.1}}{dashed}{};
\node[draw=black,fill=white,inner sep=0pt,minimum size=3pt] (p) at (#1,#2+0.25) {};
\node[draw=black,fill=black,inner sep=0pt,minimum size=3pt] (q) at (#1+2.25,#2+0.4) {};
\node[draw=black,fill=black,inner sep=0pt,minimum size=3pt] (r) at (#1+2.25,#2+0.1) {};
}
\newcommand{\joinplain}[2]{ 
\draw[fill=gray!30] (#1+1.25,#2+0.25) node[anchor=north]{}
  -- (#1+1,#2+0.5) node[anchor=north]{}
  -- (#1+1,#2) node[anchor=south]{}
  -- cycle;
}
\newcommand{\join}[5]{ 
\joinplain{#1}{#2};
\flow{{#1+1.25,#2+0.25}}{{#1+1.25+0.95,#2+0.25}}{dashed}{};
\flow{{#1,#2+0.4}}{{#1+1,#2+0.4}}{dashed}{};
\flow{{#1,#2+0.1}}{{#1+1,#2+0.1}}{dashed}{};
\node[draw=black,fill=black,inner sep=0pt,minimum size=3pt] (p) at (#1+2.25,#2+0.25) {};
\node[draw=black,fill=white,inner sep=0pt,minimum size=3pt] (q) at (#1,#2+0.4) {};
\node[draw=black,fill=white,inner sep=0pt,minimum size=3pt] (r) at (#1,#2+0.1) {};
}
\newcommand{\computonPrimitive}[5]{ 
\draw[draw=black,fill=white] (#1,#2) rectangle ++(#3,#4) node[pos=0.5]{#5};
}
\newcommand{\computonComposite}[4]{ 
\draw[draw=black,fill=black!2] (#1,#2) rectangle ++(#3,#4);
}
\title{Compositional Separation of Control Flow and Data Flow}
\date{}
\begin{document}

\hypersetup{
  urlcolor=Blue4,
  citecolor=Blue4,
  linkcolor=Blue4
}

\begin{frontmatter}
\author[]{Damian Arellanes \\ Lancaster University, United Kingdom}
\begin{abstract}
Every Model of High-Level Computation (MHC) has an underlying composition mechanism for combining simple computing devices into more complex ones. Composition can be done by (explicitly or implicitly) defining control flow, data flow or any combination thereof. Control flow specifies the order in which individual computations are activated, whereas data flow defines how data is exchanged among them. Unfortunately, traditional MHCs either mix data and control or only consider one dimension explicitly, which makes it difficult to reason about data flow and control flow separately. Reasoning about these dimensions orthogonally is a crucial desideratum for optimisation, maintainability and verification purposes. In this paper, we introduce a novel MHC that explicitly treats data flow and control flow as separate dimensions, while providing modularity. As the model is rooted in category theory, it provides category-theoretic operations for compositionally constructing sequential, parallel, branching or iterative composites. Compositionality entails that a composite exhibits the same properties as its respective constituents, including separation of concerns and modularity. We conclude the paper by demonstrating how our proposal can be used to model high-level computations in two different application domains: software engineering and artificial intelligence. 
\end{abstract}
\end{frontmatter}

\section{Introduction}
\label{sec:Introduction}

In the context of theoretical computer science, \emph{compositionality} refers to the property of Models of High-Level Computation (MHCs) that allows the inductive definition of complex computing devices from simpler ones \cite{tripakis_modular_2013,baez_open_2020,arellanes_models_2025,basu_rigorous_2011}. MHCs raise the level of abstraction of their classical, low-level counterpart (e.g., Turing Machines) by giving a birds-eye-view of multiple interacting devices which can individually correspond to low- or even high-level computations \emph{per se}. As such devices are treated as black boxes, their internal details are irrelevant. What matters is how to compose/glue them into high-level abstractions, e.g., a sequential process to trigger the computation of devices A and B, in that order. Apart from moving up the ladder of abstraction, another characteristic defining difference with respect to their classical counterpart is that MHCs can be open in the sense they can compute on data streams coming from the external world, which make them suitable to be useful in the actual construction or simulation of complex computing systems \cite{arbab_composition_2006}. Examples of MHCs include component models \cite{lau_introduction_2017}, workflow languages \cite{van_der_aalst_application_1998} and process algebras \cite{best_box_2002}. 

When composition is done algebraically, the resulting computation structures (known as \emph{composites}) exhibit the same characteristics as their constituents \cite{lau_introduction_2017}. Algebraic compositionality can be realised by the composition of \emph{control flow} \cite{arellanes_evaluating_2020} or \emph{data flow} \cite{arbab_reo_2004}. Control flow defines the order in which individual computing devices are computed, whereas data flow defines how data is passed among them. Traditionally, MHCs do not support algebraic composition and they allow the definition of computations in which data follows control. This sort of coupling makes it difficult (i) to (formally) reason about computation order and data production/consumption separately and (ii) to explicitly distinguish between control and data dependencies \cite{message_programming_2013}. Consequently, it is hard to (1) verify these dimensions independently \cite{varea_dual_2006,clarke_model_2008}, or (2) modify/optimise/transform control flow without affecting data flow (or vice versa)  \cite{arellanes_decentralized_2023,vanderbauwhede_separation_2007}. As separating control from data addresses (1) for increased reliability and (2) for enhanced maintenance, enabling such a separation within the foundational semantics of any MHC is a crucial desideratum. Facilitating this in a compositional manner is far from trivial but it can provide additional benefits such as modularity for reusing high-level computations at scale (i.e., functional scalability \cite{arellanes_evaluating_2020}) and compositional verification towards soundness-by-construction \cite{bordis_correctness-by-construction_2023}. For example, we can verify termination compositionally through the analysis of control flow only, i.e., without considering data flow at all. 

While \emph{control-based composition} approaches define explicit control flow for the coordination of computing devices, \emph{data-based composition} defines implicit control in the collaborative exchange of data \cite{arellanes_analysis_2018}. Thus, the notion of control flow has higher precedence than that of data flow because it is always present in any composition mechanism (and not the other way round).\footnote{A potential explanation is that the notion of control flow is tightly linked to the arrow of time, whereas data is just a piece of information that is sent from one place to another in some specific time-dependent order.} In fact, it is possible to compose complex computations by control flow only and without the need of passing data at all (cf. actuator composition \cite{arellanes_algebraic_2018}).\footnote{Even if we argue that control is a piece of information/data, the act of sending it from one computing device to another is still governed by the grandiose, apparently unavoidable, notion of control flow (cf. interleaving semantics for global execution traces in concurrent systems \cite{sebesta_concepts_2018}).} More generally, in every model of computation, control flow is always present no matter if it is implicit or explicit. For instance, a finite state automaton can describe the finite control of a Turing machine when data is abstracted away. Here, control structures (e.g., sequencing, branching or looping) are not explicitly given but they rather emerge from state transitions \cite{arellanes_models_2025}. 

As control flow is ever present, we believe that the right way of constructing complex, high-level computational behaviours is through a control-based composition mechanism that does not neglect the role of data passing. Accordingly, in this paper we propose a model in which fundamental units of composition (known as \emph{computons}) are passive open systems able to interact with their environment via an interface which consists of input and output ports. As a port is a structural construct that can exclusively buffer either data or control, computons exchange data and control separately. Our model is compositional in the sense computons can be inductively composed into larger ones via well-defined control-based composition operations. Such operations are rooted in category theory and allow the formal construction of sequential, parallel, branching or iterative computons from simpler ones. As a result of composition semantics, composites preserve the structure of the composed entities, so they also separate data and control and their port-based interface is inductively constructed from the composed computons. Remarkably, unlike existing compositional approaches, any two computons can always be composed sequentially or in parallel, regardless of the data they require or produce. Rather than focusing on specific operational semantics, the focus of this paper is to provide formal category-theoretic operators, built upon colimit semantics, for the compositional construction of computing devices (i.e., computons) that separate data flow and control flow. Although different ways of expressing computon behaviour are possible, because the model is independent of concrete execution semantics, in this paper we use the \emph{token game} from the theory of classical P/T Petri nets for this purpose, as shown in Figure~\ref{fig:intro}.

\vspace{-2pt}

\begin{figure}[!h]
\centering
\begin{tikzpicture}[scale=0.83]
\node[rectangle, draw, align=center] (box) at (0.8,2.4) {\scriptsize Category-Theoretic\\\scriptsize Composition\\\scriptsize Semantics};
\draw [decorate,decoration={brace,amplitude=5pt,mirror,raise=4ex}] (-0.22,3.3) -- (-0.22,1.5) node[midway,xshift=-5em]{\scriptsize Computons};

\draw[-{Latex[length=-0.1pt 4.9]}, line width=2.2pt] (0.8,1.45) -- (0.8,0.65);

\node[rectangle, draw, align=center] (box) at (0.8,0) {\scriptsize Operational\\\scriptsize Semantics};
\draw [decorate,decoration={brace,amplitude=5pt,mirror,raise=4ex}] (0.25,0.6) -- (0.25,-0.6) node[midway,xshift=-5em]{};
\node at (-2.45,0){\scriptsize P/T Petri Nets};
\end{tikzpicture}
\caption{This paper is primarily focused on (category-theoretic) composition semantics to study control-based composition in its pure form, without adhering to specific execution details. Separating composition from operation allows us to describe computon behaviour through different execution models. Although here we use the well-known token game from classical P/T Petri nets, other formalisms can be used for the same purpose such as natural semantics \cite{kahn_natural_1987}, small-step semantics \cite{plotkin_origins_2004} or even timed Petri nets \cite{ramchandani_analysis_1974}, just to name a few.}
\label{fig:intro}
\end{figure}

By focusing on the fundamentals of control-based composition while capturing the core mechanisms that underlie all MHCs (i.e., control flow and data flow), the model we propose is intended to be a universal meta-framework for the formal reasoning of every possible high-level computation. This generality, coupled with the fact that we do not subscribe to concrete execution semantics or implementation details, allows the model's foundations to serve as the basis for the future development of concrete specification languages, implementation frameworks or even standard tools that can be used by software engineers to compositionally construct complex software systems that exhibit separation of control and data flow. For example, the proposed model can be implemented via the Alloy specification language \cite{jackson_alloy_2019} so as to support model checking. If less assurances are preferred or visual modelling is required, mainstream languages can be used instead such as Java or JavaScript. 

\vspace{2.5pt}

Offering generality also enables the combination of distinct models of computation. For example, the behaviour of a (non-composite) computon can be expressed as a Turing machine, whereas the behaviour of another (non-composite) can be defined as a (typed) lambda abstraction. Then, using the proposed sequencing operator, we can form a composite computon for executing the Turing machine before supplying its output to the lambda abstraction in order to perform a reduction operation.\footnote{Of course, in practice, the Turing machine's output might need to be encoded to make it compatible with the input expected by the lambda abstraction, either by a run-time environment or by introducing an intermediate computon for doing the encoding.} If a parallelising operator is used instead, then these two computing devices can be triggered at the same time.

\vspace{2.5pt}

Apart from enabling generality, orthogonalising data and control and separating operational from composition semantics, our model facilitates modularity and encapsulation as a result of realising compositionality. Encapsulation allows treating computons unifiedly, and is realised by the fact that a composite defines explicit control and data flow structures that can only be accessed through a well-defined interface. Thus, a computon can be perceived as an encapsulated black-box that can only interact with other computons via its visible ports. Hiding internal structure in this way enable us to build computons of considerable complexity.

\vspace{2.5pt}

As control is what governs computation, one of the primary objectives of the computon model is to facilitate the compositional formation of explicit control flow structures. As such structures are orthogonal to data, data flow can be ``disregarded'' to independently reasoning about execution order, i.e., unlike models where control is implicit, execution order does not need to be ``discovered''. In Section \ref{sec:computons}, we formalise functors for this purpose, which can be useful for static analysis or verification purposes.

\vspace{2pt}

The rest of the paper is structured as follows. Section \ref{sec:computons} presents the definition of computons by treating them as set-valued functors in a category we introduce which we refer to as the \emph{category of computons and computon morphisms}. Although this paper is mainly focused on the compositional construction of such entities and not on their operational semantics, Section \ref{sec:operational-semantics} discusses the notion of computon execution via the well-known \emph{token game} that has been used in the context of classical P/T Petri nets for many years. Sections \ref{sec:trivial-computons} and \ref{sec:primitive-computons} present the most elementary classes of computons that serve as building blocks for constructing complex composites. Section \ref{sec:composite-computons} describes formal category-theoretic operations to form sequential, parallel, branching or iterative composites. Section \ref{sec:applications} provides two applications of the proposed model in the domains of software engineering and artificial intelligence, by discussing compositional system construction and showing how the separation of control and data flow can be exploited for model transformation purposes. Section \ref{sec:related-work} presents and analyses related work, and Section \ref{sec:conclusions} outlines the conclusions and future directions of our work. A mapping from Petri net syntax to the graphical notation we use to represent computons is provided in \ref{sec:appendix-mapping}.
\section{Computons} \label{sec:computons}

Intuitively, a \emph{computon} is a bipartite graph with two types of nodes: \emph{computation units} and \emph{ports}.\footnote{The word \emph{computon} derives from the Latin root for computation (i.e., \emph{computatio}) and the Greek suffix \emph{-on}. In Physics, such a suffix is traditionally used to designate subatomic particle names \cite{arellanes_composition_2024}.} A computation unit is a construct that receives information in ports, performs some computation and produces new information in other ports. A port that is connected between two computation units is known as \emph{internal port} (or \emph{i-port}), whereas a port that is exclusively connected to or from a computation unit is called \emph{external} (or \emph{e-port}). As ports and computation units are connected via edges, edges represent information flow ranging from control signals to complex data values.\footnote{In the context of resource theories \cite{coecke_mathematical_2016}, data ports correspond to resource wires.}

The interface of a computon towards the outside world is determined by its collection of e-ports each being an external control inport (\emph{ec-inport}), an external control outport (\emph{ec-outport}), an external data inport (\emph{ed-inport}) or an external data outport (\emph{ed-outport}). An ec-inport is where control flow originates, an ec-outport is where control flow terminates, an ed-inport stores data coming from the external world whereas an ed-outport stores data resulting from the computon's operation. A port that is ec-inport and ec-outport is called \emph{ec-inoutport}. Similarly, if it is both ed-inport and ed-outport, it is called \emph{ed-inoutport}. 

Figure \ref{fig:port-taxonomy} presents the naming system we employ to derive port names where it is clear that, just as there are \emph{ec-ports} and \emph{ed-ports}, there are also internal control ports (\emph{ic-ports}) and internal data ports (\emph{id-ports}). 

\begin{figure}[!h]
\centering
\begin{tikzpicture}
\node(e) at (0,2.75) {e};
\node(ec) at (1,3.5) {c};
\node(eci) at (2.5,4) {in};
\node(eco) at (2.5,3.5) {out};
\node(ecio) at (2.5,3) {inout};
\node(ecip) at (4,4) {port};
\node(ecop) at (4,3.5) {port};
\node(eciop) at (4,3) {port};
\node(ed) at (1,2) {d};
\node(edi) at (2.5,2.5) {in};
\node(edo) at (2.5,2) {out};
\node(edio) at (2.5,1.5) {inout};
\node(edip) at (4,2.5) {port};
\node(edop) at (4,2) {port};
\node(ediop) at (4,1.5) {port};

\node(ic) at (1,1) {c};
\node(icp) at (4,1) {port};
\node(i) at (0,0.5) {i};
\node(id) at (1,0) {d};
\node(idp) at (4,0) {port};

\node[opacity=0.3] (ecipn) at (6,4) {ec-inport};
\node[opacity=0.3] (ecopn) at (6,3.5) {ec-outport};
\node[opacity=0.3] (eciopn) at (6,3) {ec-inoutport};
\node[opacity=0.3] (edipn) at (6,2.5) {ed-inport};
\node[opacity=0.3] (edopn) at (6,2) {ed-outport};
\node[opacity=0.3] (ediopn) at (6,1.5) {ed-inoutport};
\node[opacity=0.3] (icpn) at (6,1) {ic-port};
\node[opacity=0.3] (idpn) at (6,0) {id-port};

\draw (id) -- (idp);
\draw (ic) -- (icp);
\draw (edio) -- (ediop);
\draw (edo) -- (edop);
\draw (edi) -- (edip);
\draw (ecio) -- (eciop);
\draw (eco) -- (ecop);
\draw (eci) -- (ecip);
\draw (e) -- (ec);
\draw (e) -- (ed);
\draw (ec) -- (eci);
\draw (ec) -- (eco);
\draw (ec) -- (ecio);
\draw (ed) -- (edi);
\draw (ed) -- (edo);
\draw (ed) -- (edio);
\draw (i) -- (ic);
\draw (i) -- (id);  
\end{tikzpicture}
\caption{Taxonomy of port types. The letter \emph{e} stands for external, \emph{i} for internal, \emph{c} for control and \emph{d} for data. External ports serve as inputs, outputs or both for facilitating interaction with the external world. Internal ports lack this characteristic as they solely mediate communication within the internals of a computon. For the sake of conciseness, we write dashes only after \emph{c} or \emph{d}, rather than writing them for each level of the hierarchy. For example, the path \emph{e-c-out-port} is abbreviated as \emph{ec-outport}. In total, there are eight possible paths, i.e., eight possible port types.}
\label{fig:port-taxonomy}
\end{figure}

The dichotomy between data and control in our proposal entails that a computon is a unit of control-driven computation wherein control signals and data values travel independently via edges. Ports are differentiated by colours which, in practice, can be abstract data types such as the type of booleans or the type of integers. In this paper, in order to provide a general ``type distinction framework'', ports are deliberatively coloured with natural numbers. Although non-concrete sets could be used in principle, $\mathbb{N}$ is particularly convenient because it simplifies the presentation and interpretation of our approach by offering familiar, ready-to-use elements (i.e., colours) that can later be associated with concrete types. For example, in one type system, $2$ may represent the type of booleans, $3$ the type of floats and $4$ the type of strings; whereas in another system, such numbers may denote characters, strings and integers, correspondingly. Such a description is straightforward precisely because we work with $\mathbb{N}$; using a fully abstract set would make this kind of explanations considerably less intuitive. In any case, we require a designated colour for control ports so we use $0$ for that purpose. Any other natural number is used for data ports, as shown in Figure \ref{fig:computon-syntax-tables}(a).

\begin{figure}[!h]
\centering
\subcaptionbox{Syntax for ports and flows.}
{
\begin{tblr}{
    colspec = {|c|c|c|c|c|c|},
    cell{6}{4} = {gray!10!white},cell{6}{5} = {gray!10!white},cell{6}{6} = {gray!10!white},
    cell{11}{4} = {gray!10!white},cell{11}{5} = {gray!10!white},cell{11}{6} = {gray!10!white}
  }
 \hline
 & & \small\emph{Syntax} & \small\emph{Colour Set} & \small\emph{Incoming Edges} & \small\emph{Outgoing Edges} \\
 \hline
 \SetCell[r=4]{m}\small Control & \small\emph{ec-inport} & \tikz{\node[draw=black,fill=white,inner sep=0pt,minimum size=3pt] (q0) at (0,0) {};} & \small${\{0\}}$ & \small Never & \small Always \\
 \cline{2-6}
 & \small\emph{ec-outport} & \tikz{\node[fill=black,inner sep=0pt,minimum size=3pt] (q0) at (0,0) {};} & \small${\{0\}}$ & \small Always & \small Never \\ 
 \cline{2-6}
 & \small\emph{ec-inoutport} & \tikz{\node[draw,fill=white,fill fraction={black}{0.5},inner sep=1.2pt,minimum size=3pt] (q) at (0,0) {};} & \small${\{0\}}$ & \small Never & \small Never \\ 
 \cline{2-6}
 & \small\emph{ic-port} & \tikz{\node[draw,fill=white,fill fraction={black}{0.5},inner sep=1.2pt,minimum size=3pt] (q) at (0,0) {};} & \small${\{0\}}$ & \small Always & \small Always \\ 
 \cline{2-6}
 & \small\emph{control flow} & \tikz{\flow{{0,0}}{{0.8,0}}{dashed}{};} &  & &  \\ 
 \hline
 \SetCell[r=4]{m}\small Data & \small\emph{ed-inport} & \tikz{\node[circle,draw=black,fill=white,inner sep=0pt,minimum size=3pt] (q0) at (0,0) {};} & \small${\mathbb{N}^+}$ & \small Never & \small Always \\
 \cline{2-6}
 & \small\emph{ed-outport} & \tikz{\node[circle,fill=black,inner sep=0pt,minimum size=3.4pt] (q0) at (0,0) {};} & \small${\mathbb{N}^+}$ & \small Always & \small Never \\ 
 \cline{2-6}
 & \small\emph{ed-inoutport} & \tikz{\node[circle,draw,fill=white,fill fraction={black}{0.5},inner sep=1.2pt,minimum size=3pt] (q) at (0,0) {};} & \small${\mathbb{N}^+}$ & \small Never & \small Never \\ 
 \cline{2-6}
 & \small\emph{id-port} & \tikz{\node[circle,draw,fill=white,fill fraction={black}{0.5},inner sep=1.2pt,minimum size=3pt] (q) at (0,0) {};} & \small${\mathbb{N}^+}$ & \small Always & \small Always \\ 
 \cline{2-6} 
 & \small\emph{data flow} & \tikz{\flow{{0,0}}{{0.8,0}}{}{};} & & &  \\
 \hline
\end{tblr}
}
\subcaptionbox{Syntax for computons. Here, ${n_1,\ldots,n_{10} \in \mathbb{N}^+}$ and the $\lambda$-box and triangles denote computation units. The other box corresponds to a composite computon whose internals depend on the composition operator being used. More details on such operators are provided in Section \ref{sec:composite-computons}.}[\textwidth]
{
\begin{tabular}{ |c|c| } 
		\hline         
         & \small\emph{Syntax} \\
		 \hline
		 \small\emph{Trivial Computon} &    	
    	\begin{tikzpicture}    	 
\node(del) at (0,0){};
\qmatch{q1}{1.2}{2.2}{}
\node[label={$\vdots$}] (dots1) at (1.2,1.35) {};
\qmatch{qi}{1.2}{1.3}{}
\dmatch{d1}{1.2}{0.9}{$n_1$}{left};
\node[label={$\vdots$}] (dots3) at (1.2,0.01) {};
\dmatch{dj}{1.2}{0}{$n_2$}{left};
\node(del2) at (2.4,0){};
    	\end{tikzpicture}
    	\\ \hline
		 \small\emph{Functional Computon} &    	
    	\begin{tikzpicture}
	    	\computonPrimitive{1}{0}{1}{1.9}{$\lambda$}
	    	
      \qin{1q0}{0.2}{1.7}{}
			\din{1i1}{0.2}{1.1}{$n_3$};
			\node[label={$\vdots$}] (1i2) at (0.6,0.25) {};
			\din{1in}{0.2}{0.2}{$n_4$};

			\qout{1q1}{2}{1.7}{}
			\dout{1o1}{2}{1.1}{$n_5$};
			\node[label={$\vdots$}] (1o2) at (2.4,0.25) {};
			\dout{1om}{2}{0.2}{$n_6$};
    	\end{tikzpicture}
    	\\ \hline
		 \small\emph{Fork Computon} &    	
    	\begin{tikzpicture}	    
			\fork{0}{0}{}{}{}
    	\end{tikzpicture}
    	\\ \hline
    	\small\emph{Join Computon} &    	
    	\begin{tikzpicture}	    
			\join{0}{0}{}{}{}
    	\end{tikzpicture}
    	\\ \hline
    	\small\emph{Composite Computon} &
    	\begin{tikzpicture}
    		\node at (-0.3,0){};
	    	\computonComposite{1.14}{-0.35}{1}{2.25}
	    	
    		\qin{1q0}{0.34}{1.7}{}
    		\node[label={$\vdots$}] (1oy) at (0.74,0.9) {};
    		\qin{1q1}{0.34}{0.9}{}
			\din{1i1}{0.34}{0.6}{$n_7$};
			\node[label={$\vdots$}] (1i2) at (0.74,-0.2) {};
			\din{1in}{0.34}{-0.15}{$n_8$};

			\qout{1q2}{2.14}{1.7}{}
			\node[label={$\vdots$}] (1ox) at (2.54,0.9) {};
			\qout{1q3}{2.14}{0.9}{}
			\dout{1o1}{2.14}{0.6}{$n_9$};
			\node[label={$\vdots$}] (1o2) at (2.54,-0.2) {};
			\dout{1om}{2.14}{-0.15}{$n_{10}$};
    	\end{tikzpicture}
    	\\ \hline
	\end{tabular}
}
\caption{Graphical syntax of the computon model.}
\label{fig:computon-syntax-tables}
\end{figure}

A glance at Figure \ref{fig:computon-syntax-tables}(a) reveals that control ports are associated with square shapes, whereas data ports are displayed as circles. As control ports are always zero-labelled, we will omit their colour for clarity purposes, and we just display the colour of data ports. Figure \ref{fig:computon-syntax-tables}(a) shows that e-inports and e-outports are depicted on white and black backgrounds, respectively. As e-inoutports are a combination of e-inports and e-outports, their background is half black and half white. It is important to mention that e-inports do not have any incoming edges but just outgoing ones, while e-outports only have incoming edges. Ec-inoutports have no edges at all and ic-ports have adjacent edges on both ends. So, even if they share the same graphical representation, ic-ports and ec-inoutports can be distinguished by their connected edges (the same is true for id-ports and ed-inoutports). We decide to use the same syntax for them because, intuitively, both i-ports and e-inoutports receive and forward information.

Figure \ref{fig:computon-syntax-tables}(b) displays the rest of the syntax we will be using throughout this paper to discuss the computon model. We acknowledge that the semantics of these diagrams is not obvious at this stage so we refer the reader to Sections \ref{sec:trivial-computons}, \ref{sec:primitive-computons} and \ref{sec:composite-computons} for further details on this matter. The use of boxed diagrams with port-based interfaces is getting increasingly popular in the literature on compositionality. Like existing notations, our graphical syntax maintains a clear distinction between computation units and ports, and differentiates between inputs (i.e., e-inports), outputs (i.e., e-outports) and input-outputs (i.e., i-ports). The difference lies in the support to distinguishing among different types of (high-level) computing devices which we refer to as computons (viz., trivial, functional, fork, join and composites). Distinguishing between them is of vital importance for immediately recognising individual computational behaviour while providing a clear visualisation of the structural parts of a composite. Structural clarity would be less evident if all computon types used identical syntax. 

Another key difference with respect to existing notations is that our syntax offers a clear syntactic separation between control and data ports and between control and data flows; thus, emphasising the explicit separation of control flow and data flow provided by our model. Furthermore, we introduce syntactic constructs for expressing input-output ports that are never connected to computation units, which we refer to as e-inoutports. 

A glance at Figure \ref{fig:computon-syntax-tables}(b) reveals that a collection of e-inoutports gives rise to what we call trivial computons (see Section \ref{sec:trivial-computons}). This figure also shows that functional computons are able to receive data and exactly one control signal before producing further data and a new control signal (see Section \ref{sec:primitive-computons}). Figure \ref{fig:computon-syntax-tables}(b) also shows that fork and join computons do not require or produce any data, but just control (see Section \ref{sec:primitive-computons}). All these properties can be immediately devised by just looking at our graphical syntax, without the need of delving into formal definitions. Trivial, functional, fork and join computons can be used to form even more complex entities which we refer to as composite computons (see Section \ref{sec:composite-computons}). 

Given the above reasons, we believe that our graphical syntax is more suitable than existing ones for discussing compositional construction over a wide range of diverse computons that explicitly separate control flow and data flow. Our syntax is indeed ready to be used by a visual programming language built on top of the computon model theory, which we intend to develop in the near future. 

In Section \ref{sec:primitive-computons}, we will see that functional, fork and join computons belong to the class of \emph{primitive computons}. We say they are primitive because each of them captures a (potentially low-level) computation in a single unit that can be composed into high-level computations (i.e., composite computons) via the operators we propose in Section \ref{sec:composite-computons}. Although different operational rules can be given to computation units due to the separation of execution from composition semantics, the expectation is that a unit must encapsulate a halting computation that can only be triggered when all the unit's e-inports have information. As a unit always has at least one ec-inport and ed-inports are optional, execution is necessarily driven by control. Upon termination, a unit must produce information in all its e-outports (which always include ec-outports), no matter whether computation is functional or relational. This operation strictly guarantees that composites do not diverge and that the intended semantics of sequencing, parallelising, branching and iteration are consistently preserved. Although we adhere to this strict operation scheme in Section \ref{sec:operational-semantics}, the computon model is flexible enough to accommodate other execution semantics for additional features such as time, costs, weights, pre-/post-conditions, probabilistic randomness or any combination thereof. 

Regardless of the chosen operation scheme, the internal structure of a computation unit must never be made visible from a high-level perspective, in order to support modularity of primitive computons and, by extension, of the composite computons built upon them. Composition does not depend on such internal details, but only on defined interfaces. Computon interfaces use natural numbers as an abstraction of data types to specify a contract that tells what inputs to supply and what outputs to expect from computation, without exposing internal workings such as internal control or data routing.

The way control and data are consumed within a computation unit depends on the concrete interpretation given to computation units themselves. For example, if a unit is a Turing machine for realising an $n$-ary computable function, the unit shall have at least one ec-inport (for receiving control signals) and exactly $n$ ed-inports (one for each argument). As previously stated, even if all the ed-inports have data, it is expected that the unit will not be triggered until receiving a control signal in each ec-inport. Once all e-inports have information, the corresponding Turing machine can be executed on the $n$ data inputs so input control signals can be safely disregarded.\footnote{Recall that any Turing machine can use special delimiters in the tape to process multiple inputs.} Upon termination, a control signal can be assigned to each unit's ec-outport to indicate that the machine's computation has terminated and that output data (stored in ed-outports) is ready to be consumed by other units.\footnote{A Turing machine's output can be read from the tape upon termination.} Consumption order is dictated by control flow within composite computons.

\subsection{Formal Definition}

Formally, a computon is a functor from \textbf{Comp} to \textbf{Set} (see Definition \ref{def:computon}) where \textbf{Set} is the category of finite sets and total functions and \textbf{Comp} is the free category generated by the following diagram:\footnote{Following the notation of function abstraction in Lambda Calculus, we use $\lambda$ to denote computons.}\,\footnote{Lifting conditions can be established to enforce (non-)emptiness, injectivity, surjectivity and even uniqueness, among others. Although such constraints can be straightforwardly applied on \textbf{Comp}, we do not present them for the clarity of argument. We simply assume that computons are well-defined if and only if the conditions imposed by Definition \ref{def:computon} are met (e.g., the surjectivity of the colouring function). A similar approach has been used in the context of databases for constraining database schemas \cite{spivak_database_2014}.}

\[
\begin{tikzcd}
 & O \arrow[dl, "\sigma"'] \arrow[dr, "t"] & &  \\
U & & P \arrow[r, "c"] & \Sigma \\
 & I \arrow[ul, "\tau"]\arrow[ur, "s"'] & &
\end{tikzcd}
\]

which consists of five objects and twelve morphisms (including identity and composite morphisms given by trivial paths and path concatenation, respectively).

\begin{definition} [Computon] \label{def:computon}
A computon $\lambda$ is a functor ${\textbf{Comp} \rightarrow \textbf{Set}}$ that maps:
\begin{itemize}
\item $U$ to a (possibly empty) set $\lambda(U)$ of computation units,
\item $P$ to a (non-empty) set $\lambda(P)$ of ports,
\item $O$ to a (possibly empty) set $\lambda(O)$ of edges,
\item $I$ to a (possibly empty) set $\lambda(I)$ of edges,
\item $\Sigma$ to a (non-empty) set ${\lambda(\Sigma) \subset \mathbb{N}}$ of colours where ${0\in\lambda(\Sigma)}$,
\item $\sigma$ to a surjective function $\lambda(\sigma):\lambda(O) \twoheadrightarrow \lambda(U)$ that specifies the outgoing edges of each computation unit,
\item $\tau$ to a surjective function $\lambda(\tau):\lambda(I) \twoheadrightarrow \lambda(U)$ that specifies the incoming edges of each computation unit,
\item $t$ to a function ${\lambda(t):\lambda(O) \rightarrow \lambda(P)}$ that specifies the incoming edges of each port,
\item $s$ to a function ${\lambda(s):\lambda(I) \rightarrow \lambda(P)}$ that specifies the outgoing edges of each port, and
\item $c$ to a surjective function ${\lambda(c): \lambda(P) \twoheadrightarrow \lambda(\Sigma)}$ that assigns to each port a colour
\end{itemize}
such that ${\sigma\restriction_{(c\circ t)^{-1}(0)}}$ is surjective, ${\tau\restriction_{(c\circ s)^{-1}(0)}}$ is surjective and there is:
\begin{itemize}
\item an identity function $1_{\lambda(x)}$ in \textbf{Set} for each object $x$ of $\textbf{Comp}$, 
\item a composite function $\lambda(g) \circ \lambda(f)$ in \textbf{Set} for each pair $(f,g)$ of composable morphisms in $\textbf{Comp}$,
\item at least one port ${p \in {[\lambda(P) \setminus Im(\lambda(s))]}}$ with ${\lambda(c)(p)=0}$ and
\item at least one port ${q \in {[\lambda(P) \setminus Im(\lambda(t))]}}$ with ${\lambda(c)(q)=0}$.
\end{itemize}
\end{definition}

As a computon $\lambda$ is a set-valued functor, it can be expressed in the form of a tuple ${(U,P,I,O,\Sigma,\sigma,\tau,t,s,c)}$. Without loss of generality, we took the liberty of simplifying the expression in order to reduce clutter, e.g., we write $U$ for $\lambda(U)$. For the rest of the paper, the reader must bear in mind that each component of $\lambda$ is an actual set or a function, not an object or a morphism in \textbf{Comp}. To distinguish between computons, we use natural numbers as subscripts which carry over computon components. If the symbol for a computon has no subscript, the computon components have no subscript either.

In Definition \ref{def:computon}, we write ${\sigma\restriction_{(c\circ t)^{-1}(0)}}$ and ${\tau\restriction_{(c\circ s)^{-1}(0)}}$ to express the restriction of the functions $\sigma$ and $\tau$ to the fibers ${(c\circ t)^{-1}(0)}$ and ${(c\circ s)^{-1}(0)}$, respectively. The surjectivity condition on these two functions ensures that every computation unit (if any) has at least one incoming edge and at least one outgoing edge connected from and to a $0$-coloured port, respectively. As every function we deal with is total and $\sigma$ and $\tau$ are surjective in general, every edge always runs from a port to a computation unit or viceversa. That is, a computon has neither dangling edges nor dangling computation units. When the set of units is empty, a computon is necessarily made up of coloured ports only. In Section \ref{sec:trivial-computons}, we will see that such a class of entities, referred to as trivial computons, is needed for the coherence of our theory.

\begin{definition}[Computon Interface] \label{def:computon-interface}
The interface of a computon $\lambda$ towards the external world is a tuple $(P^+,P^-)$ where ${P^+:=P\setminus Im(t)}$ is the set of e-inports of $\lambda$ and ${P^-:=P\setminus Im(s)}$ is the set of e-outports of $\lambda$. A port ${p \in Im(s) \cap Im(t)}$ is called an i-port of $\lambda$.
\end{definition}

\begin{notation}
Given a computon $\lambda$, $C^+$ denotes its set of ec-inports, $C^-$ its set of ec-outports, $D^+$ its set of ed-inports and $D^-$ its set of ed-outports. These sets are defined as follows:
\[
{C^\square := \{p \in P^\square \mid c(p) = 0\}} \text{ with } \square \in \{+,-\}
\]
\[
{D^\square := \{p \in P^\square \mid c(p) > 0\}} \text{ with } \square \in \{+,-\}
\]
\end{notation}

As the last two conditions of Definition \ref{def:computon} state that computons must have at least one ec-inport and at least one ec-outport, it trivially follows that ${C^+ \neq\emptyset\neq C^-}$. Data ports are optional so $D^+$ and $D^-$ can be empty. Fork computons are an example where ${D^+=\emptyset=D^-}$. Particularly, Figure \ref{fig:computon-syntax-tables}(b) shows that they always possess only one element in ${P^+}$ and exactly two elements in ${P^-}$. It also shows that this sort of computons have no i-ports either, i.e., ${Im(s)\cap Im(t)=\emptyset}$. For more technical details on fork computons, see Section \ref{sec:primitive-computons}.

Notice in Definition \ref{def:computon-interface} that the sets $P^+$ and $P^-$ are not necessarily disjoint so a port can be e-inport and e-outport at the same time. If ${p \in C^+ \cap C^-}$, then $p$ is an \emph{ec-inoutport}. If ${p \in D^+ \cap D^-}$, it is an \emph{ed-inoutport}. Figure \ref{fig:computon-syntax-tables}(b) shows that trivial computons have all e-inoutports, i.e., ${P=P^+\cap P^-}$. For more technical details on them, see Section \ref{sec:trivial-computons}.

In Figure \ref{fig:computon-syntax-tables}(a), it is indicated that e-inports and e-outports only possess outgoing and incoming edges, respectively. This property follows from Proposition \ref{prop:eports-edges}. Using Definition \ref{def:computon-interface}, it is easy to additionally show that e-inoutports have no edges at all and that i-ports have both incoming and outgoing edges. Thus, even if ic-ports and ec-inoutports share the same graphical representation, they can be distinguished by their connected edges, with the same being true for id-ports with respect to id-inoutports. 

\begin{proposition}\label{prop:eports-edges}
If $\lambda$ is a computon, then ${P^+\setminus P^- = P^+ \cap Im(s)}$ and ${P^-\setminus P^+ = P^- \cap Im(t)}$.
\end{proposition}
\begin{proof}
Let ${p \in P^+\setminus P^-}$ be a port of a computon $\lambda$. Then, ${p \in P^+\setminus P^- \iff p \in P^+ \land p \notin P^-}$ ${\iff p \in P^+ \land \lnot(p \in P \land p \notin Im(s))}$ by Definition \ref{def:computon-interface} ${\iff p \in P^+ \land (p \notin P \lor p \in Im(s))}$\\${\iff p \in P^+ \land p \in Im(s)}$ because ${p \in P}$ is always true ${\iff p \in P^+ \cap Im(s)}$. The proof of ${P^-\setminus P^+ = P^- \cap Im(t)}$ is completely analogous.
\end{proof}

As control ports and data ports are identified as separate entities, information movement within a computon corresponds to either data flow or control flow. Particularly, we say that any control port is connected to or from a computation unit via a control flow edge, whereas a data port is connected analogously but with a data flow edge (see Definitions \ref{def:flow} and \ref{def:flow-edges}). The collection of ports receiving information from a computation unit $u$ and sending information to $u$ are denoted $u \bullet$ and $\bullet u$, respectively. Similarly, $\bullet p$ and $p \bullet$ denote the source and target computation units of a port $p$, respectively (see Definition \ref{def:pre-post-sets}). When there is information flow from every e-inport or i-port to some e-outport, we say that the computon is connected (see Definition \ref{def:computon-connected}). 

\begin{definition} [Information Flow] \label{def:flow}
Given a computon $\lambda$, let $p \in P$ and $u \in U$. We say there is information flow from $p$ to $u$ if there is an edge $i \in I$ such that $s(i)=p$ and $\tau(i)=u$. This is denoted ${p \xrightarrow{i} u}$. If there is an edge $o \in O$ with $\sigma(o)=u$ and $t(o)=p$, we say there is information flow from $u$ to $p$, written ${u \xrightarrow{o} p}$. We use ${p_1 \xrightarrow{\exists} p_n}$ to denote the existence of ${p_1 \xrightarrow{i_1} u_1 \xrightarrow{o_1} p_2 \xrightarrow{i_2} u_2 \xrightarrow{o_2} \cdots \xrightarrow{i_{n-1}} u_{n-1} \xrightarrow{o_{n-1}} p_n}$ for $p_1,\ldots,p_n \in P$, $u_1,\ldots,u_{n-1} \in U$, $o_1,\ldots,o_{n-1} \in O$, $i_1,\ldots,i_{n-1} \in I$ and $n\geq 2$.
\end{definition}

\begin{definition} [Control Flow and Data Flow Edges] \label{def:flow-edges}
Given a computon $\lambda$ and an edge ${e \in I \cup O}$, we say $e$ represents control flow if ${c(s(e))=0}$ or ${c(t(e))=0}$; otherwise, it represents data flow.
\end{definition}

\begin{definition}[Pre- and Post-Sets] \label{def:pre-post-sets}
For a computation unit ${u \in U}$ of a computon $\lambda$, ${\bullet u}$ and ${u \bullet}$ denote the sets ${\{p \in P \mid (\exists i \in I)(p \xrightarrow{i} u)\}}$ and ${\{p \in P \mid (\exists o \in O)(u \xrightarrow{o} p)\}}$, respectively. Similarly, for a port ${p \in P}$, ${\bullet p}$ and ${p \bullet}$ denote the sets $\{u \in U \mid (\exists o \in O)(u \xrightarrow{o} p)\}$ and $\{u \in U \mid (\exists i \in I)(p \xrightarrow{i} u)\}$, respectively.
\end{definition}

\begin{definition} [Connected Computon] \label{def:computon-connected}
We say that a computon $\lambda$ is connected if and only if for each ${p \in Im(s)\cup P^+}$ there exists some ${q \in P^-}$ such that ${p \xrightarrow{\exists} q}$ holds.
\end{definition}

A glance at Figure \ref{fig:computon-syntax-tables}(b) reveals that functional computons satisfy Definition \ref{def:computon-connected}, whereas trivial ones do not. In general, the existence of at least one e-inoutport implies that the corresponding computon is not connected, as captured by the contrapositive of Proposition \ref{prop:computon-connected-isolated-port}. When a computon is connected, there always are computation units and there is information flow from every e-outport or i-port to some e-inport (see Propositions \ref{prop:computon-connected-alwaysunits} and \ref{prop:computon-connected-reverse}). 

\begin{proposition} \label{prop:computon-connected-isolated-port}
If $\lambda$ is a connected computon, ${P^+\cap P^-=\emptyset}$.
\end{proposition}
\begin{proof}
Assume for contradiction that $\lambda$ is connected with some ${p \in P^+\cap P^-}$ so that ${p\notin Im(t)}$ and ${p\notin Im(s)}$. Using Definition \ref{def:computon-connected}, we know there must be some ${q \in P^-}$ where ${p\xrightarrow{\exists}q}$, i.e., ${s(p)=i}$ for some ${i\in I}$ as per Definition \ref{def:flow}. As this clearly contradicts ${p\notin Im(s)}$, we conclude ${P^+\cap P^-=\emptyset}$.
\end{proof}

\begin{proposition} \label{prop:computon-connected-alwaysunits}
Every connected computon has at least one computation unit.
\end{proposition}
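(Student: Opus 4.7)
The strategy is to produce a computation unit inside any connected computon by invoking the two existence clauses at the end of Definition \ref{def:computon} together with the connectedness hypothesis. First, I would note that $P \setminus Im(s) = P^-$ and $P \setminus Im(t) = P^+$ by Definition \ref{def:computon-interface}, so those clauses directly supply an ec-outport $b \in P^-$ with $c(b)=0$ and an ec-inport $a \in P^+$ with $c(a)=0$. In particular, both $P^+$ and $P^-$ are non-empty, so the universal quantifiers in the definition of connectedness are non-vacuous.

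Next, applying the connectedness assumption to this specific pair gives $a \xrightarrow{*} b$. Unfolding the notation $\xrightarrow{*}$ from Definition \ref{def:flow}, this asserts the existence of a chain
\[
a = p_1 \xrightarrow{f_1} u_1 \xrightarrow{e_1} p_2 \xrightarrow{f_2} \cdots \xrightarrow{f_{n-1}} u_{n-1} \xrightarrow{e_{n-1}} p_n = b
\]
with $u_1, \ldots, u_{n-1} \in U$. Since the chain is required to begin with a port-to-unit arrow $f_1 \in F$ (and $\tau(f_1) \in U$ by totality of $\tau$), the list of intermediate units is necessarily non-empty, so $u_1 \in U$ and hence $U \neq \emptyset$.

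The main subtlety I anticipate concerns ruling out a degenerate interpretation of $\xrightarrow{*}$ in the special case $a = b$, which could in principle happen when $P^+ \cap P^-$ contains an ec-inoutport. The resolution is the same observation: the syntactic form of Definition \ref{def:flow} always begins with a port-to-unit arrow $f_1$ and therefore includes at least one $u_i \in U$, regardless of whether $a$ and $b$ coincide. Consequently, the argument goes through uniformly, and this is the only place I would need to be explicit; everything else is a direct bookkeeping exercise against the axioms of Definition \ref{def:computon}.
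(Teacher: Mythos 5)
Your proof is correct and is essentially the paper's own argument run in the forward direction: the paper proves the contrapositive (no computation units forces $E=F=\emptyset$, so no chain $p\xrightarrow{*}q$ can exist between the guaranteed ec-inport and ec-outport), while you extract a unit $u_1$ directly from the chain that connectedness supplies, using the same two ingredients (the non-emptiness of $P^+$ and $P^-$ from Definition \ref{def:computon}, and the fact that the notation $\xrightarrow{*}$ of Definition \ref{def:flow} forces at least one intermediate unit). Your explicit treatment of the degenerate case $a=b$ is a point the paper leaves implicit, and your reading of $\xrightarrow{*}$ as requiring at least one step is the one the paper itself must adopt for the proposition to hold.
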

\begin{proof}
Assume for contradiction $\lambda$ is a connected computon with ${U=\emptyset}$, meaning $\sigma$ and $\tau$ are well-defined only if ${I=\emptyset=O}$. Since $\lambda$ is connected, for each ${p \in Im(s)\cup P^+}$ there is some ${q \in P^-}$ for which ${p \xrightarrow{\exists} q}$ holds. 
\begin{itemize}
\item If ${p \in Im(s)}$, there must be some ${i\in I}$ where ${s(i)=p}$; thereby, contradicting ${I=\emptyset}$.
\item If ${p \in P^+}$, we have two possibilities:
\begin{itemize}
\item There is some ${i\in I}$ where ${s(i)=p}$ which also contradicts ${I=\emptyset}$.
\item There is no ${i\in I}$ where ${s(i)=p}$ so there is no information flow from $p$ to any other port, including e-outports; thus, contradicting the fact that $\lambda$ is connected.
\end{itemize} 
\end{itemize}
Therefore, we conclude ${U \neq \emptyset}$.
\end{proof}

\begin{proposition} \label{prop:computon-connected-reverse}
If $\lambda$ is a connected computon, for each ${q\in P^-\cup Im(t)}$ there is a port ${p \in P^+}$ where ${p \xrightarrow{\exists}q}$.
\end{proposition}
\begin{proof}
If $\lambda$ is a connected computon with ${q\in P^-\cup Im(t)}$, we know by Proposition \ref{prop:computon-connected-isolated-port} that ${q \in P^-\cap P^+}$ cannot hold. So, if ${q \in P^-}$, then ${q \notin P^+}$; in other words, ${q \notin Im(s)}$ and ${q \in Im(t)}$. So, it suffices to show ${q \in Im(t)}$ only. For this, we perform the following recursive procedure by initially considering the set ${V:=\{q\}}$ of visited ports: 

As ${q \in Im(t)}$, there is some ${o\in O}$ where ${t(o)=q}$. Applying the surjectivity of $\sigma$ and $\tau$ and the totality of $s$, we derive ${p\xrightarrow{i}u\xrightarrow{o}q}$ for some ${u \in U}$, some ${i \in I}$ and some ${p\in P \setminus V}$. If ${p \in P^+}$, then the proof is complete. Otherwise, ${p\in Im(t)}$ so we simply repeat this procedure with ${V \cup \{p\}}$ in place of $V$ and $p$ in place of $q$.

The above procedure will eventually terminate since the sets ${U, I, O}$ and $P$ are finite, and because we use ${V \subseteq P}$ to explore new ports at each step. It is guaranteed that an e-inport will be chosen for exploration at some point since every computon has at least one ec-inport and there are no dangling computation units (see Definition \ref{def:computon}).
\end{proof}

At this stage, we have provided sufficient details about the general structure of computons by treating them as set-valued functors. Defining computons in this way gives rise to a functor category which we refer to as $\textbf{Set}^\textbf{Comp}$.
\subsection{The Category of Computons}

$\textbf{Set}^\textbf{Comp}$ is a category whose objects and morphisms are computons and computon morphisms, respectively (see Definition \ref{def:computon-morphism}).

\begin{definition} [Computon morphism] \label{def:computon-morphism}
If $\lambda_1$ and $\lambda_2$ are two computons, a computon morphism ${\alpha\colon \lambda_1 \rightarrow \lambda_2}$ is a natural transformation whose components are the total functions ${\alpha_U\colon U_1 \rightarrow U_2}$, ${\alpha_P\colon P_1 \rightarrow P_2}$, ${\alpha_O\colon O_1 \rightarrow O_2}$, ${\alpha_I\colon I_1 \rightarrow I_2}$ and ${\alpha_\Sigma\colon \Sigma_1 \hookrightarrow \Sigma_2}$ such that the diagrams of Figure \ref{fig:computon-morphism-commutative-diagrams} commute and ${\vec{i}(\alpha) \cup \vec{o}(\alpha) \subseteq P_1^+ \cup P_1^-}$. Here, $\vec{i}(\alpha)$ and $\vec{o}(\alpha)$ denote ${\{p_1 \in P_1 \mid \bullet \alpha(p_1) \setminus \alpha(\bullet p_1) \neq \emptyset\}}$ and ${\{p_1 \in P_1 \mid \alpha(p_1) \bullet \setminus \alpha(p_1 \bullet) \neq \emptyset\}}$, respectively. 

\begin{figure}[!h]
\centering
\begin{tikzcd}
P_1 \arrow[r, twoheadrightarrow, "c_1"]\arrow[d, "\alpha_P"']
& \Sigma_1 \arrow[d, hook, "\alpha_\Sigma"] \\
P_2 \arrow[r, twoheadrightarrow, "c_2"']
& \Sigma_2
\end{tikzcd}
\qquad\quad
\begin{tikzcd}
I_1 \arrow[r, twoheadrightarrow, "\tau_1"]\arrow[d, "\alpha_I"']
& U_1 \arrow[d, "\alpha_U"] 
& O_1 \arrow[l, twoheadrightarrow, "\sigma_1"']\arrow[d, "\alpha_O"] \\
I_2 \arrow[r, twoheadrightarrow, "\tau_2"']
& U_2 
& O_2 \arrow[l, twoheadrightarrow, "\sigma_2"]
\end{tikzcd}
\qquad\quad
\begin{tikzcd}
I_1 \arrow[r, "s_1"]\arrow[d, "\alpha_I"']
& P_1 \arrow[d, "\alpha_P"] 
& O_1 \arrow[l, "t_1"']\arrow[d, "\alpha_O"] \\
I_2 \arrow[r, "s_2"']
& P_2 
& O_2 \arrow[l, "t_2"]
\end{tikzcd}
\caption{A computon morphism is a natural transformation $\alpha\colon\lambda_1 \rightarrow \lambda_2$.}
\label{fig:computon-morphism-commutative-diagrams}
\end{figure}
\end{definition}

\begin{notation} \label{notation:sets-im}
To simplify notation when referring to the components of a computon morphism $\alpha\colon \lambda_1 \rightarrow \lambda_2$, we write $\alpha(u)$ for $\alpha_U(u)$, $\alpha(p)$ for $\alpha_P(p)$, $\alpha(o)$ for $\alpha_O(o)$ and $\alpha(i)$ for $\alpha_I(i)$. For the rest of the paper, we also write $\alpha(A)$ to denote $Im(\alpha_P \vert_A)$ if $A \subseteq P_1$ or $Im(\alpha_U \vert_A)$ if $A \subseteq U_1$. Likewise, we use $\alpha(B)^{-1}$ to denote $\{p_1 \in P_1 \mid \alpha(p_1) \in B\}$ if $B \subseteq P_2$ or $\{u_1 \in U_1 \mid \alpha(u_1) \in B\}$ if $B \subseteq U_2$.
\end{notation}

\begin{remark} \label{remark:monomorphism} 
Naturally, composition of computon morphisms $\alpha$ and $\beta$ is defined component-wise:
\begin{equation*}
\begin{split}
(\beta_U, \beta_P, \beta_I, & \beta_O, \beta_\Sigma) \circ (\alpha_U, \alpha_P, \alpha_I, \alpha_O, \alpha_\Sigma) = (\beta_U \circ \alpha_U, \beta_P \circ \alpha_P, \beta_I \circ \alpha_I, \beta_O \circ \alpha_O, \beta_\Sigma \circ \alpha_\Sigma)
\end{split}
\end{equation*}
\end{remark}

Figure \ref{fig:computon-morphism-example}(a) describes a computon morphism ${\alpha}$ from ${\lambda_1}$ to ${\lambda_2}$. The top-level diamond specifies that ${\lambda_1}$ has ports ${p_1 \in P_1}$ and ${p_2 \in P_1}$ connected to and from a computation unit ${u_1 \in U_1}$ through the edges ${i_1 \in I_1}$ and ${o_1 \in O_1}$, respectively. Thereby, forming the information flows ${p_1 \xrightarrow{i_1} u_1 \xrightarrow{o_1} p_2}$. As $p_1$ has no incoming edges and $p_2$ has no outgoing edges, we use Definition \ref{def:computon-interface} to deduce ${P_1^+:=\{p_1\}}$ and ${P_1^-:=\{p_2\}}$. Since both $p_1$ and $p_2$ are zero-coloured and they are the only ports in $\lambda_1$, we further deduce ${P_1^+=C_1^+}$, ${P_1^-=C_1^-}$ and ${D_1^+=\emptyset=D_1^-}$. Therefore, by Definition \ref{def:flow-edges}, both $i_1$ and $o_1$ denote control flow. The top-level diagram of Figure \ref{fig:computon-morphism-example}(b) shows the graphical representation of $\lambda_1$ using computon syntax.\footnote{Recall that ports are coloured with natural numbers and edges are not coloured at all. The diagram on the right-hand side of Figure \ref{fig:computon-morphism-example} just displays port and flow labels for illustrative purposes. For now, we just display computation units as labels but, in upcoming sections, we will use specific syntax to distinguish among different types of such units.} 

\begin{figure}[!h]
\centering
\subcaptionbox{Using commutative diagrams.}
{
\begin{tikzcd}[ampersand replacement=\&]
 \& \{o_1\} \arrow[dl, twoheadrightarrow, "\sigma_1(o_1)=u_1"'] \arrow[dr, "t_1(o_1)=p_2"] \arrow[ddd, bend left=40, opacity=0.4, "\alpha_O(o_1)=o_2"'{yshift=15pt}] \& \&  \\
\{u_1\} \arrow[ddd, bend right=20, opacity=0.4, "\alpha_U(u_1)=u_2"'] \& \& \{p_1,p_2\} \arrow[r, twoheadrightarrow, "\substack{c_1(p_1)=0 \\ c_1(p_2)=0}"] \arrow[ddd, bend left=20, opacity=0.4, "\substack{\alpha_P(p_1)=p_3 \\ \alpha_P(p_2)=p_5}"] \& \{0\} \arrow[ddd, hook, bend left=20, opacity=0.4, "\alpha_\Sigma(0)=0"] \\
 \& \{i_1\} \arrow[ul, twoheadrightarrow, "\tau_1(i_1)=u_1"]\arrow[ur, "s_1(i_1)=p_1"'] \arrow[ddd, bend right=40, opacity=0.4, "\alpha_I(i_1)=i_2"{yshift=-15pt}] \& \& \\
 \& \{o_2,o_3\} \arrow[dl, twoheadrightarrow, "\substack{\sigma_2(o_2)=u_2 \\ \sigma_2(o_3)=u_3}"'] \arrow[dr, "\substack{t_2(o_2)=p_5 \\ t_2(o_3)=p_5}"] \& \&  \\
\{u_2,u_3\} \& \& \{p_3,p_4,p_5\} \arrow[r, twoheadrightarrow, "\substack{c_2(p_3)=0 \\ c_2(p_4)=0 \\ c_2(p_5)=0}"'] \& \{0\} \\
 \& \{i_2,i_3\} \arrow[ul, twoheadrightarrow, "\substack{\tau_2(i_2)=u_2 \\ \tau_2(i_3)=u_3}"]\arrow[ur, "\substack{s_2(i_2)=p_3 \\ s_2(i_3)=p_4}"'] \& \&
\end{tikzcd}  
}\hspace{-0.3cm}
\subcaptionbox{Using computon syntax: \tikz{\node[draw=black,fill=white,inner sep=0pt,minimum size=3pt,label={right:\scriptsize Ec-inport}] at (0,0) {}; \node[fill=black,inner sep=0pt,minimum size=3pt,label={right:\scriptsize Ec-outport}] at (2,0) {};} \tikz{\draw[dashed] (0.2,0) to node[pos=0.5,yshift=0]{\arrowflow} (0.7,0); \& \node[inner sep=0pt,minimum size=3pt,label={[xshift=0.8cm]right:\scriptsize Control flow edge}]{};} }
{
\vspace{1cm}
\begin{tikzpicture}
\node[draw=black,fill=white,inner sep=0pt,minimum size=3pt,label={[text opacity=0.4]left:\scriptsize $p_1$},opacity=0.4] (p1) at (0,3.5) {};
\draw[dashed,opacity=0.4] (0.05,3.5) to [] node [pos=0.5] {\arrowflow} node [pos=0.5,yshift=7] {\scriptsize $i_1$} (1,3.5);
\node[opacity=0.4] (u1) at (1.2,3.5) {\scriptsize $u_1$};
\draw[dashed,opacity=0.4] (1.35,3.5) to [] node [pos=0.5] {\arrowflow} node [pos=0.5,yshift=7, opacity=0.4] {\scriptsize $o_1$} (2.1,3.5);
\node[fill=black,inner sep=0pt,minimum size=3pt,label={[text opacity=0.4]right:\scriptsize $p_2$},opacity=0.4] (p2) at (2.15,3.5) {};

\draw[->,opacity=0.4,bend right=20] (0,3.3) to node[left]{} (-0.1,-0.3);
\draw[->,opacity=0.4,bend right=20] (0.5,3.3) to node[left]{} (0.5,0);
\draw[->,opacity=0.4] (1.2,3.3) to node[right]{} (1.2,-0.2);
\draw[->,opacity=0.4,bend left=20] (1.7,3.3) to node[right]{} (1.7,0);
\draw[->,opacity=0.4,bend left=20] (2.2,3.3) to node[right]{} (2.2,-0.3);

\node[draw=black,fill=white,inner sep=0pt,minimum size=3pt,label={[text opacity=0.4]left:\scriptsize $p_3$},opacity=0.4] (p3) at (0,-0.5) {};
\draw[dashed,opacity=0.4] (0.05,-0.5) to [] node [pos=0.5] {\arrowflow} node [pos=0.5,yshift=7, opacity=0.4] {\scriptsize $i_2$} (1,-0.5);
\node[opacity=0.4] (u2) at (1.2,-0.5) {\scriptsize $u_2$};
\draw[dashed,opacity=0.4] (1.35,-0.5) to [] node [pos=0.5] {\arrowflow} node [pos=0.5,yshift=7, opacity=0.4] {\scriptsize $o_2$} (2.1,-0.5);
\node[fill=black,inner sep=0pt,minimum size=3pt,label={[text opacity=0.4]right:\scriptsize $p_5$},opacity=0.4] (p5) at (2.15,-0.5) {};

\node[draw=black,fill=white,inner sep=0pt,minimum size=3pt,label={left:\scriptsize $p_4$}] (p4) at (0,-1) {};
\draw[dashed] (0.05,-1) to [] node [pos=0.5] {\arrowflow} node [pos=0.5,yshift=-7] {\scriptsize $i_3$} (1,-1);
\node (u3) at (1.2,-1) {\scriptsize $u_3$};
\draw[dashed] (1.4,-1) to [] node [pos=0.3,rotate=27] {\arrowflow} node [below] {\scriptsize $o_3$} (p5);
\end{tikzpicture}   
}
\caption{Example of a computon morphism ${\alpha\colon\lambda_1 \rightarrow \lambda_2}$.}
\label{fig:computon-morphism-example}
\end{figure}

The diamond at the bottom of Figure \ref{fig:computon-morphism-example}(a) specifies that $\lambda_2$ has ports ${p_3 \in P_2}$ and ${p_5 \in P_2}$ connected to and from a computation unit ${u_2 \in U_2}$ via the edges ${i_2 \in I_2}$ and ${o_2 \in O_2}$, respectively. This computon also includes the edge ${i_3 \in I_2}$ for connecting the port ${p_4 \in P_2}$ to a computation unit ${u_3 \in U_2}$ which, in turn, is connected to $p_5$ via the edge ${o_3 \in O_2}$. Thereby, forming the information flow ${p_3 \xrightarrow{i_2} u_2 \xrightarrow{o_2} p_5 \xleftarrow{o_3} u_3 \xleftarrow{i_3} p_4}$. Similar to $\lambda_1$, we observe that $p_3$ and $p_4$ have no incoming edges and that $p_5$ has no outgoing edges. So, by Definition \ref{def:computon-interface}, ${P_2^+:=\{p_3,p_4\}}$ and ${P_2^-:=\{p_5\}}$. As all ports of $\lambda_2$ are also zero-coloured, it follows that ${P_2^+=C_2^+}$, ${P_2^-=C_2^-}$ and ${D_2^+=\emptyset=D_2^-}$. By Definition \ref{def:flow-edges}, this means all the edges of $\lambda_2$ represent control flow. The bottom-level diagram of Figure \ref{fig:computon-morphism-example}(b) displays the graphical structure of $\lambda_2$ through the use of computon syntax.

The components of the computon morphism ${\alpha\colon\lambda_1 \rightarrow \lambda_2}$ are displayed in gray on Figure \ref{fig:computon-morphism-example}(a), in order to distinguish them from the diamond diagrams that define computons. Figure \ref{fig:computon-morphism-example}(b) shows that this morphism maps the unique computation unit of $\lambda_1$ to $u_2$, the sole ec-inport of $\lambda_1$ to the ec-inport $p_3$, the unique ec-outport of $\lambda_1$ to the unique ec-outport of $\lambda_2$, the edge ${i_1}$ to the edge ${i_2}$ and the edge ${o_1}$ to the edge ${o_2}$. This mapping is sound since it ensures the diagrams of Figure \ref{fig:computon-morphism-example}(a) commute, meaning that the structure of $\lambda_1$ is preserved within $\lambda_2$. 

Observe in $\lambda_2$ that $u_2$ and $u_3$ are the only computation units connected to $p_5$ so that ${\bullet p_5=\{u_2,u_3\}}$ by Definition \ref{def:pre-post-sets}. Considering ${\alpha_P(p_2)=p_5}$, as shown in Figure \ref{fig:computon-morphism-example}(a), we have ${\bullet p_5=\{u_2,u_3\}=\bullet \alpha_P(p_2)}$. A further inspection of Figure \ref{fig:computon-morphism-example}(a) reveals that the set of all computation units mapped from $u_1$ is $\{u_2\}$ because ${\alpha_U(u_1)=u_2}$  is the only mapping given by $\alpha_U$, i.e., ${\alpha_U(\{u_1\})=Im(\alpha_U\mid_{\{u_1\}})=\{u_2\}}$, according to Notation \ref{notation:sets-im}. Definition \ref{def:pre-post-sets} allows us to deduce ${\{u_1\}=\bullet p_2}$ because $u_1$ is the only computation unit connected to $p_2$. So, ${\alpha_U(\{u_1\})=Im(\alpha_U\mid_{\{u_1\}})=\{u_2\}=\alpha_U(\bullet p_2)}$. As ${\bullet\alpha_P(p_2)\setminus\alpha_U(\bullet p_2)=\{u_2,u_3\}\setminus\{u_2\}\neq\emptyset}$, we use Definition \ref{def:computon-morphism} to conclude ${p_2 \in \vec{i}(\alpha)}$. Basically, ${\vec{i}(\alpha)}$ denotes the set ports in $\lambda_1$ that are mapped to ports in $\lambda_2$ connected to computation units not included in the $\alpha$-embedding (in this case, $u_3$). The set ${\vec{o}(\alpha)}$ is similar but contains $\lambda_1$-ports that are mapped to $\lambda_2$-ports connected to computation units excluded from the $\alpha$-embedding. As we did to verify ${p_2 \in \vec{i}(\alpha)}$, we can show ${\vec{o}(\alpha)=\emptyset}$ in our example. This is because the information flow ${p_1 \xrightarrow{i_1} u_1 \xrightarrow{o_1} p_2}$ of $\lambda_1$ is entirely inserted into the information flow ${p_3 \xrightarrow{i_2} u_2 \xrightarrow{o_2} p_5}$ of $\lambda_2$, and there are no flows of the form ${p_3 \xrightarrow{} u}$ or ${p_5 \xrightarrow{} u}$ in $\lambda_2$.

The example we just described allows us to intuitively perceive a computon morphism as an embedding (or an insertion) of a computon into a (potentially more complex) one, while preserving ports (with their respective colours, incoming edges and outgoing edges) and computation units (with their respective incoming and outgoing edges). As a result of this preservation, an e-inport is mapped to an e-inport or an i-port (e.g., ${\alpha_P(p_1)=p_3 \in P_2^+}$ in Figure \ref{fig:computon-morphism-example}) --- see Propositions \ref{prop:computon-morphism-inports-outports}, \ref{prop:computon-morphism-eports-equality} and \ref{prop:computon-morphism-eports-become-iports}. Similarly, an e-outport is mapped to an e-outport or an i-port  (e.g., ${\alpha_P(p_2)=p_5 \in P_2^-}$ in Figure \ref{fig:computon-morphism-example}) --- see Propositions \ref{prop:computon-morphism-inports-outports}, \ref{prop:computon-morphism-eports-equality} and \ref{prop:computon-morphism-eports-become-iports}. While a computon morphism can map external ports to internal ones, the latter can never be mapped to external ports due to the commutative diagrams presented in Figure \ref{fig:computon-morphism-commutative-diagrams}.

\begin{proposition} \label{prop:computon-morphism-inports-outports}
If ${\alpha\colon\lambda_1 \rightarrow \lambda_2}$ is a computon morphism, ${\alpha^{-1}(P_2^+) \subseteq P_1^+}$ and ${\alpha^{-1}(P_2^-) \subseteq P_1^-}$.
\end{proposition}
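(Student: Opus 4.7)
The plan is to prove each inclusion separately by contrapositive, leveraging the commutativity of the bottom-row squares in Figure \ref{fig:computon-morphism-commutative-diagrams}, which capture the naturality of $\alpha$ with respect to the edge-endpoint morphisms $t$ and $s$. The key observation is that membership in $P^+$ (resp.\ $P^-$) is the \emph{negation} of an existential statement about an incoming (resp.\ outgoing) edge, so negating membership produces an edge that can be transported through $\alpha$ by applying the relevant commutative square.

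For the first inclusion I would take $p_1 \in P_1$ with $\alpha(p_1) \in P_2^+$ and argue by contrapositive: assuming there exists $e_1 \in E_1$ with $t_1(e_1) = p_1$, I would evaluate the square $\alpha_P \circ t_1 = t_2 \circ \alpha_E$ at $e_1$ to obtain $t_2(\alpha(e_1)) = \alpha(p_1)$. This exhibits $\alpha(e_1) \in E_2$ as an incoming edge of $\alpha(p_1)$ in $\lambda_2$, contradicting $\alpha(p_1) \in P_2^+$ by Definition \ref{def:computon-interface}. The second inclusion is entirely symmetric: from an $f_1 \in F_1$ with $s_1(f_1) = p_1$, the square $\alpha_P \circ s_1 = s_2 \circ \alpha_F$ yields $s_2(\alpha(f_1)) = \alpha(p_1)$, so $\alpha(p_1) \notin P_2^-$.

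There is no genuine obstacle here: the statement is essentially an unpacking of the naturality conditions, and the injectivity of the components of $\alpha$ (Remark \ref{remark:monomorphism}) is not even needed. The only care required is bookkeeping, namely invoking the correct square ($t$-square for e-inports, $s$-square for e-outports) and correctly negating the existential quantifier hidden in the definitions of $P^+$ and $P^-$ when switching to the contrapositive form.
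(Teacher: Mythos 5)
Your proposal is correct and follows essentially the same route as the paper: a contrapositive argument that transports an incoming (resp.\ outgoing) edge of $p_1$ through the $t$-square (resp.\ $s$-square) of Figure \ref{fig:computon-morphism-commutative-diagrams} to produce an incoming (resp.\ outgoing) edge of $\alpha(p_1)$. The paper likewise proves only the $P^+$ inclusion and declares the $P^-$ case analogous, and it also does not use injectivity, so there is nothing to add.
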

\begin{proof}
By letting ${\alpha\colon\lambda_1 \rightarrow \lambda_2}$ be a computon morphism, we only prove ${\alpha^{-1}(P_2^+) \subseteq P_1^+}$ by contrapositive, since the proof of ${\alpha^{-1}(P_2^-) \subseteq P_1^-}$ is completely analogous. 

If ${p_1 \in P \setminus P_1^+}$, then ${(\exists o_1 \in O_1)[t_1(o_1)=p_1]}$ (see Definition \ref{def:computon-interface}). By commutativity, we know ${t_2(\alpha(o_1))=\alpha(t_1(o_1))=\alpha(p_1)}$ which implies ${\alpha(p_1) \notin P_2^+}$ (see Definition \ref{def:computon-interface}) and, consequently, ${p_1 \notin \alpha^{-1}(P_2^+)}$. As the implication ${p_1 \notin P_1^+ \implies p_1 \notin \alpha^{-1}(P_2^+)}$ is logically equivalent to ${p_1 \in \alpha^{-1}(P_2^+) \implies p_1 \in P_1^+}$, we conclude ${\alpha^{-1}(P_2^+) \subseteq P_1^+}$, as required.
\end{proof}

\begin{proposition} \label{prop:computon-morphism-eports-equality}
If ${\alpha\colon\lambda_1 \rightarrow \lambda_2}$ is a computon morphism, ${P_1^+\cap\vec{i}(\alpha)=\emptyset \implies \alpha^{-1}(P_2^+)=P_1^+}$ and ${P_1^- \cap \vec{o}(\alpha)=\emptyset \implies \alpha^{-1}(P_2^-)=P_1^-}$.
\end{proposition}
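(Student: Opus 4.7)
The plan is to prove each implication by combining Proposition \ref{prop:computon-morphism-inports-outports} (which already gives the inclusion $\alpha^{-1}(P_2^+) \subseteq P_1^+$ and its dual) with a direct argument for the reverse inclusions, using the hypotheses $P_1^+ \cap \vec{i}(\alpha) = \emptyset$ and $P_1^- \cap \vec{o}(\alpha) = \emptyset$ respectively. The two implications are mirror images, so I will sketch only the first in detail.

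For the first implication, I assume $P_1^+ \cap \vec{i}(\alpha) = \emptyset$ and fix an arbitrary $p_1 \in P_1^+$; the goal is to show $\alpha(p_1) \in P_2^+$. By Definition \ref{def:computon-interface}, $p_1 \in P_1^+$ means no edge $e_1 \in E_1$ satisfies $t_1(e_1) = p_1$, hence $\bullet p_1 = \emptyset$ by Definition \ref{def:pre-post-sets}, and therefore $\alpha(\bullet p_1) = \emptyset$. The hypothesis together with $p_1 \in P_1^+$ forces $p_1 \notin \vec{i}(\alpha)$, i.e., $\bullet \alpha(p_1) \setminus \alpha(\bullet p_1) = \emptyset$, which combined with $\alpha(\bullet p_1) = \emptyset$ yields $\bullet \alpha(p_1) = \emptyset$.

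It then remains to translate $\bullet \alpha(p_1) = \emptyset$ into $\alpha(p_1) \in P_2^+$. Suppose for contradiction that $\alpha(p_1) \notin P_2^+$; then there exists $e_2 \in E_2$ with $t_2(e_2) = \alpha(p_1)$. Since $\sigma_2$ is total, $u_2 := \sigma_2(e_2) \in U_2$ satisfies $u_2 \xrightarrow{e_2} \alpha(p_1)$, so $u_2 \in \bullet \alpha(p_1)$, contradicting $\bullet \alpha(p_1) = \emptyset$. Hence $\alpha(p_1) \in P_2^+$, i.e., $p_1 \in \alpha^{-1}(P_2^+)$, establishing $P_1^+ \subseteq \alpha^{-1}(P_2^+)$.

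The second implication is handled symmetrically: starting from $p_1 \in P_1^-$, one gets $p_1 \bullet = \emptyset$, then uses $P_1^- \cap \vec{o}(\alpha) = \emptyset$ and the definition of $\vec{o}(\alpha)$ to deduce $\alpha(p_1) \bullet = \emptyset$, and finally invokes totality of $\tau_2$ to rule out any $f_2 \in F_2$ with $s_2(f_2) = \alpha(p_1)$. The only subtlety I expect is keeping the quantifier directions straight when moving between $\bullet(\cdot)$ / $(\cdot)\bullet$ and $\vec{i}(\alpha)$ / $\vec{o}(\alpha)$; everything else is a routine chase through Definitions \ref{def:computon-interface}--\ref{def:pre-post-sets} and the surjectivity of $\sigma_2$, $\tau_2$.
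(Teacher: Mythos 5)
Your proposal is correct and follows essentially the same route as the paper: use the hypothesis $P_1^+\cap\vec{i}(\alpha)=\emptyset$ together with $\bullet p_1=\emptyset$ to conclude $\bullet\alpha(p_1)=\emptyset$, hence $\alpha(p_1)\in P_2^+$, and combine this with Proposition \ref{prop:computon-morphism-inports-outports} for the reverse inclusion. The only difference is that you spell out the small step from $\bullet\alpha(p_1)=\emptyset$ to $\alpha(p_1)\in P_2^+$ via totality of $\sigma_2$ (which the paper asserts directly); note that totality, not surjectivity, is what this step actually uses.
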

\begin{proof}
Let ${\alpha\colon\lambda_1 \rightarrow \lambda_2}$ be a computon morphism and assume ${P_1^+ \cap \vec{i}(\alpha)=\emptyset}$. This assumption says that if ${p_1 \in P_1^+}$ then ${p_1 \notin \vec{i}(\alpha)}$ so that ${\bullet \alpha(p_1) \setminus \alpha(\bullet p_1)=\emptyset}$ which is true when ${\bullet \alpha(p_1) = \alpha(\bullet p_1)}$. As ${\bullet p_1 = \emptyset}$ because ${p_1 \in P_1^+}$, we have ${\bullet \alpha(p_1) = \emptyset = \alpha(\bullet p_1)}$. The fact ${\bullet \alpha(p_1) = \emptyset}$ implies ${\alpha(p_1) \in P_2^+}$, i.e., ${p_1 \in \alpha^{-1}(P_2^+)}$. Thus, proving ${P_1^+ \subseteq \alpha^{-1}(P_2^+)}$. Since ${\alpha^{-1}(P_2^+) \subseteq P_1^+}$ also holds by Proposition \ref{prop:computon-morphism-inports-outports}, we conclude $P_1^+ \cap \vec{i}(\alpha)=\emptyset \implies \alpha^{-1}(P_2^+)=P_1^+$.

The proof of ${P_1^- \cap \vec{o}(\alpha)=\emptyset \implies \alpha^{-1}(P_2^-)=P_1^-}$ follows analogously.
\end{proof}

\begin{proposition} \label{prop:computon-morphism-eports-become-iports}
If $\lambda_1$ is a connected computon and ${\alpha\colon\lambda_1 \rightarrow \lambda_2}$ is a computon morphism, ${(P_1^+ \cap \vec{i}(\alpha)) \cup (P_1^- \cap \vec{o}(\alpha)) \subseteq \alpha^{-1}(Im(t_2) \cap Im(s_2))}$.
\end{proposition}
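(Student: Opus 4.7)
The plan is to unpack the definitions of $\vec{i}(\alpha)$ and $\vec{o}(\alpha)$ and split the hypothesis into two symmetric cases. Fix $p_1 \in (P_1^+ \cap \vec{i}(\alpha)) \cup (P_1^- \cap \vec{o}(\alpha))$; the goal is to exhibit simultaneously an $e_2 \in E_2$ with $t_2(e_2)=\alpha(p_1)$ and an $f_2 \in F_2$ with $s_2(f_2)=\alpha(p_1)$. In each case, one of these two edges will come directly from membership in $\vec{i}(\alpha)$ or $\vec{o}(\alpha)$, while the other will be produced by taking a walk supplied by connectedness of $\lambda_1$ and transporting its first (or last) arrow across $\alpha$ using commutativity of the squares in Figure \ref{fig:computon-morphism-commutative-diagrams}.

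First I would handle the case $p_1 \in P_1^+ \cap \vec{i}(\alpha)$. Since $\bullet\alpha(p_1) \setminus \alpha(\bullet p_1) \neq \emptyset$, Definition \ref{def:pre-post-sets} supplies a $u_2 \in U_2$ together with an $e_2 \in E_2$ such that $\sigma_2(e_2)=u_2$ and $t_2(e_2)=\alpha(p_1)$, so $\alpha(p_1) \in Im(t_2)$. For $\alpha(p_1) \in Im(s_2)$, I would invoke the last clauses of Definition \ref{def:computon} to note that $P_1^-$ is nonempty, pick any $q_1 \in P_1^-$, and appeal to the connectedness of $\lambda_1$ to obtain a walk $p_1 \xrightarrow{*} q_1$. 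Its very first arrow gives an edge $f_1 \in F_1$ with $s_1(f_1)=p_1$, and commutativity of the lower-left square of Figure \ref{fig:computon-morphism-commutative-diagrams} then yields
\[
s_2(\alpha(f_1)) \;=\; \alpha(s_1(f_1)) \;=\; \alpha(p_1),
\]
so $\alpha(p_1) \in Im(s_2)$ as desired.

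The symmetric case $p_1 \in P_1^- \cap \vec{o}(\alpha)$ is handled dually. Membership in $\vec{o}(\alpha)$ supplies an $f_2 \in F_2$ with $s_2(f_2)=\alpha(p_1)$, placing $\alpha(p_1) \in Im(s_2)$. For $\alpha(p_1) \in Im(t_2)$, I would pick $q_1 \in P_1^+$ (nonempty, again by Definition \ref{def:computon}), extract the last arrow of the walk $q_1 \xrightarrow{*} p_1$ as an edge $e_1 \in E_1$ with $t_1(e_1)=p_1$, and feed it through commutativity of the lower-right square of Figure \ref{fig:computon-morphism-commutative-diagrams} to obtain $t_2(\alpha(e_1))=\alpha(p_1)$. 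In both cases, $\alpha(p_1) \in Im(t_2) \cap Im(s_2)$, hence $p_1 \in \alpha^{-1}(Im(t_2) \cap Im(s_2))$.

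The only subtle point I anticipate is ensuring that the walk produced by connectedness genuinely contributes an edge incident to $p_1$ — otherwise its first (respectively last) arrow would not exist. This is why I extract a specific edge rather than merely using the walk's existence: the chain notation in Definition \ref{def:flow} is spelled out as an alternating sequence of $F$- and $E$-edges, so any walk of positive length provides the required incident edge, and this is precisely the reading implicitly used in the proof of Proposition \ref{prop:computon-connected-alwaysunits}. Everything else in the argument is a direct application of the commutative squares defining a computon morphism, so no further technical difficulty is expected.
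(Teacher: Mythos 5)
Your proposal is correct and follows essentially the same route as the paper's own proof: membership in $\vec{i}(\alpha)$ (resp. $\vec{o}(\alpha)$) directly yields $\alpha(p_1)\in Im(t_2)$ (resp. $Im(s_2)$), and connectedness of $\lambda_1$ supplies an incident edge at $p_1$ which commutativity of the squares in Figure \ref{fig:computon-morphism-commutative-diagrams} transports to the other image condition. The only cosmetic difference is that you extract the incident edge from an explicit walk to a port guaranteed by Definition \ref{def:computon}, whereas the paper asserts its existence directly by citing Proposition \ref{prop:computon-connected-alwaysunits}; the substance is identical.
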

\begin{proof}
Let ${\alpha\colon\lambda_1 \rightarrow \lambda_2}$ be a computon morphism from a connected computon $\lambda_1$ to an arbitrary computon $\lambda_2$. If ${p_1 \in P_1^+ \cap \vec{i}(\alpha)}$, then there is some ${u_2 \in \bullet\alpha(p_1)\setminus\alpha(\bullet p_1)}$. By Definition \ref{def:pre-post-sets}, there must also be some ${o_2 \in O_2}$ where ${\sigma_2(o_2)=u_2}$ and ${t_2(o_2)=\alpha(p_1)}$. That is, ${\alpha(p_1) \in Im(t_2)}$.

Now, since $\lambda_1$ is a connected computon and ${p_1 \in P_1^+}$, there is some ${i_1 \in I_1}$ and some ${u_1 \in U_1}$ where ${p_1 \xrightarrow{i_1} u_1}$ holds (see Definition \ref{def:computon-connected} and Proposition \ref{prop:computon-connected-alwaysunits}). By commutativity and because ${s_1(i_1)=p_1}$, ${s_2(\alpha(i_1))=\alpha(s_1(i_1))=\alpha(p_1)}$. That is, ${\alpha(p_1) \in Im(s_2)}$.

As having ${\alpha(p_1) \in Im(t_2) \cap Im(s_2)}$ implies ${p_1 \in \alpha^{-1}(Im(t_2) \cap Im(s_2))}$, we have just proved ${P_1^+ \cap \vec{i}(\alpha) \subseteq \alpha^{-1}(Im(t_2) \cap Im(s_2))}$. The proof of ${P_1^- \cap \vec{o}(\alpha) \subseteq \alpha^{-1}(Im(t_2) \cap Im(s_2))}$ is completely analogous. 
\end{proof}

\subsection{Colimits in the Category of Computons}

Although computons are set-valued functors, general colimits in $\textbf{Set}^\textbf{Comp}$ do not always exist because morphisms need to satisfy the special conditions imposed by Definition \ref{def:computon-morphism} and there are no initial objects, i.e., $\textbf{Set}^\textbf{Comp}$ is not cocomplete. When colimits exist, they are canonically computed component-wise in $\textbf{Set}$ so they constitute formal category-theoretic operations to glue multiple computons together, according to the instructions given by the morphisms of a certain diagram. For example, a pushout construction glues computons by identifying their common parts in the form of an \emph{apex computon}. This notion is formalised in Definition \ref{def:pushout-computation}. 

\begin{definition} [Pushout Construction] \label{def:pushout-computation}
Given a span ${\lambda_1 \xleftarrow{\alpha_1} \lambda_0 \xrightarrow{\alpha_2} \lambda_2}$ of computon morphisms, the pushout of the corresponding diagram in $\textbf{Set}^\textbf{Comp}$:
\begin{center}
\begin{tikzcd}
 & \lambda_0 \arrow[dl, "\alpha_1"'] \arrow[dr, "\alpha_2"] &  \\
\lambda_1 \arrow[dr, "\beta_1"'] & & \lambda_2 \arrow[dl, "\beta_2"] \\
 & \lambda_3 & 
\end{tikzcd}
\end{center}
denoted ${(\beta_1\colon\lambda_1 \rightarrow \lambda_3,\lambda_3,\beta_2\colon\lambda_2 \rightarrow \lambda_3)}$ or ${\lambda_1+_{\lambda_0}\lambda_2}$, is obtained by computing the pushout in \textbf{Set} of each individual computon component, except $\Sigma$ which simply is the union of ${\Sigma_1}$ and ${\Sigma_2}$: 
\[
P_3 = P_1 +_{P_0} P_2
\] 
\[
U_3 = U_1 +_{U_0} U_2
\] 
\[
I_3 = I_1 +_{I_0} I_2
\] 
\[
O_3 = O_1 +_{O_0} O_2
\] 
\[
\Sigma_3 = \Sigma_1\cup\Sigma_2
\]
with $\tau_3$, $\sigma_3$, $s_3$, $t_3$ and $c_3$ being defined in the obvious way:
\[
\tau_3\colon I_1 +_{I_0} I_2 \twoheadrightarrow U_1 +_{U_0} U_2
\]
\[
\sigma_3\colon O_1 +_{O_0} O_2 \twoheadrightarrow U_1 +_{U_0} U_2
\]
\[
s_3\colon I_1 +_{I_0} I_2 \rightarrow P_1 +_{P_0} P_2
\]
\[
t_3\colon O_1 +_{O_0} O_2 \rightarrow P_1 +_{P_0} P_2
\]
\[
c_3\colon P_1 +_{P_0} P_2 \twoheadrightarrow \Sigma_1 +_{\Sigma_0} \Sigma_2
\]
\end{definition}

Unfortunately, a pushout operation cannot be computed for every span of computon morphisms. To understand why, let us recall that Definition \ref{def:computon-morphism} enforces a computon morphism to insert a computon into another only at the boundaries, so that computons can only ``interact'' through their e-ports. Four examples of valid computon morphisms are depicted in Figure \ref{fig:pushable-span-example}(a). 

\begin{figure}[!h]
\centering
\subcaptionbox{The pushout can be computed because $\alpha_1$ and $\alpha_2$ form a pushable span.}
{
\begin{tikzpicture}
\node[draw=black,fill=white,inner sep=0pt,minimum size=3pt,opacity=0.4] (p1) at (1.6,2.5) {};
\draw[dashed,opacity=0.4] (1.65,2.5) to [] node [pos=0.5] {\arrowflow} node [pos=0.5,yshift=7] {} (2.6,2.5);
\node[opacity=0.4] (u1) at (2.8,2.5) {\scriptsize $u_0$};
\draw[dashed,opacity=0.4] (3,2.5) to [] node [pos=0.5] {\arrowflow} node [pos=0.5,yshift=7, opacity=0.4] {} (3.8,2.5);
\node[fill=black,inner sep=0pt,minimum size=3pt,opacity=0.4] (p2) at (3.9,2.5) {};

\draw[->,opacity=0.4,bend right=20] (2.25,2.15) to node[left]{\scriptsize $\alpha_1$} (1.3,1.5);
\draw[->,opacity=0.4,bend left=20] (3.5,2.15) to node[right]{\scriptsize $\alpha_2$} (4.45,1.5);
\draw[->,opacity=0.4,bend right=20] (1.3,0.3) to node[left]{\scriptsize $\beta_1$} (2.25,-0.35);
\draw[->,opacity=0.4,bend left=20] (4.45,0.3) to node[right]{\scriptsize $\beta_2$} (3.5,-0.35);

\node[draw=black,fill=white,inner sep=0pt,minimum size=3pt,opacity=0.4] (p3) at (0,1.1) {};
\draw[dashed,opacity=0.4] (0.05,1.1) to [] node [pos=0.5] {\arrowflow} node [pos=0.5,yshift=7, opacity=0.4] {} (1,1.1);
\node[opacity=0.4] (u2) at (1.2,1.1) {\scriptsize $u_1$};
\draw[dashed,opacity=0.4] (1.4,1.1) to [] node [pos=0.5] {\arrowflow} node [pos=0.5,yshift=7, opacity=0.4] {} (2.2,1.1);
\node[fill=black,inner sep=0pt,minimum size=3pt,opacity=0.4] (p5) at (2.3,1.1) {};
\draw[dashed] (p3) to [] node [pos=0.5,rotate=-27] {\arrowflow} node [pos=0.5,yshift=-7] {} (1,0.6);
\node (u3) at (1.2,0.6) {\scriptsize $u_2$};
\draw[dashed] (1.4,0.6) to [] node [pos=0.3,rotate=27] {\arrowflow} node [below] {} (p5);

\node[draw=black,fill=white,inner sep=0pt,minimum size=3pt,opacity=0.4] (q3) at (3,0.9) {};
\draw[dashed,opacity=0.4] (3.05,0.9) to [] node [pos=0.5] {\arrowflow} node [pos=0.5,yshift=7, opacity=0.4] {} (4,0.9);
\node[opacity=0.4] (v2) at (4.2,0.9) {\scriptsize $u_3$};
\draw[dashed,opacity=0.4] (4.4,0.9) to [] node [pos=0.5] {\arrowflow} node [pos=0.5,yshift=7, opacity=0.4] {} (5.2,0.9);
\node[fill=black,inner sep=0pt,minimum size=3pt,opacity=0.4] (q5) at (5.3,0.9) {};

\node[draw=black,fill=white,inner sep=0pt,minimum size=3pt,opacity=0.4] (z3) at (1.6,-0.7) {};
\draw[dashed,opacity=0.4] (1.65,-0.7) to [] node [pos=0.5] {\arrowflow} node [pos=0.5,yshift=7,opacity=0.4] {} (2.6,-0.7);
\node[opacity=0.4] (u4) at (2.8,-0.7) {\scriptsize $u_4$};
\draw[dashed,opacity=0.4] (3,-0.7) to [] node [pos=0.5] {\arrowflow} node [pos=0.5,yshift=7,opacity=0.4] {} (3.8,-0.7);
\node[fill=black,inner sep=0pt,minimum size=3pt,opacity=0.4] (z5) at (3.9,-0.7) {};
\draw[dashed] (z3) to [] node [pos=0.5,rotate=-27] {\arrowflow} node [pos=0.5,yshift=-7] {} (2.6,-1.2);
\node (u5) at (2.8,-1.2) {\scriptsize $u_5$};
\draw[dashed] (3,-1.2) to [] node [pos=0.3,rotate=27] {\arrowflow} node [below] {} (z5);

\end{tikzpicture}  
}\hspace{1.3cm}
\subcaptionbox{The pushout cannot be computed because $\alpha_1$ and $\alpha_2$ do not form a pushable span.}
{
\begin{tikzpicture}
\node[draw=black,fill=white,inner sep=0pt,minimum size=3pt,opacity=0.4] (p1) at (1.6,2.5) {};
\draw[dashed,opacity=0.4] (1.65,2.5) to [] node [pos=0.5] {\arrowflow} node [pos=0.5,yshift=7] {} (2.6,2.5);
\node[opacity=0.4] (u1) at (2.8,2.5) {\scriptsize $u_0$};
\draw[dashed,opacity=0.4] (3,2.5) to [] node [pos=0.5] {\arrowflow} node [pos=0.5,yshift=7, opacity=0.4] {} (3.8,2.5);
\node[fill=black,inner sep=0pt,minimum size=3pt,opacity=0.4] (p2) at (3.9,2.5) {};

\draw[->,opacity=0.4,bend right=20] (2.25,2.15) to node[left]{\scriptsize $\alpha_1$} (1.3,1.5);
\draw[->,opacity=0.4,bend left=20] (3.5,2.15) to node[right]{\scriptsize $\alpha_2$} (4.45,1.5);
\draw[->,opacity=0.4,bend right=20] (1.3,0.3) to node[left]{\scriptsize $\beta_1$} (2.25,-0.35);
\draw[->,opacity=0.4,bend left=20] (4.45,0.3) to node[right]{\scriptsize $\beta_2$} (3.5,-0.35);

\node[draw=black,fill=white,inner sep=0pt,minimum size=3pt,opacity=0.4] (p3) at (0,1.1) {};
\draw[dashed,opacity=0.4] (0.05,1.1) to [] node [pos=0.5] {\arrowflow} node [pos=0.5,yshift=7, opacity=0.4] {} (1,1.1);
\node[opacity=0.4] (u2) at (1.2,1.1) {\scriptsize $u_1$};
\draw[dashed,opacity=0.4] (1.4,1.1) to [] node [pos=0.5] {\arrowflow} node [pos=0.5,yshift=7, opacity=0.4] {} (2.2,1.1);
\node[fill=black,inner sep=0pt,minimum size=3pt,opacity=0.4] (p5) at (2.3,1.1) {};
\draw[dashed] (p3) to [] node [pos=0.5,rotate=-27] {\arrowflow} node [pos=0.5,yshift=-7] {} (1,0.6);
\node (u3) at (1.2,0.6) {\scriptsize $u_2$};
\draw[dashed] (1.4,0.6) to [] node [pos=0.3,rotate=27] {\arrowflow} node [below] {} (p5);

\node[draw=black,fill=white,inner sep=0pt,minimum size=3pt] (q3) at (3,0.9) {};
\draw[dashed] (3.05,0.9) to [] node [pos=0.5] {\arrowflow} node [pos=0.5,yshift=7] {} (4,0.9);
\node (v2) at (4.2,0.9) {\scriptsize $u_4$};
\draw[dashed] (4.4,0.9) to [] node [pos=0.5] {\arrowflow} node [pos=0.5,yshift=7] {} (5.2,0.9);
\node[draw,fill=white,fill fraction={black}{0.5},inner sep=0pt,minimum size=3pt,opacity=0.4] (q5) at (5.3,0.9) {};
\draw[dashed,opacity=0.4] (5.35,0.9) to [] node [pos=0.5] {\arrowflow} node [pos=0.5,yshift=7, opacity=0.4] {} (6.3,0.9);
\node[opacity=0.4] (x2) at (6.5,0.9) {\scriptsize $u_3$};
\draw[dashed,opacity=0.4] (6.7,0.9) to [] node [pos=0.5] {\arrowflow} node [pos=0.5,yshift=7, opacity=0.4] {} (7.5,0.9);
\node[fill=black,inner sep=0pt,minimum size=3pt,opacity=0.4] (z5) at (7.6,0.9) {};

\node[draw=black,fill=white,inner sep=0pt,minimum size=3pt] (q3) at (0.55,-0.7) {};
\draw[dashed] (0.6,-0.7) to [] node [pos=0.5] {\arrowflow} node [pos=0.5,yshift=7] {} (1.55,-0.7);
\node (v2) at (1.75,-0.7) {\scriptsize $u_5$};
\draw[dashed] (1.95,-0.7) to [] node [pos=0.5] {\arrowflow} node [pos=0.5,yshift=7] {} (2.85,-0.7);
\node[draw=black,fill fraction={black}{0.5},inner sep=0pt,minimum size=3pt,opacity=0.4] (z3) at (2.9,-0.7) {};
\draw[dashed,opacity=0.4] (2.95,-0.7) to [] node [pos=0.5] {\arrowflow} node [pos=0.5,yshift=7,opacity=0.4] {} (3.9,-0.7);
\node[opacity=0.4] (u4) at (4.1,-0.7) {\scriptsize $u_6$};
\draw[dashed,opacity=0.4] (4.3,-0.7) to [] node [pos=0.5] {\arrowflow} node [pos=0.5,yshift=7,opacity=0.4] {} (5.1,-0.7);
\node[fill=black,inner sep=0pt,minimum size=3pt,opacity=0.4] (z5) at (5.2,-0.7) {};
\draw[dashed] (z3) to [] node [pos=0.5,rotate=-27] {\arrowflow} node [pos=0.5,yshift=-7] {} (3.9,-1.2);
\node (u5) at (4.1,-1.2) {\scriptsize $u_7$};
\draw[dashed] (4.3,-1.2) to [] node [pos=0.3,rotate=27] {\arrowflow} node [below] {} (z5);

\end{tikzpicture}  
}
{
\begin{tikzpicture}
\matrix [below, ampersand replacement=\&] at (current bounding box.south) {
\draw[dashed] (2.3,0) to node[pos=0.5,yshift=4]{\arrowflow} (2.8,0); \& \node[inner sep=0pt,minimum size=3pt,label=right:{\scriptsize Control flow edge}]{}; \& \node{}; \& \&
 \node[draw=black,fill=white,inner sep=0pt,minimum size=3pt,label=right:{\scriptsize Ec-inport}] {}; \& \node{};
 \& \node[fill=black,inner sep=0pt,minimum size=3pt,label={right:{\scriptsize Ec-outport}}] {}; \& \node{};
 \& \& \node[draw,fill=white,fill fraction={black}{0.5},inner sep=0pt,minimum size=3pt,label={right:\scriptsize Ic-port}] {};\\
};
\end{tikzpicture}
}
\caption{An example to illustrate the semantics of a pushable span of computon morphisms.}
\label{fig:pushable-span-example}
\end{figure}

Figure \ref{fig:pushable-span-example}(b) shows that, unfortunately, defining a span of valid computon morphisms is not sufficient to compute a pushout in $\textbf{Set}^\textbf{Comp}$. A pushout only exists for spans that adhere to Definition \ref{def:computon-morphisms-pushable}. 

\begin{definition} [Pushable Span] \label{def:computon-morphisms-pushable}
A span ${\lambda_1 \xleftarrow{\alpha_1} \lambda_0 \xrightarrow{\alpha_2} \lambda_2}$ of computon morphisms is pushable if ${\alpha_1(\vec{i}(\alpha_2)) \cup \alpha_1(\vec{o}(\alpha_2)) \subseteq P_1^+ \cup P_1^-}$ and ${\alpha_2(\vec{i}(\alpha_1)) \cup \alpha_2(\vec{o}(\alpha_1)) \subseteq P_2^+ \cup P_2^-}$.
\end{definition}

Basically, Definition \ref{def:computon-morphisms-pushable} says that a span ${\lambda_1 \xleftarrow{\alpha_1} \lambda_0 \xrightarrow{\alpha_2} \lambda_2}$ of computon morphisms is pushable when, for every port $p \in P_0$, the following holds:
\begin{enumerate}
\item If ${\alpha_1(p)}$ is connected to/from a computation unit of $\lambda_1$ that does not form part of the $\alpha_1$-embedding, then ${\alpha_2(p)}$ must be either e-inport or e-outport in $\lambda_2$. \label{condition1-pushable}
\item If ${\alpha_2(p)}$ is connected to/from a computation unit of $\lambda_2$ that does not form part of the $\alpha_2$-embedding, then ${\alpha_1(p)}$ must be either e-inport or e-outport in $\lambda_1$. \label{condition2-pushable}
\end{enumerate}

To understand these conditions, consider the span formed by $\alpha_1$ and $\alpha_2$ in Figure \ref{fig:pushable-span-example}(a), which embeds the apex computon into the parts highlighted in gray. Here, $\alpha_1$ maps the e-inport $p_0$ of the apex to the e-inport of the left leg. As the e-inport of the left leg is connected to the computation unit $u_2$, which does not form part of the $\alpha_1$-embedding, it follows that ${p_0\in\vec{o}(\alpha_1)}$ (see Definition \ref{def:computon-morphism}). According to the Condition \ref{condition1-pushable} presented above, $\alpha_2(p_0)$ must be either e-inport or e-outport. As in this case $\alpha_2(p_0)$ is indeed an e-inport and the other conditions of Definition \ref{def:computon-morphisms-pushable} are analogously satisfied, we conclude that our span is pushable. 

Figure \ref{fig:pushable-span-example}(b) presents a counterexample in which $\alpha_2$ does not map the e-inport of the apex computon to an external port of the right leg but to an i-port that lies between the computation units $u_3$ and $u_4$. As ${p_0\in\vec{o}(\alpha_1)}$ and $\alpha_2(p_0)$ is neither e-inport nor e-outport, we have that the Condition \ref{condition1-pushable} presented above is not satisfied. Hence, the span from Figure \ref{fig:pushable-span-example}(b), although valid, is not pushable. 

In particular, the pushout of such a span cannot be computed because the induced computon morphism $\beta_2$ violates Definition \ref{def:computon-morphism}, i.e., ${\alpha_2(p_0)\in\vec{o}(\beta_2)}$ but ${\alpha_2(p_0)}$ is neither e-inport nor e-outport. The fact a pushout construction can only be computed for pushable spans (see Proposition \ref{prop:pushout-pushable}) entails that $\textbf{Set}^\textbf{Comp}$ does not have all pushouts. Nevertheless, when such a construction exists for a span whose legs are connected computons, the result of the corresponding operation is a connected computon (see Proposition \ref{prop:pushout-connected}). 

\begin{proposition} \label{prop:pushout-pushable}
Let ${\alpha_1\colon \lambda_0 \rightarrow \lambda_1}$ and ${\alpha_2\colon \lambda_0 \rightarrow \lambda_2}$ be two computon morphisms. The pushout of ${\alpha_1}$ and ${\alpha_2}$ exists $\iff$ ${\lambda_1 \xleftarrow{\alpha_1} \lambda_0 \xrightarrow{\alpha_2} \lambda_2}$ is pushable. 
\end{proposition}
\begin{proof}
${(\implies)}$ Assuming that the pushout ${(\beta_1\colon\lambda_1 \rightarrow \lambda_3,\lambda_3,\beta_2\colon\lambda_2 \rightarrow \lambda_3)}$ of ${\alpha_1\colon \lambda_0 \rightarrow \lambda_1}$ and ${\alpha_2\colon \lambda_0 \rightarrow \lambda_2}$ exists in ${\textbf{Set}^\textbf{Comp}}$, we just prove ${\alpha_1(\vec{i}(\alpha_2)) \subseteq P_1^+ \cup P_1^-}$ since the other conditions of Definition \ref{def:computon-morphisms-pushable} follow analogously.

Supposing there is some ${p_1 \in \alpha_1(\vec{i}(\alpha_2)) \setminus (P_1^+ \cup P_1^-)}$, we know there is a port ${p_0 \in \vec{i}(\alpha_2)}$ where ${\alpha_1(p_0)=p_1}$. As the pushout ${(\beta_1,\lambda_3,\beta_2)}$ exists in $\textbf{Set}^\textbf{Comp}$, the equation ${\beta_1(\alpha_1(p_0))=}$ $\beta_1(p_1)=\beta_2(\alpha_2(p_0))$ holds. Since ${p_0 \in \vec{i}(\alpha_2)}$, there is some ${u_2 \in \bullet \alpha_2(p_0) \setminus \alpha_2(\bullet p_0)}$ so that ${\beta_1(p_1)=\beta_2(\alpha_2(p_0)) \in \beta_2(u_2) \bullet}$. As ${p_1 \notin P_1^+ \cup P_1^-}$, ${p_1 \notin \vec{i}(\beta_1)}$ (see Definition \ref{def:computon-morphism}), meaning that there is some ${u_1 \in \bullet p_1}$ where ${\beta_1(u_1)=\beta_2(u_2)}$. Using commutativity, we deduce the existence of ${u_0 \in U_0}$ such that ${\alpha_1(u_0)=u_1}$ and ${\alpha_2(u_0)=u_2}$. As this contradicts the fact ${u_2 \in \bullet \alpha_2(p_0) \setminus \alpha_2(\bullet p_0)}$, we conclude ${\alpha_1(\vec{i}(\alpha_2)) \subseteq P_1^+ \cup P_2^-}$.

${(\impliedby)}$ Assuming ${\lambda_1 \xleftarrow{\alpha_1} \lambda_0 \xrightarrow{\alpha_2} \lambda_2}$ is a pushable span of computon morphisms (as per Definition \ref{def:computon-morphisms-pushable}), we prove that the pushout ${(\beta_1\colon\lambda_1 \rightarrow \lambda_3,\lambda_3,\beta_2\colon\lambda_2 \rightarrow \lambda_3)}$ of ${\alpha_1}$ and ${\alpha_2}$ can be constructed via Definition \ref{def:pushout-computation}. For this, we first prove that ${\beta_1}$ and ${\beta_2}$ are computon morphisms. Below we provide the proof for ${\beta_1}$ only, since the other is completely analogous.

As \textbf{Set} has all pushouts, the existence of each component of ${\beta_1}$ and ${\beta_2}$ can be directly deduced. For example, the ${\beta_1}$-component embedding ports of ${P_1}$ into ${P_3}$ exists because ${P_3 = P_1 +_{P_0} P_2}$ can always be computed in \textbf{Set}. Consequently, the equations ${\beta_i \circ c_1 = c_3 \circ \beta_i}$, ${\beta_i \circ \tau_1 = \tau_3 \circ \beta_i}$, ${\beta_i \circ \sigma_1 = \sigma_3 \circ \beta_i}$, ${\beta_i \circ s_1 = s_3 \circ \beta_i}$ and ${\beta_i \circ t_1 = t_3 \circ \beta_i}$ hold for $i=1,2$ (see the commutative diagrams of Definition \ref{def:computon-morphism}). 

For the $\Sigma$-component of $\beta_1$ to be an inclusion (as required by Definition \ref{def:computon-morphism}), we simply choose the canonical function ${f:\Sigma_1\rightarrow\Sigma_1\cup\Sigma_2}$ given by ${f(x)=x}$ for all ${x \in \Sigma_1}$. We now show ${\vec{i}(\beta_1) \cup \vec{o}(\beta_1) \subseteq P_1^+ \cup P_1^-}$ holds.

If ${p_1 \in \vec{i}(\beta_1)}$ then ${\bullet \beta_1(p_1) \setminus \beta_1(\bullet p_1) \neq \emptyset}$ so there exists some ${u_3 \in \bullet \beta_1(p_1) \setminus \beta_1(\bullet p_1)}$ and no ${u_1 \in \bullet p_1}$ where ${\beta_1(u_1)=u_3}$. Since ${U_3 = U_1 +_{U_0} U_2}$ (by Definition \ref{def:pushout-computation}), there must be some ${u_2 \in U_2}$ where ${\beta_2(u_2)=u_3 \in \bullet \beta_1(p_1) \setminus \beta_1(\bullet p_1)}$ and, consequently, some ${o_3 \in O_3}$ where ${\beta_2(u_2) \xrightarrow{o_3} \beta_1(p_1)}$ (see Definition \ref{def:pre-post-sets}). As there is no ${o_1 \in O_1}$ satisfying ${\beta_1(\sigma_1(o_1)) = \sigma_3(\beta_1(o_1)) =}$ $\sigma_3(o_3)=u_3$ because there is no ${u_1 \in U_1}$ satisfying ${\beta_1(u_1)=u_3}$, ${O_3 = O_1 +_{O_0} O_2}$ implies that there must be some ${o_2 \in O_2}$ for which ${\beta_2(o_2)=o_3}$ is true. As ${\beta_2(u_2) \xrightarrow{\beta_2(o_2)} \beta_1(p_1)}$ holds, we use the commutativity property to deduce the existence of ${p_2 \in P_2}$ such that ${\beta_2(p_2)=\beta_1(p_1)}$ and $u_2 \in \bullet p_2$. As ${\beta_2(p_2) = \beta_1(p_1)}$ and ${P_3=P_1 +_{P_0} P_2}$, we use again commutativity to deduce there also is a port ${p_0 \in P_0}$ with ${\alpha_1(p_0)=p_1}$ and ${\alpha_2(p_0)=p_2}$ so that ${u_2 \in \bullet \alpha_2(p_0)}$. Since ${U_3= U_1 +_{U_0} U_2}$ and ${(\nexists u_1 \in U_1)[\beta_1(u_1)=u_3=\beta_2(u_2)]}$, we have that there is no ${u_0 \in U_0}$ where ${\alpha_2(u_0)=u_2}$. That is, ${u_2 \notin \alpha_2(\bullet p_0)}$ because ${\bullet p_0 \subseteq U_0}$. Thus, it is true that ${u_2 \in \bullet \alpha_2(p_0) \setminus \alpha_2(\bullet p_0)}$ and, therefore, that ${p_0 \in \vec{i}(\alpha_2)}$ (see Definition \ref{def:computon-morphism}).

Since ${\alpha_1(\vec{i}(\alpha_2)) \subseteq P_1^+ \cup P_1^-}$ (by Definition \ref{def:computon-morphisms-pushable}) and ${p_0 \in \vec{i}(\alpha_2)}$ (by the above), we have ${\alpha_1(p_0)=p_1 \in P_1^+ \cup P_1^-}$. A similar approach can be used to show that ${q_1 \in \vec{o}(\beta_1)}$ implies $q_1 \in P_1^+ \cup P_1^-$. So, ${\vec{i}(\beta_1) \cup \vec{o}(\beta_1) \subseteq P_1^+ \cup P_1^-}$. 

Having proved ${\beta_1}$ is a computon morphism, we now assume ${\gamma_1\colon \lambda_1 \rightarrow \lambda_4}$ and ${\gamma_2\colon \lambda_2 \rightarrow \lambda_4}$ are computon morphisms with ${\gamma_1 \circ \alpha_1 = \gamma_2 \circ \alpha_2}$, in order to show there is a unique computon morphism ${\gamma_3\colon \lambda_3 \rightarrow \lambda_4}$ such that the corresponding diagram commutes. As it is obvious that the ${\Sigma}$-component of ${\gamma_3}$ is an inclusion function because the ${\Sigma}$-components of ${\beta_i}$ and ${\gamma_i}$ also are (for $i=1,2$) and because ${\Sigma_3=\Sigma_1\cup\Sigma_2}$, we just prove ${\vec{i}(\gamma_3) \cup \vec{o}(\gamma_3) \subseteq P_3^+ \cup P_3^-}$. 

Let ${p_3 \in \vec{i}(\gamma_3)}$ so ${\bullet \gamma_3(p_3) \setminus \gamma_3(\bullet p_3) \neq \emptyset}$. As ${P_3 = P_1 +_{P_0} P_2}$, we observe ${p_3=\beta_j(p_j)}$ for some ${p_j \in P_j}$ with ${j=1,2}$. With this in mind, we perform the following operations:
\begin{align*}
\emptyset & \neq \bullet \gamma_3(p_3) \setminus \gamma_3(\bullet p_3) = \bullet \gamma_3(\beta_j(p_j)) \setminus \gamma_3(\bullet \beta_j(p_j)) \\
 &\hspace{3.31cm} = \bullet \gamma_j(p_j) \setminus \gamma_3(\bullet \beta_j(p_j)) \hspace{1cm}\text{because } \gamma_3 \circ \beta_j = \gamma_j \\
 &\hspace{3.31cm} \subseteq \bullet \gamma_j(p_j) \setminus \gamma_3(\beta_j(\bullet p_j)) \hspace{0.9cm}\text{ because } \beta_j(\bullet p_j) \subseteq \bullet \beta_j(p_j) \\
 &\hspace{3.31cm} = \bullet \gamma_j(p_j) \setminus \gamma_j(\bullet p_j) \hspace{1.7cm}\text{because } \gamma_3 \circ \beta_j = \gamma_j
\end{align*}
By the above, we deduce ${p_j \in \vec{i}(\gamma_j)}$ and, consequently, ${p_j \in P_j^+ \cup P_j^-}$ (because $\gamma_j$ is a computon morphism with ${\vec{i}(\gamma_j) \subseteq P_j^+ \cup P_j^-}$ --- see Definition \ref{def:computon-morphism}). Using the facts ${p_3=\beta_j(p_j) \in \vec{i}(\gamma_3)}$ and ${p_j \in P_j^+ \cup P_j^-}$, we further deduce ${\beta_j^{-1}(p_3) \subseteq P_j^+ \cup P_j^-}$. Hence, $p_3 \in P_3^+ \cup P_3^-$ by Proposition \ref{prop:computon-morphism-inports-outports}. The proof of $q_3 \in \vec{o}(\gamma_3) \implies q_3 \in P_3^+ \cup P_3^-$ is completely analogous.

As the $\Sigma$-component of $\gamma_3$ is an inclusion function and $\vec{i}(\gamma_3) \cup \vec{o}(\gamma_3) \subseteq P_3^+ \cup P_3^-$, we conclude that $\gamma_3$ is a computon morphism. Such a morphism is unique because each of its components are unique (by the fact that a pushout in \textbf{Set} satisfies the universal property).
\end{proof}

\begin{proposition} \label{prop:pushout-connected}
Let ${\lambda_1 \xleftarrow{\alpha_1} \lambda_0 \xrightarrow{\alpha_2} \lambda_2}$ be a pushable span of computon morphisms. If $\lambda_1$ and $\lambda_2$ are connected computons, then the pushout of $\alpha_1$ and $\alpha_2$ is a connected computon.
\end{proposition}
\begin{proof}
Let ${\lambda_1 \xleftarrow{\alpha_1} \lambda_0 \xrightarrow{\alpha_2} \lambda_2}$ be a pushable span of computon morphisms and assume ${\lambda_1}$ and ${\lambda_2}$ are connected computons. By Proposition \ref{prop:pushout-pushable}, ${({\beta_1\colon\lambda_1 \rightarrow \lambda_3},\lambda_3,{\beta_2\colon\lambda_2 \rightarrow \lambda_3})}$ can be constructed from ${\alpha_1}$ and ${\alpha_2}$.

To prove ${\lambda_3}$ is a connected computon, let ${p_3 \in Im(s_3) \cup P_3^+}$. Since ${P_3=P_1 +_{P_0} P_2}$, we know $p_3$ is identified with a port from (1) $\lambda_1$, (2) $\lambda_2$ or (3) both. Before considering these three scenarios, let ${V_i}$ be a sequence of visited e-outports of $\lambda_i$ for $i\in \{1,2\}$. The sequence $V_i$ is initially empty and is iteratively updated through the following process.

\begin{enumerate}
\item If $p_3$ is exclusively identified with a port from $\lambda_1$, i.e., there is a port ${p_1 \in P_1}$ where ${\beta_1(p_1)=p_3}$, we have that ${p_1 \in P_1^-}$ cannot hold because: \label{pushout-connected-1}
\begin{itemize}
\item If ${\beta_1(p_1)=p_3 \in Im(s_3)}$, then there is some ${i_3 \in I_3}$ where ${s_3(i_3)=\beta_1(p_1)=p_3}$. As there is no ${i_1 \in I_1}$ with ${s_1(i_1)=p_1}$ because ${p_1 \in P_1^-}$, ${I_3=I_1 +_{I_0} I_2}$ implies there must be some ${i_2 \in I_2}$ for which ${\beta_2(i_2)=i_3}$ holds. By the totality of $s_2$, there must also be some ${p_2 \in P_2}$ where ${s_2(i_2)=p_2}$. Using the commutative equations derived from $\beta_2$, we obtain: ${\beta_2(s_2(i_2))=s_3(\beta_2(i_2))=s_3(i_3)=\beta_1(p_1)=p_3}$ which violates our assumption that $p_3$ is exclusively identified with a $\lambda_1$-port.
\item If ${\beta_1(p_1)=p_3 \in P_3^+}$, we use Proposition \ref{prop:computon-morphism-inports-outports} to deduce ${p_1 \in P_1^+}$. As Proposition \ref{prop:computon-connected-isolated-port} says ${P_1^+\cap P_1^-=\emptyset}$ because $\lambda_1$ is connected, we have that ${p_1 \in P_1^+ \cap P_1^-}$ cannot hold.
\end{itemize}

Now, if ${p_1 \notin P_1^-}$, then there is some ${i_1 \in I_1}$ where ${s_1(i_1)=p_1}$, i.e., ${p_1 \in Im(s_1)}$, meaning ${p_1 \xrightarrow{\exists} q_1}$ must hold for some ${q_1 \in P_1^-}$ with ${(\nexists k)[V_1[k]=q_1]}$ (because $\lambda_1$ is a connected computon). By the commutativity property of $\beta_1$, ${\beta_1(p_1) \xrightarrow{\exists} \beta_1(q_1)}$ must also hold. We now append ${\langle q_1 \rangle}$ to $V_1$ and consider the following two cases: 
\begin{itemize}
\item ${\beta_1(q_1) \in P_3^-}$ when there is no ${p_2 \in P_2}$ such that ${\beta_1(q_1)=\beta_2(p_2)}$ (i.e., ${\beta_1(q_1)}$ is exclusively identified with a $\lambda_1$-port) or when ${\beta_1(q_1)}$ is identified only with e-outports of $\lambda_2$. As ${\beta_1(p_1)=p_3 \in Im(s_3) \cup P_3^+}$ and ${\beta_1(q_1) \in P_3^-}$, $\lambda_3$ must be a connected computon.
\item ${\beta_1(q_1) \notin P_3^-}$ when there is some ${p_2 \in P_2\setminus P_2^-}$ with ${\beta_1(q_1)=\beta_2(p_2)}$. As $\lambda_2$ is a connected computon, ${p_2 \xrightarrow{\exists} q_2}$ for some ${q_2 \in P_2^-}$ and, therefore, ${\beta_2(p_2) \xrightarrow{\exists} \beta_2(q_2)}$ must also hold by the commutativity property of $\beta_2$. Considering ${\beta_1(p_1)=p_3}$ and ${\beta_1(q_1)=\beta_2(p_2)}$, we have ${p_3 \xrightarrow{\exists} \beta_2(p_2) \xrightarrow{\exists} \beta_2(q_2)}$. If ${\beta_2(q_2) \in P_3^-}$, then $\lambda_3$ is a connected computon. If not, perform \ref{pushout-connected-1}, \ref{pushout-connected-2} or \ref{pushout-connected-3}, whichever holds for ${\beta_2(q_2)}$.
\end{itemize}
\item The proof when $p_3$ is exclusively identified with a $\lambda_2$-port is symmetric to that of (1). \label{pushout-connected-2}
\item \label{pushout-connected-3} The port $p_3$ is identified with a port from both $\lambda_1$ and $\lambda_2$, i.e., ${p_3=\beta_1(p_1)=\beta_2(p_2)}$ for some ${p_1 \in P_1}$ and some ${p_2 \in P_2}$. In this case, if ${p_1 \in Im(s_1) \cup P_1^+}$, there is some ${q_1 \in P_1^-}$ for which ${p_1 \xrightarrow{\exists} q_1}$ and ${(\nexists k)[V_1[k]=q_1]}$ because $\lambda_1$ is a connected computon; consequently, we append ${\langle q_1 \rangle}$ to $V_1$ and infer ${\beta_1(p_1) \xrightarrow{\exists} \beta_1(q_1)}$ from the commutativity property of $\beta_1$. If ${p_2 \in Im(s_2) \cup P_2^+}$, there is some ${q_2 \in P_2^-}$ such that ${(\nexists k)[V_2[k]=q_2]}$ and ${p_2 \xrightarrow{\exists} q_2}$ because $\lambda_2$ is a connected computon; hence, we append ${\langle q_2 \rangle}$ to $V_2$ and infer ${\beta_2(p_2) \xrightarrow{\exists} \beta_2(q_2)}$ from the commutativity property of $\beta_2$. 

Now, perform the following check nondeterministically for each ${j=1,2}$: If ${\beta_j(p_j) \xrightarrow{\exists} \beta_j(q_j)}$ and ${\beta_j(q_j) \in P_3^-}$, there is information flow from ${p_3=\beta_1(p_1)=\beta_2(p_2)}$ to an e-outport of $\lambda_3$, meaning $\lambda_3$ is a connected computon. If ${\beta_j(p_j) \xrightarrow{\exists} \beta_j(q_j)}$ and ${\beta_j(q_j) \notin P_3^-}$, perform \ref{pushout-connected-1}, \ref{pushout-connected-2} or \ref{pushout-connected-3}, whichever holds for ${\beta_j(q_j)}$.
\end{enumerate}

Checking from \ref{pushout-connected-1} to \ref{pushout-connected-3} is an iterative process which is repeated until yielding ${p_3 \xrightarrow{\exists} \cdots \xrightarrow{\exists} q_3}$ for the initial $p_3$ and some ${q_3 \in P_3^-}$. Termination is guaranteed because (i) the number of ports, edges and computation units of $\lambda_3$ is finite; (ii) $V_i$ contains all the visited e-outports of $\lambda_i$ so each iteration extends the information flow pipeline to a new e-outport; and (iii) $\lambda_1$ and $\lambda_2$ are both connected computons. As the iterative process will eventually devise information flow from $p_3$ to an e-outport of $\lambda_3$, we conclude $\lambda_3$ must be a connected computon.
\end{proof}

Pushouts are not the only colimits that can be computed in $\textbf{Set}^\textbf{Comp}$. Another useful operation for describing our theory of computons is that of coproduct which intuitively allows the definition of a side-by-side computon. As per Proposition \ref{prop:computon-coproduct}, this operation can always be computed in $\textbf{Set}^\textbf{Comp}$ so that such a category has all coproducts. Computing the coproduct of two connected computons results in another connected computon, as shown in Proposition \ref{prop:computon-coproduct-connected}.

\begin{proposition} \label{prop:computon-coproduct}
The coproduct ${\lambda_1 + \lambda_2}$ of computons ${\lambda_1}$ and ${\lambda_2}$ always exists in ${\textbf{Set}^\textbf{Comp}}$. 
\end{proposition}
\begin{proof}
The coproduct ${\lambda_3}$ of a computon ${\lambda_1}$ and a computon ${\lambda_2}$, written ${\lambda_1 + \lambda_2}$, is obtained by computing the following in ${\textbf{Set}}$: ${P_3=P_1 + P_2}$, ${U_3=U_1 + U_2}$, ${I_3=I_1 + I_2}$, ${O_3=O_1 + O_2}$ and ${\Sigma_3=\Sigma_1 \cup \Sigma_2}$. Particularly, the operation to obtain ${\Sigma_3}$ can always be computed since the cospan ${\Sigma_1 \hookrightarrow \Sigma_1 \cup \Sigma_2 \hookleftarrow \Sigma_2}$ of unique inclusion functions always exists in ${\textbf{Set}}$ (because ${\Sigma_1,\Sigma_2 \subset \mathbb{N}}$). This operation also satisfies the universal property in the sense that, for any set ${\Sigma_4 \subset \mathbb{N}}$ with inclusions ${\Sigma_1 \hookrightarrow \Sigma_4}$ and ${\Sigma_2 \hookrightarrow \Sigma_4}$, there is a unique inclusion ${\Sigma_1 \cup \Sigma_2 \hookrightarrow \Sigma_4}$. Consequently, the function ${c_3}$ is canonically identified with the mapping ${P_1+P_2 \twoheadrightarrow \Sigma_1 \cup \Sigma_2}$ which is surjective by the fact ${(\forall j \in \{1,2\})(\forall p \in P_j)(\exists !x \in \Sigma_j)[c_j(p)=x]}$. All the functions of ${\lambda_3}$, including ${c_3}$, are defined in the obvious way to make the corresponding squares commute. For example: ${\forall o_3 \in O_3, \sigma_3(o_3)} = 
    \begin{cases}
        (\beta_1 \circ \sigma_1)(o_1) & \text{if } o_3 = \beta_1(o_1) \text{ for some } o_1 \in O_1\\
        (\beta_2 \circ \sigma_2)(o_2) & \text{if } o_3 = \beta_2(o_2) \text{ for some } o_2 \in O_2
    \end{cases}$
where ${\beta_k\colon \lambda_k \rightarrow \lambda_3}$ is the canonical injection into the coproduct $\lambda_3$. 

The existence of each component of ${\beta_k}$ follows directly from the fact that ${\textbf{Set}}$ has all coproducts. Particularly, the ${\Sigma}$-component of ${\beta_k}$ is the unique inclusion ${\Sigma_k \hookrightarrow \Sigma_1 \cup \Sigma_2}$ while the others are obvious morphisms of the form ${A_k \rightarrow A_1 + A_2}$ such as ${U_k \rightarrow U_1 + U_2}$. Computing ${U_3}$ as the disjoint union of ${U_1}$ and ${U_2}$ implies ${\vec{i}(\beta_k) \cup \vec{o}(\beta_k) = \emptyset \subseteq P_k^+ \cup P_k^-}$.

As the coproduct of each component of $\lambda_3$ is computed in \textbf{Set} and \textbf{Set} has all coproducts, it is true that coproduct in ${\textbf{Set}^\textbf{Comp}}$ satisfies the universal property. This means that, if there is a computon ${\lambda_4}$ with morphisms ${\gamma_1\colon \lambda_1 \rightarrow \lambda_4}$ and ${\gamma_2\colon \lambda_2 \rightarrow \lambda_4}$, there is a unique morphism ${\gamma_3\colon \lambda_3 \rightarrow \lambda_4}$ such that ${\gamma_3 \circ \beta_1 = \gamma_1}$ and ${\gamma_3 \circ \beta_2 = \gamma_2}$. As the ${\Sigma}$-components of $\gamma_1$ and $\gamma_2$ are both inclusion functions it is easy to see that the ${\Sigma}$-component of ${\gamma_3}$ is the unique inclusion ${\Sigma_3 \hookrightarrow \Sigma_4}$. The other components of ${\gamma_3}$ are given in the obvious way. For example:  
\[
\forall p_3 \in P_3, \gamma_3(p_3) = 
    \begin{cases}
        \gamma_1(p_1) & \text{if } p_3 = \beta_1(p_1) \text{ for some } p_1 \in P_1\\
        \gamma_2(p_2) & \text{if } p_3 = \beta_2(p_2) \text{ for some } p_2 \in P_2
    \end{cases}
\]

To meet the rest of the requirements of Definition \ref{def:computon-morphism} for $\gamma_3$, we just now have to prove ${\vec{i}(\gamma_3) \cup \vec{o}(\gamma_3) \subseteq P_3^+ \cup P_3^-}$. Below we provide the proof of ${\vec{i}(\gamma_3) \subseteq P_3^+ \cup P_3^-}$ since the other is completely analogous.

Let ${p_3 \in \vec{i}(\gamma_3)}$ so ${\bullet \gamma_3(p_3) \setminus \gamma_3(\bullet p_3) \neq \emptyset}$. As ${P_3=P_1+P_2}$, we observe that ${p_3=\beta_n(p_n)}$ for some ${p_n \in P_n}$ (${n=1,2}$). Using a similar reasoning as the proof of Proposition \ref{prop:pushout-pushable}, we deduce ${p_n \in \vec{i}(\gamma_n)}$ which implies ${p_n \in P_n^+ \cup P_n^-}$ because ${\gamma_n}$ is a computon morphism with ${\vec{i}(\gamma_n) \subseteq P_n^+ \cup P_n^-}$ (see Definition \ref{def:computon-morphism}). As ${\vec{i}(\beta_n) \cup \vec{o}(\beta_n) = \emptyset}$, ${P_n \cap \vec{i}(\beta_n)=\emptyset=P_n \cap \vec{o}(\beta_n)}$ and, consequently, ${\beta_n^{-1}(P_3^+)=P_n^+}$ and ${\beta_n^{-1}(P_3^-)=P_n^-}$ (see Proposition \ref{prop:computon-morphism-eports-equality}). That is, ${p_n \in P_n^+ \cup P_n^-}$ $\iff$ ${p_n \in \beta_n^{-1}(P_3^+) \cup \beta_n^{-1}(P_3^-)}$. Using the fact ${p_3=\beta_n(p_n)}$, we conclude ${p_3 \in P_3^+ \cup P_3^-}$.
\end{proof}

\begin{proposition}\label{prop:computon-coproduct-connected}
The coproduct of two connected computons is a connected computon.
\end{proposition}
\begin{proof}
By Proposition \ref{prop:computon-coproduct}, we know the coproduct $\lambda_3$ of connected computons $\lambda_1$ and $\lambda_2$ exists. To prove $\lambda_3$ is also connected, consider a port ${p_3 \in Im(s_3) \cup P_3^+}$ which is necessarily identified with a port from either $\lambda_1$ or $\lambda_2$ by the definition of coproduct. Assuming ${p_3=\beta_k(p)}$ for some ${p \in P_k}$ and ${k \in \{1,2\}}$, we prove by cases:
\begin{itemize}
\item If ${p_3 \in Im(s_3)}$, there is some ${i_3 \in I_3}$ where ${s_3(i_3)=p_3=\beta_k(p)}$. By commutativity, there also is some ${i \in I_k}$ for which ${\beta_k(i)=i_3}$ and ${s_k(i)=p}$, i.e., ${p \in Im(s_k)}$. As $\lambda_k$ is a connected computon, ${p \xrightarrow{\exists} q}$ holds for some ${q \in P_k^-}$ (see Definition \ref{def:computon-connected}). Again, by commutativity, ${\beta_k(p) \xrightarrow{\exists} \beta_k(q)}$ so that ${p_3 \xrightarrow{\exists} \beta_k(q)}$. Using coproduct definition and ${q \in P_k^-}$, we deduce ${\beta_k(q) \in P_3^-}$.
\item If ${p_3 \in P_3^+}$, then ${p \in P_k^+}$ by Proposition \ref{prop:computon-morphism-inports-outports}. As $\lambda_k$ is a connected computon, we have some ${q \in P_k^-}$ for which ${p \xrightarrow{\exists} q}$. Hence, ${\beta_k(p) \xrightarrow{\exists} \beta_k(q)}$ by commutativity and, therefore, ${p_3 \xrightarrow{\exists} \beta_k(q)}$. Using coproduct definition and ${q \in P_k^-}$, we deduce ${\beta_k(q) \in P_3^-}$.
\end{itemize}
Having information flow from an arbitrary port in ${Im(s_3) \cup P_3^+}$ to an e-outport of $\lambda_3$ entails that $\lambda_3$ is a connected computon.
\end{proof}

\subsection{Control Flow and Data Flow Structures}
\label{sec:control-data-structures}

The control flow structure of a computon can be expressed as an object in the category of directed labelled graphs and graph homomorphisms, $\textbf{Set}^\textbf{Gr}$, which is formalised below.

\begin{definition}\label{def:digraph-cat}
Let $\textbf{Gr}$ be the category freely generated by the following diagram:
\[
\begin{tikzcd}
 E \arrow[r, shift left=2pt, "\vec{s}"]\arrow[r, shift right=2pt, "\vec{t}"'] & V \arrow[r, "l"] & L
\end{tikzcd}
\]
which gives rise to the functor category $\textbf{Set}^\textbf{Gr}$ of directed labelled graphs and graph homomorphisms \cite{lowe_algebraic_1993}. In this category, composition is defined component-wise and the components of every identity morphism are all identity functions (which map domain elements to themselves). Even though a directed labelled graph $G$ is a functor $\textbf{Gr}\rightarrow\textbf{Set}$, we simplify notation by writing $V$ for the set $G(V)$ of vertices, $E$ for the set $G(E)$ of edges and $L$ for the set $G(L)$ of labels. We similarly write $\vec{s}$ for the source function $G(\vec{s})$, $\vec{t}$ for the target function $G(\vec{t})$ and $l$ for the labelling function $G(l)$. Whenever there is a subindex for a graph, we use the same subindex for its components, e.g., $V_1$ for $G_1$. The notation $\vec{s}$ and $\vec{t}$ is used to avoid confusion with the $s$ and $t$ functions of a computon $\lambda$.
\end{definition}

As the control flow structure of a computon is an object in $\textbf{Set}^\textbf{Gr}$, we refer to it as a Control Flow Graph (CFG) which is constructed via the application of the functor described in Definition \ref{def:computon-cfg}.

\begin{definition}[Computon CFG]\label{def:computon-cfg}
The functor $\mathscr{C}\colon\textbf{Set}^{\textbf{Comp}}\rightarrow \textbf{Set}^{\textbf{Gr}}$ maps each computon $\lambda$ to its underlying CFG $\mathscr{C}(\lambda)$ as follows:
\begin{itemize}
\item The set $V$ of vertices of $\mathscr{C}(\lambda)$ is $U\cup\{p\in P \mid c(p)=0\}$.
\item The set $E$ of edges of $\mathscr{C}(\lambda)$ is $\{i\in I \mid c(s(i))=0\}\cup\{o\in O \mid c(t(o))=0\}$.
\item The set $L$ of labels of $\mathscr{C}(\lambda)$ is $\{0,\kappa\}$.
\item The source function $\vec{s}\colon E\rightarrow V$ of $\mathscr{C}(\lambda)$ is given by $\vec{s}(e) = 
    \begin{cases}
        s(e) & \text{if } e\in I \\
        \sigma(e) & \text{if } e\in O
    \end{cases}$
\item The target function $\vec{t}\colon E\rightarrow V$ of $\mathscr{C}(\lambda)$ is given by $\vec{t}(e) = 
    \begin{cases}
        t(e) & \text{if } e\in O \\
        \tau(e) & \text{if } e\in I
    \end{cases}$
\item The labelling function $l\colon V\rightarrow L$ of $\mathscr{C}(\lambda)$ is given by $l(v) = 
    \begin{cases}
        c(v) & \text{if } v\in P \\
        \kappa & \text{if } v\in U
    \end{cases}$
\end{itemize}
For a computon morphism ${\alpha\colon\lambda_1\rightarrow\lambda_2}$, there is a graph homomorphism ${\mathscr{C}(\alpha)\colon\mathscr{C}(\lambda_1)\rightarrow \mathscr{C}(\lambda_2)}$ such that:
\begin{itemize}
\item $\mathscr{C}(\alpha)_V\colon V_1\rightarrow V_2$ is given by $\mathscr{C}(\alpha)_V(v) = 
    \begin{cases}
        \alpha_U(v) & \text{if } v\in U_1 \\
        \alpha_P(v) & \text{if } v\in P_1 
    \end{cases}$
\item $\mathscr{C}(\alpha)_E\colon E_1\rightarrow E_2$ is given by $\mathscr{C}(\alpha)_E(e) = 
    \begin{cases}
        \alpha_I(e) & \text{if } e\in I_1 \\
        \alpha_O(e) & \text{if } e\in O_1 
    \end{cases}$
\item $\mathscr{C}(\alpha)_L\colon L_1\rightarrow L_2$ is given by $\mathscr{C}(\alpha)_L(x) = 
    \begin{cases}
        \alpha_\Sigma(x) & \text{if } x\in \Sigma_1 \\
        \kappa & \text{otherwise} 
    \end{cases}$
\end{itemize}
\end{definition}

Definition \ref{def:computon-cfg} indicates that a computon CFG is obtained by just considering control flow edges, control ports and computation units, while ignoring data elements. As computation units are not labelled within computons, all of them take the constant label $\kappa$ in the corresponding CFG. The construction given by Definition \ref{def:computon-cfg} is functorial in the sense $\mathscr{C}$ preserves the structure of $\textbf{Set}^{\textbf{Comp}}$, including composition and identities (see Proposition \ref{prop:control-flow-functorial}). 

\begin{proposition}\label{prop:control-flow-functorial}
The process of building a computon CFG is functorial, i.e., $\mathscr{C}$ is a functor.
\end{proposition}
\begin{proof}
Given Definition \ref{def:computon-cfg}, we first prove that, for every computon morphism $\alpha\colon\lambda_1\rightarrow\lambda_2$, $\mathscr{C}(\alpha)\colon\mathscr{C}(\lambda_1)\rightarrow \mathscr{C}(\lambda_2)$ is a graph homomorphism that preserves labels and oriented incidence. To verify this, we show that the following diagram commutes:
\[
\begin{tikzcd}
 E_1 \arrow[r, shift left=2pt, "\vec{s}_1"]\arrow[r, shift right=2pt, "\vec{t}_1"']\arrow[d, "\mathscr{C}(\alpha)_E"'] & V_1 \arrow[r, "l_1"]\arrow[d, "\mathscr{C}(\alpha)_V"] & L_1\arrow[d, "\mathscr{C}(\alpha)_L"] \\
 E_2 \arrow[r, shift left=2pt, "\vec{s}_2"]\arrow[r, shift right=2pt, "\vec{t}_2"'] & V_2 \arrow[r, "l_2"'] & L_2 
\end{tikzcd}
\]
For our proof, we simplify notation by omitting the $(\alpha)$ part of a $\mathscr{C}(\alpha)$-component; for example, we write $\mathscr{C}_E$ for $\mathscr{C}(\alpha)_E$. We only verify the equation $l_2\circ\vec{s}_2\circ \mathscr{C}_E=\mathscr{C}_L\circ l_1\circ\vec{s}_1$ since the equation $l_2\circ\vec{t}_2\circ \mathscr{C}_E=\mathscr{C}_L\circ l_1\circ\vec{t}_1$ can be showed analogously. By letting $e \in E_1$, we have two cases:
\begin{enumerate}
\item $e \in I_1$:
\begin{align*}
 l_2(\vec{s}_2(\mathscr{C}_E(e))) &= l_2(\vec{s}_2(\alpha_I(e))) \hspace{0.4cm}\text{by the definition of }\mathscr{C}(\alpha)_E\text{ and considering } e\in I_1 \\
 &= l_2(s_2(\alpha_I(e))) \hspace{0.4cm}\text{by the definition of }\vec{s}_2\text{ and considering }\alpha_I(e)\in I_2 \\
 &= l_2(\alpha_P(s_1(e))) \hspace{0.33cm}\text{because } s_2\circ\alpha_I=\alpha_P\circ s_1 \\
 &= c_2(\alpha_P(s_1(e))) \hspace{0.3cm}\text{by the definition of }l_2\text{ and considering }\alpha_P(s_1(e))\in P_2\\
 &= \alpha_\Sigma(c_1(s_1(e))) \hspace{0.33cm}\text{because } c_2\circ\alpha_P=\alpha_\Sigma\circ c_1 \\
 &= \mathscr{C}_L(c_1(s_1(e))) \hspace{0.33cm}\text{by the definition of }\mathscr{C}(\alpha)_L\text{ and considering }c_1(s_1(e))\in \Sigma_1 \\
 &= \mathscr{C}_L(l_1(s_1(e))) \hspace{0.4cm}\text{by the definition of }l_1\text{ and considering } s_1(e)\in P_1 \\
 &= \mathscr{C}_L(l_1(\vec{s}_1(e))) \hspace{0.4cm}\text{by the definition of }\vec{s}_1\text{ and considering } e\in I_1 
\end{align*}
\item $e \in O_1$:
\begin{align*}
 l_2(\vec{s}_2(\mathscr{C}_E(e))) &= l_2(\vec{s}_2(\alpha_O(e))) \hspace{0.4cm}\text{by the definition of }\mathscr{C}(\alpha)_E\text{ and considering } e\in O_1 \\
 &= l_2(\sigma_2(\alpha_O(e))) \hspace{0.35cm}\text{by the definition of }\vec{s}_2\text{ and considering } \alpha_O(e)\in O_2 \\
 &= \kappa \hspace{2.4cm}\text{by the definition of }l_2\text{ and considering } \sigma_2(\alpha_O(e))\in U_2 \\
 &= \mathscr{C}_L(\kappa) \hspace{1.6cm}\text{by the definition of }\mathscr{C}_L\text{ and considering } \kappa\in L_1\setminus\Sigma_1 \\
 &= \mathscr{C}_L(l_1(\sigma_1(e))) \hspace{0.4cm}\text{by the definition of }l_1\text{ and considering } \sigma_1(e)\in U_1 \\
 &= \mathscr{C}_L(l_1(\vec{s}_1(e))) \hspace{0.45cm}\text{by the definition of }\vec{s}_1\text{ and considering } e\in O_1 \\
\end{align*}
\end{enumerate}
As the above diagram commutes, it follows that $\mathscr{C}(\alpha)$ preserves labels, sources and targets. To verify $\mathscr{C}$ preserves identities, we have to show $\mathscr{C}(1_\lambda)=1_{\mathscr{C}(\lambda)}$ for any computon $\lambda$, which is trivially true since both identity computon morphisms and identity graph homomorphisms are built upon identity functions. Considering the computon morphisms ${\alpha_1\colon\lambda_1\rightarrow\lambda_2}$ and ${\alpha_2\colon\lambda_2\rightarrow\lambda_3}$, we now show that composition is preserved too.

If $v \in V_1$, then either $v \in U_1$ or $v \in P_1$. We only prove the first case since the proof of the other is symmetric:
\begin{align*}
 \mathscr{C}(\alpha_2\circ\alpha_1)_V(v) &= (\alpha_2\circ\alpha_1)_U(v) \hspace{0.85cm}\text{by the definition of }\mathscr{C}(\alpha_2\circ\alpha_1)_V\text{ and considering } v\in U_1 \\
 &= \alpha_2(\alpha_1(v)) \hspace{1.5cm}\text{by Notation }\ref{notation:sets-im} \\ 
 &= (\mathscr{C}(\alpha_2)_V\circ\alpha_1)(v) \hspace{0.3cm}\text{by the definition of }\mathscr{C}(\alpha_2)_V \text{ and considering } \alpha_1(v)\in U_2 \\
 &= (\mathscr{C}(\alpha_2)_V\circ \mathscr{C}(\alpha_1)_V)(v) \hspace{0.15cm}\text{by the definition of }\mathscr{C}(\alpha_1)_V \text{ and considering } v\in U_1
\end{align*}
A similar approach can be used to show $\mathscr{C}(\alpha_2\circ\alpha_1)_E=\mathscr{C}(\alpha_2)_E\circ \mathscr{C}(\alpha_1)_E$ and $\mathscr{C}(\alpha_2\circ\alpha_1)_L=\mathscr{C}(\alpha_2)_L\circ \mathscr{C}(\alpha_1)_L$ so $\mathscr{C}(\alpha_2\circ\alpha_1)=\mathscr{C}(\alpha_2)\circ \mathscr{C}(\alpha_1)$ in general. As $\mathscr{C}$ preserves structure, identities and composition (the proof of associativity of composition is similar to the above), we conclude that the construction presented in Definition \ref{def:computon-cfg} is functorial. 
\end{proof}

Interestingly, the control flow structure of any computon can always be embodied by some other computon; thus, giving rise to the endofunctor described in Definition \ref{def:endofunctor-control}.

\begin{definition}[Control Flow Endofunctor]\label{def:endofunctor-control}
The endofunctor ${\mathfrak{E}\colon\textbf{Set}^{\textbf{Comp}}\rightarrow \textbf{Set}^{\textbf{Comp}}}$ maps a computon $\lambda$ to a computon $\mathfrak{E}(\lambda)$ as follows:
\begin{itemize}
\item The set $\mathfrak{E}(U)$ of computation units is $U$,
\item The set ${\mathfrak{E}(O)}$ of edges is given by ${\{o \in O \mid c(t(o))=0\}}$,
\item The set ${\mathfrak{E}(I)}$ of edges is given by ${\{i \in I \mid c(s(i))=0\}}$,
\item The set ${\mathfrak{E}(P)}$ of ports is given by ${\{p \in P \mid c(p)=0\}}$,
\item The set ${\mathfrak{E}(\Sigma)}$ of colours is ${\{0\}}$,
\item The function ${\mathfrak{E}(\sigma)}$ is given by ${\sigma \restriction_{\mathfrak{E}(O)}}$,
\item The function ${\mathfrak{E}(t)}$ is given by ${t \restriction_{\mathfrak{E}(O)}}$,
\item The function ${\mathfrak{E}(\tau)}$ is given by ${\tau \restriction_{\mathfrak{E}(I)}}$,
\item The function ${\mathfrak{E}(s)}$ is given by ${s \restriction_{\mathfrak{E}(I)}}$ and
\item The function ${\mathfrak{E}(c)}$ is given by ${c \restriction_{\mathfrak{E}(P)}}$.
\end{itemize}
Given a computon morphism ${\alpha\colon\lambda_1\rightarrow\lambda_2}$, the components of ${\mathfrak{E}(\alpha)\colon\mathfrak{E}(\lambda_1)\rightarrow \mathfrak{E}(\lambda_2)}$ are defined as follows:
\begin{itemize}
\item ${\mathfrak{E}(\alpha)_U\colon\mathfrak{E}(U_1)\rightarrow \mathfrak{E}(U_2)}$ by $\mathfrak{E}(\alpha)_U=\alpha_U$,
\item ${\mathfrak{E}(\alpha)_O\colon\mathfrak{E}(O_1)\rightarrow \mathfrak{E}(O_2)}$ by $\mathfrak{E}(\alpha)_O=\alpha_O \restriction_{\mathfrak{E}(O_1)}$,
\item ${\mathfrak{E}(\alpha)_I\colon\mathfrak{E}(I_1)\rightarrow \mathfrak{E}(I_2)}$ by $\mathfrak{E}(\alpha)_I=\alpha_I \restriction_{\mathfrak{E}(I_1)}$,
\item ${\mathfrak{E}(\alpha)_P\colon\mathfrak{E}(P_1)\rightarrow \mathfrak{E}(P_2)}$ by $\mathfrak{E}(\alpha)_P=\alpha_P \restriction_{\mathfrak{E}(P_1)}$, and
\item ${\mathfrak{E}(\alpha)_\Sigma\colon\mathfrak{E}(\Sigma_1)\rightarrow \mathfrak{E}(\Sigma_2)}$ by $\mathfrak{E}(\alpha)_\Sigma=\alpha_\Sigma \restriction_{\mathfrak{E}(\Sigma_1)}$. 
\end{itemize}
\end{definition} 

\begin{remark}
A glance at Definition \ref{def:endofunctor-control} reveals that a computon ${\mathfrak{E}(\lambda)\colon\textbf{Comp}\rightarrow\textbf{Set}}$ is a subfunctor of its source computon ${\lambda\colon\textbf{Comp}\rightarrow\textbf{Set}}$. That is, there exists a natural transformation ${\mathfrak{E}(\lambda)\rightarrow\lambda}$ whose components are monic. 
\end{remark}

Although the structure of any computon can be expressed as another in terms of computation units and control flow/ports only, connectivity is not always preserved by the construction presented in Definition \ref{def:endofunctor-control}. As an example, consider the computon $\lambda$ displayed in Figure \ref{fig:endofunctor-connectivity}(a), which clearly is connected because there is a sequence of information flows from each ec-inport to the only ec-outport. Figure \ref{fig:endofunctor-connectivity}(b) shows the resulting object ${\mathfrak{E}(\lambda)}$ which, although valid, does not satisfy Definition \ref{def:computon-connected}. This follows from the fact that the upper ec-inport does not have any sequence of information flows that can take it to the only ec-outport.\footnote{From the example presented in Figure \ref{fig:endofunctor-connectivity}, we can observe that the endofunctor $\mathfrak{E}$ could be useful to detect potential non-termination. Our concept of termination is formalised in Section \ref{sec:operational-semantics}.} 

\begin{figure}[!h]
\centering
\subcaptionbox{A connected computon $\lambda$.}
{
\begin{tikzpicture}
\node[draw=black,fill=white,inner sep=0pt,minimum size=3pt,opacity=0.4] (q3) at (0.55,-0.7) {};
\draw[dashed,opacity=0.4] (0.6,-0.7) to [] node [pos=0.5] {\arrowflow} node [pos=0.5,yshift=7] {} (1.35,-0.7);
\node[opacity=0.4] (v2) at (1.55,-0.7) {\scriptsize $u_1$};
\draw[dashed,opacity=0.4] (1.75,-0.7) to [] node [pos=0.5] {\arrowflow} node [pos=0.5,yshift=7] {} (2.65,-0.7);
\node[draw=black,fill fraction={black}{0.5},inner sep=0pt,minimum size=3pt,opacity=0.4] (z3) at (2.7,-0.7) {};
\draw[dashed,opacity=0.4] (2.75,-0.7) to [] node [pos=0.5] {\arrowflow} node [pos=0.5,yshift=7,opacity=0.4] {} (3.7,-0.7);
\node[opacity=0.4] (u4) at (3.9,-0.7) {\scriptsize $u_2$};
\draw[dashed,opacity=0.4] (u4) -- (3.9, -0.3) to [] node [pos=0.5,rotate=180] {\arrowflow} node [pos=0.5,yshift=7,opacity=0.4] {} (2.7,-0.3) -- (z3);
\draw (4.1,-0.7) to [] node [pos=0.5] {\arrowflow} node [pos=0.5,yshift=7,opacity=0.4] {} (4.95,-0.7);
\node[circle,draw=black,fill fraction={black}{0.5},inner sep=0pt,minimum size=3pt] (z5) at (5,-0.7) {};
\node[opacity=0.4] (u3) at (6.1,-0.7) {\scriptsize $u_3$};
\draw (z5) to [] node [pos=0.5] {\arrowflow} node [pos=0.5,yshift=7,opacity=0.4] {} (5.9,-0.7);
\node[fill=black,inner sep=0pt,minimum size=3pt,opacity=0.4] (y) at (7.2,-0.7) {};
\draw[dashed,opacity=0.4] (6.3,-0.7) to [] node [pos=0.5] {\arrowflow} node [pos=0.5,yshift=7,opacity=0.4] {} (y);

\node[draw=black,fill=white,inner sep=0pt,minimum size=3pt,opacity=0.4] (j0) at (0.55,-1.1) {};
\draw[dashed,opacity=0.4] (j0) to [] node [pos=0.5] {\arrowflow} node [pos=0.5,yshift=7,opacity=0.4] {} (6.1,-1.1) -- (6.1,-0.8);
\end{tikzpicture}  
}\hspace{1.3cm}
\subcaptionbox{Computon ${\mathfrak{E}(\lambda)}$ which, evidently, is not connected.}
{
\begin{tikzpicture}
\node[draw=black,fill=white,inner sep=0pt,minimum size=3pt,opacity=0.4] (q3) at (0.55,-0.7) {};
\draw[dashed,opacity=0.4] (0.6,-0.7) to [] node [pos=0.5] {\arrowflow} node [pos=0.5,yshift=7] {} (1.35,-0.7);
\node[opacity=0.4] (v2) at (1.55,-0.7) {\scriptsize $u_1$};
\draw[dashed,opacity=0.4] (1.75,-0.7) to [] node [pos=0.5] {\arrowflow} node [pos=0.5,yshift=7] {} (2.65,-0.7);
\node[draw=black,fill fraction={black}{0.5},inner sep=0pt,minimum size=3pt,opacity=0.4] (z3) at (2.7,-0.7) {};
\draw[dashed,opacity=0.4] (2.75,-0.7) to [] node [pos=0.5] {\arrowflow} node [pos=0.5,yshift=7,opacity=0.4] {} (3.7,-0.7);
\node[opacity=0.4] (u4) at (3.9,-0.7) {\scriptsize $u_2$};
\draw[dashed,opacity=0.4] (u4) -- (3.9, -0.3) to [] node [pos=0.5,rotate=180] {\arrowflow} node [pos=0.5,yshift=7,opacity=0.4] {} (2.7,-0.3) -- (z3);
\node[opacity=0.4] (u3) at (6.1,-0.7) {\scriptsize $u_3$};
\node[fill=black,inner sep=0pt,minimum size=3pt,opacity=0.4] (y) at (7.2,-0.7) {};
\draw[dashed,opacity=0.4] (6.3,-0.7) to [] node [pos=0.5] {\arrowflow} node [pos=0.5,yshift=7,opacity=0.4] {} (y);

\node[draw=black,fill=white,inner sep=0pt,minimum size=3pt,opacity=0.4] (j0) at (0.55,-1.1) {};
\draw[dashed,opacity=0.4] (j0) to [] node [pos=0.5] {\arrowflow} node [pos=0.5,yshift=7,opacity=0.4] {} (6.1,-1.1) -- (6.1,-0.8);
\end{tikzpicture}  
}
{
\begin{tikzpicture}
\matrix [below, ampersand replacement=\&] at (current bounding box.south) {
\draw[dashed] (2.3,0) to node[pos=0.5,yshift=4]{\arrowflow} (2.8,0); \& \node[inner sep=0pt,minimum size=3pt,label=right:{\scriptsize Control flow edge}]{}; \& \node{}; \& \&
 \node[draw=black,fill=white,inner sep=0pt,minimum size=3pt,label=right:{\scriptsize Ec-inport}] {}; \& \node{};
 \& \node[fill=black,inner sep=0pt,minimum size=3pt,label={right:{\scriptsize Ec-outport}}] {}; \& \node{};
 \& \& \node[draw,fill=white,fill fraction={black}{0.5},inner sep=0pt,minimum size=3pt,label={right:\scriptsize Ic-port}] {};
 \& \node at (0,0){}; \& \draw (0,0) to node[pos=0.5,yshift=4]{\arrowflow} (0.5,0);
 \& \& \node[inner sep=0pt,minimum size=3pt,label=right:{\scriptsize Data flow edge}] {};
 \& \node at (0,0){}; \& \node[circle,draw,fill=white,fill fraction={black}{0.5},inner sep=0pt,minimum size=3pt,label=right:{\scriptsize Id-port}] {};\\
};
\end{tikzpicture}
}
\caption{An example to show that the endofunctor $\mathfrak{E}$ does not preserve connectivity in general. Like in Figure \ref{fig:computon-morphism-example}, for now we just display computation units as labels. In upcoming sections, we will use specific syntax to distinguish among different types of units. We use opacity to illustrate the mapping done by $\mathfrak{E}$ and, thereby, highlighting that the data elements of $\lambda$ are ``forgotten'' in ${\mathfrak{E}(\lambda)}$.}
\label{fig:endofunctor-connectivity}
\end{figure}

Despite not preserving connectivity, $\mathfrak{E}$ retains all control ports and computation units, together with their flow adjacency and port colouring. More generally speaking, such endofunctor is functorial because it preserves the structure of $\textbf{Set}^{\textbf{Comp}}$, including composition and identities. Checking functoriality can be trivially done in a similar manner as the proof of Proposition \ref{prop:control-flow-functorial}. We just need to check that the objects and morphisms in the image of $\mathfrak{E}$ are indeed computons and computon morphisms in the sense of Definitions \ref{def:computon} and \ref{def:computon-morphism}. For this, we have Propositions \ref{prop:endofunctor-control-objects} and \ref{prop:endofunctor-control-morphisms}.

\begin{proposition}\label{prop:endofunctor-control-objects}
If $\lambda$ is a computon, then $\mathfrak{E}(\lambda)$ is also a computon.
\end{proposition}
\begin{proof}
By Definitions \ref{def:computon} and \ref{def:endofunctor-control}, it is clear that ${\mathfrak{E}(\lambda)}$ must have a (possibly empty) set ${\mathfrak{E}(U)}$ of computation units, a non-empty set ${\mathfrak{E}(P)}$ of ports, a possibly empty set ${\mathfrak{E}(O)}$ of outgoing edges, a possibly empty set ${\mathfrak{E}(I)}$ of incoming edges and a non-empty set ${\mathfrak{E}(\Sigma)=\{0\}\subset \mathbb{N}}$ of colours. As ${\mathfrak{E}(s)=s \restriction_{\mathfrak{E}(I)}}$ and $s$ is total (by Definitions \ref{def:endofunctor-control} and \ref{def:computon}), ${\mathfrak{E}(s)}$ is a total function. Analogously, ${\mathfrak{E}(t)}$, ${\mathfrak{E}(\sigma)}$ and ${\mathfrak{E}(\tau)}$ are total too. 

To show ${\mathfrak{E}(\tau)}$ is surjective in addition, simply use Definition \ref{def:endofunctor-control} and the fiber of ${c\circ s}$ over $0$ to derive ${\mathfrak{E}(\tau)=\tau\restriction_{\mathfrak{E}(I)}=\tau\restriction_{\{i \in I \mid c(s(i))=0\}}=\tau\restriction_{(c\circ s)^{-1}(0)}}$. As ${\tau\restriction_{(c\circ s)^{-1}(0)}}$ is necessarily surjective according to Definition \ref{def:computon}, it follows that ${\mathfrak{E}(\tau)}$ also is. An analogous approach can prove surjectivity for ${\mathfrak{E}(\sigma)}$. Since ${\mathfrak{E}(c)(p)=0}$ for all ${p \in \mathfrak{E}(P)}$, it is obvious ${\mathfrak{E}(c)}$ is total surjective.

Now, if we consider an arbitrary unit ${u\in\mathfrak{E}(U)}$, there must be some ${i\in\mathfrak{E}(I)}$ because ${\mathfrak{E}(\tau)}$ is onto, i.e., ${i\in\{j\in I\mid c(s(j))=0\}}$ as per Definition \ref{def:endofunctor-control}. By noticing ${s\restriction_{\mathfrak{E}(I)}}$ and ${c\restriction_{\mathfrak{E}(P)}}$ respectively map input flows to control ports and control ports to $0$, we deduce ${c\restriction_{\mathfrak{E}(P)}\circ s\restriction_{\mathfrak{E}(I)}}$ must map input flows to $0$. In other words, ${i\in (c\restriction_{\mathfrak{E}(P)}\circ s\restriction_{\mathfrak{E}(I)})^{-1}(0)}$ or, in line with Definition \ref{def:endofunctor-control}, ${i\in(\mathfrak{E}(c)\circ\mathfrak{E}(s))^{-1}(0)}$. Hence, ${\mathfrak{E}(\tau)\restriction_{(\mathfrak{E}(c)\circ\mathfrak{E}(s))^{-1}(0)}}$ is a surjective function.

Finally, to show $\mathfrak{E}(\lambda)$ has at least one ec-inport and at least one ec-outport, let $q \in P^+$ with $c(q)=0$ (i.e., $q \in \mathfrak{E}(P)$) and consider $q \in P^+ \implies q \notin Im(t) \implies q \notin Im(\mathfrak{E}(t)) \implies q \in \mathfrak{E}(P)^+$. Since ${q \in \mathfrak{E}(P)^+}$ and ${\mathfrak{E}(c)(q)=c(q)=0}$, $q$ must be an ec-inport of ${\mathfrak{E}(\lambda)}$. An analogous reasoning can additionally show that $\mathfrak{E}(\lambda)$ has at least one ec-outport. 

Having satisfied all the conditions from Definition \ref{def:computon}, we conclude ${\mathfrak{E}(\lambda)}$ is a well-formed computon.
\end{proof}

\begin{proposition}\label{prop:endofunctor-control-morphisms}
If ${\alpha\colon\lambda_1\rightarrow \lambda_2}$ is a computon morphism, then ${\mathfrak{E}(\alpha)\colon\mathfrak{E}(\lambda_1)\rightarrow \mathfrak{E}(\lambda_2)}$ is a computon morphism.
\end{proposition}
\begin{proof}
Assuming ${\alpha:\lambda_1\rightarrow \lambda_2}$ is a computon morphism, we first check $\vec{i}(\mathfrak{E}(\alpha))\cup\vec{o}(\mathfrak{E}(\alpha))\subseteq \mathfrak{E}(P_1)^+\cup \mathfrak{E}(P_1)^-$. For this, let $p \in \vec{i}(\mathfrak{E}(\alpha))\cup\vec{o}(\mathfrak{E}(\alpha))$ so that $\bullet \mathfrak{E}(\alpha)(p)\setminus \mathfrak{E}(\alpha)(\bullet p)\neq\emptyset$ or $\mathfrak{E}(\alpha)(p)\bullet\setminus \mathfrak{E}(\alpha)(p\bullet)\neq\emptyset$. By Definition \ref{def:endofunctor-control}, we have $\mathfrak{E}(\alpha)(p)=\alpha(p)$ because ${p \in \mathfrak{E}(P_1)}$. So, $\bullet \alpha(p)\setminus \alpha(\bullet p)\neq\emptyset$ or $\alpha(p)\bullet\setminus \alpha(p\bullet)\neq\emptyset$. Using Definitions \ref{def:computon-morphism} and \ref{def:computon-interface}, $p\in\vec{i}(\alpha)\cup\vec{o}(\alpha)\implies p\in P_1^+\cup P_1^-$ so $p\notin Im(t_1)$ or $p\notin Im(s_1)$. As $\mathfrak{E}(t_1)=t_1 \restriction_{\mathfrak{E}(O_1)}$ and $\mathfrak{E}(s_1)=s_1 \restriction_{\mathfrak{E}(I_1)}$, $p\notin Im(\mathfrak{E}(t_1))$ or $p\notin Im(\mathfrak{E}(s_1))$. Hence, $p \in \mathfrak{E}(P_1)^+ \cup \mathfrak{E}(P_1)^-$.

Now, notice $\alpha_U$, $\alpha_P$, $\alpha_O$ and $\alpha_I$ are necessarily total functions by Definition \ref{def:computon-morphism}. As $\mathfrak{E}(\alpha)_U=\alpha_U$, $\mathfrak{E}(\alpha)_P=\alpha_P \restriction_{\mathfrak{E}(P_1)}$, $\mathfrak{E}(\alpha)_O=\alpha_O \restriction_{\mathfrak{E}(O_1)}$ and $\mathfrak{E}(\alpha)_I=\alpha_I \restriction_{\mathfrak{E}(I_1)}$, we use the definition of function restriction to deduce $\mathfrak{E}(\alpha)_U$, $\mathfrak{E}(\alpha)_P$, $\mathfrak{E}(\alpha)_O$ and $\mathfrak{E}(\alpha)_I$ must be total too. As $\mathfrak{E}(\alpha)_\Sigma$ is necessarily a function $\{0\}\rightarrow\{0\}$ given by $0 \mapsto 0$ (see Definition \ref{def:endofunctor-control}), $\mathfrak{E}(\alpha)_\Sigma$ is trivially an inclusion. Therefore, we conclude $\mathfrak{E}(\alpha)$ satisfies all the conditions from Definition \ref{def:computon-morphism}, i.e., $\mathfrak{E}(\alpha)$ is a computon morphism.
\end{proof}

Unfortunately, the data flow structure of a computon cannot be fully embodied by another computon, since Definition \ref{def:computon} enforces objects in $\textbf{Set}^{\textbf{Comp}}$ to always have control ports. Nevertheless, such a structure can still be presented as a Data Flow Graph (DFG) which is just a directed labelled graph obtained via the application of the functor from Definition \ref{def:computon-dfg}. Again, as computation units are not labelled within computons, all of them take the constant label $\kappa$ in the corresponding DFG.

\vspace{1pt}

\begin{definition}[Computon DFG]\label{def:computon-dfg}
The functor $\mathscr{D}\colon\textbf{Set}^{\textbf{Comp}}\rightarrow \textbf{Set}^{\textbf{Gr}}$ maps a computon $\lambda$ to its underlying DFG $\mathscr{D}(\lambda)$ as follows:
\begin{itemize}
\item The set $V$ of vertices of $\mathscr{D}(\lambda)$ is given by $U\cup\{p\in P \mid c(p)>0\}$.
\item The set $E$ of edges of $\mathscr{D}(\lambda)$ is given by $\{i\in I \mid c(s(i))>0\}\cup\{o\in O \mid c(t(o))>0\}$.
\item The set $L$ of labels of $\mathscr{D}(\lambda)$ is given by $\Sigma\cup\{\kappa\}$.
\item The source function $\vec{s}\colon E\rightarrow V$ of $\mathscr{D}(\lambda)$ is given by $\vec{s}(e) = 
    \begin{cases}
        s(e) & \text{if } e\in I \\
        \sigma(e) & \text{if } e\in O
    \end{cases}$
\item The target function $\vec{t}\colon E\rightarrow V$ of $\mathscr{D}(\lambda)$ is given by $\vec{t}(e) = 
    \begin{cases}
        t(e) & \text{if } e\in O \\
        \tau(e) & \text{if } e\in I
    \end{cases}$
\item The labelling function $l\colon V\rightarrow L$ of $\mathscr{D}(\lambda)$ is given by $l(v) = 
    \begin{cases}
        c(v) & \text{if } v\in P \\
        \kappa & \text{if } v\in U
    \end{cases}$
\end{itemize}
For a computon morphism $\alpha\colon\lambda_1\rightarrow\lambda_2$, we define the components of the corresponding graph homomorphism $\mathscr{D}(\alpha)\colon\mathscr{D}(\lambda_1)\rightarrow \mathscr{D}(\lambda_2)$ as follows:
\begin{itemize}
\item $\mathscr{D}(\alpha)_V\colon V_1\rightarrow V_2$ by $\mathscr{D}(\alpha)_V(v) = 
    \begin{cases}
        \alpha_U(v) & \text{if } v\in U_1 \\
        \alpha_P(v) & \text{if } v\in P_1 
    \end{cases}$
\item $\mathscr{D}(\alpha)_E\colon E_1\rightarrow E_2$ by $\mathscr{D}(\alpha)_E(e) = 
    \begin{cases}
        \alpha_I(e) & \text{if } e\in I_1 \\
        \alpha_O(e) & \text{if } e\in O_1 
    \end{cases}$
\item $\mathscr{D}(\alpha)_L\colon L_1\rightarrow L_2$ by $\mathscr{D}(\alpha)_L(x) = 
    \begin{cases}
        \alpha_\Sigma(x) & \text{if } x\in \Sigma_1 \\
        \kappa & \text{otherwise} 
    \end{cases}$
\end{itemize}
\end{definition}

Definition \ref{def:computon-dfg} indicates that a computon DFG $\mathscr{D}(\lambda)$ is constructed in a similar fashion as its counterpart. The only difference is that, rather than considering control elements, $\mathscr{D}(\lambda)$ considers data ports, data flows and computation units only. That is, Definitions \ref{def:computon-cfg} and \ref{def:computon-dfg} just differ in how vertices, edges and labels are constructed. By Proposition \ref{prop:data-flow-functorial}, Definition \ref{def:computon-dfg} also yields a functorial construction.

\begin{proposition}\label{prop:data-flow-functorial}
The process of building a computon DFG is functorial, i.e., $\mathscr{D}$ is a functor.
\end{proposition}
\begin{proof}
Definitions \ref{def:computon-cfg} and \ref{def:computon-dfg} define the labelling, source and target functions in the same way. Similarly, the components of a graph homomorphism are defined analogously. Therefore, the proof of this proposition is similar to that of Proposition \ref{prop:control-flow-functorial}.
\end{proof}
\section{Operational Semantics} \label{sec:operational-semantics}

The operational behaviour of a computon can be described via \emph{token game semantics} because the structure of a computon can be expressed as a classical P/T Petri net ${(S,T,in,out)}$ where $S$ is a set of places, $T$ is a set of transitions and ${in,out:T\rightarrow S^\oplus}$ are functions assigning to each transition its corresponding pre- and post-set.\footnote{In this paper, we use the term classical (P/T) Petri net to refer to a categorical Petri net, which are equivalent in expressive power \cite{ermel_taste_1996}.} Here, ${S^\oplus}$ denotes the free commutative monoid generated by $S$. As each element of ${S^\oplus}$ is a formal finite linear combination with natural number coefficients, $\oplus$ denotes addition of linear combinations. Given this notion, Definition \ref{def:petri-net-cat} formalises the category of Petri nets.

\begin{definition}[Category of Petri Nets]\label{def:petri-net-cat}
\textbf{Petri} is the category of Petri nets, where each morphism ${f:(S_1,T_1,in_1,out_1)\rightarrow (S_2,T_2,in_2,out_2)}$ consists of a total function ${f_T:T_1\rightarrow T_2}$ and a monoid homomorphism ${f^\oplus:S_1^\oplus\rightarrow S_2^\oplus}$ that make the following diagram commute:
\[
\begin{tikzcd}
 T_1\arrow[d, "f_T"']\arrow[rrr, shift left=2pt, "in_1"]\arrow[rrr, shift right=2pt, "out_1"'] & & & S_1^\oplus\arrow[d, "f^\oplus"] \\
T_2 \arrow[rrr, shift left=2pt, "in_2"]\arrow[rrr, shift right=2pt, "out_2"'] & & & S_2^\oplus
\end{tikzcd}
\]
As it is a sound category, $\textbf{Petri}$ satisfies the identity law and is closed under associative composition (which is defined component-wise) \cite{meseguer_petri_1990}. For monoidal structure preservation, ${f^\oplus}$ leaves the identity fixed while respecting addition of linear combinations. This homomorphism, which is the unique extension of a function ${f\colon S_1\rightarrow S_2}$, is given by ${f^\oplus(\bigoplus_{s\in S_1} n_s s)=\bigoplus_{s\in S_1}n_s f(s)}$ where ${n_s\in\mathbb{N}}$ denotes the multiplicity of a place ${s\in S_1}$. In other words, ${f^\oplus}$ respects the combinatorial structure of $S_1$ in $S_2$.
\end{definition}

Using Definition \ref{def:petri-net-cat}, we now specify a functor to map each computon to its underlying Petri net, i.e., we provide operational semantics for computons in the theory of classical Petri nets. 

\begin{definition}[Computons as Petri Nets]\label{def:functor-computon-to-petri}
The functor ${\mathcal{N}:\textbf{Set}^{\textbf{Comp}}\rightarrow\textbf{Petri}}$ sends any computon $\lambda$ to its underlying Petri net ${\mathcal{N}(\lambda)}$ such that:
\begin{itemize}
\item ${\mathcal{N}(\lambda)}$ has the set $P$ of ports as its set of places.
\item ${\mathcal{N}(\lambda)}$ has the set $U$ of computation units as its set of transitions.
\item The pre-set function ${in:U\rightarrow P^\oplus}$ is given by $in(u)=\bigoplus_{p\in\bullet u}1p$ for all $u\in U$.
\item The post-set function ${out:U\rightarrow P^\oplus}$ is given by $out(u)=\bigoplus_{p\in u\bullet}1p$ for all $u\in U$.
\end{itemize}
On mappings, $\mathcal{N}$ takes a computon morphism ${\alpha:\lambda_1\rightarrow\lambda_2}$ to a net morphism $\mathcal{N}(\alpha):\mathcal{N}(\lambda_1)\rightarrow \mathcal{N}(\lambda_2)$ as follows:
\begin{itemize}
\item ${\mathcal{N}(\alpha)_T}:U_1\rightarrow U_2$ is a function given by ${\mathcal{N}(\alpha)_T(u)=\alpha_U(u)}$ for all ${u \in U_1}$, and
\item ${\mathcal{N}(\alpha)^\oplus:P_1^\oplus\rightarrow P_2^\oplus}$ is a monoid homomorphism which leaves the identity fixed and respects the monoid operation $\oplus$ such that, for every combination ${p_1\oplus\cdots\oplus p_n}$ given by ${P_1^\oplus}$, we have $\mathcal{N}(\alpha)^\oplus(p_1\oplus\cdots\oplus p_n )=\alpha_P(p_1)\oplus\cdots\oplus\alpha_P(p_n)$.
\end{itemize}
Here, for $j=1,2$ we clearly abuse notation by the use of $U_j$ and ${P_j^\oplus}$ to respectively denote the set of transitions of $\mathcal{N}(\lambda_j)$ and the free commutative monoid on $\mathcal{N}(\lambda_j)$-places.
\end{definition}

A glance at Definition \ref{def:functor-computon-to-petri} reveals that the functor $\mathcal{N}$ preserves the structure of the category of computons in $\textbf{Petri}$, including composition of computon morphisms, the identity law and associativity of composition. The functoriality of $\mathcal{N}$ can be trivially checked by noticing that a net morphism ${\mathcal{N}(\alpha)}$ is completely built upon the $U$- and $P$-components of a computon morphism $\alpha$, and that a Petri net ${\mathcal{N}(\lambda)}$ includes all the ports (i.e., places) and computation units (i.e., transitions) from $\lambda$. Particularly, the pre- and post-set functions of ${\mathcal{N}(\lambda)}$ only consider unitary coefficients with no repeated places. So, there is a one-to-one correspondence between the input places of a transition $u$ and the ports in ${\bullet u}$, and between the output places of $u$ and the ports in ${u \bullet}$ (see Definition \ref{def:pre-post-sets}). That is, the pre- and post-set of $u$ can directly be treated as sets of places rather than bags. This formalisation can be done without any consequences since multiplicity of inputs/outputs is explicitly specified in a computon, as a result of operating directly on individual port elements and individual edges. Controlling multiplicity outside Petri nets is an example of how composition semantics can dictate operational aspects. Although we decide to use Petri nets because they fit naturally with the structure of computons, other formalisms can be used to specify operational semantics for computons (e.g., timed Petri nets \cite{ramchandani_analysis_1974}).\footnote{A same computon can behave differently depending on the chosen operational semantics.} This flexibility is due to the separation of structure/composition (in computons) from operation (in P/T Petri nets in our case). 

Although the categorical structure of ${\textbf{Set}^{\textbf{Comp}}}$ is totally preserved in ${\textbf{Petri}}$, the structure of individual objects (computons) is not fully preserved because port colouring is not taken into account. Despite of forgetting port colouring, control and data flow is implicit in the underlying net of any computon. To explicitly present control flow as a net, the proof of Proposition \ref{prop:functor-control-petri} says we can construct a functor from ${\mathcal{N}}$ that does not operate on the whole category of computons but on the image of the endofunctor $\mathfrak{E}$ from Definition \ref{def:endofunctor-control}.  

\begin{proposition}\label{prop:functor-control-petri}
There is a composite functor ${\mathcal{C}\circ \mathfrak{E}:\textbf{Set}^{\textbf{Comp}}\rightarrow\textbf{Petri}}$ to present the control flow structure of any computon as a Petri net.
\end{proposition}
\begin{proof}
The image ${\mathfrak{E}(\textbf{Set}^{\textbf{Comp}})}$ of the category ${\textbf{Set}^{\textbf{Comp}}}$ under the endofunctor $\mathfrak{E}$ is trivially a full subcategory of ${\textbf{Set}^{\textbf{Comp}}}$ (see Definition \ref{def:endofunctor-control}). Then, there is an obvious functor ${\mathcal{C}:\mathfrak{E}(\textbf{Set}^{\textbf{Comp}})\rightarrow\textbf{Petri}}$ given by the restriction of $\mathcal{N}$ to ${\mathfrak{E}(\textbf{Set}^{\textbf{Comp}})}$. That is, $\mathcal{C}$ uses the mapping from Definition \ref{def:functor-computon-to-petri} to send the control flow structure ${\mathfrak{E}(\lambda)}$ of a computon $\lambda$ to a Petri net ${\mathcal{C}(\mathfrak{E}(\lambda))}$ and a computon morphism ${\mathfrak{E}(\lambda_1)\rightarrow \mathfrak{E}(\lambda_2)}$ to a net morphism ${\mathcal{C}(\mathfrak{E}(\lambda_1))\rightarrow \mathcal{C}(\mathfrak{E}(\lambda_2))}$. 

Evidently, the composite functor ${\mathcal{C}\circ \mathfrak{E}:\textbf{Set}^{\textbf{Comp}}\rightarrow\textbf{Petri}}$ is well-defined because the image of $\mathfrak{E}$ is the domain of $\mathcal{C}$. 
\end{proof}

Unfortunately, presenting the data flow structure of a computon as a net cannot be done as elegantly as we did for Proposition \ref{prop:functor-control-petri}, since computons always require control ports and, therefore, there is no way of solely representing data flow in the form of a computon. Despite of this, Proposition \ref{prop:functor-data-petri} shows that data flow can indeed be presented as a net through a functorial construction akin to Definition \ref{def:functor-computon-to-petri}.

\begin{proposition}\label{prop:functor-data-petri}
There is a functor $\mathcal{D}:\textbf{Set}^{\textbf{Comp}}\rightarrow\textbf{Petri}$ to present the data flow structure of any computon as a Petri net.
\end{proposition}
\begin{proof}
Proving this proposition requires the construction of a functor ${\mathcal{D}:\textbf{Set}^{\textbf{Comp}}\rightarrow\textbf{Petri}}$ which, given a computon $\lambda$, defines a Petri net ${\mathcal{D}(\lambda)}$ as follows:
\begin{itemize}
\item ${\mathcal{D}(\lambda)}$ has ${\{p \in P \mid c(p)>0\}}$ as its set $S$ of places, i.e., ${S^\oplus}$ is the free commutative monoid on the set of data ports of $\lambda$.
\item ${\mathcal{D}(\lambda)}$ has the set $U$ of computation units as its set of transitions.
\item The pre-set function ${in:U\rightarrow S^\oplus}$ is given by $in(u)=\bigoplus_{p\in\bullet u\cap S}1p$ for all $u\in U$.
\item The post-set function ${out:U\rightarrow S^\oplus}$ is given by $out(u)=\bigoplus_{p\in u\bullet\cap S}1p$ for all $u\in U$.
\end{itemize}
On mappings, the functor ${\mathcal{D}}$ takes each computon morphism ${\alpha:\lambda_1\rightarrow\lambda_2}$ to a net morphism $\mathcal{D}(\alpha):\mathcal{D}(\lambda_1)\rightarrow\mathcal{D}(\lambda_2)$ as follows:
\begin{itemize}
\item ${\mathcal{D}(\alpha)_T}:U_1\rightarrow U_2$ is a function given by ${\mathcal{D}(\alpha)_T(u)=\alpha_U(u)}$ for all ${u \in U_1}$, and
\item ${\mathcal{D}(\alpha)^\oplus:S_1^\oplus\rightarrow S_2^\oplus}$ is a monoid homomorphism which leaves the identity fixed and respects the monoid operation $\oplus$ such that, for every combination ${p_1\oplus\cdots\oplus p_n}$ given by ${S_1^\oplus}$, we have $\mathcal{D}(\alpha)^\oplus(p_1\oplus\cdots\oplus p_n )=\alpha_P(p_1)\oplus\cdots\oplus\alpha_P(p_n)$.
\end{itemize}
Here, for $j=1,2$ we clearly abuse notation by the use of $U_j$ and ${S_j^\oplus}$ to respectively denote the set of transitions of $\mathcal{D}(\lambda_j)$ and the free commutative monoid on $\mathcal{D}(\lambda_j)$-places.

\vspace{0.35cm}

Consider the net morphism ${\mathcal{D}(\alpha):\mathcal{D}(\lambda_1)\rightarrow\mathcal{D}(\lambda_2)}$ and without loss of generality assume ${u_1 \in U_1}$ is a transition of $\mathcal{D}(\lambda_1)$ where ${in_1(u_1)=p_1\oplus\cdots\oplus p_n}$. As each ${p_j\in\bullet u_1\cap S_1}$ has a unitary coefficient and occurs only once, we can treat ${p_1\oplus\cdots\oplus p_n}$ and the set $\{p_1,\ldots,p_n\}$ interchangeably. Leveraging $\{p_1,\ldots,p_n\}\subseteq\bullet u_1$, we use the commutativity property of $\alpha$ to deduce ${\alpha_U(u_1)\in U_2}$ and ${\{\alpha_P(p_1),\ldots,\alpha_P(p_n)\}\subseteq\bullet\alpha_U(u_1)}$. With this in mind, we now prove ${\mathcal{D}(\alpha)^\oplus\circ in_1=in_2\circ\mathcal{D}(\alpha)_T}$:
\begin{align*}
 \mathcal{D}(\alpha)^\oplus(in_1(u_1)) &= \mathcal{D}(\alpha)^\oplus(p_1\oplus\cdots\oplus p_n)\hspace{0.4cm}\text{ since }in_1(u_1)=p_1\oplus\cdots\oplus p_n \\
 &= \alpha_P(p_1)\oplus\cdots\oplus\alpha_P(p_n)\hspace{0.25cm}\text{ by the definition of }\mathcal{D}(\alpha)^\oplus \\
 &= \{\alpha_P(p_1),\ldots,\alpha_P(p_n)\}\hspace{0.4cm}\text{ treating the combination as a set (see above)} \\
 &\subseteq \bullet\alpha_U(u_1)\hspace{2.63cm}\text{ by the commutativity of }\alpha\text{ (see above)}  
\end{align*}

Having ${\{\alpha_P(p_1),\ldots,\alpha_P(p_n)\}\subseteq\bullet\alpha_U(u_1)}$ allows us to use Definition \ref{def:pre-post-sets} to deduce ${\alpha_P(p_j)\in P_2}$ and ${(\exists i\in I_2)[\alpha_P(p_j)\xrightarrow{i}\alpha_U(u_1)]}$ for all ${j=1,\ldots,n}$. Since ${c_1(p_j)>0}$ because ${p_j\in S_1}$, the colour preservation of $\alpha$ says ${c_2(\alpha_P(p_j))>0}$ for all ${j=1,\ldots,n}$, i.e., ${\{\alpha_P(p_1),\ldots,\alpha_P(p_n)\}\subseteq S_2}$. Hence: 
\begin{align*}
 \{\alpha_P(p_1),\ldots,\alpha_P(p_n)\} &=\alpha_P(p_1)\oplus\cdots\oplus\alpha_P(p_n)\hspace{0.4cm}\text{ treating the set as a combination }\\
 &= in_2(\alpha_U(u_1))\hspace{0.4cm}\text{because }\{\alpha_P(p_1),\ldots,\alpha_P(p_n)\}\subseteq\bullet\alpha_U(u_1)\cap S_2 \\
 &= in_2(\mathcal{D}(\alpha)_T(u_1))\hspace{0.25cm}\text{ by the definition of }\mathcal{D}(\alpha)_T 
\end{align*}
Thus, showing ${\mathcal{D}(\alpha)_S^\oplus\circ in_1=in_2\circ\mathcal{D}(\alpha)_T}$ in general. Since the proof of $\mathcal{D}(\alpha)_S^\oplus\circ out_1=out_2\circ\mathcal{D}(\alpha)_T$ is completely analogous, we conclude that the functor $\mathcal{D}$ preserves transitions (i.e., computation units) and place adjacency (i.e., data port adjacency). To check $\mathcal{D}$ also preserves composition, it suffices to observe that the $T$-component of a net morphism ${\mathcal{D}(\alpha)}$ corresponds to $\alpha_U$ and that the other component is entirely built upon $\alpha_P$. Therefore, ${\mathcal{D}(\alpha_2\circ\alpha_1)=\mathcal{D}(\alpha_2)\circ\mathcal{D}(\alpha_1)}$ for any pair of computon morphisms ${\alpha_1:\lambda_1\rightarrow\lambda_2}$ and ${\alpha_2:\lambda_2\rightarrow\lambda_3}$. As ${1_{\mathcal{D}(\lambda_2)}\circ\mathcal{D}(\alpha)=\mathcal{D}(\alpha)=\mathcal{D}(\alpha)\circ 1_{\mathcal{D}(\lambda_1)}}$ holds component-wise, the identity law also follows trivially.

\vspace{0.35cm}

For transitions, associativity of composition holds because associativity is satisfied in ${\textbf{Set}^\textbf{Comp}}$ and the $T$-component of a net morphism corresponds to the $U$-component of a computon morphism. We just need to verify associativity for the corresponding monoid homomorphism by considering the net morphisms ${\mathcal{D}(\alpha_1):\mathcal{D}(\lambda_1)\rightarrow\mathcal{D}(\lambda_2)}$, ${\mathcal{D}(\alpha_2):\mathcal{D}(\lambda_2)\rightarrow\mathcal{D}(\lambda_3)}$ and ${\mathcal{D}(\alpha_3):\mathcal{D}(\lambda_2)\rightarrow\mathcal{D}(\lambda_3)}$. Assuming ${q_1\oplus\cdots\oplus q_m}$ is a linear combination from ${S_1^\oplus}$, we obtain:

\begin{align*}
 & \mathcal{D}(\alpha_3)^\oplus[(\mathcal{D}(\alpha_2)^\oplus\circ\mathcal{D}(\alpha_1)^\oplus)(q_1\oplus\cdots\oplus q_m)] \\
 =& \mathcal{D}(\alpha_3)^\oplus[(\mathcal{D}(\alpha_2\circ\alpha_1)^\oplus(q_1\oplus\cdots\oplus q_m)]\hspace{1.2cm}\text{ by composition preservation} \\ 
 =& \mathcal{D}(\alpha_3)^\oplus[(\alpha_2\circ\alpha_1)(q_1)\oplus\cdots\oplus(\alpha_2\circ\alpha_1)(q_m)]\hspace{0.13cm}\text{ by the definition of }\mathcal{D}(\alpha_2\circ\alpha_1)^\oplus \\ 
 =& \alpha_3[(\alpha_2\circ\alpha_1)(q_1)]\oplus\cdots\oplus\alpha_3[(\alpha_2\circ\alpha_1)(q_m)]\hspace{0.38cm}\text{ by the definition of }\mathcal{D}(\alpha_3)^\oplus \\
 =& (\alpha_3\circ\alpha_2)[\alpha_1(q_1)]\oplus\cdots\oplus(\alpha_3\circ\alpha_2)[\alpha_1(q_m)]\hspace{0.4cm}\text{ by associativity of computon morphisms} \\
 =& \mathcal{D}(\alpha_3\circ\alpha_2)^\oplus[\alpha_1(q_1)\oplus\cdots\oplus\alpha_1(q_m)]\hspace{1.58cm}\text{ by the definition of }\mathcal{D}(\alpha_3\circ\alpha_2)^\oplus \\
 =& \mathcal{D}(\alpha_3\circ\alpha_2)^\oplus[\mathcal{D}(\alpha_1)^\oplus(q_1\oplus\cdots\oplus q_m)]\hspace{1.45cm}\text{ by the definition of }\mathcal{D}(\alpha_1)^\oplus \\
 =& [\mathcal{D}(\alpha_3)^\oplus\circ\mathcal{D}^\oplus(\alpha_2)][\mathcal{D}(\alpha_1)^\oplus(q_1\oplus\cdots\oplus q_m)]\hspace{0.4cm}\text{ by composition preservation}
\end{align*}
Thus, showing ${\mathcal{D}(\alpha_3)^\oplus\circ[\mathcal{D}(\alpha_2)^\oplus\circ\mathcal{D}(\alpha_1)^\oplus]=[\mathcal{D}(\alpha_3)^\oplus\circ\mathcal{D}^\oplus(\alpha_2)]\circ\mathcal{D}(\alpha_1)^\oplus}$ and, hence, demonstrating that the construction given by $\mathcal{D}$ is indeed functorial. That is, $\mathcal{D}$ preserves the structure of data ports (i.e., places) and computation units (i.e., transitions) together with identities and composition, while satisfying associativity of composition. Thus, we conclude $\mathcal{D}$ is a sound categorical construction to faithfully present the data flow structure of any computon as a Petri net.  
\end{proof}

Definition \ref{def:functor-computon-to-petri} together with Propositions \ref{prop:functor-control-petri} and \ref{prop:functor-data-petri} constitute three alternative ways of studying information flow within a computon in the theory of classical P/T Petri nets. The former allows us to describe both control flow and data flow as a net, whereas the second and third one serve to respectively present control flow and data flow.\footnote{Although $\mathcal{N}$ defines execution semantics that include data flow, one might prefer to use the composite functor $\mathcal{C}\circ \mathfrak{E}$ to exclusively model execution since it is well known that control flow comprehensively captures computational behaviour.} In any case, a net's behaviour corresponds to the classical token game:
\begin{enumerate}
\item A transition ${u}$ is enabled at some state if and only if each input place of $u$ has at least one token. This rule is a consequence of having a unitary coefficient for each element of $in(u)$ and is applicable to a net under $\mathcal{N}$, $\mathcal{C}\circ \mathfrak{E}$ or $\mathcal{D}$.\label{firing-1}
\item If ${u}$ is enabled at some state, then it fires to reach a new state in which $u$ consumes exactly one token from each corresponding input place and produces exactly one token in each corresponding output place. Token consumption and token production are unitary because all the elements of $in(u)$ and $out(u)$ have unitary coefficients. This rule is applicable to a net under $\mathcal{N}$, $\mathcal{C}\circ \mathfrak{E}$ or $\mathcal{D}$.\label{firing-2}
\end{enumerate} 

In the context of Petri nets, a state is just a distribution of tokens over places given by a marking function, as formalised in Definition \ref{def:marking}.

\begin{definition}[State of a Net]\label{def:marking}
The state of a Petri net ${(S,T,in,out)}$ at some point in time is a marking function ${M\colon S\rightarrow\mathbb{N}}$ which assigns zero or more tokens to each place. We say $M$ is initial if ${M(p)>0}$ for every input place ${p\in S}$ and ${M(q)=0}$ for every non-input place ${q\in S}$. Dually, $M$ is final when ${M(r)>0}$ for every output place ${r\in S}$ and ${M(s)=0}$ for every non-output place ${s\in S}$. We write ${M_i}$ and ${M_f}$ for the initial and final states, respectively.
\end{definition}

As the functors $\mathcal{N}$, $\mathcal{C}\circ \mathfrak{E}$ and $\mathcal{D}$ map to Petri nets, Definition \ref{def:marking} can be used to determine the state of computon nets at some point in time. Passing from one marking to another constitutes an evolution of states in which control and/or data flow occurs implicitly. This operational aspect is a consequence of deriving a net from the category of computons which explicitly indicate which ports store control and which ones buffer data. Particularly, for nets under $\mathcal{N}$, if a computation unit $u$ has $m$ control ports and $n$ data ports connected to it, the corresponding transition will have $m$ input places to buffer incoming control and $n$ input places to store incoming data. Consequently, by Rule \ref{firing-1}, $u$ will fire with $m$ ``control tokens'' and $n$ ``data tokens''. Assuming $u$ has $j$ control ports and $k$ data ports connected from it, Rule \ref{firing-2} says $u$ must produce $j$ ``control tokens'' and $k$ ``data tokens'' upon firing. We use quotation marks to indicate token colouring is implicitly defined in a computon's net. \footnote{Computon nets do not explicitly colour places or tokens to distinguish between control and data. However, if token/port colouring is explicitly needed, one can define functors from $\textbf{Set}^\textbf{Comp}$ to the category of coloured Petri nets.}

Apart from controlling multiplicity of places and token colouring, separating composition from operation allows us to dictate crucial execution aspects by construction. For instance, connected computons provide their underlying nets with  enhanced coverability for better reachability analysis. Evidently, by Propositions \ref{prop:pushout-connected} and \ref{prop:computon-coproduct-connected}, this advantage extends to pushouts and coproducts, respectively. 

As there is a path from every input place to some output place in the net of a connected computon (under $\mathcal{N}$ or $\mathcal{C}\circ\mathfrak{E}$), a state marking all the input places of such a net has an option to reach a state marking all the corresponding output places (in which no further transitions can fire). Although this provides connected computons with a weak potential to complete, there is no guarantee all execution paths will lead to successful completion. For stronger termination guarantees, we need to verify additional properties such as deadlock-freedom. In Section \ref{sec:composite-computons}, we demonstrate how this additional operational aspect can be statically enforced by our composition operators, i.e., from the (structural) composition dimension and without any domain knowledge. Our precise notion of deadlock-freedom is formalised in Definition \ref{def:deadlock-freedom}.

\begin{definition}[Deadlock-Freedom]\label{def:deadlock-freedom}
A Petri net is deadlock-free if, for every marking state $M$ reachable from ${M_i}$ with ${M\neq M_f}$, there exists some enabled transition at $M$.
\end{definition}

\begin{remark}
Recall that the reachability relation on Petri net markings satisfies the reflexive property. So, every marking is trivially reachable from itself by zero transition firings. 
\end{remark}

\begin{remark}
By Definition \ref{def:marking}, a final state $M_f$ does not enable any transitions. Consequently, if the condition ${M\neq M_f}$ were removed from Definition \ref{def:deadlock-freedom}, the net corresponding to a terminating computon would necessarily exhibit deadlocks. Although $M_f$ would constitute a deadlock state under a classical notion of deadlock-freedom, Definition \ref{def:deadlock-freedom} does not classify it as such since our goal is to prevent unwanted blocking during execution only. Similar notions of deadlock-freedom have been adopted in other works \cite{van_der_aalst_application_1998,fossati_multiparty_2014}.
\end{remark}

\begin{remark}\label{rem:deadlock-freedom}
In Section \ref{sec:composite-computons}, we are interested in verifying deadlock-freedom only for nets that have an initial and a final state in the sense of Definition \ref{def:deadlock-freedom}. As Definition \ref{def:computon} enforces computons to have ec-inports and ec-outports, nets under $\mathcal{N}$ or $\mathcal{C}\circ \mathfrak{E}$ satisfy these two conditions. For $\mathcal{D}$-nets, we only check deadlock-freedom when such conditions are satisfied. This is because some of them do not necessarily have input and output places (to buffer data) by the fact Definition \ref{def:computon} does not require computons to have ed-inports or ed-outports.
\end{remark}

Enforcing deadlock-freedom at the level of composition semantics contributes towards satisfying termination-by-construction and, hence, to building large-scale complex systems with predictable behaviour and enhanced reliability. Having a weak option to complete together with deadlock-freedom constitute necessary conditions for termination in some cases, but they are not sufficient in general. Hence, it is important to reasoning about other crucial operational properties such as livelocks and boundedness. Given the compositional nature of the proposed model, such additional assurances can be verified compositionally using existing Petri net tools. 

\vspace{0.3cm}

Although this paper is primarily focused on the static (structural) dimension given by composition, this section briefly discussed how operational aspects (i.e., dynamics) can be dictated by construction. As our main focus is composition, we did not discuss algebraic operations on Petri nets (e.g., refinement). Petri nets just serve as a medium to express computon behaviour. Our overarching tenet is that operations for manipulating behaviour must be realised at the composition-level and then propagated to the operational-level through the functors we propose in this section. In Section \ref{sec:composite-computons}, we further discuss how execution can be dictated by construction. In the future, we plan to study additional advantages derived from the separation of composition from operation such as using formalisms beyond Petri nets to describe and analyse computon behaviour. Using functors on $\textbf{Petri}$ to endow computons with execution semantics entails that all the theory of categorical Petri nets \cite{baez_categories_2021} is applicable to our work.

In the future, we intend to explore computon processes in more detail, e.g., by studying computon behaviour within symmetric monoidal categories \cite{master_petri_2020}.\footnote{The exploitation of symmetric monoidal categories to study concurrent behaviours in Petri nets started long before \cite{master_petri_2020}. Early works include \cite{meseguer_petri_1990}, \cite{degano_axiomatizing_1989} and \cite{sassone_category_1995}.} For now, we would just like to highlight that in the \emph{token game for computons}, due to the inherent concurrent nature of Petri nets, ``control tokens'' can arrive before ``data tokens'' (or viceversa). By implicitly having ``control places'' separated from ``data places'' in a net under $\mathcal{N}$, a transition only fires when both ``control'' and ``data'' tokens are placed in their respective input places. This means a transition is a passive construct with blocking behaviour which implicitly synchronises data and control before firing. After firing, it produces exactly one token in each corresponding output place. 
\section{Trivial Computons} \label{sec:trivial-computons}

A \emph{trivial computon} has no computation units, no edges and no i-ports at all, but just a number of e-inoutports (see Definition \ref{def:computon-trivial}). Up to isomorphism, it is the only object in $\textbf{Set}^\textbf{Comp}$ with no computation units (see Proposition \ref{prop:computon-trivial-computationless}); consequently, it is not connected in the sense of Definition \ref{def:computon-connected} (see Proposition \ref{prop:computon-trivial-connectivity}).

\begin{definition} [Trivial Computon] \label{def:computon-trivial}
A trivial computon $\lambda$ is a computon whose diagram in \textbf{Set} has the form:
\[
\begin{tikzcd}
 & \emptyset \arrow[dl, twoheadrightarrow, "\sigma"'] \arrow[dr, "t"] & &  \\
\emptyset & & P \arrow[r, twoheadrightarrow, "c"] & \Sigma \\
 & \emptyset \arrow[ul, twoheadrightarrow, "\tau"]\arrow[ur, "s"'] & &
\end{tikzcd}
\]
\end{definition}

\begin{proposition} \label{prop:computon-trivial-computationless}
A computon has no computation units if and only if it is a trivial computon.
\end{proposition}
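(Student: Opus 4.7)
The plan is to prove the biconditional in the two natural directions, with the bulk of the work in the forward ($\Rightarrow$) direction. The reverse direction is immediate from Definition \ref{def:computon-trivial}: a trivial computon is one whose $U$-component is $\emptyset$, so it has no computation units by construction.

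For the forward direction, I would assume that $\lambda$ is a computon with $U=\emptyset$ and show that $E=\emptyset=F$, so that $\lambda$ is forced to have the shape of Definition \ref{def:computon-trivial}. The key observation is that Definition \ref{def:computon} requires $\sigma:E\rightarrow U$ and $\tau:F\rightarrow U$ to be total functions. A total function into an empty codomain can exist only when its domain is also empty; hence $U=\emptyset$ forces $E=\emptyset$ via $\sigma$, and analogously $F=\emptyset$ via $\tau$. The surjectivity clauses on $\sigma$ and $\tau$ are then satisfied vacuously, and $s:F\rightarrow P$ and $t:E\rightarrow P$ are vacuous functions with $Im(s)=Im(t)=\emptyset$.

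It then remains to verify that $\lambda$ really fits the diagram displayed in Definition \ref{def:computon-trivial}, namely that $P$, $\Sigma$ and $c:P\twoheadrightarrow\Sigma$ behave as in a trivial computon. But these components coincide with what Definition \ref{def:computon} demands of any computon: $P$ and $\Sigma$ are non-empty and $c$ is surjective. The last two bullets of Definition \ref{def:computon} (existence of ports with colour $0$ outside $Im(s)$ and $Im(t)$) become, in this setting, simply the requirement that $P$ contain at least one port of colour $0$; this is already guaranteed by $\lambda$ being a computon. Thus $\lambda$ coincides with the trivial-computon diagram.

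I do not expect any serious obstacle here; the proposition is essentially a bookkeeping consequence of totality together with the impossibility of a non-empty function landing in $\emptyset$. The only mildly subtle point is to be explicit that the surjectivity and the existence-of-$0$-coloured-port conditions of Definition \ref{def:computon} do not conflict with $U=E=F=\emptyset$, which is why I would spell out the vacuous cases rather than leave them implicit.
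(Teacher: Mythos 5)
Your proposal is correct and follows essentially the same route as the paper's proof: both directions hinge on the observation that totality of $\sigma$ and $\tau$ forces $E=\emptyset=F$ once $U=\emptyset$, after which the remaining conditions of Definition \ref{def:computon} are satisfied vacuously and $\lambda$ matches the diagram of Definition \ref{def:computon-trivial}. The paper additionally spells out that every port then lies in $P^+\cap P^-$, but that is a consequence rather than a needed step, so your argument is complete as stated.
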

\begin{proof}
$(\implies)$ Let $\lambda$ be a computon with $U=\emptyset$. By the definition of empty function, we have $U=\emptyset$ only if $I = \emptyset = O$ so that $\sigma$ and $\tau$ are surjective. Definition \ref{def:computon} states that any computon is required to have at least one coloured port so $|P| \geq 1$ and $|\Sigma| \geq 1$. As $s$ and $t$ are not surjective by the definition of empty function, we have $p \in P^+ \cap P^-$ for all $p \in P$ (see Definition \ref{def:computon-interface}). Particularly, if $c(p)=0$, then $p \in C^+ \cap C^-$; otherwise, $p \in D^+ \cap D^-$. As the function $c$ is surjective by Definition \ref{def:computon}, we conclude that $\lambda$ is a trivial computon.

$(\impliedby)$ This follows directly from Definition \ref{def:computon-trivial}.
\end{proof}

\begin{proposition}\label{prop:computon-trivial-connectivity}
A trivial computon is not connected.
\end{proposition}
\begin{proof}
If $\lambda$ is a trivial computon, then $U=\emptyset$. Therefore, $\lambda$ is not connected as per Proposition \ref{prop:computon-connected-alwaysunits}.
\end{proof}

The general structure of a trivial computon with $j$ ec-inoutports and $k$ ed-inoutports is depicted in Figure \ref{fig:computon-trivial}, together with its corresponding Petri nets. Definition \ref{def:computon-unit} states that, in our theory, there is a distinguished computon of this sort consisting of a single ec-inoutport, which we refer to as the \emph{unit computon}.

\begin{figure}[!h]
\centering
\subcaptionbox{Trivial computon $\lambda$}
{
\begin{tikzpicture}
\node(del) at (0,0){};
\qmatch{q1}{1.2}{2.6}{$c(p_1)$}
\node[label={$\vdots$}] (dots1) at (1.2,1.75) {};
\qmatch{qi}{1.2}{1.7}{$c(p_j)$}
\dmatch{d1}{1.2}{1.1}{$c(d_1)$}{left};
\node[label={$\vdots$}] (dots3) at (1.2,0.21) {};
\dmatch{dj}{1.2}{0.2}{$c(d_k)$}{left};
\node(del2) at (2.4,0){};
\end{tikzpicture}
}
\hspace{0.1cm}
\subcaptionbox{$\mathcal{N}(\lambda)$}
{
\begin{tikzpicture}
\node(del) at (0,0){};
\node[place,label={[xshift=-0.4cm,yshift=-0.4cm]:\scriptsize $p_1$},minimum size=3mm] (q0) at (1.2,2.4) {};
\node[label={$\vdots$}] (dots1) at (1.2,1.55) {};
\node[place,label={[xshift=-0.4cm,yshift=-0.4cm]:\scriptsize $p_j$},minimum size=3mm] (qi) at (1.2,1.5) {};
\node[place,label={[xshift=-0.4cm,yshift=-0.4cm]:\scriptsize $d_1$},minimum size=3mm] (d1) at (1.2,0.9) {};
\node[label={$\vdots$}] (dots3) at (1.2,0.01) {};
\node[place,label={[xshift=-0.4cm,yshift=-0.4cm]:\scriptsize $d_k$},minimum size=3mm] (dj) at (1.2,0) {};
\node(del2) at (2.4,0){};
\end{tikzpicture}
}
\hspace{0.1cm}
\subcaptionbox{$\mathcal{C}(\mathfrak{E}(\lambda))$}
{
\begin{tikzpicture}
\node(del) at (0,0){};
\node[place,label={[xshift=-0.4cm,yshift=-0.4cm]:\scriptsize $p_1$},minimum size=3mm] (q0) at (1.2,2.4) {};
\node[label={$\vdots$}] (dots1) at (1.2,1.55) {};
\node[place,label={[xshift=-0.4cm,yshift=-0.4cm]:\scriptsize $p_j$},minimum size=3mm] (qi) at (1.2,1.5) {};
\node(del2) at (2.4,0){};
\end{tikzpicture}
}
\hspace{0.1cm}
\subcaptionbox{$\mathcal{D}(\lambda)$}
{
\begin{tikzpicture}
\node(del) at (0,0){};
\node[place,label={[xshift=-0.4cm,yshift=-0.4cm]:\scriptsize $d_1$},minimum size=3mm] (d1) at (1.2,0.9) {};
\node[label={$\vdots$}] (dots3) at (1.2,0.01) {};
\node[place,label={[xshift=-0.4cm,yshift=-0.4cm]:\scriptsize $d_k$},minimum size=3mm] (dj) at (1.2,0) {};
\node(del2) at (2.4,0){};
\end{tikzpicture}
}
\hspace{0.1cm}
{
\begin{tikzpicture}
\node(del) at (0,0){};
\matrix [below, ampersand replacement=\&] at (0,2.5) {
 \node[draw,fill=white,fill fraction={black}{0.5},inner sep=1.2pt,minimum size=3pt,label=right:{\scriptsize Ec-inoutport}] {}; \\
 \node[circle,draw,fill=white,fill fraction={black}{0.5},inner sep=1.2pt,minimum size=3pt,label=right:{\scriptsize Ed-inoutport}] {}; \\
};
\end{tikzpicture}
}
\caption{A trivial computon $\lambda$ and its corresponding Petri nets. By Definition \ref{def:computon-trivial}, $\lambda$ does not have any computation units but just e-inoutports; consequently, $\mathcal{N}(\lambda)$, $\mathcal{C}(\mathfrak{E}(\lambda))$ and $\mathcal{D}(\lambda)$ have places only so they are behaviourless. The net $\mathcal{N}(\lambda)$ embodies the whole structure of $\lambda$, whereas $\mathcal{C}(\mathfrak{E}(\lambda))$ and $\mathcal{D}(\lambda)$ have places for control and data only, respectively. Labels on places are just for reference purposes since we deal with non-labelled Petri nets, as discussed in Section \ref{sec:operational-semantics}.}
\label{fig:computon-trivial}
\end{figure}

\begin{remark}
By Definition \ref{def:computon}, a trivial computon $\lambda$ always includes control ports (i.e., ${j\geq 1}$) and can optionally possess data ports (i.e., ${k\geq 0}$). Therefore, ${\mathcal{N}(\lambda)}$ and ${\mathcal{C}(\mathfrak{E}(\lambda))}$ always have at least one place, whereas ${\mathcal{D}(\lambda)}$ can have no places at all.
\end{remark}

\begin{definition} [Unit Computon] \label{def:computon-unit}
The \emph{unit computon} is a trivial computon with ${|P|=|\Sigma|=1}$. We use $\Lambda$ to denote it.
\end{definition}

The existence of $\Lambda$ can be proven by the fact that, according to Definition \ref{def:computon}, the set of ports and the set of colours are never empty. By the same definition, we can observe that a computon can have no computation units and no edges at all. Proposition \ref{prop:unit-unique} uses this observation to show that $\Lambda$ is unique up to unique isomorphism. 

\begin{proposition}\label{prop:unit-unique}
$\Lambda$ is unique up to unique isomorphism.
\end{proposition}
\begin{proof}
By letting $\lambda_1$ and $\lambda_2$ be two unit computons, we construct a computon morphism ${\alpha:\lambda_1 \rightarrow \lambda_2}$. As ${U_1=U_2=O_1=O_2=I_1=I_2=\emptyset}$, the only choice we have for $\alpha_U$, $\alpha_O$ and $\alpha_I$ is the empty function. For $\alpha_P$, Definition \ref{def:computon-unit} forces us to exclusively consider the unique function given by ${\alpha_P(p_1)=p_2}$ for the unique ports ${p_1 \in P_1}$ and ${p_2 \in P_2}$. By this mapping, $p_1$ can never be in ${\vec{i}(\alpha)\cup\vec{o}(\alpha)}$ because ${\bullet\alpha(p_1)=\bullet p_2=\alpha(p_1)\bullet=p_2\bullet=\emptyset}$ (as per Definition \ref{def:computon-trivial}) so ${\vec{i}(\alpha)\cup\vec{o}(\alpha)=\emptyset \subseteq P_1^+ \cup P_1^-}$. The inclusion ${\alpha_\Sigma(0)=0}$ is the only option we have for ${\alpha_\Sigma}$, given that ${\Sigma_1=\{0\}=\Sigma_2}$ (see Definitions \ref{def:computon} and \ref{def:computon-unit}). As this construction satisfies Definition \ref{def:computon-morphism}, $\alpha$ is indeed a computon morphism. More specifically, $\alpha$ is a computon isomorphism because its inverse is necessarily constructed in the same way, and it is unique because the components of $\alpha$ are unique functions in $\textbf{Set}$ (i.e., empty functions or trivial injections). 

Given that most of the $\alpha$-components are empty functions, the only equation that needs to be verified from the commutative squares of Definition \ref{def:computon-morphism} is ${\alpha_\Sigma\circ c_1=c_2\circ\alpha_P}$. Having ${\alpha_\Sigma(c_1(p_1))=\alpha_\Sigma(0)=0=c_2(p_2)=c_2(\alpha_P(p_1))}$, we conclude that the naturality condition of the unique isomorphism $\alpha$ holds. Thus, proving that the unit computon is unique up to unique isomorphism.
\end{proof}

Since a computon is required to have at least one coloured port, $\Lambda$ can be perceived as the ``simplest'' object in $\textbf{Set}^\textbf{Comp}$, which corresponds to \tikz{\node at (0,0){};\node[draw,fill=white,fill fraction={black}{0.5},inner sep=1.2pt,minimum size=3pt] (q) at (0,0.01) {};} in graphical notation. In spite of this structural feature, $\Lambda$ is not an initial object in such a category since there are $k \geq 1$ computon morphisms from it to any other computon $\lambda$, where $k$ is the number of control ports in $\lambda$.\footnote{This can be easily proved by induction on the number of control ports of an arbitrary computon.} Under this premise, as mentioned in Section \ref{sec:computons}, $\textbf{Set}^\textbf{Comp}$ has no initial objects.

In $\textbf{Set}^\textbf{Comp}$, there are distinguished morphisms that respectively embed a trivial computon into all the e-inports or all the e-outports of some computon. These morphisms are referred to as in- and out-markers, respectively. The intuition behind these notions is captured in Definition \ref{def:computon-morphism-markers}. 

\begin{definition}[In- and Out-markers]\label{def:computon-morphism-markers}
An \emph{in-marker} of a computon $\lambda$ is a computon monomorphism $\alpha: \lambda_0 \rightarrow \lambda$ where $\lambda_0$ is a trivial computon with $\alpha(P_0)=P^+$. If $\alpha(P_0)=P^-$, then $\alpha$ is an \emph{out-marker} of $\lambda$. 
\end{definition}

\begin{notation}
For convenience, we write $\lambda^+$ and $\lambda^-$ for the respective in- and out-markers of a computon $\lambda$. We use the word ``the'' because all the in-markers of a computon are isomorphic to each other, with the same being true for the corresponding out-markers. When the context is clear, we simply use the word ``markers'' to unifiedly refer to such morphisms which by, Proposition \ref{prop:markers-always}, always exist for any computon.
\end{notation}

\begin{proposition}\label{prop:markers-always}
Every computon has markers.
\end{proposition}
\begin{proof}
The proof follows directly from the fact that every computon has at least one e-inport and at least one e-outport (see Definition \ref{def:computon}).
\end{proof}

As the image of a marker covers all the e-inports or all the e-outports of a computon, it is easy to show that every span of markers is pushable. For convenience, we capture this truth in Proposition \ref{prop:markers-pushable}. Also, as only e-ports are identified in the pushout of a span of markers, it is true that the induced morphisms of the pushout of in- or out-markers preserves e-inports or e-outports, correspondingly (see Proposition \ref{prop:markers-inout}).

\begin{proposition}\label{prop:markers-pushable}
Every span of markers is pushable.
\end{proposition}
\begin{proof}
The proof follows directly from Definitions \ref{def:computon-morphisms-pushable} and \ref{def:computon-morphism-markers}.
\end{proof}

\begin{proposition}\label{prop:markers-inout}
Let ${\square \in \{+,-\}}$ and ${j=1,2}$. If ${(\beta_1:\lambda_1 \rightarrow \lambda_3, \lambda_3, \beta_2:\lambda_2 \rightarrow \lambda_3)}$ is the pushout of a span ${\lambda_1 \xleftarrow{\lambda_1^\square} \lambda_0 \xrightarrow{\lambda_2^\square} \lambda_2}$, then ${\beta_j(\lambda_j^{\square}(P_0))=P_3^{\square}}$.
\end{proposition}
\begin{proof}
Suppose ${(\beta_1:\lambda_1 \rightarrow \lambda_3, \lambda_3, \beta_2:\lambda_2 \rightarrow \lambda_3)}$ is the pushout of a span ${\lambda_1 \xleftarrow{\lambda_1^+} \lambda_0 \xrightarrow{\lambda_2^+} \lambda_2}$ of in-marker morphisms, and assume for contradiction there is some ${p_3 \in \beta_j(\lambda_j^+(P_0)) \triangle P_3^+}$ for ${j \in \{1,2\}}$. If ${p_3 \in \beta_j(\lambda_j^+(P_0)) \setminus P_3^+}$, then there is some ${u_3 \in \bullet p_3}$ and, by pushout commutativity, some ${u_j \in \bullet p_j}$ for ${p_j \in \lambda_j^+(P_0)}$ such that ${\beta_j(u_j)= u_3}$ and ${\beta_j(p_j)= p_3}$. As Definition \ref{def:computon-morphism-markers} says ${\lambda_j^+(P_0)=P_j^+}$, we have ${p_j \in \lambda_j^+(P_0) \iff p_j \in P_j^+}$, i.e., a contradiction to the assumption ${u_j \in \bullet p_j}$. The case ${p_3 \in P_3^+ \setminus \beta_j(\lambda_j^+(P_0))}$ never holds since this would violate the commutativity property of pushout constructions.

Proving the statement ${\gamma_j(\lambda_j^-(P_4))=P_5^-}$ for the pushout ${(\gamma_1:\lambda_1 \rightarrow \lambda_5, \lambda_5, \gamma_2:\lambda_2 \rightarrow \lambda_5)}$ of a span ${\lambda_1 \xleftarrow{\lambda_1^-} \lambda_4 \xrightarrow{\lambda_2^-} \lambda_2}$ is completely analogous. Therefore, we conclude that our initial proposition is true.
\end{proof}

The existence of marker morphisms gives rise to the notion of dual computons. Informally, the dual of a computon $\lambda$ is constructed by structurally swapping e-inports with e-outports and vice versa, so the domains of $\lambda^+$ and $\lambda^-$ precisely correspond to the domains of the out- and in-markers of its dual, respectively. This is formalised in Definition \ref{def:computon-dual}.

\begin{definition}[Dual Computons]\label{def:computon-dual}
If there are spans ${\lambda_2 \xleftarrow{\lambda_2^+} \lambda_0 \xrightarrow{\lambda_3^-} \lambda_3}$ and ${\lambda_2 \xleftarrow{\lambda_2^-} \lambda_1 \xrightarrow{\lambda_3^+} \lambda_3}$, we say that $\lambda_2$ is a dual of a computon $\lambda_3$ and vice versa.
\end{definition}

A computon $\lambda$ can have multiple duals (with different structure each) because Definition \ref{def:computon-dual} only requires a dual to swap the interface of $\lambda$, without imposing any constraints on internal structure. More precisely, duals are not unique up to isomorphism, so a dual of a dual is not necessarily isomorphic to the original computon. Although uniqueness is not satisfied in general, Proposition \ref{prop:markers-connected-dual} states that for a connected computon, it is possible to have a dual with inverted information flows. Therefore, when connectivity is guaranteed, the construction given in the corresponding proof can be applied twice to obtain a connected computon that is isomorphic to the original one.

\begin{proposition}\label{prop:markers-connected-dual}
If $\lambda$ is a connected computon, there exists a connected computon which is a dual of $\lambda$.
\end{proposition}
\begin{proof}
Assuming $\lambda_2$ is a connected computon, we construct a computon $\lambda_3$ by letting ${P_2=P_3}$, ${U_2=U_3}$, ${I_2=O_3}$, ${O_2=I_3}$ and ${c_2=c_3}$. For each ${o \in O_3}$, we set ${\sigma_3(o)=\tau_2(o)}$ and ${t_3(o)=s_2(o)}$ to yield ${Im(\sigma_3)=Im(\tau_2)}$ and ${Im(t_3)=Im(s_2)}$. Similarly, for each ${i \in I_3}$, we let ${s_3(i)=t_2(i)}$ and ${\tau_3(i)=\sigma_2(i)}$ to have ${Im(s_3)=Im(t_2)}$ and ${Im(\tau_3)=Im(\sigma_2)}$. That is, $\lambda_3$ has the same coloured ports and computation units as $\lambda_2$, but with inverted information flows. So, ${p \xrightarrow{\exists} q}$ for $\lambda_2$ iff ${q \xrightarrow{\exists} p}$ for $\lambda_3$. Moreover, ${P_2^+=P_3^-}$ and ${P_2^-=P_3^+}$ because:
\[
p_2 \in P_2^+ \iff p_2 \notin Im(t_2) \iff p_2 \notin Im(s_3) \iff p_2 \in P_3^-
\]
\[
p_2 \in P_2^- \iff p_2 \notin Im(s_2) \iff p_2 \notin Im(t_3) \iff p_2 \in P_3^+
\]
By the above, ${Im(s_3)\cup P_3^+=Im(t_2)\cup P_2^-}$ is evident. As $\lambda_2$ is connected, we simply use Proposition \ref{prop:computon-connected-reverse} to deduce that, for each ${q_3\in Im(t_2)\cup P_2^-}$, there is a port ${p_3 \in P_2^+}$ where ${p_3\xrightarrow{\exists}q_3}$. Equivalently, for each ${q_3\in Im(s_3)\cup P_3^+}$, there is a port ${p_3 \in P_3^-}$ where ${q_3\xrightarrow{\exists}p_3}$. That is, by Definition \ref{def:computon-connected}, $\lambda_3$ is also connected.

To finalise our proof, we construct a trivial computon $\lambda_0$ and a trivial computon $\lambda_1$ by letting ${P_0=P_2^+}$, ${c_0=c_2\restriction P_2^+}$, ${P_1=P_2^-}$ and ${c_1=c_2\restriction P_2^-}$. As ${P_0=P_2^+=P_3^-}$, there evidently is an in-marker ${\lambda_0 \rightarrow \lambda_2}$ and an out-marker ${\lambda_0 \rightarrow \lambda_3}$. Analogously, having $P_1=P_2^-=P_3^+$ implies the existence of an out-marker ${\lambda_1 \rightarrow \lambda_2}$ and an in-marker ${\lambda_1 \rightarrow \lambda_3}$. By Definition \ref{def:computon-dual}, we conclude that $\lambda_3$ is a connected computon and a dual of $\lambda_2$, as required.
\end{proof}
\section{Primitive Computons} \label{sec:primitive-computons}

Like a trivial computon, a primitive one has no i-ports. The difference is that there is exactly one computation unit to which all ports are attached via edges (see Definition \ref{def:computon-primitive}). So, every port can be either e-inport or e-outport, never both (see Proposition \ref{prop:computon-primitive-insouts}). This implies a primitive computon is connected, i.e., it has neither dangling ports nor dangling computation units (see Proposition \ref{prop:computon-primitive-connected}). For these reasons, the underlying net of any primitive computon is deadlock-free (see Proposition \ref{prop:computon-primitive-deadlock} and Remark \ref{rem:computon-primitive-deadlock}). In this paper, we consider three classes of primitive computons, namely \emph{fork computons}, \emph{join computons} and \emph{functional computons}.

\begin{definition} \label{def:computon-primitive}
A primitive computon is a computon whose diagram in \textbf{Set} has the form:\footnote{We use $1$ and $\rightarrowtail$ to denote a singleton set and an injective function, respectively. The symbol $\triangle$ is the operator for symmetric difference given by ${A \triangle B=(A \setminus B)\cup(B \setminus A)}$ for sets $A$ and $B$.}
\[
\begin{tikzcd}
 & O \arrow[dl, twoheadrightarrow, "\sigma"'] \arrow[dr, tail, "t"] & & & \\
1 & & P \arrow[r, twoheadrightarrow, "c"] & \Sigma & \text{with } P = Im(s) \triangle Im(t) \\
 & I \arrow[ul, twoheadrightarrow, "\tau"]\arrow[ur, tail, "s"'] & & &
\end{tikzcd}
\]
\end{definition}

\begin{remark}
A glance at Definition \ref{def:computon-primitive} reveals that a primitive computon has no i-ports because ${Im(s)\cap Im(t)=\emptyset}$ follows from ${P = Im(s) \triangle Im(t)}$. Consequently, ${Im(s)\setminus Im(t)=Im(s)}$ and ${Im(t)\setminus Im(s)=Im(t)}$ also hold.
\end{remark}

\begin{proposition} \label{prop:computon-primitive-insouts}
If $\lambda$ is a primitive computon, then ${P=P^+ \triangle P^-}$. 
\end{proposition}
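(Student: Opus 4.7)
The plan is to unfold the definitions of $P^+$ and $P^-$ from Definition \ref{def:computon-interface} and show directly that, under the primitive-computon hypothesis $P = Im(s) \triangle Im(t)$, each port lands in exactly one of $P^+$ and $P^-$.

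First I would rewrite the interface sets. Since $t:E\to P$ and $s:F\to P$ are total functions, the condition ``$\nexists e \in E$ with $t(e)=p$'' is equivalent to $p\notin Im(t)$, and similarly for $s$. Hence $P^+ = P\setminus Im(t)$ and $P^- = P\setminus Im(s)$. This already gives $P^+\cap P^- = P\setminus(Im(s)\cup Im(t))$, so expressing $P = P^+\triangle P^-$ is equivalent to showing $Im(s)\cup Im(t) = P$ together with $Im(s)\cap Im(t) = \emptyset$; the latter is exactly the content of $P = Im(s)\triangle Im(t)$ from Definition \ref{def:computon-primitive} (which forces $Im(s)$ and $Im(t)$ to be disjoint and to cover $P$).

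From there I would argue by double inclusion. The inclusion $P^+\triangle P^-\subseteq P$ is automatic since both $P^+$ and $P^-$ are subsets of $P$. For the reverse inclusion, take any $p\in P$; by the primitive-computon hypothesis either $p\in Im(s)\setminus Im(t)$ or $p\in Im(t)\setminus Im(s)$. In the first case $p\notin Im(t)$ so $p\in P^+$, while $p\in Im(s)$ gives $p\notin P^-$, placing $p$ in $P^+\setminus P^-$. The second case is symmetric and places $p$ in $P^-\setminus P^+$. Either way $p\in P^+\triangle P^-$, completing the proof.

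I do not expect any real obstacle here: the statement is essentially a set-theoretic rephrasing of the hypothesis $P = Im(s)\triangle Im(t)$, and the only thing to be careful about is translating the existential conditions in Definition \ref{def:computon-interface} into image-set conditions. The surjectivity of $\sigma$ and $\tau$ onto the singleton $U=1$ together with $E\neq\emptyset\neq F$ is not strictly needed for this proposition, although it guarantees that both $P^+$ and $P^-$ are nonempty, which is a useful sanity check that the symmetric difference is nontrivial.
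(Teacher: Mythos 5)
Your proof is correct and follows essentially the same route as the paper's: both arguments translate the interface conditions of Definition \ref{def:computon-interface} into $P^+ = P\setminus Im(t)$ and $P^- = P\setminus Im(s)$ (using totality of $s$ and $t$) and then split on the two cases $p\in Im(s)\setminus Im(t)$ and $p\in Im(t)\setminus Im(s)$ supplied by the hypothesis $P=Im(s)\triangle Im(t)$. Your additional observation that the symmetric-difference hypothesis is exactly what forces $Im(s)$ and $Im(t)$ to be disjoint and to cover $P$ is a fair gloss on the same computation, not a different argument.
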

\begin{proof}
If $\lambda$ is a primitive computon, then ${p \in P \iff p \in Im(s) \triangle Im(t)}$ (see Definition \ref{def:computon-primitive}). 
\begin{itemize}
\item If $p \in Im(s) \setminus Im(t)$, then $p \in P^+ \setminus P^-$ because ${(\exists i \in I)[s(i)=p]}$ and ${(\nexists o \in O)[t(o)=p]}$.
\item If $p \in Im(t) \setminus Im(s)$, then $p \in P^- \setminus P^+$ because ${(\exists o \in O)[t(o)=p]}$ and ${(\nexists i \in I)[s(i)=p]}$.
\end{itemize}
By the above cases and by the definition of symmetric difference, we have that ${p \in P} \iff {p \in Im(s) \triangle Im(t)} \iff {p \in (P^+ \setminus P^-) \cup (P^- \setminus P^+)} \iff p \in P^+ \triangle P^-$. Hence, we conclude ${P = Im(s) \triangle Im(t) = (P^+ \setminus P^-) \cup (P^- \setminus P^+) = P^+ \triangle P^-}$, as required.
\end{proof}

\begin{proposition} \label{prop:computon-primitive-connected}
Every primitive computon is a connected computon.
\end{proposition}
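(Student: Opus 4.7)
The plan is to directly unwind the definitions and exploit the fact that a primitive computon has a unique computation unit through which any e-inport can reach any e-outport in a single ``hop.''

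First, I would fix arbitrary ports $p \in P^+$ and $q \in P^-$ and aim to build a witness of $p \xrightarrow{*} q$. By Proposition \ref{prop:computon-primitive-insouts}, ${P = P^+ \triangle P^-}$, so the two cases from its proof identify $P^+$ as $Im(s) \setminus Im(t)$ and $P^-$ as $Im(t) \setminus Im(s)$. Hence $p \in Im(s)$ and $q \in Im(t)$, which yields some $f \in F$ with $s(f) = p$ and some $e \in E$ with $t(e) = q$. These edges exist precisely because $E \neq \emptyset \neq F$ is part of Definition \ref{def:computon-primitive}.

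Next, I would invoke the primitivity condition $U = 1$, say $U = \{u\}$. Since Definition \ref{def:computon} requires $\tau$ and $\sigma$ to be total functions into $U$, necessarily $\tau(f) = u$ and $\sigma(e) = u$. By Definition \ref{def:flow}, these immediately give ${p \xrightarrow{f} u}$ and ${u \xrightarrow{e} q}$, so concatenating produces the chain $p \xrightarrow{f} u \xrightarrow{e} q$, which is the $n = 2$ instance of $p \xrightarrow{*} q$ (no intermediate ports are required). As $p$ and $q$ were arbitrary, Definition \ref{def:computon-connected} is satisfied.

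There is no real obstacle here: the result follows directly from the structural constraints baked into Definition \ref{def:computon-primitive} (single computation unit, nonempty edge sets, $P = Im(s) \triangle Im(t)$) together with the totality of $\sigma$ and $\tau$. The only subtle point worth stating explicitly is that one must use Proposition \ref{prop:computon-primitive-insouts} to convert the conditions $p \in P^+$ and $q \in P^-$ into membership in $Im(s)$ and $Im(t)$ respectively, which is what guarantees that the required edges $f$ and $e$ exist.
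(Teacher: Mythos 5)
Your proof is correct and follows essentially the same route as the paper's: use ${P = Im(s) \triangle Im(t)}$ to place $p$ in $Im(s)$ and $q$ in $Im(t)$, then route the resulting edges $f$ and $e$ through the unique computation unit to obtain ${p \xrightarrow{f} u \xrightarrow{e} q}$. If anything, your handling of the quantifiers is tighter, since the paper's proof fixes an arbitrary e-inport $p$ but only exhibits \emph{some} e-outport $q$ reachable from it, whereas Definition \ref{def:computon-connected} (and your argument) requires the path for every pair $(p,q) \in P^+ \times P^-$.
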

\begin{proof}
Assuming $\lambda$ is a primitive computon with ${p \in Im(s)\cup P^+}$, we first show ${Im(s)=P^+}$ as follows: ${p \in Im(s)}\iff{p \notin P^-}$ by Definition \ref{def:computon-interface} ${\iff p \in P^+}$ by Proposition \ref{prop:computon-primitive-insouts}. Having ${Im(s)=P^+}$ implies we just have to prove for either ${p \in Im(s)}$ or ${p \in P^+}$.
 
If ${p \in Im(s)}$, we know there must be some ${i \in I}$ for which ${s(i)=p}$. If $u$ is the only computation unit in $U$ (see Definition \ref{def:computon-primitive}), ${\tau(i)=u}$ because $\tau$ is total and surjective (see Definition \ref{def:computon}). As $\sigma$ is also surjective, there exists some ${o \in O}$ where ${\sigma(o)=u}$. By the totality of $t$, there is some ${q \in P}$ where ${t(o)=q}$. Having ${q \in Im(t)}$ and ${P=Im(s)\triangle Im(t)}$ allows us to deduce ${q \notin Im(s)}$ so that ${q \in P^-}$ by Definition \ref{def:computon-interface}. As ${p \xrightarrow{\exists} q}$ holds for ${p \in Im(s) \cup P^+}$ and ${q \in P^-}$, we conclude that $\lambda$ adheres to Definition \ref{def:computon-connected} and it is therefore a connected computon.
\end{proof}

\begin{proposition} \label{prop:computon-primitive-deadlock}
If $\lambda$ is a primitive computon, the nets $\mathcal{N}(\lambda)$ and $\mathcal{C}(\mathfrak{E}(\lambda))$ are deadlock-free.
\end{proposition}
\begin{proof}
The proof is obvious since every primitive computon's net has only one transition to which all input places (i.e., e-inports) and output places (i.e., e-outports) are attached (see Definition \ref{def:computon-primitive} and Proposition \ref{prop:computon-primitive-insouts}).
\end{proof}

\begin{remark}\label{rem:computon-primitive-deadlock}
The only scenario in which $\mathcal{D}(\lambda)$ satisfies Remark \ref{rem:deadlock-freedom} (i.e., initial and final states exist) is when $\lambda$ has both ed-inports and ed-outports. When this occurs, the proof that $\mathcal{D}(\lambda)$ is deadlock-free is analogous to that of Proposition \ref{prop:computon-primitive-deadlock}.
\end{remark}

\subsection{Fork Computons} \label{sec:primitive-computons-fork}

A fork computon has exactly one ec-inport and two ec-outports, i.e., it has no data ports at all (see Definition \ref{def:computon-fork}). Intuitively, it just duplicates the control received in its unique ec-inport into all its ec-outports.

\begin{definition} \label{def:computon-fork}
A fork computon $\lambda$ is a primitive computon with $|O|=2$ and $|I|=|\Sigma|=1$.
\end{definition}

From Definition \ref{def:computon-fork}, we can deduce a fork computon $\lambda$ has exactly three ports because $s$ and $t$ are injective, ${P=Im(s) \triangle Im(t)}$, ${|O|=2}$ and ${|I|=1}$. Specifically, ${|P^-|=2}$ and ${|P^+|=1}$ because ${P=P^+ \triangle P^-}$ (see Definition \ref{def:computon-interface} and Proposition \ref{prop:computon-primitive-insouts}). As ${|\Sigma|=1}$ and $c$ is total and ${C^+ \neq \emptyset \neq C^-}$ (see Definition \ref{def:computon}), $\Sigma=\{0\}$ so ${(\forall p \in P)[c(p)=0]}$. The general structure of a fork computon, together with its underlying Petri nets, is depicted in Figure \ref{fig:computon-fork}.

\vspace{5pt}

\begin{figure}[!h]
\centering
\subcaptionbox{Fork computon $\lambda$}
{
\begin{tikzpicture}
\forkplain{0}{0};
\flow{{0,0.25}}{{1,0.25}}{dashed}{};
\flow{{1.25,0.4}}{{1.25+0.95,0.4}}{dashed}{};
\flow{{1.25,0.1}}{{1.25+0.95,0.1}}{dashed}{};
\node[draw=black,fill=white,inner sep=0pt,minimum size=3pt,label=left:\scriptsize $c(p_1)$] (p) at (0,0.25) {};
\node[draw=black,fill=black,inner sep=0pt,minimum size=3pt,label=right:\scriptsize $c(p_2)$] (q) at (2.25,0.4) {};
\node[draw=black,fill=black,inner sep=0pt,minimum size=3pt,label=right:\scriptsize $c(p_3)$] (r) at (2.25,0.1) {};
\node(del) at (2.5,0){};
\end{tikzpicture}
}
\hspace{2cm}
\subcaptionbox{$\mathcal{N}(\lambda)$ or $\mathcal{C}(\mathfrak{E}(\lambda))$}
{
\begin{tikzpicture}
\node[transition,fill=black,minimum width=0.5mm,minimum height=10mm] (t1) at (1,0) {};

\node[place,label={180:\scriptsize $p_1$},minimum size=3mm] (q0) at (0,0) {};
\node[place,label={0:\scriptsize $p_2$},minimum size=3mm] (q1) at (2,0.3) {};
\node[place,label={0:\scriptsize $p_3$},minimum size=3mm] (q2) at (2,-0.3) {};

\draw[-latex,thick] (q0) -- (t1);
\draw[-latex,thick] (t1) -- (q1);
\draw[-latex,thick] (t1) -- (q2);
\end{tikzpicture}
}
\hspace{2cm}
\subcaptionbox{$\mathcal{D}(\lambda)$}
{
\begin{tikzpicture}
\node at (0,0){};
\node[transition,fill=black,minimum width=0.5mm,minimum height=10mm] (t1) at (0.5,0) {};
\node at (1,0){};
\end{tikzpicture}
}
\hspace{1.5cm}
{
\begin{tikzpicture}
\matrix [below, ampersand replacement=\&] at (current bounding box.south) {
\draw[fill=gray!30] (1,0.12) node[anchor=north]{}
  -- (1.12,0.25) node[anchor=north]{}
  -- (1.12,0) node[anchor=south]{}
  -- cycle; \& \node[inner sep=0pt,minimum size=0pt,label=right:{\scriptsize Computation unit}]{}; \& \node{}; \& \&   
 \draw[dashed] (2.3,0) to node[pos=0.5,yshift=4]{\arrowflow} (2.8,0); \& \node[inner sep=0pt,minimum size=3pt,label=right:{\scriptsize Control flow edge}]{}; \& \node{}; \& \&
 \node[draw=black,fill=white,inner sep=0pt,minimum size=3pt,label=right:{\scriptsize Ec-inport}] {}; \& \node{};
 \& \node[fill=black,inner sep=0pt,minimum size=3pt,label={right:{\scriptsize Ec-outport}}] {}; \& \node{};\\
};
\end{tikzpicture}
}
\caption{A fork computon $\lambda$ and its corresponding Petri nets. As $\lambda$ has control ports only, the Petri nets $\mathcal{N}(\lambda)$ and $\mathcal{C}(\mathfrak{E}(\lambda))$ are isomorphic. By the same reason, $\mathcal{D}(\lambda)$ has no places at all but just a single transition. Labels on places are just for reference purposes since we deal with non-labelled Petri nets, as discussed in Section \ref{sec:operational-semantics}.}
\label{fig:computon-fork}
\end{figure}

\subsection{Join Computons} \label{sec:primitive-computons-join}

A join computon is the dual of a fork computon since it has exactly two ec-inports and one ec-outport (see Definition \ref{def:computon-join-1}). Intuitively, it merges the control received in its ec-inports into its unique ec-outport. 

\begin{definition}\label{def:computon-join-1}
A join computon $\lambda$ is a primitive computon with $|O|=|\Sigma|=1$ and $|I|=2$. 
\end{definition}

The properties of a join computon $\lambda$ are almost identical to that of a fork computon so it is true that $c(p)=0$ for all $p \in P$. The only difference is in terms of the number of ec-inports and ec-outports. The general structure of a join computon, together with its underlying Petri nets, is depicted in Figure \ref{fig:computon-join}.

\begin{figure}[!h]
\centering
\subcaptionbox{Join computon $\lambda$}
{
\begin{tikzpicture}
\joinplain{0}{0};
\flow{{1.25,0.25}}{{1.25+0.95,0.25}}{dashed}{};
\flow{{0,0.4}}{{1,0.4}}{dashed}{};
\flow{{0,0.1}}{{1,0.1}}{dashed}{};
\node[draw=black,fill=black,inner sep=0pt,minimum size=3pt,label=right:\scriptsize $c(p_3)$] at (2.25,0.25) {};
\node[draw=black,fill=white,inner sep=0pt,minimum size=3pt,label=left:\scriptsize $c(p_1)$] at (0,0.4) {};
\node[draw=black,fill=white,inner sep=0pt,minimum size=3pt,label=left:\scriptsize $c(p_2)$] at (0,0.1) {};

\node(del) at (2.5,0){};
\end{tikzpicture}
}
\hspace{2cm}
\subcaptionbox{$\mathcal{N}(\lambda)$ or $\mathcal{C}(\mathfrak{E}(\lambda))$}
{
\begin{tikzpicture}
\node[transition,fill=black,minimum width=0.5mm,minimum height=10mm] (t1) at (1,0) {};

\node[place,label={180:\scriptsize $p_1$},minimum size=3mm] (q0) at (0,0.3) {};
\node[place,label={180:\scriptsize $p_2$},minimum size=3mm] (q1) at (0,-0.3) {};
\node[place,label={0:\scriptsize $p_3$},minimum size=3mm] (q2) at (2,0) {};

\draw[-latex,thick] (q0) -- (t1);
\draw[-latex,thick] (q1) -- (t1);
\draw[-latex,thick] (t1) -- (q2);
\end{tikzpicture}
}
\hspace{2cm}
\subcaptionbox{$\mathcal{D}(\lambda)$}
{
\begin{tikzpicture}
\node at (0,0){};
\node[transition,fill=black,minimum width=0.5mm,minimum height=10mm] (t1) at (0.5,0) {};
\node at (1,0){};
\end{tikzpicture}
}
\hspace{0.8cm}
{
\begin{tikzpicture}
\matrix [below, ampersand replacement=\&] at (current bounding box.south) {
\draw[fill=gray!30] (1.12,0.12) node[anchor=north]{}
  -- (1,0.25) node[anchor=north]{}
  -- (1,0) node[anchor=south]{}
  -- cycle; \& \node[inner sep=0pt,minimum size=0pt,label=right:{\scriptsize Computation unit}]{}; \& \node{}; \& \& 
 \draw[dashed] (2.3,0) to node[pos=0.5,yshift=4]{\arrowflow} (2.8,0); \& \node[inner sep=0pt,minimum size=3pt,label=right:{\scriptsize Control flow edge}]{}; \& \node{}; \& \&
 \node[draw=black,fill=white,inner sep=0pt,minimum size=3pt,label=right:{\scriptsize Ec-inport}] {}; \& \node{};
 \& \node[fill=black,inner sep=0pt,minimum size=3pt,label={right:{\scriptsize Ec-outport}}] {}; \& \node{};\\
};
\end{tikzpicture}
}
\caption{A join computon $\lambda$ and its corresponding Petri nets. As $\lambda$ has control ports only, the Petri nets $\mathcal{N}(\lambda)$ and $\mathcal{C}(\mathfrak{E}(\lambda))$ are isomorphic. By the same reason, $\mathcal{D}(\lambda)$ has no places at all but just a single transition. Labels on places are just for reference purposes since we deal with non-labelled Petri nets, as discussed in Section \ref{sec:operational-semantics}.}
\label{fig:computon-join}
\end{figure}

\subsection{Functional Computons} \label{sec:primitive-computons-functional}

A functional computon has exactly one ec-inport, one ec-outport, any number of ed-inports and any number of ed-outports, as formalised in Definition \ref{def:computon-functional}. Intuitively, the unique computation unit is a high-level representation of a (potentially halting) computation, triggered after receiving a control signal and a number of input data values. The successful termination of such a computation results in a single control signal and a number of output data values.

\begin{definition} \label{def:computon-functional}
A functional computon $\lambda$ is a primitive computon where ${(\exists !p \in P^+)[c(p)=0]}$ and ${(\exists !q \in P^-)[c(q)=0]}$.
\end{definition}

The general structure of a functional computon with $j$ ed-inports and $k$ ed-outports is illustrated in Figure \ref{fig:computon-functional}, together with its corresponding Petri nets.\footnote{We acknowledge that the definition of computons does not include labels for computation units. However, for increased clarity, we took the liberty of using the symbol of a functional computon for labelling its unique computation unit.}  

\vspace{5pt}

\begin{figure}[!h]
\centering
\subcaptionbox{Functional computon $\lambda$}
{
\begin{tikzpicture}
\computonPrimitive{1}{0}{1}{1.9}{$\lambda$}

\flow{{0.2,1.7}}{$(0.2,1.7)+(0.8,0)$}{dashed}{};
\node[draw=black,fill=white,inner sep=0pt,minimum size=3pt,label=left:\scriptsize $c(p_0)$] (1q0) at (0.2,1.7) {};
\din{1i1}{0.2}{1.1}{$c(p_1)$};
\node[label={$\vdots$}] (1i2) at (0.5,0.25) {};
\din{1in}{0.2}{0.2}{$c(p_j)$};

\flow{{2,1.7}}{$(2,1.7)+(0.8,0)$}{dashed}{};
\node[fill=black,inner sep=0pt,minimum size=3pt,label=right:\scriptsize $c(q_0)$] (1q1) at (2.8,1.7) {};
\dout{1o1}{2}{1.1}{$c(q_1)$};
\node[label={$\vdots$}] (1o2) at (2.5,0.25) {};
\dout{1om}{2}{0.2}{$c(q_k)$};
\end{tikzpicture}
}
\hspace{-0.4cm}
\subcaptionbox{$\mathcal{N}(\lambda)$}
{
\begin{tikzpicture}
\node[transition,fill=black,minimum width=0.5mm,minimum height=10mm] (t1) at (1,0.7) {};

\node[place,label={180:\scriptsize $p_0$},minimum size=3mm] (q0) at (0,1.7) {};
\node[place,label={180:\scriptsize $p_1$},minimum size=3mm] (i1) at (0,1.1) {};
\node[label={$\vdots$}] (i2) at (0,0.1) {};
\node[place,label={180:\scriptsize $p_j$},minimum size=3mm] (in) at (0,0) {};

\node[place,label={0:\scriptsize $q_0$},minimum size=3mm] (q1) at (2,1.7) {};
\node[place,label={0:\scriptsize $q_1$},minimum size=3mm] (o1) at (2,1.1) {};
\node[label={$\vdots$}] (o2) at (2,0.1) {};
\node[place,label={0:\scriptsize $q_k$},minimum size=3mm] (om) at (2,0) {};

\draw[-latex,thick] (q0) -- (t1);
\draw[-latex,thick] (i1) -- (t1);
\draw[-latex,thick] (in) -- (t1);
\draw[-latex,thick] (t1) -- (q1);
\draw[-latex,thick] (t1) -- (o1);
\draw[-latex,thick] (t1) -- (om);
\end{tikzpicture}
}
\hspace{-0.4cm}
\subcaptionbox{$\mathcal{C}(\mathfrak{E}(\lambda))$}
{
\begin{tikzpicture}
\node at (0,0){};
\node[transition,fill=black,minimum width=0.5mm,minimum height=10mm] (t1) at (1,0.7) {};
\node[place,label={180:\scriptsize $p_0$},minimum size=3mm] (q0) at (0,0.7) {};
\node[place,label={0:\scriptsize $q_0$},minimum size=3mm] (q1) at (2,0.7) {};

\draw[-latex,thick] (q0) -- (t1);
\draw[-latex,thick] (t1) -- (q1);
\end{tikzpicture}
}
\hspace{-0.4cm}
\subcaptionbox{$\mathcal{D}(\lambda)$}
{
\begin{tikzpicture}
\node at (0,0){};
\node[transition,fill=black,minimum width=0.5mm,minimum height=10mm] (t1) at (1,0.5) {};

\node[place,label={180:\scriptsize $p_1$},minimum size=3mm] (i1) at (0,1.1) {};
\node[label={$\vdots$}] (i2) at (0,0.1) {};
\node[place,label={180:\scriptsize $p_j$},minimum size=3mm] (in) at (0,0) {};

\node[place,label={0:\scriptsize $q_1$},minimum size=3mm] (o1) at (2,1.1) {};
\node[label={$\vdots$}] (o2) at (2,0.1) {};
\node[place,label={0:\scriptsize $q_k$},minimum size=3mm] (om) at (2,0) {};

\draw[-latex,thick] (i1) -- (t1);
\draw[-latex,thick] (in) -- (t1);
\draw[-latex,thick] (t1) -- (o1);
\draw[-latex,thick] (t1) -- (om);
\end{tikzpicture}
}
\hspace{-1cm}
{
\begin{tikzpicture}
\matrix [below, ampersand replacement=\&] at (current bounding box.south) {
 \draw[draw=black,fill=white] (0,0) rectangle ++(0.12,0.25);\& \node[inner sep=0pt,minimum size=0pt,label=right:{\scriptsize Computation unit}]{}; \& \node{}; \& \&
 \draw[dashed] (2.3,0) to node[pos=0.5,yshift=4]{\arrowflow} (2.8,0); \& \node[inner sep=0pt,minimum size=3pt,label=right:{\scriptsize Control flow edge}]{}; \& \node{}; \& \&
 \node[draw=black,fill=white,inner sep=0pt,minimum size=3pt,label=right:{\scriptsize Ec-inport}] {}; \& \node{};
 \& \node[fill=black,inner sep=0pt,minimum size=3pt,label={right:{\scriptsize Ec-outport}}] {}; \\
};
\matrix [below, ampersand replacement=\&] at (current bounding box.south) {
 \draw (0,0) to node[pos=0.5,yshift=4]{\arrowflow} (0.5,0); \& \node[inner sep=0pt,minimum size=3pt,label=right:{\scriptsize Data flow edge}]{}; \& \node{}; \& \&
 \node[circle,draw=black,fill=white,inner sep=0pt,minimum size=3pt,label={right:{\scriptsize Ed-inport}}] {}; \& \node{};
 \& \node[circle,fill=black,inner sep=0pt,minimum size=3pt,label={right:{\scriptsize Ed-outport}}] {}; \\
};
\end{tikzpicture}
}
\vspace{-10pt}
\caption{A functional computon $\lambda$ and its corresponding Petri nets. Evidently, there always is a place to buffer incoming control and another one to buffer outgoing control, in order to respect the structural constraints given by Definition \ref{def:computon-functional}. Particularly, the net $\mathcal{N}(\lambda)$ embodies the whole structure of $\lambda$, whereas $\mathcal{C}(\mathfrak{E}(\lambda))$ and $\mathcal{D}(\lambda)$ have places for control and data only, respectively. As places to store data are optional as per Definition \ref{def:computon}, we have ${j,k\geq 0}$. Labels on places are just for reference purposes since we deal with non-labelled Petri nets, as discussed in Section \ref{sec:operational-semantics}.}
\label{fig:computon-functional}
\end{figure}

Up to isomorphism, there exists a particular functional computon which possesses only one ec-inport and only one ec-outport. This sort of computon, referred to as \emph{the glue computon}, is explicitly defined when $|I|=|O|=1$. Intuitively, it just echoes the control signal received in its unique ec-inport into its only ec-outport.

\section{Composite Computons} \label{sec:composite-computons}

A composite computon is algebraically formed via a composition operator which defines an explicit control flow structure for the execution of computons in some order. More formally, a composite computon is the finite colimit of some diagram in $\textbf{Set}^\textbf{Comp}$, whose construction is given by a composition operation characterised as a colimit computation. In this section, we describe separate operations to form sequential, parallel, branching and iterative computons. Table \ref{tab:composites-index} summarises the colimit constructions each composition operation builds upon as well as resulting composites. This table also shows the subsection each operation is described in. 

\begin{table}[!h]
  \begin{center}
    \caption{Relationship between our composition operations and colimits in $\textbf{Set}^\textbf{Comp}$.}
    \label{tab:composites-index}
    \resizebox{15.5cm}{!}{
\begin{tabular}{ |c|c|c|c| } 
 \hline
 \emph{Composition operation} & \emph{Built upon} & \emph{Result} & \emph{Subsection} \\ 
 \hline
 Total sequencing & Pushout & Total sequential computon & \ref{sec:sequential-computons} \\ 
 \hline
 Partial sequencing & Pushout & Partial sequential computon & \ref{sec:sequential-computons} \\
 \hline
 Asynchronous parallel & Coproduct & P-async computon & \ref{sec:parallel-computons} \\
 \hline
 Synchronous parallel & Pushout and coproduct & P-sync computon & \ref{sec:parallel-computons} \\
 \hline
 Branching & Pushout and coproduct & Branching computon & \ref{sec:branching-computons} \\
 \hline
 Head-iteration & Pushout and coproduct & Head-iterative computon & \ref{sec:iterative-computons} \\
 \hline
 Tail-iteration & Pushout and coproduct & Tail-iterative computon & \ref{sec:iterative-computons} \\
 \hline
\end{tabular}
}
\end{center}
\end{table}

For each operation, we describe their category-theoretic foundations, operational semantics (in \textbf{Petri}) and encapsulation. The latter is a property that allows hiding the internals of composites so as to treat them as self-contained, black-box units. In this section, we will focus on encapsulation of control and data flow and we we will show that, while control flow is explicitly defined, data flow is implicitly sewn through our colimit-based composition operations. Contrary to what might seem obvious, data does not always follow control, especially in the cases of partial sequencing and parallel composition.

Compositionality is a key term in this section, which refers to the property of constructing complex computons from simpler ones through the proposed operators, while ensuring closure and structure preservation of the simpler entities in the complex ones. By closure, we mean an operation from Table \ref{tab:composites-index} remains within $\textbf{Set}^\textbf{Comp}$, i.e., the result of composing computons, each adhering to Definition \ref{def:computon}, is another computon which also conforms to Definition \ref{def:computon}. By preservation, we mean ensuring that the computation units, ports, colouring and flow adjacency of the composed entities are retained in the corresponding composite.

To realise preservation, our operators build upon pushout and coproduct which, in turn, rely on computon morphisms to (intuitively) ``insert'' a computon into another (see Definition \ref{def:computon-morphism}). That is, these two constructions do not introduce new structures but just embed computons in different ways, as discussed in Section \ref{sec:computons}. In the case of pushouts, two computons are merged into another via an apex object which represents the common part between the entities being merged. When it comes to coproduct, two computons are put in a side-by-side structure without identifying any elements. 

As $\textbf{Set}^\textbf{Comp}$ has all coproducts (see Proposition \ref{prop:computon-coproduct}), it is true that any two computons can always be combined under this colimit. Pushouts only exist for pushable spans (see Proposition \ref{prop:pushout-pushable}) so computons $\lambda_1$ and $\lambda_2$ can only be combined under this construction when there is a computon $\lambda_0$ that allows the formation of a pushable span ${\lambda_1 \xleftarrow{} \lambda_0 \xrightarrow{} \lambda_2}$. 

In this section we prove that, although $\textbf{Set}^\textbf{Comp}$ does not have all pushouts, such a category is closed under the operators we propose since the colimits they define always exist, i.e., they build upon coproduct, pushouts over pushable spans or any combination thereof. The respective proofs of existence are given for each operator in the corresponding subsection.

\subsection{Sequential Computons} \label{sec:sequential-computons}

Sequential composition is an operation we characterise as a particular pushout in $\textbf{Set}^\textbf{Comp}$. It is particular because, intuitively, the common object needs to be a trivial computon $\lambda_0$ that can be ``embedded'' into some or all the e-outports of a computon $\lambda_1$ and into some or all the e-inports of a computon $\lambda_2$. Particularly, every port $p_0 \in P_0$ that is embedded into an e-outport $p_1 \in P_1^-$ needs to be embedded into an e-inport $p_2 \in P_2^+$ and vice versa (see Definition \ref{def:span-sequentiable}). This restriction, given by so-called \emph{sequentiable spans}, enables a strict sequence in which $\lambda_1$ is computed before $\lambda_2$. A converse computation is possible and requires a different embedding since sequencing is a non-commutative operation in which order matters. The notion of a sequential computon is formalised in Definition \ref{def:computon-sequential} and its proof of existence is directly derivable from Lemma \ref{lem:span-sequentiable-pushable}.

\begin{definition} [Sequentiable Spans] \label{def:span-sequentiable}
A span ${\lambda_1 \xleftarrow{\alpha_1} \lambda_0 \xrightarrow{\alpha_2} \lambda_2}$ of computon morphisms is sequentiable if (i) $\lambda_0$ is a trivial computon with ${P_0=\vec{i}(\alpha_1) \cap \vec{o}(\alpha_2)}$, (ii) $\lambda_1$ and $\lambda_2$ are connected computons, (iii) ${\alpha_1(\vec{o}(\alpha_2)) \subseteq P_1^-}$ and (iv) ${\alpha_2(\vec{i}(\alpha_1)) \subseteq P_2^+}$. Particularly, if ${\alpha_1(\vec{o}(\alpha_2)) = P_1^-}$ and ${\alpha_2(\vec{i}(\alpha_1)) = P_2^+}$, then the span is totally sequentiable. Otherwise, if ${\alpha_1(\vec{o}(\alpha_2)) \subset P_1^-}$ or ${\alpha_2(\vec{i}(\alpha_1)) \subset P_2^+}$, the span is partially sequentiable.
\end{definition}

\begin{remark}
If a span $\rho$ is totally sequentiable, then neither ${\alpha_1(\vec{o}(\alpha_2)) \subset P_1^-}$ nor ${\alpha_2(\vec{i}(\alpha_1)) \subset P_2^+}$ can hold because equality forbids proper inclusion; conversely, if $\rho$ is partially sequentiable then at least one of the equalities ${\alpha_1(\vec{o}(\alpha_2)) = P_1^-}$ or ${\alpha_2(\vec{i}(\alpha_1)) = P_2^+}$ fails. Therefore, totally sequentiable spans are not partially sequentiable (and vice versa), i.e., these two notions are mutually exclusive.
\end{remark}

\begin{lemma} \label{lem:span-sequentiable-pushable}
Every (partially or totally) sequentiable span of computon morphisms is pushable.
\end{lemma}
\begin{proof}
Let ${\lambda_1 \xleftarrow{\alpha_1} \lambda_0 \xrightarrow{\alpha_2} \lambda_2}$ be a sequentiable span of computon morphisms. By Definition \ref{def:span-sequentiable}, ${p_0 \in P_0 \iff p_0 \in \vec{i}(\alpha_1) \cap \vec{o}(\alpha_2)}$ so ${\alpha_1(p_0) \in P_1^-}$ and ${\alpha_2(p_0) \in P_2^+}$. That is, the equation ${\alpha_1(p_0)\bullet=\emptyset=\bullet\alpha_2(p_0)}$ holds. As ${p_0\bullet=\emptyset=\bullet p_0}$ (because ${\lambda_0}$ is a trivial computon), it follows that ${\alpha_1(p_0)\bullet\setminus\alpha_1(p_0\bullet)=\emptyset=\bullet\alpha_2(p_0)\setminus\alpha_2(\bullet p_0)}$, meaning ${p_0 \notin \vec{o}(\alpha_1)\cup\vec{i}(\alpha_2)}$ and, hence, ${\vec{o}(\alpha_1)\cup\vec{i}(\alpha_2)=\emptyset}$. For ${j\in\{1,2\}}$, we observe that ${P_j^+ \neq \emptyset \neq P_j^-}$ (by Definition \ref{def:computon}) and that ${\lambda_j}$ is not a trivial computon because it is a connected computon (see Proposition \ref{prop:computon-connected-alwaysunits}). Thus, ${\alpha_1(\vec{i}(\alpha_2)) \cup \alpha_1(\vec{o}(\alpha_2)) \subseteq P_1^- \subset P_1^+ \cup P_1^-}$ and ${\alpha_2(\vec{i}(\alpha_1)) \cup \alpha_2(\vec{o}(\alpha_1)) \subseteq P_2^+ \subset P_2^+ \cup P_2^-}$.
\end{proof}

\begin{corollary} \label{cor:sequential}
For a sequentiable span ${\lambda_1 \xleftarrow{\alpha_1} \lambda_0 \xrightarrow{\alpha_2} \lambda_2}$ of computon morphisms, we have ${\vec{o}(\alpha_1)\cup\vec{i}(\alpha_2)=\emptyset}$. 
\end{corollary}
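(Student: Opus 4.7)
The plan is to observe that this corollary is essentially an immediate by-product of the argument used inside the proof of Proposition \ref{prop:sequential-pushable}: that proof already establishes, on its way to checking pushability, that no port of $P_0$ belongs to $\vec{o}(\alpha_1)\cup\vec{i}(\alpha_2)$. So the task reduces to isolating that step and observing that $\vec{o}(\alpha_1)$ and $\vec{i}(\alpha_2)$ are by definition subsets of $P_0$, hence must be empty.

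Concretely, I would start from an arbitrary $p_0\in P_0$ and exploit the two distinctive features of a sequential computon. First, since $\lambda_0$ is trivial (by clause (i) of Definition \ref{def:computon-sequential}), $\bullet p_0=\emptyset=p_0\bullet$, so $\alpha_1(\bullet p_0)=\alpha_1(p_0\bullet)=\emptyset$ and likewise for $\alpha_2$. Second, from $P_0=\vec{i}(\alpha_1)\cap\vec{o}(\alpha_2)$ combined with clauses (iii) and (iv), we get $\alpha_1(p_0)\in P_1^-$ and $\alpha_2(p_0)\in P_2^+$, which by Definition \ref{def:computon-interface} yield $\alpha_1(p_0)\bullet=\emptyset$ and $\bullet\alpha_2(p_0)=\emptyset$. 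Putting these facts together,
\[
\alpha_1(p_0)\bullet\setminus\alpha_1(p_0\bullet)=\emptyset\quad\text{and}\quad\bullet\alpha_2(p_0)\setminus\alpha_2(\bullet p_0)=\emptyset,
\]
so by the defining conditions of $\vec{o}(\cdot)$ and $\vec{i}(\cdot)$ in Definition \ref{def:computon-morphism}, $p_0\notin\vec{o}(\alpha_1)$ and $p_0\notin\vec{i}(\alpha_2)$.

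Since $p_0\in P_0$ was arbitrary and since $\vec{o}(\alpha_1),\vec{i}(\alpha_2)\subseteq P_0$ by definition (both sets consist of ports of the domain computon $\lambda_0$), we conclude $\vec{o}(\alpha_1)=\emptyset=\vec{i}(\alpha_2)$ and hence $\vec{o}(\alpha_1)\cup\vec{i}(\alpha_2)=\emptyset$, as claimed. There is no substantive obstacle here; the only care needed is to make explicit the containment $\vec{o}(\alpha_1)\cup\vec{i}(\alpha_2)\subseteq P_0$, which is immediate from Definition \ref{def:computon-morphism}, and to cite the right clauses of Definition \ref{def:computon-sequential} when extracting $\alpha_1(p_0)\in P_1^-$ and $\alpha_2(p_0)\in P_2^+$.
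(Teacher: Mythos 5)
Your proposal is correct and is essentially the paper's own argument: the paper proves the corollary by pointing to Proposition \ref{prop:sequential-pushable}, whose proof contains verbatim the chain $p_0\in\vec{i}(\alpha_1)\cap\vec{o}(\alpha_2)\Rightarrow\alpha_1(p_0)\in P_1^-,\ \alpha_2(p_0)\in P_2^+\Rightarrow\alpha_1(p_0)\bullet=\emptyset=\bullet\alpha_2(p_0)$, combined with triviality of $\lambda_0$, to conclude $p_0\notin\vec{o}(\alpha_1)\cup\vec{i}(\alpha_2)$. You have merely isolated that step and made explicit the (correct) observation that $\vec{o}(\alpha_1)$ and $\vec{i}(\alpha_2)$ are subsets of $P_0$, which is all that is needed.
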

\begin{proof}
See the proof of Lemma \ref{lem:span-sequentiable-pushable}.
\end{proof}

\begin{definition} [Sequential Computon] \label{def:computon-sequential}
Let $\rho$ be a sequentiable span ${\lambda_1 \xleftarrow{\alpha_1} \lambda_0 \xrightarrow{\alpha_2} \lambda_2}$ of computon morphisms. If $\rho$ is partially sequentiable, then its pushout is called a partial sequential computon, written $\lambda_1 \rhd_{\rho} \lambda_2$. Otherwise, if $\rho$ is totally sequentiable, then its pushout is called a total sequential computon, written $\lambda_1 \unrhd_{\rho} \lambda_2$. In any case, $\lambda_0$ is called the apex computon, $\lambda_1$ the left operand and $\lambda_2$ the right operand.
\end{definition}

When an apex computon is embedded into all the e-outports of the left operand and into all the e-inports of the right one, we say that the respective span is totally sequentiable; otherwise, we say it is partially sequentiable (see Definition \ref{def:span-sequentiable}). Basically, a sequentiable span chooses the e-outports of the left operand and the e-inports of the right one that will become i-ports in the corresponding sequential composite. Port renomination, which is realised by the induced pushout morphisms, follows Proposition \ref{prop:computon-morphism-eports-become-iports}. By Proposition \ref{prop:sequential-totalispartial}, total sequentiality implies partial sequentiality in the general case. The only exception occurs when the left and right operands have exactly one e-outport and one e-inport, respectively, in which case total sequentiality is the only alternative. 

\begin{proposition} \label{prop:sequential-totalispartial}
If $\rho_1$ is a totally sequentiable span ${\lambda_1 \xleftarrow{\alpha_1} \lambda_0 \xrightarrow{\alpha_2} \lambda_2}$ of computon morphisms with ${|P_1^-|>1 \lor |P_2^+|>1}$, there exists a partial sequential computon ${\lambda_1 \rhd_{\rho_2} \lambda_2}$ where ${\rho_2}$ is the partially sequentiable span ${\lambda_1 \xleftarrow{\alpha_1 \circ \alpha_0} \Lambda \xrightarrow{\alpha_2 \circ \alpha_0} \lambda_2}$ with ${\alpha_0: \Lambda \rightarrow \lambda_0}$.
\end{proposition}
\begin{proof}
Let ${\lambda_1 \unrhd_{\rho_1} \lambda_2}$ be the pushout of a totally sequentiable span ${\lambda_1 \xleftarrow{\alpha_1} \lambda_0 \xrightarrow{\alpha_2} \lambda_2}$ of computon morphisms. Since $\lambda_0$ is required to have at least one ec-inoutport (see Definition \ref{def:computon}), we know there exists at least one computon morphism ${\alpha_0: \Lambda \rightarrow \lambda_0}$ such that ${\alpha_0(p) \in P_0}\iff{\alpha_0(p) \in \vec{i}(\alpha_1) \cap \vec{o}(\alpha_2)}$ (see Definition \ref{def:span-sequentiable}). Consequently, the span ${\rho_2:=\lambda_1 \xleftarrow{\alpha_1 \circ \alpha_0} \Lambda \xrightarrow{\alpha_2 \circ \alpha_0} \lambda_2}$ exists.

As ${\alpha_0(p) \in \vec{o}(\alpha_2)}$ and ${\alpha_1(\vec{o}(\alpha_2))=P_1^-}$ (by Definition \ref{def:span-sequentiable} of totally sequentiable spans), we have ${\alpha_1(\alpha_0(p)) \in P_1^-}$, i.e., ${\alpha_1(\alpha_0(p))\bullet=\emptyset}$. The fact that ${\lambda_1}$ is a connected computon implies there exists some computation unit ${u_1 \in \bullet \alpha_1(\alpha_0(p))}$ so that ${\bullet \alpha_1(\alpha_0(p)) \neq \emptyset}$ (see Proposition \ref{prop:computon-connected-alwaysunits}). As ${\bullet p = \emptyset}$ because $\Lambda$ is a trivial computon, ${\bullet \alpha_1(\alpha_0(p)) \setminus \alpha_1(\alpha_0(\bullet p)) \neq \emptyset}$ holds, meaning ${p \in \vec{i}(\alpha_1 \circ \alpha_0)}$ (see Definition \ref{def:computon-morphism}). A similar reasoning can be used to deduce ${\alpha_2(\alpha_0(p)) \in P_2^+}$ and ${p \in \vec{o}(\alpha_2 \circ \alpha_0)}$, i.e., ${\bullet\alpha_2(\alpha_0(p))=\emptyset}$. Having ${P=\{p\}}$ together with ${\bullet\alpha_2(\alpha_0(p))=\alpha_1(\alpha_0(p))\bullet=\bullet p=p \bullet=\emptyset}$ allows us to deduce ${\vec{i}(\alpha_2 \circ \alpha_0)=\emptyset=\vec{o}(\alpha_1 \circ \alpha_0)}$. That is, ${P=\vec{i}(\alpha_1 \circ \alpha_0)\cap\vec{o}(\alpha_2 \circ \alpha_0)}$.

The facts ${\alpha_1(\alpha_0(p)) \in P_1^-}$ and ${p \in \vec{o}(\alpha_2 \circ \alpha_0)}$ together imply ${\alpha_1(\alpha_0(\vec{o}(\alpha_2 \circ \alpha_0))) \subseteq P_1^-}$. Similarly, ${\alpha_2(\alpha_0(p)) \in P_2^+}$ and ${p \in \vec{i}(\alpha_1 \circ \alpha_0)}$ imply ${\alpha_2(\alpha_0(\vec{i}(\alpha_1 \circ \alpha_0))) \subseteq P_2^+}$. Hence, by Definition \ref{def:span-sequentiable}, ${\rho_2}$ is a sequentiable span. 

The morphisms of such a span are particularly injective because ${|P|=1}$. So, if ${|P_1^-|>1}$ or ${|P_2^+|>1}$, ${\alpha_1(\alpha_0(\vec{o}(\alpha_2 \circ \alpha_0))) \subset P_1^-}$ or ${\alpha_2(\alpha_0(\vec{i}(\alpha_1 \circ \alpha_0))) \subset P_2^+}$. That is, by Lemma \ref{lem:span-sequentiable-pushable}, ${\rho_2}$ is a partially sequentiable span whose pushout forms a partial sequential computon ${\lambda_1 \rhd_{\rho_2} \lambda_2}$.
\end{proof}

Any computon operand can be put in any order within a sequential computon, since ec-inports and ec-outports always possess the same colour (i.e., zero). This property, combined with the fact that a computon always has at least one ec-inport and at least one ec-outport, allows us to compose any two connected computons sequentially regardless of the data they require or produce (see Theorem \ref{th:always-sequentiable}). Composing two connected computons sequentially always results in another connected computon (see Proposition \ref{prop:computon-sequential-connected}).  

\begin{theorem} \label{th:always-sequentiable}
Let $\lambda_1$ and $\lambda_2$ be two computons. Then, there is a span $\rho$ whose pushout is ${\lambda_1 \rhd_\rho \lambda_2}$ or ${\lambda_1 \unrhd_\rho \lambda_2}$ $\iff$ $\lambda_1$ and $\lambda_2$ are connected computons.
\end{theorem}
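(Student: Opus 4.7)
The forward direction is immediate from the definitions: if $\lambda_1 \rhd \lambda_2$ holds, then by Definition \ref{def:sequential-composition-classes} some span of computon morphisms produces a sequential computon with $\lambda_1$ and $\lambda_2$ as operands, and condition (ii) of Definition \ref{def:computon-sequential} then forces both $\lambda_1$ and $\lambda_2$ to be connected.

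For the converse, my strategy is to exhibit a canonical span $\lambda_1 \xleftarrow{\alpha_1} \Lambda \xrightarrow{\alpha_2} \lambda_2$ whose apex is the unit computon. By Definition \ref{def:computon}, every computon has at least one ec-inport and at least one ec-outport, so I can select $q_1 \in Q_1^-$ and $q_2 \in Q_2^+$ and let $\alpha_1$ and $\alpha_2$ send the unique port $p$ of $\Lambda$ to $q_1$ and $q_2$ respectively. Because $\Lambda$ has no computation units or edges and its only colour $0$ belongs to every $\Sigma_i$, the naturality squares and injectivity requirements of Definition \ref{def:computon-morphism} hold trivially.

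The technical core is computing $\vec{i}(\alpha_i)$ and $\vec{o}(\alpha_i)$. I plan to use connectedness of $\lambda_1$ and $\lambda_2$ to force $\bullet q_1 \neq \emptyset$ and $q_2 \bullet \neq \emptyset$ (any witness path ending at the ec-outport $q_1$ or beginning at the ec-inport $q_2$ contributes the required computation unit), together with $q_1 \bullet = \emptyset = \bullet q_2$ coming from $q_1 \in P_1^-$ and $q_2 \in P_2^+$. These combine to give $\vec{i}(\alpha_1) = \vec{o}(\alpha_2) = \{p\}$ and $\vec{i}(\alpha_2) = \vec{o}(\alpha_1) = \emptyset$, from which the four conditions of Definition \ref{def:computon-sequential} follow at once: $P_\Lambda = \{p\} = \vec{i}(\alpha_1) \cap \vec{o}(\alpha_2)$; $\lambda_1$ and $\lambda_2$ are connected by assumption; $\alpha_1(\vec{o}(\alpha_2)) = \{q_1\} \subseteq P_1^-$; and $\alpha_2(\vec{i}(\alpha_1)) = \{q_2\} \subseteq P_2^+$. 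So the span is pushable (Proposition \ref{prop:sequential-pushable}) and its pushout is sequential.

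To secure partial rather than merely total sequentiality in the sense of Definition \ref{def:sequential-composition-classes}, I will split into two cases. If $|P_1^-| > 1$ or $|P_2^+| > 1$, at least one of the inclusions $\{q_1\} \subseteq P_1^-$, $\{q_2\} \subseteq P_2^+$ is strict and $\lambda_1 \rhd \lambda_2$ follows directly. Otherwise $|P_1^-| = |P_2^+| = 1$ and the $\Lambda$-based span instead witnesses $\lambda_1 \unrhd \lambda_2$; in that subcase I invoke Proposition \ref{prop:sequential-totalispartial} to upgrade it to a partial span. I expect the main obstacle to be the pre/post-set bookkeeping in the previous paragraph: it is precisely the connectedness hypothesis that prevents $\vec{i}(\alpha_1)$ and $\vec{o}(\alpha_2)$ from collapsing to $\emptyset$ and thereby breaking condition (i) of Definition \ref{def:computon-sequential}, which is why the converse does not hold for arbitrary computons.
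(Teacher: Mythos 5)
Your proposal is correct and follows essentially the same route as the paper's own proof: both directions use the unit computon $\Lambda$ as apex, send its unique port to an ec-outport of $\lambda_1$ and an ec-inport of $\lambda_2$, use connectedness to show that port lies in $\vec{i}(\alpha_1)\cap\vec{o}(\alpha_2)$ while $\vec{o}(\alpha_1)=\vec{i}(\alpha_2)=\emptyset$, and fall back on Proposition \ref{prop:sequential-totalispartial} in the degenerate case $|P_1^-|=|P_2^+|=1$ where the span only witnesses $\lambda_1\unrhd\lambda_2$. Your case split on the cardinalities of $P_1^-$ and $P_2^+$ is a slightly cleaner packaging of the paper's discussion of $Q_1^-\subset P_1^-$ versus $Q_1^-=P_1^-$, but the argument is the same.
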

\begin{proof}
$(\implies)$ This part of the proof follows directly from Definition \ref{def:span-sequentiable}.

$(\impliedby)$ Assuming $\lambda_1$ and $\lambda_2$ are connected computons, we first prove there exists a sequentiable span ${\rho:=\lambda_1 \xleftarrow{\alpha_1} \lambda_0 \xrightarrow{\alpha_2} \lambda_2}$ of computon morphisms. For this, we choose $\lambda_0$ to be $\Lambda$ which is a trivial computon with a unique port $p \in P^+ \cap P^-$ and $c(p)=0$ (see Definition \ref{def:computon-unit}). Below we construct computon morphisms $\alpha_1: \Lambda \rightarrow \lambda_1$ and $\alpha_2: \Lambda \rightarrow \lambda_2$ by only considering port mapping because ${I=O=U=\emptyset}$ and ${\Sigma=\{0\}}$.

Since any computon has at least one ec-inport and at least one ec-outport, ${C_1^- \neq \emptyset \neq C_2^+}$. If we trivially define ${\alpha_1(p)=p_1 \in C_1^-}$ and ${\alpha_2(p)=p_2 \in C_2^+}$, then ${p \in \vec{i}(\alpha_1) \cap \vec{o}(\alpha_2)}$ because ${\lambda_1}$ and ${\lambda_2}$ are connected computons (see Definition \ref{def:computon-connected} and Proposition \ref{prop:computon-connected-alwaysunits}). As ${P=\{p\}}$, we have ${P=\vec{i}(\alpha_1) \cap \vec{o}(\alpha_2)}$ so ${\alpha_1(\vec{o}(\alpha_2)) \subseteq C_1^- \subseteq P_1^-}$ and ${\alpha_2(\vec{i}(\alpha_1)) \subseteq C_2^+ \subseteq P_2^+}$. That is, $\rho$ is a sequentiable span of computon morphisms (by Definition \ref{def:span-sequentiable}), whose pushout is a sequential computon (by Lemma \ref{lem:span-sequentiable-pushable}).

If ${\alpha_1(\vec{o}(\alpha_2)) \subset C_1^-}$ then ${\alpha_1(\vec{o}(\alpha_2)) \subset P_1^-}$ because ${C_1^- \subseteq P_1^-}$. Similarly, ${\alpha_2(\vec{i}(\alpha_1)) \subset C_2^+}$ implies ${\alpha_2(\vec{i}(\alpha_1)) \subset P_2^+}$ because ${C_2^+ \subseteq P_2^+}$. Thus, ${\lambda_1 \rhd_{\rho} \lambda_2}$ would be the pushout of $\rho$ in both cases. Now, when $\alpha_1(\vec{o}(\alpha_2)) = C_1^-$, we have two possibilities: ${C_1^- \subset P_1^-}$ or ${C_1^- = P_1^-}$. If ${C_1^- \subset P_1^-}$, then ${\alpha_1(\vec{o}(\alpha_2)) \subset P_1^-}$ so that ${\lambda_1 \rhd_{\rho} \lambda_2}$ would also be the pushout of $\rho$. A similar approach can be used to prove that the partial sequential computon ${\lambda_1 \rhd_{\rho} \lambda_2}$ is the pushout of $\rho$ when both ${\alpha_2(\vec{i}(\alpha_1)) = C_2^+}$ and ${C_2^+ \subset P_2^+}$ hold.

The only scenario in which ${\lambda_1 \unrhd_{\rho} \lambda_2}$ is the pushout of $\rho$ is when $\lambda_1$ posesses only one ec-outport with no ed-outports while $\lambda_2$ has only one ec-inport with no ed-inports. More precisely, ${\lambda_1 \unrhd_{\rho} \lambda_2}$ can be formed exactly when $\alpha_1(\vec{o}(\alpha_2)) = C_1^- = P_1^-$ and $\alpha_2(\vec{i}(\alpha_1)) = C_2^+ = P_1^+$. 

Therefore, we conclude that for every pair $(\lambda_1,\lambda_2)$ of connected computons, either ${\lambda_1 \rhd_{\rho} \lambda_2}$ or ${\lambda_1 \unrhd_{\rho} \lambda_2}$ exists.
\end{proof}

\begin{proposition} \label{prop:computon-sequential-connected}
A sequential computon is a connected computon.
\end{proposition}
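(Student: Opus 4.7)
The plan is essentially to observe that this proposition follows almost immediately by chaining together two results already proved in the paper, namely Proposition \ref{prop:sequential-pushable} and Proposition \ref{prop:pushout-connected}. So my proof will be short.

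First, I would unpack the definition of a sequential computon (Definition \ref{def:computon-sequential}): such a computon arises as the pushout of some span ${\lambda_1 \xleftarrow{\alpha_1} \lambda_0 \xrightarrow{\alpha_2} \lambda_2}$ where $\lambda_1$ and $\lambda_2$ are, by hypothesis, connected computons. By Proposition \ref{prop:sequential-pushable}, this span is pushable, so by Proposition \ref{prop:pushout-pushable} the pushout ${(\beta_1:\lambda_1 \rightarrow \lambda_3,\lambda_3,\beta_2:\lambda_2 \rightarrow \lambda_3)}$ exists in $\textbf{Set}^\textbf{Comp}$ and is well-defined as the candidate sequential computon $\lambda_3$.

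Then I would invoke Proposition \ref{prop:pushout-connected}, which states precisely that the pushout of a pushable span whose legs $\lambda_1$ and $\lambda_2$ are connected computons yields another connected computon. Applying this directly to the span underlying the sequential computon $\lambda_3$ gives that $\lambda_3$ is connected, completing the proof.

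I do not anticipate any real obstacle here: the work has already been done in establishing Propositions \ref{prop:sequential-pushable} and \ref{prop:pushout-connected}, and the result follows as a one-line corollary. The only thing to be slightly careful about is making it explicit that the conditions of Definition \ref{def:computon-sequential} (trivial apex, connected legs, and the inclusions $\alpha_1(\vec{o}(\alpha_2)) \subseteq P_1^-$ and $\alpha_2(\vec{i}(\alpha_1)) \subseteq P_2^+$) together imply both the pushability needed to invoke Proposition \ref{prop:pushout-pushable} and the connectedness hypothesis needed for Proposition \ref{prop:pushout-connected}; both of these implications are straightforward from the definition.
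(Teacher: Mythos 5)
Your proposal is correct and matches the paper's own argument, which likewise derives the result directly from Proposition \ref{prop:pushout-connected} and Definition \ref{def:computon-sequential}; you merely make explicit the intermediate appeal to Proposition \ref{prop:sequential-pushable} for pushability, which the paper leaves implicit.
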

\begin{proof}
We know that a sequential computon is the pushout of a sequentiable span $\rho$ of computon morphisms (as per Definition \ref{def:computon-sequential}) and that the legs of $\rho$ are connected computons (as per Definition \ref{def:span-sequentiable}). By Proposition \ref{prop:pushout-connected}, the pushout of $\rho$ must be a connected computon.
\end{proof}

Theorem \ref{th:always-sequentiable} is important for our theory since it entails that any two connected computons can always be composed sequentially. Although an apex computon always exists, it is important to note that it does not need to correspond to the entire common part between the e-outports of the left operand and the e-inports of the right operand. By common, we mean ports sharing the same colour. Figure \ref{fig:computon-sequential-example} depicts a scenario of this sort. 

\vspace{1pt}

\begin{figure}[!h]
\centering
\subcaptionbox{}
{
\begin{tikzpicture}
\qmatch{0q}{3.5}{6.1}{};
\dmatch{0d}{3.5}{5.35}{$3$}{above};
\draw[->] (3,5.85) to[bend right=25] node[above]{\scriptsize$\alpha_1$}  (1.5,5.1);
\draw[->] (4,5.85) to[bend left=25] node[above]{\scriptsize$\alpha_2$}  (5.5,5.1);

\computonPrimitive{1}{3}{1}{1.9}{$\lambda_1$};
\qin{1q0}{0.2}{4.7}{};
\din{1i1}{0.2}{4.1}{$1$};
\din{1i2}{0.2}{3.5}{$2$};
\qout{1q1}{2}{4.7}{};
\dout{1o1}{2}{4.1}{$3$};
\dout{1o2}{2}{3.5}{$4$};
\draw[->] (1.5,2.8) to[bend right=25] node[left]{\scriptsize$\beta_1$}  (1.80,1.8);
\draw[->] (5.5,2.8) to[bend left=25] node[right]{\scriptsize$\beta_2$}  (5.2,1.8);

\computonPrimitive{5}{3}{1}{1.9}{$\lambda_2$};
\qin{2q0}{4.2}{4.7}{};
\din{2i1}{4.2}{4.1}{$3$};
\din{2i2}{4.2}{3.5}{$4$};
\qout{2q1}{6}{4.7}{};
\dout{2o1}{6}{4.1}{$5$};

\computonComposite{0.95}{-1}{5.1}{2.6};
\qmatch{3q}{3.5}{1.2}{};\flow{{2.5,1.2}}{{3.5,1.2}}{dashed}{};\flow{{3.6,1.2}}{{4.6,1.2}}{dashed}{};
\dmatch{3d}{3.5}{0.6}{$3$}{above};\flow{{2.5,0.6}}{{3.5,0.6}}{}{};\flow{{3.56,0.6}}{{4.6,0.6}}{}{};
\computonPrimitive{1.5}{-0.5}{1}{1.9}{$\lambda_1$};
\qin{3q0}{0.7}{1.2}{};
\din{3i1}{0.7}{0.6}{$1$};
\din{3i2}{0.7}{0}{$2$};
\node[circle,fill=black,inner sep=0pt,minimum size=3pt,label=right:\scriptsize $4$] (3o2) at (6.5,-0.6) {};\flow{{2.5,0}}{3o2}{}{bend right=20};
\computonPrimitive{4.5}{-0.5}{1}{1.9}{$\lambda_2$}
\node[circle,draw=black,fill=white,inner sep=0pt,minimum size=3pt,label=left:\scriptsize $4$] (3i3) at (0.5,-0.6) {};\flow{3i3}{{4.5,0}}{}{bend right=20};
\qout{3q1}{5.5}{1.2}{};
\dout{3o1}{5.5}{0.6}{$5$};
\end{tikzpicture}
}\hspace{9cm}
{
\begin{tikzpicture}
\begin{scope}
\draw[draw=black,fill=white] (0,0.4) rectangle ++(0.12,0.25);
\node[inner sep=0pt,minimum size=0pt,label=right:{\scriptsize Computation unit}] at (0.2,0.5) {};
\draw[dashed] (3.2,0.5) to node[pos=0.5]{\arrowflow} (3.8,0.5);
\node[inner sep=0pt,minimum size=3pt,label=right:{\scriptsize Control flow edge}] at (3.8,0.5) {};
\node[draw=black,fill=white,inner sep=0pt,minimum size=3pt,label=right:{\scriptsize Ec-inport}] at (6.8,0.5) {};
\node[fill=black,inner sep=0pt,minimum size=3pt,label={right:{\scriptsize Ec-outport}}] at (8.8,0.5) {};
\node[draw,fill=white,fill fraction={black}{0.5},inner sep=0pt,minimum size=3pt,label=right:{\scriptsize Ic-port}] at (11,0.5) {};
\end{scope}

\begin{scope}[xshift=1.5cm]
\draw (0,0) to node[pos=0.5]{\arrowflow} (0.5,0);
\node[inner sep=0pt,minimum size=3pt,label=right:{\scriptsize Data flow edge}] at (0.5,0){};
\node[circle,draw=black,fill=white,inner sep=0pt,minimum size=3pt,label={right:{\scriptsize Ed-inport}}] at (3.3,0) {};
\node[circle,fill=black,inner sep=0pt,minimum size=3pt,label={right:{\scriptsize Ed-outport}}] at (5.3,0) {};
\node[circle,draw,fill=white,fill fraction={black}{0.5},inner sep=0pt,minimum size=3pt,label=right:{\scriptsize Id-port}] at (7.5,0) {};
\end{scope}
\end{tikzpicture}
}
\caption{Constructing a partial sequential computon $\lambda_1 \rhd_{\rho} \lambda_2$ from a partially sequentiable span ${\rho:=\lambda_1 \xleftarrow{\alpha_1} \lambda_0 \xrightarrow{\alpha_2} \lambda_2}$ of computon morphisms.}
\label{fig:computon-sequential-example}
\end{figure}

Figure \ref{fig:computon-sequential-example} shows that, when a partial sequential computon $\lambda_1 \rhd_{\rho} \lambda_2$ is constructed from a partially sequentiable span ${\rho:=\lambda_1 \xleftarrow{\alpha_1} \lambda_0 \xrightarrow{\alpha_2} \lambda_2}$, there is an implicit effect in which all the e-outports of $\lambda_1$ that are not in the image of $\alpha_1$ become e-outports in $\lambda_1 \rhd_{\rho} \lambda_2$. Similarly, all the e-inports of $\lambda_2$ that are not in the image of $\alpha_2$ become e-inports in $\lambda_1 \rhd_{\rho} \lambda_2$. Naturally, this generative effect does not occur in the case of total sequential composition since the images of the computon morphisms involved would cover every e-outport of the left operand and every e-inport of the right one. Instead, each $p_1 \in P_1^-$ and each $p_2 \in P_2^+$ would be mapped to an i-port of $\lambda_1 \unrhd_{\rho} \lambda_2$. No matter whether a partial or a total sequential computon is formed, it is true that every e-inport of the left operand and every e-outport of the right one are preserved in the resulting sequential computon (see Propositions \ref{prop:computon-sequential-inports-outports-1} and \ref{prop:computon-sequential-inports-outports-2}).

\begin{proposition} \label{prop:computon-sequential-inports-outports-1}
If ${(\beta_1:\lambda_1 \rightarrow \lambda_3, \lambda_3, \beta_2:\lambda_2 \rightarrow \lambda_3)}$ is the pushout of a sequentiable span ${\lambda_1 \xleftarrow{\alpha_1} \lambda_0 \xrightarrow{\alpha_2} \lambda_2}$ of computon morphisms, then ${\beta_1(P_1^+) \subseteq P_3^+}$ and ${\beta_2(P_2^-) \subseteq P_3^-}$.
\end{proposition}
\begin{proof}
By letting $(\beta_1:\lambda_1 \rightarrow \lambda_3, \lambda_3, \beta_2:\lambda_2 \rightarrow \lambda_3)$ be the pushout of a sequentiable span ${\lambda_1 \xleftarrow{\alpha_1} \lambda_0 \xrightarrow{\alpha_2} \lambda_2}$ of computon morphisms, we just show ${\beta_1(P_1^+) \subseteq P_3^+}$ since the proof of ${\beta_2(P_2^-) \subseteq P_3^-}$ is completely analogous. 

Assume for contrapositive that ${p_3 \notin P_3^+}$ so there is some ${o_3 \in O_3}$ where ${t_3(o_3)=p_3}$. The fact ${O_3=O_1 +_{O_0} O_2}$ implies three possibilities: (i) there exclusively is some ${o_1 \in O_1}$ where ${\beta_1(o_1)=o_3}$, (ii) there exclusively is some ${o_2 \in O_2}$ where ${\beta_2(o_2)=o_3}$ or (iii) there are ${o_4 \in O_1}$ and ${o_5 \in O_2}$ such that ${\beta_1(o_4)=o_3=\beta_2(o_5)}$. The third scenario never holds since ${\lambda_0}$ is a trivial computon by Definition \ref{def:span-sequentiable}. So, we just prove for (i) and (ii).

For (i), ${\beta_1(t_1(o_1))=t_3(\beta_1(o_1))=t_3(o_3)=p_3}$ implies there is some ${p_1 \in P_1}$ with ${t_1(o_1)=p_1}$ and ${\beta_1(p_1)=p_3}$. Consequently, by Definition \ref{def:computon-interface}, ${p_1 \notin P_1^+}$ so ${p_3 \notin \beta_1(P_1^+)}$. If (ii) holds, ${\beta_2(t_2(o_2))=t_3(\beta_2(o_2))=t_3(o_3)=p_3}$ implies there is some ${p_2 \in P_2}$ for which ${t_2(o_2)=p_2}$ and ${\beta_2(p_2)=p_3}$. If there is some ${p_1 \in P_1}$ where ${\beta_1(p_1)=p_3=\beta_2(p_2)}$, there is some ${p_0 \in P_0}$ where ${\alpha_1(p_0)=p_1}$ and ${\alpha_2(p_0)=p_2}$. As ${(\nexists o_1 \in O_1)[\beta_1(o_1)=o_3=\beta_2(o_2)]}$ and ${\sigma_2}$ is surjective, ${p_0 \in \vec{i}(\alpha_2)}$ which contradicts the fact ${\vec{i}(\alpha_2)=\emptyset}$ (see Corollary \ref{cor:sequential}). So, there is no ${p_1 \in P_1}$ where ${\beta_1(p_1)=p_3=\beta_2(p_2)}$. That is, ${p_3 \notin \beta_1(P_1^+)}$. 

Proving ${p_3 \notin P_3^+ \implies p_3 \notin \beta_1(P_1^+)}$ in the above cases entails ${\beta_1(P_1^+) \subseteq P_3^+}$, as required.  
\end{proof}

\begin{proposition} \label{prop:computon-sequential-inports-outports-2}
If ${(\beta_1:\lambda_1 \rightarrow \lambda_3, \lambda_3, \beta_2:\lambda_2 \rightarrow \lambda_3)}$ is the pushout of a totally sequentiable span ${\lambda_1 \xleftarrow{\alpha_1} \lambda_0 \xrightarrow{\alpha_2} \lambda_2}$ of computon morphisms, then ${\beta_1(P_1^+)=P_3^+}$ and ${\beta_2(P_2^-)=P_3^-}$.
\end{proposition}
\begin{proof}
Assuming $(\beta_1:\lambda_1 \rightarrow \lambda_3, \lambda_3, \beta_2:\lambda_2 \rightarrow \lambda_3)$ is the pushout of a totally sequentiable span ${\rho:=\lambda_1 \xleftarrow{\alpha_1} \lambda_0 \xrightarrow{\alpha_2} \lambda_2}$ of computon morphisms, below we just show ${\beta_1(P_1^+)=P_3^+}$ since the proof of ${\beta_2(P_2^-)=P_3^-}$ is completely analogous.

As Proposition \ref{prop:computon-sequential-inports-outports-1} says ${\beta_1(P_1^+) \subseteq P_3^+}$, we just have to prove ${P_3^+ \subseteq \beta_1(P_1^+)}$. If we let ${p_3 \in P_3^+}$, by Proposition \ref{prop:computon-morphism-inports-outports} and by the fact ${P_3=P_1 +_{P_0} P_2}$, we have three options: (i) there exclusively is some ${p_1 \in P_1^+}$ such that ${\beta_1(p_1)=p_3}$, (ii) there exclusively is some ${p_2 \in P_2^+}$ such that ${\beta_2(p_2)=p_3}$ or (iii) there are ${p_4 \in P_1^+}$ and ${p_5 \in P_2^+}$ such that ${\beta_1(p_4)=p_3=\beta_2(p_5)}$. If (i) is true, then ${p_3 \in \beta_1(P_1^+)}$ follows directly. We now show that (ii) and (iii) cannot hold. 

Supposing (ii) is true, we have ${p_2 \in \alpha_2(\vec{i}(\alpha_1))}$ because ${\alpha_2(\vec{i}(\alpha_1))=P_2^+}$ (by the fact that $\rho$ is totally sequentiable). Therefore, ${(\exists p_0 \in P_0 \cap \vec{i}(\alpha_1))(\exists p_1 \in P_1)[\alpha_1(p_0)=p_1~\land~\alpha_2(p_0)=p_2]}$. As commutativity contradicts (ii), there is no ${p_2 \in P_2^+}$ such that ${\beta_2(p_2)=p_3 \in P_3^+}$. That is, ${p_3 \notin \beta_2(P_2^+)}$. To disprove (iii), we deduce by commutativity the existence of some ${p \in P_0}$ where ${\alpha_1(p)=p_4}$ and ${\alpha_2(p)=p_5}$. Since ${p_4 \in P_1^+}$ and ${p_5 \in P_2^+}$, Proposition \ref{prop:computon-morphism-inports-outports} says ${p \in P_0^+}$. The fact that ${\lambda_2}$ is a connected computon and that ${\alpha_2(p)=p_5 \in P_2^+}$ entail ${p \in \vec{o}(\alpha_2)}$. Because ${\rho}$ is totally sequentiable, it is true that ${\alpha_1(\vec{o}(\alpha_2))=P_1^-}$ and, consequently, ${\alpha_1(p)=p_4 \in P_1^-}$. But ${\lambda_1}$ is also a connected computon, so ${p_4 \in P_1^+ \cap P_1^-}$ cannot hold because that would contradict Proposition \ref{prop:computon-connected-isolated-port}. This contradiction implies there is no ${p_5 \in P_2^+}$ where ${\beta_2(p_5)=p_3 \in P_3^+}$, i.e., ${p_3 \notin \beta_2(P_2^+)}$. 

Proving (i) and disproving (ii) and (iii) entails ${p_3 \in P_3^+ \implies [p_3 \in \beta_1(P_1^+) \setminus \beta_2(P_2^+)]}$, i.e., ${P_3^+ \subseteq \beta_1(P_1^+)}$, as required.
\end{proof}

Although Figure \ref{fig:computon-sequential-example} shows an example of partial sequential composition, the same computon operands can be used to perform total sequential composition. This is because, in this case, there exists an apex computon that can be inserted into all the e-outports of $\lambda_1$ and into all the e-inports of $\lambda_2$. Such an apex does not always exist so partiality does not imply totality and, thus, the reverse of Proposition \ref{prop:sequential-totalispartial} does not hold. Proposition \ref{prop:sequential-totalispartial} combined with Theorem \ref{th:always-sequentiable} states that if any two connected computons can be composed into a total sequential computon, they can also be composed into a partial sequential computon, only if the left operand has more than one e-outport or if the right operand has at least two e-inports; otherwise, such computons can only form a total sequential computon. While total sequential composition is an associative operation (see Proposition \ref{prop:computon-sequential-total-associative}), partial sequential composition is not (see Proposition \ref{prop:computon-sequential-partial-associative}). In both cases, commutativity does not hold in the sense that the order of the operands matters (see Proposition \ref{prop:sequencing-commutative}).

\begin{proposition}[Total sequential composition is associative]\label{prop:computon-sequential-total-associative}
There is an isomorphism between ${\lambda_1 \unrhd_{\rho_3} (\lambda_2 \unrhd_{\rho_2} \lambda_3)}$ and ${(\lambda_1 \unrhd_{\rho_1} \lambda_2) \unrhd_{\rho_4} \lambda_3}$ for any choice of total sequential computons ${\lambda_1 \unrhd_{\rho_1} \lambda_2}$, ${\lambda_2 \unrhd_{\rho_2} \lambda_3}$, ${\lambda_1 \unrhd_{\rho_3} (\lambda_2 \unrhd_{\rho_2} \lambda_3)}$ and ${(\lambda_1 \unrhd_{\rho_1} \lambda_2) \unrhd_{\rho_4} \lambda_3}$.
\end{proposition}
\begin{proof}
Let ${\rho_1:=\lambda_1 \xleftarrow{\alpha_1} \lambda_0 \xrightarrow{\alpha_2} \lambda_2}$ and ${\rho_2:=\lambda_2 \xleftarrow{\alpha_3} \lambda_4 \xrightarrow{\alpha_4} \lambda_3}$ be two totally sequentiable spans of computon morphisms. By Definition \ref{def:computon-sequential} and Lemma \ref{lem:span-sequentiable-pushable}, we know that the pushouts of $\rho_1$ and $\rho_2$ are the total sequential computons ${\lambda_1 \unrhd_{\rho_1} \lambda_2}$ and ${\lambda_2 \unrhd_{\rho_2} \lambda_3}$, respectively. Accordingly, consider the following commutative diagram:

\[
\begin{tikzcd}
 & \lambda_4 \arrow[d, "\alpha_3"']\arrow[r, "\alpha_4"] & \lambda_3 \arrow[d, "\beta_4"] \\
\lambda_0 \arrow[d, "\alpha_1"']\arrow[r, "\alpha_2"] & \lambda_2 \arrow[d, "\beta_2"]\arrow[r, "\beta_3"] & \lambda_2 \unrhd_{\rho_2} \lambda_3 \arrow[d, "\beta_6"] \\
\lambda_1 \arrow[r, "\beta_1"'] & \lambda_1 \unrhd_{\rho_1} \lambda_2 \arrow[r, "\beta_5"'] & \lambda_5 
\end{tikzcd}
\]

where ${\lambda_1 \unrhd_{\rho_1} \lambda_2 \xleftarrow{\beta_2} \lambda_2 \xrightarrow{\beta_3} \lambda_2 \unrhd_{\rho_2} \lambda_3}$ is a pushout-induced span of computon morphisms, which evidently is not sequentiable by the fact that $\lambda_2$ is not a trivial computon. As it is routine to show it is pushable, we have that its pushout ${\lambda_5}$ can be constructed. Consequently, each square of the above diagram is a pushout. Using categorical algebra to horizontally and vertically compose morphisms, we obtain the following diagrams:

\[
\begin{tikzcd}
\lambda_0 \arrow[d, "\alpha_1"']\arrow[r, "\beta_3\circ\alpha_2"] & \lambda_2 \unrhd_{\rho_2} \lambda_3 \arrow[d, "\beta_6"] & & \lambda_4 \arrow[d, "\beta_2\circ\alpha_3"']\arrow[r, "\alpha_4"] & \lambda_3 \arrow[d, "\beta_6\circ\beta_4"] \\
\lambda_1 \arrow[r, "\beta_5\circ\beta_1"'] & \lambda_5 & & \lambda_1 \unrhd_{\rho_1} \lambda_2 \arrow[r, "\beta_5"'] & \lambda_5
\end{tikzcd}
\]

By letting $\rho_3$ be the (above) span ${\lambda_1 \xleftarrow{\alpha_1} \lambda_0 \xrightarrow{\beta_3\circ\alpha_2} \lambda_2 \unrhd_{\rho_2} \lambda_3}$, we now show its pushout $\lambda_5$ is the total sequential computon ${\lambda_1 \unrhd_{\rho_3} (\lambda_2 \unrhd_{\rho_2} \lambda_3)}$. For this, we first prove that $\rho_3$ is totally sequentiable: $p_0 \in P_0 \iff p_0 \in \vec{i}(\alpha_1) \cap \vec{o}(\alpha_2)$ (because $\rho_1$ is sequentiable) $\iff$ $p_0 \in \vec{i}(\alpha_1)$ and $\alpha_2(p_0)\bullet\setminus\alpha_2(p_0 \bullet)\neq\emptyset$ (by Definition \ref{def:computon-morphism}) $\iff$ $p_0 \in \vec{i}(\alpha_1)$ and $\beta_3(\alpha_2(p_0))\bullet\setminus\beta_3(\alpha_2(p_0 \bullet))\neq\emptyset$ (by the preservation of computation units) $\iff$ $p_0 \in \vec{i}(\alpha_1)\cap\vec{o}(\beta_3\circ\alpha_2)$ (by Definition \ref{def:computon-morphism}). Therefore, $P_0=\vec{i}(\alpha_1)\cap\vec{o}(\beta_3\circ\alpha_2)$, i.e., Condition (i) of Definition \ref{def:span-sequentiable} is met by $\rho_3$.

As $\lambda_1$ and $\lambda_2 \unrhd_{\rho_2} \lambda_3$ are connected computons (by Definition \ref{def:span-sequentiable} and Proposition \ref{prop:computon-sequential-connected}), it follows that Condition (ii) of Definition \ref{def:span-sequentiable} is also met by $\rho_3$. To prove $\alpha_1(\vec{o}(\beta_3\circ\alpha_2))=P_1^-$, consider the following chain of double implications: $p \in \alpha_1(\vec{o}(\beta_3\circ\alpha_2)) \iff p \in \alpha_1(P_0)$ (because $P_0=\vec{o}(\beta_3\circ\alpha_2)$) $\iff p \in P_1^-$ (because $\alpha_1(P_0)=P_1^-$ by the fact that $\rho_1$ is totally sequentiable). To prove the last condition of totally sequentiable spans, we proceed as follows: $q \in \beta_3(\alpha_2(\vec{i}(\alpha_1))) \iff q \in \beta_3(P_2^+)$ (because $\alpha_2(\vec{i}(\alpha_1))=P_2^+$ by the fact that $\rho_1$ is totally sequentiable) $\iff q$ is an e-inport of $\lambda_2 \unrhd_{\rho_2} \lambda_3$ (by Proposition \ref{prop:computon-sequential-inports-outports-2}).

Above we proved that $\rho_3$ is a totally sequentiable span of computon morphisms where $\lambda_0$ is the apex computon, $\lambda_1$ the left operand and $\lambda_2 \unrhd_{\rho_2} \lambda_3$ the right operand. Using Lemma \ref{lem:span-sequentiable-pushable} and Definition \ref{def:computon-sequential}, we have that the pushout of $\rho_3$ is the total sequential computon $\lambda_1 \unrhd_{\rho_3}(\lambda_2 \unrhd_{\rho_2} \lambda_3)$. Deducing that ${(\lambda_1 \unrhd_{\rho_1} \lambda_2) \unrhd_{\rho_4} \lambda_3}$ is the pushout of the span ${\rho_4:=\lambda_1 \unrhd_{\rho_1} \lambda_2 \xleftarrow{\beta_2\circ\alpha_3} \lambda_4 \xrightarrow{\alpha_4} \lambda_3}$ can be done analogously. 

As $\lambda_1 \unrhd_{\rho_3}(\lambda_2 \unrhd_{\rho_2} \lambda_3)$, $\lambda_5$ and  $(\lambda_1 \unrhd_{\rho_1} \lambda_2) \unrhd_{\rho_4} \lambda_3$ are evidently isomorphic, we conclude that total sequential composition is an associative operation up to isomorphism.
\end{proof}

\vspace{0.5pt}

\begin{proposition}\label{prop:computon-sequential-partial-associative}
Partial sequential composition is not associative.
\end{proposition}
\begin{proof}
We show there is no isomorphism between ${(\lambda_1 \rhd_{\rho_1} \lambda_2) \rhd_{\rho_2} \lambda_3}$ and ${\lambda_1 \rhd_{\rho_4} (\lambda_2 \rhd_{\rho_3} \lambda_3)}$ for some choice of partial sequential computons ${\lambda_1 \rhd_{\rho_1} \lambda_2}$, ${(\lambda_1 \rhd_{\rho_1} \lambda_2) \rhd_{\rho_2} \lambda_3}$, ${\lambda_2 \rhd_{\rho_3} \lambda_3}$ and ${\lambda_1 \rhd_{\rho_4} (\lambda_2 \rhd_{\rho_3} \lambda_3)}$.

For this, let us consider Figure \ref{fig:computon-sequential-partial-example}(a) in which there are two partially sequentiable spans of computon morphisms: ${\rho_1:=\lambda_1 \xleftarrow{\alpha_1} \lambda_0 \xrightarrow{\alpha_2} \lambda_2}$ and ${\rho_2:=(\lambda_1 \rhd_{\rho_1} \lambda_2) \xleftarrow{\alpha_3} \lambda_4 \xrightarrow{\alpha_4} \lambda_3}$.

\begin{figure}[!h]
\centering
\subcaptionbox{Constructing the partial sequential computon $(\lambda_1 \rhd_{\rho_1} \lambda_2) \rhd_{\rho_2} \lambda_3$.}
{
\begin{tikzpicture}[scale=0.92]
\qmatch{0q}{3.5}{10.1}{};
\dmatch{0d}{3.5}{9.35}{$1$}{above};
\draw[->] (3,9.85) to[bend right=25] node[above]{\scriptsize$\alpha_1$}  (1.5,9.1);
\draw[->] (4,9.85) to[bend left=25] node[above]{\scriptsize$\alpha_2$}  (5.5,9.1);

\computonPrimitive{1}{7}{1}{1.9}{$\lambda_1$};
\qin{1q0}{0.2}{8.7}{};
\qout{1q1}{2}{8.7}{};
\dout{1o1}{2}{8.1}{$1$};
\dout{1o2}{2}{7.5}{$2$};
\draw[->] (1.5,6.8) to[bend right=25] node[left]{\scriptsize$\beta_1$}  (1.80,5.8);
\draw[->] (5.5,6.8) to[bend left=25] node[right]{\scriptsize$\beta_2$}  (5.2,5.8);

\computonPrimitive{5}{7}{1}{1.9}{$\lambda_2$};
\qin{2q0}{4.2}{8.7}{};
\din{2i1}{4.2}{8.1}{$1$};
\qout{2q1}{6}{8.7}{};
\dout{2o1}{6}{8.1}{$3$};

\computonComposite{0.95}{3}{5.1}{2.6};
\qmatch{3q}{3.5}{5.2}{};\flow{{2.5,5.2}}{{3.5,5.2}}{dashed}{};\flow{{3.6,5.2}}{{4.6,5.2}}{dashed}{};
\dmatch{3d}{3.5}{4.6}{$1$}{below};\flow{{2.5,4.6}}{{3.5,4.6}}{}{};\flow{{3.56,4.6}}{{4.6,4.6}}{}{};
\node[circle,fill=black,inner sep=0pt,minimum size=3pt,label=right:\scriptsize $2$] (3o) at (6.5,3.4) {};\flow{{2.5,4}}{3o}{}{bend right=20};
\computonPrimitive{1.5}{3.5}{1}{1.9}{$\lambda_1$};
\qin{3q0}{0.7}{5.2}{};
\computonPrimitive{4.5}{3.5}{1}{1.9}{$\lambda_2$}
\qout{3q1}{5.5}{5.2}{};
\dout{3q1}{5.5}{4.6}{$3$};

\qmatch{0q}{8.3}{6.8}{};
\dmatch{0d}{8.3}{6.05}{$2$}{above};
\draw[->] (7.8,6.55) to[bend right=25] node[above]{\scriptsize$\alpha_3$}  (6.3,5.8);
\draw[->] (8.8,6.55) to[bend left=25] node[above]{\scriptsize$\alpha_4$}  (10.3,5.8);

\computonPrimitive{10}{3.5}{1}{1.9}{$\lambda_3$};
\qin{4q0}{9.2}{5.2}{};
\din{4i1}{9.2}{4.6}{$2$};
\qout{4q1}{11}{5.2}{};
\draw[->] (6.5,2.8) to[bend right=25] node[left]{\scriptsize$\beta_3$}  (6.8,1.8);
\draw[->] (10.5,2.8) to[bend left=25] node[right]{\scriptsize$\beta_4$}  (10.2,1.8);

\computonComposite{4.4}{-1.1}{8.2}{2.8};
\computonComposite{4.5}{-1}{4.7}{2.6};
\qmatch{5q}{7}{1.2}{};\flow{{6,1.2}}{{7,1.2}}{dashed}{};\flow{{7.1,1.2}}{{8.1,1.2}}{dashed}{};
\dmatch{5d}{7}{0.6}{$1$}{below};\flow{{6,0.6}}{{7,0.6}}{}{};\flow{{7.06,0.6}}{{8.1,0.6}}{}{};
\node[circle,fill=black,inner sep=0pt,minimum size=3pt,label=right:\scriptsize $3$] (5o) at (13,-0.6) {};\flow{{9,0}}{5o}{}{bend right=20};
\computonPrimitive{5}{-0.5}{1}{1.9}{$\lambda_1$};
\qin{5q0}{4.2}{1.2}{};
\computonPrimitive{8}{-0.5}{1}{1.9}{$\lambda_2$}
\qmatch{5qx}{10}{1.2}{};\flow{{9,1.2}}{{10,1.2}}{dashed}{};\flow{{10.1,1.2}}{{11.1,1.2}}{dashed}{};
\dmatch{5dx}{10}{0.6}{$2$}{below};\flow{{6,0}}{5dx}{}{bend right=70};\flow{{10.06,0.6}}{{11.1,0.6}}{}{};
\computonPrimitive{11}{-0.5}{1}{1.9}{$\lambda_3$}
\qout{5q1}{12}{1.2}{};
\end{tikzpicture}
}\vspace{12pt}
\subcaptionbox{Partial sequential computon $\lambda_2 \rhd_{\rho_3} \lambda_3$.}
{
\begin{tikzpicture}[scale=0.92]
\computonComposite{0.95}{3}{5.1}{2.6};
\qmatch{3q}{3.5}{5.2}{};\flow{{2.5,5.2}}{{3.5,5.2}}{dashed}{};\flow{{3.6,5.2}}{{4.6,5.2}}{dashed}{};
\dinplain{3i}{0.7}{3.3}{$2$}{left};\flow{3i}{{4.5,4.6}}{}{bend right=30};

\node[circle,fill=black,inner sep=0pt,minimum size=3pt,label=right:\scriptsize $3$] (3o) at (6.5,3.4) {};\flow{{2.5,4}}{3o}{}{bend right=20};
\computonPrimitive{1.5}{3.5}{1}{1.9}{$\lambda_2$};
\qin{3q0}{0.7}{5.2}{};
\din{3d0}{0.7}{4.6}{$1$};
\computonPrimitive{4.5}{3.5}{1}{1.9}{$\lambda_3$}
\qout{3q1}{5.5}{5.2}{};
\end{tikzpicture}
}
{
\begin{tikzpicture}
\begin{scope}
\draw[draw=black,fill=white] (0,0.4) rectangle ++(0.12,0.25);
\node[inner sep=0pt,minimum size=0pt,label=right:{\scriptsize Computation unit}] at (0.2,0.5) {};
\draw[dashed] (3.2,0.5) to node[pos=0.5]{\arrowflow} (3.8,0.5);
\node[inner sep=0pt,minimum size=3pt,label=right:{\scriptsize Control flow edge}] at (3.8,0.5) {};
\node[draw=black,fill=white,inner sep=0pt,minimum size=3pt,label=right:{\scriptsize Ec-inport}] at (6.8,0.5) {};
\node[fill=black,inner sep=0pt,minimum size=3pt,label={right:{\scriptsize Ec-outport}}] at (8.8,0.5) {};
\node[draw,fill=white,fill fraction={black}{0.5},inner sep=0pt,minimum size=3pt,label=right:{\scriptsize Ic-port}] at (11,0.5) {};
\end{scope}

\begin{scope}[xshift=1.5cm]
\draw (0,0) to node[pos=0.5]{\arrowflow} (0.5,0);
\node[inner sep=0pt,minimum size=3pt,label=right:{\scriptsize Data flow edge}] at (0.5,0){};
\node[circle,draw=black,fill=white,inner sep=0pt,minimum size=3pt,label={right:{\scriptsize Ed-inport}}] at (3.3,0) {};
\node[circle,fill=black,inner sep=0pt,minimum size=3pt,label={right:{\scriptsize Ed-outport}}] at (5.3,0) {};
\node[circle,draw,fill=white,fill fraction={black}{0.5},inner sep=0pt,minimum size=3pt,label=right:{\scriptsize Id-port}] at (7.5,0) {};
\end{scope}
\end{tikzpicture}
}
\caption{Counterexample that disproves the associativity of partial sequential composition.}
\label{fig:computon-sequential-partial-example}
\end{figure}
\newpage
Carefully observing Figure \ref{fig:computon-sequential-partial-example}(a) reveals that, up to isomorphism, there is only one partially sequentiable span ${\rho_3:=\lambda_2 \xleftarrow{\alpha_5} \lambda_5 \xrightarrow{\alpha_6} \lambda_3}$ where $\lambda_5$ is a unit computon which can only be injected into the unique ec-outport of $\lambda_2$ and into the unique ec-inport of $\lambda_3$ to yield the partial sequential computon ${\lambda_2 \rhd_{\rho_3} \lambda_3}$ depicted in Figure \ref{fig:computon-sequential-partial-example}(b). Now, to construct a partial sequential computon $\lambda_1 \rhd_{\rho_4} (\lambda_2 \rhd_{\rho_3} \lambda_3)$ isomorphic to $(\lambda_1 \rhd_{\rho_1} \lambda_2) \rhd_{\rho_2} \lambda_3$, $\rho_4$ must necessarily be totally sequentiable. So, the proposition being proved is true. 
\end{proof}

\begin{proposition}\label{prop:sequencing-commutative}
Sequential composition is not commutative.
\end{proposition}
\begin{proof}
For this proposition, we show:
\begin{enumerate}
\item There is no isomorphism between ${\lambda_1 \rhd_{\rho_1} \lambda_2}$ and ${\lambda_2 \rhd_{\rho_2} \lambda_1}$ for some choice of partial sequential computons $\lambda_1 \rhd_{\rho_1} \lambda_2$ and ${\lambda_2 \rhd_{\rho_2} \lambda_1}$.\label{prop:sequencing-commutative-1}
\item There is no isomorphism between ${\lambda_3 \unrhd_{\rho_3} \lambda_4}$ and ${\lambda_4 \unrhd_{\rho_4} \lambda_3}$ for some choice of total sequential computons $\lambda_3 \unrhd_{\rho_3} \lambda_4$ and ${\lambda_4 \unrhd_{\rho_4} \lambda_3}$.\label{prop:sequencing-commutative-2}
\end{enumerate}

For \ref{prop:sequencing-commutative-1}, let us consider the partially sequentiable span ${\rho_1:=\lambda_1 \xleftarrow{\alpha_1} \lambda_0 \xrightarrow{\alpha_2} \lambda_2}$ described in Figure \ref{fig:computon-sequential-example} and the obvious partially sequentiable span ${\rho_2:=\lambda_2 \xleftarrow{\gamma_2} \Lambda \xrightarrow{\gamma_1} \lambda_1}$ given by $\gamma_2(p)=p_2 \in C_2^-$ and $\gamma_1(p)=p_1 \in C_1^+$. As it is trivial to check that $\lambda_1 \rhd_{\rho_1} \lambda_2$ and $\lambda_2 \rhd_{\rho_2} \lambda_1$ are not isomorphic, we have just constructed an example that disproves the commutativity of partial sequential composition. For \ref{prop:sequencing-commutative-2}, consider the example depicted in Figure \ref{fig:computon-sequential-total-example} which evidently shows that the total sequential computons (a) and (b) are not isomorphic. Hence, we conclude that total sequential composition is not a commutative operation either. 
\begin{figure}[!h]
\centering
\subcaptionbox{}
{
\begin{tikzpicture}
\qmatch{0q}{3.5}{6.8}{};
\dmatch{0d}{3.5}{6.1}{$3$}{above};
\dmatch{0d2}{3.5}{5.35}{$4$}{above};
\draw[->] (3,5.85) to[bend right=25] node[above]{\scriptsize$\alpha_3$}  (1.5,5.1);
\draw[->] (4,5.85) to[bend left=25] node[above]{\scriptsize$\alpha_4$}  (5.5,5.1);

\computonPrimitive{1}{3}{1}{1.9}{$\lambda_3$};
\qin{1q0}{0.2}{4.7}{};
\din{1i1}{0.2}{4.1}{$1$};
\qout{1q1}{2}{4.7}{};
\dout{1o1}{2}{4.1}{$3$};
\dout{1o2}{2}{3.5}{$4$};
\draw[->] (1.5,2.8) to[bend right=25] node[left]{\scriptsize$\beta_3$}  (1.80,1.8);
\draw[->] (5.5,2.8) to[bend left=25] node[right]{\scriptsize$\beta_4$}  (5.2,1.8);

\computonPrimitive{5}{3}{1}{1.9}{$\lambda_4$};
\qin{2q0}{4.2}{4.7}{};
\din{2i1}{4.2}{4.1}{$3$};
\din{2i1}{4.2}{3.5}{$4$};
\qout{2q1}{6}{4.7}{};
\dout{2o1}{6}{4.1}{$1$};

\computonComposite{0.95}{-0.7}{5.1}{2.3};
\qmatch{3q}{3.5}{1.2}{};\flow{{2.5,1.2}}{{3.5,1.2}}{dashed}{};\flow{{3.6,1.2}}{{4.6,1.2}}{dashed}{};
\dmatch{3d}{3.5}{0.6}{$3$}{above};\flow{{2.5,0.6}}{{3.5,0.6}}{}{};\flow{{3.56,0.6}}{{4.6,0.6}}{}{};
\dmatch{3d2}{3.5}{0}{$4$}{above};\flow{{2.5,0}}{{3.5,0}}{}{};\flow{{3.56,0}}{{4.6,0}}{}{};
\computonPrimitive{1.5}{-0.5}{1}{1.9}{$\lambda_3$};
\qin{3q0}{0.7}{1.2}{};
\din{3i1}{0.7}{0.6}{$1$};
\computonPrimitive{4.5}{-0.5}{1}{1.9}{$\lambda_4$}
\qout{3q1}{5.5}{1.2}{};
\dout{3o1}{5.5}{0.6}{$1$};
\end{tikzpicture}
}
\hspace{-0.4cm}
\subcaptionbox{}
{
\begin{tikzpicture}
\qmatch{0q}{3.5}{6.1}{};
\dmatch{0d}{3.5}{5.35}{$1$}{above};
\draw[->] (3,5.85) to[bend right=25] node[above]{\scriptsize$\gamma_3$}  (1.5,5.1);
\draw[->] (4,5.85) to[bend left=25] node[above]{\scriptsize$\gamma_4$}  (5.5,5.1);

\computonPrimitive{1}{3}{1}{1.9}{$\lambda_4$};
\qin{1q0}{0.2}{4.7}{};
\din{1i1}{0.2}{4.1}{$3$};
\din{1i2}{0.2}{3.5}{$4$};
\qout{1q1}{2}{4.7}{};
\dout{1o1}{2}{4.1}{$1$};
\draw[->] (1.5,2.8) to[bend right=25] node[left]{\scriptsize$\beta_5$}  (1.80,1.8);
\draw[->] (5.5,2.8) to[bend left=25] node[right]{\scriptsize$\beta_6$}  (5.2,1.8);

\computonPrimitive{5}{3}{1}{1.9}{$\lambda_3$};
\qin{2q0}{4.2}{4.7}{};
\din{2i1}{4.2}{4.1}{$1$};
\qout{2q1}{6}{4.7}{};
\dout{2o1}{6}{4.1}{$3$};
\dout{2o2}{6}{3.5}{$4$};

\computonComposite{0.95}{-0.7}{5.1}{2.3};
\qmatch{3q}{3.5}{1.2}{};\flow{{2.5,1.2}}{{3.5,1.2}}{dashed}{};\flow{{3.6,1.2}}{{4.6,1.2}}{dashed}{};
\dmatch{3d}{3.5}{0.6}{$1$}{above};\flow{{2.5,0.6}}{{3.5,0.6}}{}{};\flow{{3.56,0.6}}{{4.6,0.6}}{}{};
\computonPrimitive{1.5}{-0.5}{1}{1.9}{$\lambda_4$};
\qin{3q0}{0.7}{1.2}{};
\din{3i1}{0.7}{0.6}{$3$};
\din{3i2}{0.7}{0}{$4$};
\computonPrimitive{4.5}{-0.5}{1}{1.9}{$\lambda_3$}
\qout{3q1}{5.5}{1.2}{};
\dout{3o1}{5.5}{0.6}{$3$};
\dout{3o2}{5.5}{0}{$4$};
\end{tikzpicture}
}
{
\begin{tikzpicture}
\begin{scope}
\draw[draw=black,fill=white] (0,0.4) rectangle ++(0.12,0.25);
\node[inner sep=0pt,minimum size=0pt,label=right:{\scriptsize Computation unit}] at (0.2,0.5) {};
\draw[dashed] (3.2,0.5) to node[pos=0.5]{\arrowflow} (3.8,0.5);
\node[inner sep=0pt,minimum size=3pt,label=right:{\scriptsize Control flow edge}] at (3.8,0.5) {};
\node[draw=black,fill=white,inner sep=0pt,minimum size=3pt,label=right:{\scriptsize Ec-inport}] at (6.8,0.5) {};
\node[fill=black,inner sep=0pt,minimum size=3pt,label={right:{\scriptsize Ec-outport}}] at (8.8,0.5) {};
\node[draw,fill=white,fill fraction={black}{0.5},inner sep=0pt,minimum size=3pt,label=right:{\scriptsize Ic-port}] at (11,0.5) {};
\end{scope}

\begin{scope}[xshift=1.5cm]
\draw (0,0) to node[pos=0.5]{\arrowflow} (0.5,0);
\node[inner sep=0pt,minimum size=3pt,label=right:{\scriptsize Data flow edge}] at (0.5,0){};
\node[circle,draw=black,fill=white,inner sep=0pt,minimum size=3pt,label={right:{\scriptsize Ed-inport}}] at (3.3,0) {};
\node[circle,fill=black,inner sep=0pt,minimum size=3pt,label={right:{\scriptsize Ed-outport}}] at (5.3,0) {};
\node[circle,draw,fill=white,fill fraction={black}{0.5},inner sep=0pt,minimum size=3pt,label=right:{\scriptsize Id-port}] at (7.5,0) {};
\end{scope}
\end{tikzpicture}
}
\caption{Counterexample that disproves the commutativity of total sequential composition.}
\label{fig:computon-sequential-total-example}
\end{figure}

\end{proof}

\subsubsection{Operational semantics for sequential computons (in the theory of Petri nets)}

No matter whether we use any of the three functors presented in Section \ref{sec:operational-semantics}, the Petri net of a sequential computon does not introduce any additional places or transitions, as a result of using a pushout operation on a sequentiable span of computon morphisms. In the case of a total sequential computon, the corresponding net takes all the e-inports from the left operand as input places and all the e-outports of the right operand as output places (see Figure \ref{fig:computon-sequential-net}(a) and Proposition \ref{prop:computon-sequential-inports-outports-2}). The net of a partial sequential computon has a similar structure, with the addition it has the unmatched e-inports of the right operand as input places and the unmatched e-ouports of the left operand as output places (see Figure \ref{fig:computon-sequential-net}(b)). 

\begin{figure}[!h]
\centering
\subcaptionbox{Net of a total sequential computon $\lambda_1\unrhd_\rho\lambda_2$ constructed from a connected computon $\lambda_1$ that has $n$ e-inports and $j$ e-outports, and a connected computon $\lambda_2$ that has $j$ e-inports and $k$ e-outports.}
{
\begin{tikzpicture}
\node[place,label={180:\scriptsize $p_n$},minimum size=3mm] (pn) at (0,2) {};
\node at (0,2.6){$\vdots$};
\node[place,label={180:\scriptsize $p_1$},minimum size=3mm] (p1) at (0,3) {};
\draw[dotted] (0.7,1.7) rectangle (2.2,3.2);\node at (1.4,2.4){\scriptsize $\lambda_1$-net};
\node[place,label={30:\scriptsize $r_1$},minimum size=3mm] (r1) at (2.9,3) {};
\node at (2.9,2.6){$\vdots$};
\node[place,label={30:\scriptsize $r_j$},minimum size=3mm] (rj) at (2.9,2) {};

\draw[dotted] (3.6,1.7) rectangle (5.1,3.2);\node at (4.3,2.4){\scriptsize $\lambda_2$-net};
\node[place,label={8:\scriptsize $s_1$},minimum size=3mm] (s1) at (5.8,3) {};
\node at (5.8,2.6){$\vdots$};
\node[place,label={8:\scriptsize $s_k$},minimum size=3mm] (sk) at (5.8,2) {};

\draw[-latex,thick] (p1) -- ($(p1)+(0.7,-0.2)$);
\draw[-latex,thick] (pn) -- ($(pn)+(0.7,0.2)$);
\draw[-latex,thick] ($(r1)+(-0.7,-0.2)$) -- (r1);
\draw[-latex,thick] ($(rj)+(-0.7,0.2)$) -- (rj);
\draw[-latex,thick] (r1) -- ($(r1)+(0.7,-0.2)$);
\draw[-latex,thick] (rj) -- ($(rj)+(0.7,0.2)$);
\draw[-latex,thick] ($(s1)+(-0.7,-0.2)$) -- (s1);
\draw[-latex,thick] ($(sk)+(-0.7,0.2)$) -- (sk);
\end{tikzpicture}
}
\subcaptionbox{Net of a partial sequential computon $\lambda_1\rhd_\rho\lambda_2$ constructed from a connected computon $\lambda_1$ that has $n$ e-inports and ${j+l}$ e-outports, and a connected computon $\lambda_2$ that has ${j+m}$ e-inports and $k$ e-outports.} 
{
\begin{tikzpicture}
\node[place,label={180:\scriptsize $q_m$},minimum size=3mm] (qm) at (0,0) {};
\node at (0,0.6){$\vdots$};
\node[place,label={180:\scriptsize $q_1$},minimum size=3mm] (q1) at (0,1) {};
\node[place,label={180:\scriptsize $p_n$},minimum size=3mm] (pn) at (0,2) {};
\node at (0,2.6){$\vdots$};
\node[place,label={180:\scriptsize $p_1$},minimum size=3mm] (p1) at (0,3) {};
\draw[dotted] (0.7,1.7) rectangle (2.2,3.2);\node at (1.4,2.4){\scriptsize $\lambda_1$-net};
\node[place,label={30:\scriptsize $r_1$},minimum size=3mm] (r1) at (2.9,3) {};
\node at (2.9,2.6){$\vdots$};
\node[place,label={30:\scriptsize $r_j$},minimum size=3mm] (rj) at (2.9,2) {};

\draw[dotted] (3.6,1.7) rectangle (5.1,3.2);\node at (4.3,2.4){\scriptsize $\lambda_2$-net};
\node[place,label={8:\scriptsize $s_1$},minimum size=3mm] (s1) at (5.8,3) {};
\node at (5.8,2.6){$\vdots$};
\node[place,label={8:\scriptsize $s_k$},minimum size=3mm] (sk) at (5.8,2) {};
\node[place,label={right:\scriptsize $v_1$},minimum size=3mm] (v1) at (5.8,1) {};
\node at (5.8,0.6){$\vdots$};
\node[place,label={right:\scriptsize $v_l$},minimum size=3mm] (vl) at (5.8,0) {};

\draw[-latex,thick] (p1) -- ($(p1)+(0.7,-0.2)$);
\draw[-latex,thick] (pn) -- ($(pn)+(0.7,0.2)$);
\draw[-latex,thick] ($(r1)+(-0.7,-0.2)$) -- (r1);
\draw[-latex,thick] ($(rj)+(-0.7,0.2)$) -- (rj);
\draw[-latex,thick] (r1) -- ($(r1)+(0.7,-0.2)$);
\draw[-latex,thick] (rj) -- ($(rj)+(0.7,0.2)$);
\draw[-latex,thick] (q1) to[bend right=15] ($(rj)+(0.7,0)$);
\draw[-latex,thick] (qm) to[bend right=15] ($(rj)+(0.7,-0.2)$);
\draw[-latex,thick] ($(s1)+(-0.7,-0.2)$) -- (s1);
\draw[-latex,thick] ($(sk)+(-0.7,0.2)$) -- (sk);
\draw[-latex,thick] ($(rj)+(-0.7,0)$) to[bend right=10] (v1);
\draw[-latex,thick] ($(rj)+(-0.7,-0.20)$) to[bend right=10] (vl);
\end{tikzpicture}
}
\caption{General structure of sequential computon nets. Both figures are applicable to all the functorial constructions from Section \ref{sec:operational-semantics}, namely $\mathcal{N}$, $\mathcal{C}\circ\mathfrak{E}$ and $\mathcal{D}$. In the case of ${\mathcal{D}}$, ${j,k,l,m,n \geq 0}$ and for the others, ${j,k,l,m,n > 0}$ according to Definition \ref{def:computon}.}
\label{fig:computon-sequential-net}
\end{figure}

Although there are no new places or transitions that could cause deadlocks in the net of a (partial or total) sequential computon, there is no guarantee such a net is deadlock-free, even though the nets of the composed computons are. To fully ensure deadlock-freedom, we have to synchronise the e-outports of the left operand with the e-inports of the right one, in order to prevent the net of the right operand from being executed before reaching its initial state. This can structurally be done by introducing a primitive computon that acts as a synchronisation point between the left and right operands. The formal notion of such a computon is given in Definition \ref{def:sync}. 

\begin{definition}[In- and Out-Sync Computons]\label{def:sync}
A primitive computon $\lambda$ is an in-sync of a connected computon $\lambda_1$ if the domain of $\lambda_1^+$ is the domain of both $\lambda^+$ and $\lambda^-$. It is an out-sync of $\lambda_1$ if the domain of $\lambda_1^-$ is the domain of both $\lambda^+$ and $\lambda^-$.
\end{definition}

Evidently, for any connected computon there always are in- and out-sync computons by the fact marker morphisms always exist (see Proposition \ref{prop:markers-always}). Propositions \ref{prop:sync-in} and \ref{prop:sync-out} show that the process of adapting an in- or an out-sync to a connected computon precisely corresponds to a total sequencing operation. 

\begin{proposition}\label{prop:sync-in}
If $\lambda$ is the in-sync of a connected computon $\lambda_1$, there is a span $\rho$ whose pushout is the total sequential computon $\lambda\unrhd_{\rho}\lambda_1$.
\end{proposition}
\begin{proof}
Assuming $\lambda$ is the in-sync of a connected computon $\lambda_1$, Definition \ref{def:sync} says there is a trivial computon $\lambda_2$ that gives rise to the span $\rho:=\lambda\xleftarrow{\lambda^-}\lambda_2\xrightarrow{\lambda_1^+}\lambda_1$. Verifying $\rho$ is totally sequentiable follows directly from Definition \ref{def:computon-morphism-markers} and Proposition \ref{prop:computon-primitive-connected}. So, by Definition \ref{def:computon-sequential}, the pushout of $\rho$ is the total sequential computon $\lambda\unrhd_{\rho}\lambda_1$.
\end{proof}

\begin{proposition}\label{prop:sync-out}
If $\lambda$ is the out-sync of a connected computon $\lambda_1$, there is a span $\rho$ whose pushout is the total sequential computon $\lambda_1\unrhd_{\rho}\lambda$.
\end{proposition}
\begin{proof}
The proof is analogous to that of Proposition \ref{prop:sync-in}.
\end{proof}

\begin{proposition}\label{prop:deadlock-total-right}
Let $\lambda_1\unrhd_\rho\lambda_2$ be a total sequential computon and $\lambda$ the in-sync of $\lambda_2$. If $\mathcal{N}(\lambda_1)$ and $\mathcal{N}(\lambda_2)$ are deadlock-free, there are spans $\rho_1$ and $\rho_2$ such that $\mathcal{N}(\lambda_1\unrhd_{\rho_2}\lambda\unrhd_{\rho_1}\lambda_2)$ is deadlock-free. 
\end{proposition}
\begin{proof}
Let $\lambda_1\unrhd_\rho\lambda_2$ be a total sequential computon constructed from the totally sequentiable span $\rho:=\lambda_1\xleftarrow{\alpha_1}\lambda_0\xrightarrow{\alpha_2}\lambda_2$ of computon morphisms. If $\lambda$ is the in-sync of $\lambda_2$, Proposition \ref{prop:sync-in} says there is a trivial computon $\lambda_3$ such that the pushout of $\rho_1:=\lambda\xleftarrow{\lambda^-}\lambda_3\xrightarrow{\lambda_2^+}\lambda_2$ is the total sequential computon $\lambda\unrhd_{\rho_1}\lambda_2$. If ${\beta:\lambda_2\rightarrow\lambda\unrhd_{\rho_1}\lambda_2}$ is one of the morphisms induced by such a pushout, we deduce the existence of ${\rho_2:=\lambda_1\xleftarrow{\alpha_1}\lambda_0\xrightarrow{\beta\circ\alpha_2}\lambda\unrhd_{\rho_1}\lambda_2}$ which trivially is totally sequentiable by the fact $\rho_1$ and $\rho$ are. Consequently, by Definition \ref{def:computon-sequential}, there is a total sequential computon $\lambda_1\unrhd_{\rho_2}\lambda\unrhd_{\rho_1}\lambda_2$ whose underlying net $\mathcal{N}(\lambda_1\unrhd_{\rho_2}\lambda\unrhd_{\rho_1}\lambda_2)$ has the following form when applying the functorial construction from Definition \ref{def:functor-computon-to-petri}:

\begin{center}
\begin{tikzpicture}
\node[place,label={180:\scriptsize $p_n$},minimum size=3mm] (pn) at (0,2) {};
\node at (0,2.6){$\vdots$};
\node[place,label={180:\scriptsize $p_1$},minimum size=3mm] (p1) at (0,3) {};
\draw[dotted] (0.7,1.7) rectangle (2.2,3.2);\node at (1.4,2.4){\scriptsize $\mathcal{N}(\lambda_1)$};
\node[place,label={80:\scriptsize $q_1$},minimum size=3mm] (q1) at (2.9,3) {};
\node at (2.9,2.6){$\vdots$};
\node[place,label={80:\scriptsize $q_j$},minimum size=3mm] (qj) at (2.9,2) {};

\node[transition,fill=black,minimum width=0.1mm,minimum height=10mm,label=\scriptsize$\mathcal{N}(\lambda)$] (t1) at (4.3,2.4) {};
\node[place,label={80:\scriptsize $r_1$},minimum size=3mm] (r1) at (5.8,3) {};
\node at (5.8,2.6){$\vdots$};
\node[place,label={80:\scriptsize $r_k$},minimum size=3mm] (rk) at (5.8,2) {};

\draw[dotted] (6.5,1.7) rectangle (8,3.2);\node at (7.2,2.4){\scriptsize $\mathcal{N}(\lambda_2)$};
\node[place,label={8:\scriptsize $s_1$},minimum size=3mm] (s1) at (8.7,3) {};
\node at (8.7,2.6){$\vdots$};
\node[place,label={8:\scriptsize $s_l$},minimum size=3mm] (sl) at (8.7,2) {};

\draw[-latex,thick] (p1) -- ($(p1)+(0.7,-0.2)$);
\draw[-latex,thick] (pn) -- ($(pn)+(0.7,0.2)$);
\draw[-latex,thick] ($(q1)+(-0.7,-0.2)$) -- (q1);
\draw[-latex,thick] ($(qj)+(-0.7,0.2)$) -- (qj);
\draw[-latex,thick] (q1) -- (t1);
\draw[-latex,thick] (qj) -- (t1);
\draw[-latex,thick] (t1) -- (r1);
\draw[-latex,thick] (t1) -- (rk);
\draw[-latex,thick] (r1) -- ($(r1)+(0.7,-0.2)$);
\draw[-latex,thick] (rk) -- ($(rk)+(0.7,0.2)$);
\draw[-latex,thick] ($(s1)+(-0.7,-0.2)$) -- (s1);
\draw[-latex,thick] ($(sl)+(-0.7,0.2)$) -- (sl);
\end{tikzpicture}
\end{center}

By Definition \ref{def:marking}, the initial state ${M_i}$ of the above net is a marking function where ${M_i(p)>0}$ for all ${p\in\{p_1,\ldots,p_n\}}$ and no tokens for all the other places, including those inside ${\mathcal{N}(\lambda_1)}$ and ${\mathcal{N}(\lambda_2)}$. As this marking corresponds to the initial marking of ${\mathcal{N}(\lambda_1)}$, only states from ${\mathcal{N}(\lambda_1)}$ are reachable from ${M_i}$ in the next time step. Assuming ${\mathcal{N}(\lambda_1)}$ and ${\mathcal{N}(\lambda_2)}$ are deadlock-free, we now have the following cases:
\begin{itemize}
\item If no state of $\mathcal{N}(\lambda_1)$ puts tokens in all the input places of $\mathcal{N}(\lambda)$, no state of $\mathcal{N}(\lambda)$ or $\mathcal{N}(\lambda_2)$ will ever be reached. Even though $\mathcal{N}(\lambda_1)$ will not terminate successfully, there is a guarantee $\mathcal{N}(\lambda_1\unrhd_{\rho_2}\lambda\unrhd_{\rho_1}\lambda_2)$ will never be stuck because $\mathcal{N}(\lambda_1)$ is deadlock-free. 
\item If a state of ${\mathcal{N}(\lambda_1)}$ puts tokens in all the places in ${\{q_1,\ldots,q_j\}}$, the only transition of ${\mathcal{N}(\lambda)}$ will be fired to reach a state that marks all the places in ${\{r_1,\ldots,r_k\}}$, which evidently corresponds to the initial marking of ${\mathcal{N}(\lambda_2)}$. As ${\mathcal{N}(\lambda_2)}$ is deadlock-free, ${\mathcal{N}(\lambda_1\unrhd_{\rho_2}\lambda\unrhd_{\rho_1}\lambda_2)}$ will not be stuck.
\end{itemize}
Hence, we conclude ${\mathcal{N}(\lambda_1\unrhd_{\rho_2}\lambda\unrhd_{\rho_1}\lambda_2)}$ is deadlock-free, as required.
\end{proof}

\begin{proposition}\label{prop:deadlock-total-left}
Let $\lambda_1\unrhd_\rho\lambda_2$ be a total sequential computon and $\lambda$ the out-sync of $\lambda_1$. If $\mathcal{N}(\lambda_1)$ and $\mathcal{N}(\lambda_2)$ are deadlock-free, there are spans $\rho_1$ and $\rho_2$ such that $\mathcal{N}(\lambda_1\unrhd_{\rho_1}\lambda\unrhd_{\rho_2}\lambda_2)$ is deadlock-free. 
\end{proposition}
\begin{proof}
Let $\lambda_1\unrhd_\rho\lambda_2$ be a total sequential computon constructed from the totally sequentiable span ${\rho:=\lambda_1\xleftarrow{\alpha_1}\lambda_0\xrightarrow{\alpha_2}\lambda_2}$ of computon morphisms. If $\lambda$ is the out-sync of $\lambda_1$, Proposition \ref{prop:sync-out} says there is a trivial computon $\lambda_3$ such that the pushout of ${\rho_1:=\lambda_1\xleftarrow{\lambda_1^-}\lambda_3\xrightarrow{\lambda^+}\lambda}$ is the total sequential computon $\lambda_1\unrhd_{\rho_1}\lambda$. If ${\beta:\lambda_1\rightarrow\lambda_1\unrhd_{\rho_1}\lambda}$ is one of the morphisms induced by such a pushout, we deduce the existence of ${\rho_2:=\lambda_1\unrhd_{\rho_1}\lambda\xleftarrow{\beta\circ\alpha_1}\lambda_0\xrightarrow{\alpha_2}\lambda_2}$ which trivially is totally sequentiable by the fact $\rho_1$ and $\rho$ are. Consequently, by Definition \ref{def:computon-sequential}, there is a total sequential computon ${\lambda_1\unrhd_{\rho_1}\lambda\unrhd_{\rho_2}\lambda_2}$. 

Applying the functorial construction $\mathcal{N}$ from Definition \ref{def:functor-computon-to-petri} on ${\lambda_1\unrhd_{\rho_1}\lambda\unrhd_{\rho_2}\lambda_2}$ results in a Petri net with the form shown in the proof of Proposition \ref{prop:deadlock-total-right}. Therefore, by that proposition, ${\mathcal{N}(\lambda_1\unrhd_{\rho_1}\lambda\unrhd_{\rho_2}\lambda_2)}$ is deadlock-free.
\end{proof}

\begin{corollary}\label{cor:deadlock-total}
Let $\lambda_1\unrhd_\rho\lambda_2$ be a total sequential computon, $\lambda_3$ the in-sync of $\lambda_2$ and $\lambda_4$ the out-sync of $\lambda_1$. Then, there are spans $\rho_1$, $\rho_2$, $\rho_3$ and $\rho_4$ such that ${\mathcal{N}(\lambda_1\unrhd_{\rho_1}\lambda_3\unrhd_{\rho_2}\lambda_2)}\cong{\mathcal{N}(\lambda_1\unrhd_{\rho_3}\lambda_4\unrhd_{\rho_4}\lambda_2)}$.
\end{corollary}
\begin{proof}
The proof follows directly from Propositions \ref{prop:deadlock-total-right} and \ref{prop:deadlock-total-left}, with the observation that $\lambda_3$ must necessarily be isomorphic to $\lambda_4$.
\end{proof}

Propositions \ref{prop:deadlock-total-right} and \ref{prop:deadlock-total-left} together entail it is always possible to construct a deadlock-free net from any total sequential computon. To do so, it suffices to place a sync computon between the left and right operands. More precisely, we can attach either an in-sync computon to the right operand or an out-sync computon to the left one (see Corollary \ref{cor:deadlock-total}) via a total sequencing operation (see Propositions \ref{prop:sync-in} and \ref{prop:sync-out}). The resulting composite can then be composed into a new total sequential computon that respects the mapping given by the original sequentiable span of computon morphisms. No matter whether we adapt the right or the left operand, the net of the resulting composite is deadlock-free by Propositions \ref{prop:deadlock-total-right} and \ref{prop:deadlock-total-left}. For partial sequential computons, a similar approach enables deadlock-freedom, as described by Propositions \ref{prop:deadlock-partial-right} and \ref{prop:deadlock-partial-left}.

\begin{proposition}\label{prop:deadlock-partial-right}
Let $\lambda_1\rhd_\rho\lambda_2$ be a partial sequential computon and $\lambda$ the in-sync of $\lambda_2$. If $\mathcal{N}(\lambda_1)$ and $\mathcal{N}(\lambda_2)$ are deadlock-free, there are spans $\rho_1$ and $\rho_2$ such that $\mathcal{N}(\lambda_1\rhd_{\rho_2}(\lambda\unrhd_{\rho_1}\lambda_2))$ is deadlock-free. 
\end{proposition}
\begin{proof}
Let $\lambda_1\rhd_\rho\lambda_2$ be a partial sequential computon constructed from the partially sequentiable span ${\rho:=\lambda_1\xleftarrow{\alpha_1}\lambda_0\xrightarrow{\alpha_2}\lambda_2}$ of computon morphisms. If $\lambda$ is the in-sync of $\lambda_2$, Proposition \ref{prop:sync-in} says there is a trivial computon $\lambda_3$ such that the pushout of ${\rho_1:=\lambda\xleftarrow{\lambda^-}\lambda_3\xrightarrow{\lambda_2^+}\lambda_2}$ is the total sequential computon $\lambda\unrhd_{\rho_1}\lambda_2$. If ${\beta:\lambda_2\rightarrow\lambda\unrhd_{\rho_1}\lambda_2}$ is one of the morphisms induced by such a pushout, we deduce the existence of ${\rho_2:=\lambda_1\xleftarrow{\alpha_1}\lambda_0\xrightarrow{\beta\circ\alpha_2}\lambda\unrhd_{\rho_1}\lambda_2}$ which trivially is partially sequentiable by the fact $\rho$ is. Consequently, by Definition \ref{def:computon-sequential}, there is a partial sequential computon ${\lambda_1\rhd_{\rho_2}(\lambda\unrhd_{\rho_1}\lambda_2)}$ whose underlying net ${\mathcal{N}(\lambda_1\rhd_{\rho_2}(\lambda\unrhd_{\rho_1}\lambda_2))}$ has the following form when applying the functorial construction from Definition \ref{def:functor-computon-to-petri}:

\begin{center}
\begin{tikzpicture}
\node[place,label={180:\scriptsize $q_m$},minimum size=3mm] (qm) at (0,0) {};
\node at (0,0.6){$\vdots$};
\node[place,label={180:\scriptsize $q_1$},minimum size=3mm] (q1) at (0,1) {};
\node[place,label={180:\scriptsize $p_n$},minimum size=3mm] (pn) at (0,2) {};
\node at (0,2.6){$\vdots$};
\node[place,label={180:\scriptsize $p_1$},minimum size=3mm] (p1) at (0,3) {};
\draw[dotted] (0.7,1.7) rectangle (2.2,3.2);\node at (1.4,2.4){\scriptsize $\mathcal{N}(\lambda_1)$};
\node[place,label={[yshift=-0.1cm,xshift=0.3cm]:\scriptsize $r_1$},minimum size=3mm] (r1) at (2.9,3) {};
\node at (2.9,2.6){$\vdots$};
\node[place,label={[yshift=-0.1cm,xshift=0.3cm]:\scriptsize $r_j$},minimum size=3mm] (rj) at (2.9,2) {};

\node[transition,fill=black,minimum width=0.1mm,minimum height=10mm,label=\scriptsize$\mathcal{N}(\lambda)$] (t1) at (4.3,2.4) {};
\node[place,label={[yshift=-0.2cm,xshift=0.3cm]:\scriptsize $s_1$},minimum size=3mm] (s1) at (5.8,3.8) {};
\node at (5.8,3.4){$\vdots$};
\node[place,label={[yshift=-0.2cm,xshift=0.3cm]:\scriptsize $s_j$},minimum size=3mm] (sk) at (5.8,2.8) {};
\node[place,label={[yshift=-0.2cm,xshift=0.3cm]:\scriptsize $w_1$},minimum size=3mm] (w1) at (5.8,2.3) {};
\node at (5.8,1.9){$\vdots$};
\node[place,label={right:\scriptsize $w_m$},minimum size=3mm] (wx) at (5.8,1.3) {};

\draw[dotted] (6.5,1.7) rectangle (8,3.2);\node at (7.2,2.4){\scriptsize $\mathcal{N}(\lambda_2)$};
\node[place,label={right:\scriptsize $v_1$},minimum size=3mm] (v1) at (8.7,3) {};
\node at (8.7,2.6){$\vdots$};
\node[place,label={right:\scriptsize $v_l$},minimum size=3mm] (vl) at (8.7,2) {};
\node[place,label={right:\scriptsize $z_1$},minimum size=3mm] (z1) at (8.7,1) {};
\node at (8.7,0.6){$\vdots$};
\node[place,label={right:\scriptsize $z_x$},minimum size=3mm] (zy) at (8.7,0) {};

\draw[-latex,thick] (p1) -- ($(p1)+(0.7,-0.2)$);
\draw[-latex,thick] (pn) -- ($(pn)+(0.7,0.2)$);
\draw[-latex,thick] ($(r1)+(-0.7,-0.2)$) -- (r1);
\draw[-latex,thick] ($(rj)+(-0.7,0.2)$) -- (rj);
\draw[-latex,thick] (r1) -- (t1);
\draw[-latex,thick] (rj) -- (t1);
\draw[-latex,thick] (t1) -- (s1);
\draw[-latex,thick] (t1) -- (sk);
\draw[-latex,thick] (t1) -- (w1);
\draw[-latex,thick] (t1) -- (wx);
\draw[-latex,thick] (s1) -- ($(s1)+(0.7,-0.8)$);
\draw[-latex,thick] (sk) -- ($(sk)+(0.7,0)$);
\draw[-latex,thick] (w1) -- ($(w1)+(0.7,0)$);
\draw[-latex,thick] (wx) -- ($(wx)+(0.7,0.5)$);
\draw[-latex,thick] ($(v1)+(-0.7,-0.2)$) -- (v1);
\draw[-latex,thick] ($(vl)+(-0.7,0.2)$) -- (vl);
\draw[-latex,thick] (q1) to[bend right=15] (t1);
\draw[-latex,thick] (qm) to[bend right=15] (t1);
\draw[-latex,thick] ($(rj)+(-0.7,0)$) to[bend right=10] (z1);
\draw[-latex,thick] ($(rj)+(-0.7,-0.20)$) to[bend right=10] (zy);
\end{tikzpicture}
\end{center}

By Definition \ref{def:marking}, the initial state ${M_i}$ of the above net is a marking function where ${M_i(p)>0}$ for all ${p\in\{p_1,\ldots,p_n,q_1,\ldots,q_m\}}$ and no tokens for all the other places, including those inside ${\mathcal{N}(\lambda_1)}$ and ${\mathcal{N}(\lambda_2)}$. This initial marking can only reach states from ${\mathcal{N}(\lambda_1)}$ in the next time step, since the only transition of ${\mathcal{N}(\lambda)}$ is not yet enabled (${j>0}$ by Definition \ref{def:computon}). Assuming ${\mathcal{N}(\lambda_1)}$ and ${\mathcal{N}(\lambda_2)}$ are deadlock-free, we now have the following cases:
\begin{itemize}
\item If no state of ${\mathcal{N}(\lambda_1)}$ puts tokens in all the places in ${\{r_1,\ldots,r_j\}}$, the only transition of ${\mathcal{N}(\lambda)}$ will not be enabled. Consequently, no state of ${\mathcal{N}(\lambda)}$ or ${\mathcal{N}(\lambda_2)}$ will ever be reached. Even though ${\mathcal{N}(\lambda_1)}$ will not terminate successfully, there is a guarantee ${\mathcal{N}(\lambda_1\rhd_{\rho_2}(\lambda\unrhd_{\rho_1}\lambda_2))}$ will never be stuck because ${\mathcal{N}(\lambda_1)}$ is deadlock-free.
\item If a state of ${\mathcal{N}(\lambda_1)}$ puts tokens in all the places in ${\{r_1,\ldots,r_j\}}$, the unique transition of ${\mathcal{N}(\lambda)}$ will be enabled since there are also tokens in ${\{q_1,\ldots,q_m\}}$ previously placed by ${M_i}$. As firing such a transition will evidently reach the initial marking of ${\mathcal{N}(\lambda_2)}$ and ${\mathcal{N}(\lambda_2)}$ is deadlock-free, ${\mathcal{N}(\lambda_1\rhd_{\rho_2}(\lambda\unrhd_{\rho_1}\lambda_2))}$ will not be stuck.
\end{itemize}
Hence, ${\mathcal{N}(\lambda_1\rhd_{\rho_2}(\lambda\unrhd_{\rho_1}\lambda_2))}$ is deadlock-free, as required.
\end{proof}

\begin{remark}
Due to the non-associativity property of partial sequential composition (see Proposition \ref{prop:computon-sequential-partial-associative}), the total sequential computon $\lambda\unrhd_{\rho_1}\lambda_2$ must be defined before constructing the corresponding partial sequential computon. This is the reason we use parentheses for the syntactic expression $\lambda_1\rhd_{\rho_2}(\lambda\unrhd_{\rho_1}\lambda_2)$.
\end{remark}

\begin{remark}
Having the same subindex for $r$- and $s$-places (and for $q$- and $w$-places) is not a coincidence. We did this to reflect the fact there is a one-to-one correspondence between the input and outputs places of the net of a sync computon. As we are adapting $\lambda_2$ rather than $\lambda_1$ and $\rho_2$ is partially sequentiable, there are some input places of $\mathcal{N}(\lambda)$ for which there is no match (i.e., $q_1,\ldots,q_m$) and some output places of $\mathcal{N}(\lambda_1)$ for which there is no match (i.e., $z_1,\ldots,z_x$). This evidently is a structural consequence of partial composition semantics.
\end{remark}

\begin{proposition}\label{prop:deadlock-partial-left}
Let $\lambda_1\rhd_\rho\lambda_2$ be a partial sequential computon and $\lambda$ the out-sync of $\lambda_1$. If ${\mathcal{N}(\lambda_1)}$ and ${\mathcal{N}(\lambda_2)}$ are deadlock-free, there are spans $\rho_1$ and $\rho_2$ such that ${\mathcal{N}((\lambda_1\unrhd_{\rho_1}\lambda)\rhd_{\rho_2}\lambda_2)}$ is deadlock-free. 
\end{proposition}
\begin{proof}
The proof similar to that of Proposition \ref{prop:deadlock-partial-right}.
\end{proof}

\begin{remark}\label{rem:sequential-deadlock}
Although Propositions \ref{prop:deadlock-total-right}--\ref{prop:deadlock-partial-left} and Corollary \ref{cor:deadlock-total} are statements about the functor $\mathcal{N}$, they are applicable to the functors $\mathcal{C}\circ\mathfrak{E}$ and $\mathcal{D}$ presented in Section \ref{sec:operational-semantics}. The proofs are valid for $\mathcal{C}\circ\mathfrak{E}$ since Proposition \ref{prop:functor-control-petri} says ${\mathcal{C}}$ is just a restriction of $\mathcal{N}$ to $\mathfrak{E}(\textbf{Set}^{\textbf{Comp}})$. 

By Remark \ref{rem:deadlock-freedom}, we only need to check deadlock-freedom for $\mathcal{D}$-nets that have initial and final states. In the case of Proposition \ref{prop:deadlock-total-right}, this occurs when $n,l>0$. As total sequencing connects all the ed-outports of the left operand with all the ed-inports of the right one, we only have to check the cases when $j=0=k$ and $j,k>0$. In the first one, only states of $\mathcal{D}(\lambda_1)$ are reachable from the initial marking. If such a net is deadlock-free, $\mathcal{D}(\lambda_1\unrhd_{\rho_2}\lambda\unrhd_{\rho_1}\lambda_2)$ is deadlock-free too. When $j,k>0$, the proof is identical to that of Proposition \ref{prop:deadlock-total-right}. Proposition \ref{prop:deadlock-total-left} and Corollary \ref{cor:deadlock-total} follow analogously for $\mathcal{D}$.

For Proposition \ref{prop:deadlock-partial-right}, we observe $\mathcal{D}(\lambda_1\rhd_{\rho_2}(\lambda\unrhd_{\rho_1}\lambda_2))$ has the following possibilities for $j$ and $m$ (considering $n,l>0$ by Remark \ref{rem:deadlock-freedom}): 
\begin{enumerate}
\item If $j=0$ and $m=0$, $\mathcal{D}(\lambda_1)$ and $\mathcal{D}(\lambda_2)$ do not share any data. In this case, $M_i$ marks the places $p_1,\ldots,p_n$ so only states of $\mathcal{D}(\lambda_1)$ are reached. Since such a net is deadlock-free, $\mathcal{D}(\lambda_1\rhd_{\rho_2}(\lambda\unrhd_{\rho_1}\lambda_2))$ is deadlock-free too. 
\item If $j=0$ and $m>0$, $\mathcal{D}(\lambda_1)$ and $\mathcal{D}(\lambda_2)$ do not share any data. In this case, $M_i$ marks the places $p_1,\ldots,p_n,q_1,\ldots,q_m$ to reach states from $\mathcal{D}(\lambda_1)$ and $\mathcal{D}(\lambda)$ simultaneously. Since $j=0$, the only transition of $\mathcal{D}(\lambda)$ is enabled by $M_i$ and firing it results in tokens in $w_1,\ldots,w_m$ thereby reaching the initial state of $\mathcal{D}(\lambda_2)$. As $\mathcal{D}(\lambda_1)$ and $\mathcal{D}(\lambda_2)$ are both deadlock-free, $\mathcal{D}(\lambda_1\rhd_{\rho_2}(\lambda\unrhd_{\rho_1}\lambda_2))$ is deadlock-free too.
\item If $j>0$ and $m=0$, $\mathcal{D}(\lambda_1)$ and $\mathcal{D}(\lambda_2)$ do share data. In this case, $M_i$ marks the places $p_1,\ldots,p_n$ to solely activate $\mathcal{D}(\lambda_1)$. If no state of $\mathcal{D}(\lambda_1)$ ever puts tokens in all the places in $\{r_1,\ldots,r_j\}$, no state of $\mathcal{D}(\lambda)$ or $\mathcal{D}(\lambda_2)$ will ever be reached. As $\mathcal{D}(\lambda_1)$ is deadlock-free, $\mathcal{D}(\lambda_1\rhd_{\rho_2}(\lambda\unrhd_{\rho_1}\lambda_2))$ will not get stuck. If a state of $\mathcal{D}(\lambda_1)$ puts tokens in all the places in $\{r_1,\ldots,r_j\}$, then states of $\mathcal{D}(\lambda_2)$ will be reached. Since $\mathcal{D}(\lambda_2)$ is deadlock-free too, $\mathcal{D}(\lambda_1\rhd_{\rho_2}(\lambda\unrhd_{\rho_1}\lambda_2))$ will not get stuck.
\item If $j>0$ and $m>0$, $\mathcal{D}(\lambda_1)$ and $\mathcal{D}(\lambda_2)$ do share data. In this case, the proof is identical to that of Proposition \ref{prop:deadlock-partial-right}.
\end{enumerate}
A similar reasoning applies to $\mathcal{D}((\lambda_1\unrhd_{\rho_3}\lambda)\rhd_{\rho_4}\lambda_2)$ with respect to Proposition \ref{prop:deadlock-partial-left}.
\end{remark}

\subsubsection{Encapsulation of control flow and data flow in sequential computons}

A total sequential computon encapsulates sequential control flow and up to sequential data flow, as a result of matching all the e-outports of the left operand with all the e-inports of the right one via a pushout operation on a totally sequentiable span of computon morphisms. Sequential control flow allows the strict execution of one computon after another, whilst sequential data flow serves to transfer all the computation results of the left operand to the right one. Figure \ref{fig:encapsulation-sequential-total} illustrates how the total sequential computon from Figure \ref{fig:computon-sequential-total-example}(a) encapsulates both sequential control flow for the invocation of $\lambda_3$ and $\lambda_4$ (in that order) and sequential data flow for passing two data items between them. 

\begin{figure}[!h]
\centering
\begin{tikzpicture}
\begin{scope}
\computonComposite{0.95}{-0.7}{5.1}{2.3};
\qmatch{3q}{3.5}{1.2}{};\flow{{2.5,1.2}}{{3.5,1.2}}{dashed}{};\flow{{3.6,1.2}}{{4.6,1.2}}{dashed}{};
\dmatch{3d}{3.5}{0.6}{$3$}{above};\flow{{2.5,0.6}}{{3.5,0.6}}{}{};\flow{{3.56,0.6}}{{4.6,0.6}}{}{};
\dmatch{3d2}{3.5}{0}{$4$}{above};\flow{{2.5,0}}{{3.5,0}}{}{};\flow{{3.56,0}}{{4.6,0}}{}{};
\computonPrimitive{1.5}{-0.5}{1}{1.9}{$\lambda_3$};
\qin{3q0}{0.7}{1.2}{};
\din{3i1}{0.7}{0.6}{$1$};
\computonPrimitive{4.5}{-0.5}{1}{1.9}{$\lambda_4$}
\qout{3q1}{5.5}{1.2}{};
\dout{3o1}{5.5}{0.6}{$1$};
\end{scope}
\draw[opacity=0.3,line width=1.5pt, ->, -Latex] (7,0.5) to node[pos=0.4,yshift=7,xshift=-6]{\scriptsize $\mathcal{C}\circ \mathfrak{E}$} (8.5,1.5);
\begin{scope}[xshift=8cm,yshift=2cm]
\node[opacity=0.2] at (2,1){\scriptsize Control Flow Net};
\node[place,label={80:},minimum size=3mm] (4i) at (0,0) {};
\node[transition,fill=black,minimum width=0.1mm,minimum height=10mm] (4) at (1,0) {};
\node[place,label={80:},minimum size=3mm] (4o) at (2,0) {};
\node[transition,fill=black,minimum width=0.1mm,minimum height=10mm] (3) at (3,0) {};
\node[place,label={80:},minimum size=3mm] (3o) at (4,0) {};
\draw[-latex,thick] (4i)--(4);\draw[-latex,thick](4)--(4o);\draw[-latex,thick](4o)--(3);\draw[-latex,thick](3)--(3o);
\end{scope}
\draw[opacity=0.3,line width=1.5pt, ->, -Latex] (7,0.2) to node[pos=0.4,yshift=-4,xshift=-5]{\scriptsize $\mathcal{D}$} (8.5,-0.5);
\begin{scope}[xshift=8cm,yshift=-1cm]
\node[opacity=0.2] at (2,1){\scriptsize Data Flow Net};
\node[place,label={80:},minimum size=3mm] (4i) at (0,0) {\scriptsize $1$};
\node[transition,fill=black,minimum width=0.1mm,minimum height=10mm] (4) at (1,0) {};
\node[place,label={80:},minimum size=3mm] (4o1) at (2,0.5) {\scriptsize $3$};
\node[place,label={80:},minimum size=3mm] (4o2) at (2,-0.5) {\scriptsize $4$};
\node[transition,fill=black,minimum width=0.1mm,minimum height=10mm] (3) at (3,0) {};
\node[place,label={80:},minimum size=3mm] (3o) at (4,0) {\scriptsize $1$};
\draw[-latex,thick] (4i)--(4);\draw[-latex,thick](4)--(4o1);\draw[-latex,thick](4o1)--(3);\draw[-latex,thick](3)--(3o);
\draw[-latex,thick] (4)--(4o2);\draw[-latex,thick](4o2)--(3);
\end{scope}
\draw[opacity=0.3,line width=1.5pt, ->, -Latex] (3.5,-0.9) to node[pos=0.4,yshift=-4,xshift=-10]{\scriptsize $\mathcal{N}$} (3.5,-1.8);
\begin{scope}[xshift=1.5cm,yshift=-2.6cm]
\node[opacity=0.2] at (-1.5,0.3){\scriptsize Control and};\node[opacity=0.2] at (-1.5,0){\scriptsize Data Flow};\node[opacity=0.2] at (-1.5,-0.3){\scriptsize Net};
\node[place,label={80:},minimum size=3mm] (4ic) at (0,0.5) {};
\node[place,label={80:},minimum size=3mm] (4i) at (0,-0.5) {\scriptsize $1$};
\node[transition,fill=black,minimum width=0.1mm,minimum height=10mm] (4) at (1,0) {};
\node[place,label={80:},minimum size=3mm] (4o1) at (2,0.5) {};
\node[place,label={80:},minimum size=3mm] (4o1c) at (2,0) {\scriptsize $3$};
\node[place,label={80:},minimum size=3mm] (4o2) at (2,-0.5) {\scriptsize $4$};
\node[transition,fill=black,minimum width=0.1mm,minimum height=10mm] (3) at (3,0) {};
\node[place,label={80:},minimum size=3mm] (3o) at (4,0.5) {};
\node[place,label={80:},minimum size=3mm] (3oc) at (4,-0.5) {\scriptsize $1$};
\draw[-latex,thick] (4i)--(4);\draw[-latex,thick] (4ic)--(4);\draw[-latex,thick](4)--(4o1);\draw[-latex,thick](4)--(4o1c);\draw[-latex,thick](4o1)--(3);\draw[-latex,thick](3)--(3o);\draw[-latex,thick](3)--(3oc);
\draw[-latex,thick] (4)--(4o2);\draw[-latex,thick](4o2)--(3);\draw[-latex,thick](4o1c)--(3);
\end{scope}
\end{tikzpicture}
\caption{Sequential control flow and sequential data flow encapsulated by the total sequential computon from Figure \ref{fig:computon-sequential-total-example}(a). We label some places for mapping purposes even though Petri nets are not labelled (see Section \ref{sec:operational-semantics}).}
\label{fig:encapsulation-sequential-total}
\end{figure}

We know a partial sequential computon $\lambda_1\rhd_\rho\lambda_2$ is the pushout of a partially sequentiable span $\rho:=\lambda_1\xleftarrow{\alpha_1}\lambda_0\xrightarrow{\alpha_2}\lambda_2$ of computon morphisms. By Definition \ref{def:span-sequentiable}, some of the e-outports of $\lambda_1$ are matched with some of the e-inports of $\lambda_2$, with the characteristic there always is at least one ec-outport in $C_1^-$ identified with some ec-inport in $C_2^+$ (because the apex $\lambda_0$ is necessarily a trivial computon with at least one ec-inoutport --- see Definition \ref{def:computon}). For this reason, $\lambda_1\rhd_\rho\lambda_2$ also encapsulates sequential control flow and up to partial sequential data flow. By partial, we mean some data elements are passed from the left operand to the right one. To give a concrete example, Figure \ref{fig:encapsulation-sequential-partial} shows the encapsulation given by the partial sequential computon that results from the pushout construction depicted in Figure \ref{fig:computon-sequential-example}.

\begin{figure}[!h]
\centering
\begin{tikzpicture}
\begin{scope}
\computonComposite{0.95}{-1}{5.1}{2.6};
\qmatch{3q}{3.5}{1.2}{};\flow{{2.5,1.2}}{{3.5,1.2}}{dashed}{};\flow{{3.6,1.2}}{{4.6,1.2}}{dashed}{};
\dmatch{3d}{3.5}{0.6}{$3$}{above};\flow{{2.5,0.6}}{{3.5,0.6}}{}{};\flow{{3.56,0.6}}{{4.6,0.6}}{}{};
\computonPrimitive{1.5}{-0.5}{1}{1.9}{$\lambda_1$};
\qin{3q0}{0.7}{1.2}{};
\din{3i1}{0.7}{0.6}{$1$};
\din{3i2}{0.7}{0}{$2$};
\node[circle,fill=black,inner sep=0pt,minimum size=3pt,label=right:\scriptsize $4$] (3o2) at (6.5,-0.6) {};\flow{{2.5,0}}{3o2}{}{bend right=20};
\computonPrimitive{4.5}{-0.5}{1}{1.9}{$\lambda_2$}
\node[circle,draw=black,fill=white,inner sep=0pt,minimum size=3pt,label=left:\scriptsize $4$] (3i3) at (0.5,-0.6) {};\flow{3i3}{{4.5,0}}{}{bend right=20};
\qout{3q1}{5.5}{1.2}{};
\dout{3o1}{5.5}{0.6}{$5$};
\end{scope}
\draw[opacity=0.3,line width=1.5pt, ->, -Latex] (7,0.5) to node[pos=0.4,yshift=7,xshift=-6]{\scriptsize $\mathcal{C}\circ \mathfrak{E}$} (8.5,1.5);
\begin{scope}[xshift=8cm,yshift=2cm]
\node[opacity=0.2] at (2,1){\scriptsize Control Flow Net};
\node[place,label={80:},minimum size=3mm] (4i) at (0,0) {};
\node[transition,fill=black,minimum width=0.1mm,minimum height=10mm] (4) at (1,0) {};
\node[place,label={80:},minimum size=3mm] (4o) at (2,0) {};
\node[transition,fill=black,minimum width=0.1mm,minimum height=10mm] (3) at (3,0) {};
\node[place,label={80:},minimum size=3mm] (3o) at (4,0) {};
\draw[-latex,thick] (4i)--(4);\draw[-latex,thick](4)--(4o);\draw[-latex,thick](4o)--(3);\draw[-latex,thick](3)--(3o);
\end{scope}
\draw[opacity=0.3,line width=1.5pt, ->, -Latex] (7,0.2) to node[pos=0.4,yshift=-4,xshift=-5]{\scriptsize $\mathcal{D}$} (8.5,-0.5);
\begin{scope}[xshift=8cm,yshift=-1cm]
\node[opacity=0.2] at (2,1){\scriptsize Data Flow Net};
\node[place,label={80:},minimum size=3mm] (4i1) at (0,0.3) {\scriptsize $1$};
\node[place,label={80:},minimum size=3mm] (4i2) at (0,-0.3) {\scriptsize $2$};
\node[place,label={80:},minimum size=3mm] (4i4) at (0,-0.8) {\scriptsize $4$};
\node[transition,fill=black,minimum width=0.1mm,minimum height=10mm] (4) at (1,0) {};
\node[place,label={80:},minimum size=3mm] (4o1) at (2,0) {\scriptsize $3$};
\node[transition,fill=black,minimum width=0.1mm,minimum height=10mm] (3) at (3,0) {};
\node[place,label={80:},minimum size=3mm] (3o5) at (4,0) {\scriptsize $5$};
\node[place,label={80:},minimum size=3mm] (3o4) at (4,-0.8) {\scriptsize $4$};
\draw[-latex,thick] (4i1)--(4);\draw[-latex,thick] (4i2)--(4);\draw[-latex,thick](4)--(4o1);\draw[-latex,thick](4o1)--(3);\draw[-latex,thick](3)--(3o5);
\draw[-latex,thick] (4i4) to[bend right=20] (3);\draw[-latex,thick](4) to[bend right=20] (3o4);
\end{scope}
\draw[opacity=0.3,line width=1.5pt, ->, -Latex] (3.5,-1.2) to node[pos=0.4,yshift=-4,xshift=-10]{\scriptsize $\mathcal{N}$} (3.5,-2.1);
\begin{scope}[xshift=1.5cm,yshift=-2.8cm]
\node[opacity=0.2] at (-1.5,0.3){\scriptsize Control and};\node[opacity=0.2] at (-1.5,0){\scriptsize Data Flow};\node[opacity=0.2] at (-1.5,-0.3){\scriptsize Net};
\node[place,label={80:},minimum size=3mm] (4i1c) at (0,0.5) {};
\node[place,label={80:},minimum size=3mm] (4i1) at (0,0) {\scriptsize $1$};
\node[place,label={80:},minimum size=3mm] (4i2) at (0,-0.5) {\scriptsize $2$};
\node[place,label={80:},minimum size=3mm] (4i4) at (0,-1) {\scriptsize $4$};
\node[transition,fill=black,minimum width=0.1mm,minimum height=10mm] (4) at (1,0) {};
\node[place,label={80:},minimum size=3mm] (4o1c) at (2,0.3) {};
\node[place,label={80:},minimum size=3mm] (4o1) at (2,-0.3) {\scriptsize $3$};
\node[transition,fill=black,minimum width=0.1mm,minimum height=10mm] (3) at (3,0) {};
\node[place,label={80:},minimum size=3mm] (3o5c) at (4,0.5) {};
\node[place,label={80:},minimum size=3mm] (3o5) at (4,-0.5) {\scriptsize $5$};
\node[place,label={80:},minimum size=3mm] (3o4) at (4,-1) {\scriptsize $4$};
\draw[-latex,thick] (4i1c)--(4);\draw[-latex,thick] (4i1)--(4);\draw[-latex,thick] (4i2)--(4);\draw[-latex,thick](4)--(4o1c);\draw[-latex,thick](4)--(4o1);\draw[-latex,thick](4o1c)--(3);\draw[-latex,thick](4o1)--(3);\draw[-latex,thick](3)--(3o5);
\draw[-latex,thick] (4i4) to[bend right=20] (3);\draw[-latex,thick](3)--(3o5c);\draw[-latex,thick](4) to[bend right=20] (3o4);
\end{scope}
\end{tikzpicture}
\caption{Sequential control flow and partial sequential data flow encapsulated by the partial sequential computon from Figure \ref{fig:computon-sequential-example}. We label some places for mapping purposes even though Petri nets are not labelled (see Section \ref{sec:operational-semantics}).}
\label{fig:encapsulation-sequential-partial}
\end{figure}

In Figure \ref{fig:encapsulation-sequential-partial}, it is easy to observe that not all data from $\lambda_1$ is passed to $\lambda_2$, even though control flow is necessarily sequential. In other words, data does not always follow control within a partial sequential computon, as a result of identifying some ed-outports of the left operand with some ed-inports of the right one. For example, in our scenario, only a $3$-coloured data item is passed from $\lambda_1$ to $\lambda_2$, whilst $\lambda_2$/$\lambda_1$ consumes/produces a $4$-coloured data item from/for the external environment. 

\subsection{Parallel Computons} \label{sec:parallel-computons}

In this subsection, we present two major classes of parallel computons, \emph{p-async} and \emph{p-sync}, which allow the asynchronous and synchronous execution of connected computons, respectively. These two classes serve to capture parallel processes that do not interfere with one another.

\subsubsection{Asynchronous Parallel Computons}

A \emph{p-async computon} intutitively permits the independent, simultaneous execution of two connected computons, without the need of forking or synchronizing control. Its formal notion is given in Definition \ref{def:computon-parallel-async}.

\begin{definition}\label{def:computon-parallel-async}
A p-async computon is the coproduct of two connected computons.
\end{definition}

Definition \ref{def:computon-parallel-async} implies that a p-async composite puts two connected computons side by side by offering multiple ec-inports to trigger some control-driven computation concurrently. Figure \ref{fig:computon-parallel-async-example} depicts a self-descriptive example for the construction of a p-async computon in which the connected computons being put in parallel are the same we used in Figure \ref{fig:computon-sequential-example}. This example demonstrates a particular feature of our theory, which is to allow the composition of the same connected computons into sequential or p-async composite structures, no matter the data such computons require or produce. Theorem \ref{th:computon-parallel-async-always} generalises this assertion by stating that two arbitrary connected computons are sufficient and necessary to form a p-async computon which, by Proposition \ref{prop:computon-parallel-async-connected}, is always connected. As per Propositions \ref{prop:computon-parallel-async-commutative} and \ref{prop:computon-parallel-async-associative}, the operation for forming a p-async computon is both commutative and associative. 

\begin{figure}[!h]
\centering
{
\begin{tikzpicture}
\computonPrimitive{1}{8.2}{1}{1.9}{$\lambda_1$};
\qin{1q0}{0.2}{9.9}{};
\din{1i1}{0.2}{9.3}{$1$};
\din{1i2}{0.2}{8.7}{$2$};
\qout{1q1}{2}{9.9}{};
\dout{1o1}{2}{9.3}{$3$};
\dout{1o2}{2}{8.7}{$4$};

\computonPrimitive{5}{8.2}{1}{1.9}{$\lambda_2$};
\qin{2q0}{4.2}{9.9}{};
\din{2i1}{4.2}{9.3}{$3$};
\din{2i2}{4.2}{8.7}{$4$};
\qout{2q1}{6}{9.9}{};
\dout{2o1}{6}{9.3}{$5$};

\draw[->] (1.5,8) to[bend right=25] node[left]{\scriptsize$\beta_1$}  (1.80,7);
\draw[->] (5.5,8) to[bend left=25] node[right]{\scriptsize$\beta_2$}  (5.2,7);

\computonComposite{2.45}{2.9}{2.2}{4.5};
\computonPrimitive{3}{5.3}{1}{1.9}{$\lambda_1$};
\qin{1q0}{2.2}{7}{};
\din{1i1}{2.2}{6.4}{$1$};
\din{1i2}{2.2}{5.8}{$2$};
\qout{1q1}{4}{7}{};
\dout{1o1}{4}{6.4}{$3$};
\dout{1o2}{4}{5.8}{$4$};
\computonPrimitive{3}{3}{1}{1.9}{$\lambda_2$};
\qin{1q0}{2.2}{4.7}{};
\din{1i1}{2.2}{4.1}{$3$};
\din{1i2}{2.2}{3.5}{$4$};
\qout{1q1}{4}{4.7}{};
\dout{1o1}{4}{4.1}{$5$};
\end{tikzpicture}
}
{
\begin{tikzpicture}
\begin{scope}
\draw[draw=black,fill=white] (0,0.4) rectangle ++(0.12,0.25);
\node[inner sep=0pt,minimum size=0pt,label=right:{\scriptsize Computation unit}] at (0.2,0.5) {};
\draw[dashed] (3.2,0.5) to node[pos=0.5]{\arrowflow} (3.8,0.5);
\node[inner sep=0pt,minimum size=3pt,label=right:{\scriptsize Control flow edge}] at (3.8,0.5) {};
\node[draw=black,fill=white,inner sep=0pt,minimum size=3pt,label=right:{\scriptsize Ec-inport}] at (6.8,0.5) {};
\node[fill=black,inner sep=0pt,minimum size=3pt,label={right:{\scriptsize Ec-outport}}] at (8.8,0.5) {};
\end{scope}

\begin{scope}[xshift=1.5cm]
\draw (0,0) to node[pos=0.5]{\arrowflow} (0.5,0);
\node[inner sep=0pt,minimum size=3pt,label=right:{\scriptsize Data flow edge}] at (0.5,0){};
\node[circle,draw=black,fill=white,inner sep=0pt,minimum size=3pt,label={right:{\scriptsize Ed-inport}}] at (3.3,0) {};
\node[circle,fill=black,inner sep=0pt,minimum size=3pt,label={right:{\scriptsize Ed-outport}}] at (5.3,0) {};
\end{scope}
\end{tikzpicture}
}
\caption{Constructing a p-async computon $\lambda_1+\lambda_2$ where $\lambda_1$ and $\lambda_2$ are isomorphic to the respective left and right operands presented in Figure \ref{fig:computon-sequential-example}.}
\label{fig:computon-parallel-async-example}
\end{figure}

\begin{theorem}\label{th:computon-parallel-async-always}
$\lambda_1$ and $\lambda_2$ are connected computons $\iff$ the p-async computon $\lambda_1+\lambda_2$ exists.
\end{theorem}
\begin{proof}
The implication and reverse implication follow from Proposition \ref{prop:computon-coproduct} and Definition \ref{def:computon-parallel-async}, respectively.
\end{proof}

\begin{proposition}\label{prop:computon-parallel-async-connected}
A p-async computon is a connected computon. 
\end{proposition}
\begin{proof}
The proof trivially follows from Definition \ref{def:computon-parallel-async} and Proposition \ref{prop:computon-coproduct-connected}.
\end{proof}

\begin{proposition}[Asynchronous parallel composition is commutative]\label{prop:computon-parallel-async-commutative}
There is an isomorphism between $\lambda_1+\lambda_2$ and $\lambda_2+\lambda_1$ for any (connected) computons $\lambda_1$ and $\lambda_2$.
\end{proposition}
\begin{proof}
The proof follows directly from the well-known fact that categorical coproduct is commutative up to unique isomorphism.
\end{proof}

\begin{proposition}[Asynchronous parallel composition is associative]\label{prop:computon-parallel-async-associative}
There is an isomorphism between $(\lambda_1+\lambda_2)+\lambda_3$ and $\lambda_1+(\lambda_2+\lambda_3)$ for any (connected) computons $\lambda_1$, $\lambda_2$ and $\lambda_3$.
\end{proposition}
\begin{proof}
The proof follows directly from the well-known fact that categorical coproduct is associative up to unique isomorphism.
\end{proof}

\subsubsection{Operational semantics for p-async computons (in the theory of Petri nets)}

No matter whether we use any of the three functors from Section \ref{sec:operational-semantics}, the Petri net of a p-async computon does not introduce any additional places or transitions and the nets of the operands do not interact in any way. This structural organisation, depicted in Figure \ref{fig:computon-parallel-async-net}, results from defining a p-async computon in the form of a coproduct construction. By Proposition \ref{prop:computon-parallel-async-deadlock} and Remark \ref{rem:parallel-asynchronous-deadlock}, any p-async's net is deadlock free when the nets of the composed computons are too. 

\begin{figure}[!h]
\centering
{
\begin{tikzpicture}
\node[place,label={left:\scriptsize $p_n$},minimum size=3mm] (pn) at (0,2) {};
\node at (0,2.6){$\vdots$};
\node[place,label={left:\scriptsize $p_1$},minimum size=3mm] (p1) at (0,3) {};
\draw[dotted] (0.7,1.7) rectangle (2.2,3.2);\node at (1.4,2.4){\scriptsize $\lambda_1$-net};
\node[place,label={right:\scriptsize $q_1$},minimum size=3mm] (q1) at (2.9,3) {};
\node at (2.9,2.6){$\vdots$};
\node[place,label={right:\scriptsize $q_j$},minimum size=3mm] (qj) at (2.9,2) {};

\node[place,label={left:\scriptsize $r_m$},minimum size=3mm] (rm) at (0,0) {};
\node at (0,0.6){$\vdots$};
\node[place,label={left:\scriptsize $r_1$},minimum size=3mm] (r1) at (0,1) {};
\draw[dotted] (0.7,-0.3) rectangle (2.2,1.2);\node at (1.4,0.4){\scriptsize $\lambda_2$-net};
\node[place,label={right:\scriptsize $s_1$},minimum size=3mm] (s1) at (2.9,1) {};
\node at (2.9,0.6){$\vdots$};
\node[place,label={right:\scriptsize $s_k$},minimum size=3mm] (sk) at (2.9,0) {};

\draw[-latex,thick] (p1) -- ($(p1)+(0.7,-0.2)$);
\draw[-latex,thick] (pn) -- ($(pn)+(0.7,0.2)$);
\draw[-latex,thick] ($(q1)+(-0.7,-0.2)$) -- (q1);
\draw[-latex,thick] ($(qj)+(-0.7,0.2)$) -- (qj);
\draw[-latex,thick] (r1) -- ($(r1)+(0.7,-0.2)$);
\draw[-latex,thick] (rm) -- ($(rm)+(0.7,0.2)$);
\draw[-latex,thick] ($(s1)+(-0.7,-0.2)$) -- (s1);
\draw[-latex,thick] ($(sk)+(-0.7,0.2)$) -- (sk);
\end{tikzpicture}
}
\caption{General structure of the Petri net of a p-async computon ${\lambda_1+\lambda_2}$ constructed from a connected computon $\lambda_1$ with $n$ e-inports and $j$ e-outports, and a connected computon $\lambda_2$ with $m$ e-inports and $k$ e-outports. This structure is applicable to all the functorial constructions presented in Section \ref{sec:operational-semantics}, namely $\mathcal{N}$, $\mathcal{C}\circ\mathfrak{E}$ and $\mathcal{D}$.}
\label{fig:computon-parallel-async-net}
\end{figure}

\begin{proposition}\label{prop:computon-parallel-async-deadlock}
If $\mathcal{N}(\lambda_1)$ and $\mathcal{N}(\lambda_2)$ are deadlock-free, $\mathcal{N}(\lambda_1+\lambda_2)$ is deadlock-free.
\end{proposition}
\begin{proof}
If the net from Figure \ref{fig:computon-parallel-async-net} corresponds to ${\mathcal{N}(\lambda_1+\lambda_2)}$, Definition \ref{def:marking} says the initial state ${M_i}$ of such a net is a marking function where ${M_i(p)>0}$ for all ${p\in\{p_1,\ldots,p_n,r_1,\ldots,r_m\}}$ and no tokens for all the other places, including those inside ${\mathcal{N}(\lambda_1)}$ and ${\mathcal{N}(\lambda_2)}$. As this marking evidently reaches states from ${\mathcal{N}(\lambda_1)}$ and ${\mathcal{N}(\lambda_2)}$ in parallel, we have that ${\mathcal{N}(\lambda_1+\lambda_2)}$ is deadlock-free whenever ${\mathcal{N}(\lambda_1)}$ and ${\mathcal{N}(\lambda_2)}$ are.
\end{proof}

\begin{remark}\label{rem:parallel-asynchronous-deadlock}
Although it is a statement about $\mathcal{N}$, Proposition \ref{prop:computon-parallel-async-deadlock} is applicable to the functors $\mathcal{C}\circ\mathfrak{E}$ and $\mathcal{D}$ presented in Section \ref{sec:operational-semantics}. The proof is valid for $\mathcal{C}\circ\mathfrak{E}$ since Proposition \ref{prop:functor-control-petri} says ${\mathcal{C}}$ is just a restriction of $\mathcal{N}$ to $\mathfrak{E}(\textbf{Set}^{\textbf{Comp}})$. 

For $\mathcal{D}$, we are only interested in nets with initial and final states (see Remark \ref{rem:deadlock-freedom}). Thus, considering the form depicted in Figure \ref{fig:computon-parallel-async-net}, we have the following cases for a net $\mathcal{D}(\lambda_1+\lambda_2)$:
\begin{enumerate}
\item If $m=0$ and $n>0$, or $m>0$ and $n=0$, only states of $\mathcal{D}(\lambda_1)$ or $\mathcal{D}(\lambda_2)$ are reached from $M_i$. Since both nets are deadlock-free, $\mathcal{D}(\lambda_1+\lambda_2)$ is deadlock-free. 
\item If $m>0$ and $n>0$, states of $\mathcal{D}(\lambda_1)$ and $\mathcal{D}(\lambda_2)$ are simultaneously reached from $M_i$. As both nets are deadlock-free, $\mathcal{D}(\lambda_1+\lambda_2)$ is deadlock-free.
\end{enumerate}
\end{remark}

\subsubsection{Encapsulation of control flow and data flow in p-async computons}

By Definition \ref{def:computon-parallel-async}, we know a p-async computon results from a coproduct construction built upon disjoint union. Consequently, there are no ports of one computon identified with ports of the other, meaning there is no way of structurally exchanging either data or control. For this reason, a p-async computon encapsulates asynchronous parallel control flow and up to asynchronous parallel data flow. To give a concrete example, Figure \ref{fig:encapsulation-pasync} shows the encapsulation given by the p-async computon from Figure \ref{fig:computon-parallel-async-example}.

\begin{figure}[!h]
\centering
\begin{tikzpicture}
\begin{scope}[xshift=0cm,yshift=-4cm]
\computonComposite{2.45}{2.9}{2.2}{4.5};
\computonPrimitive{3}{5.3}{1}{1.9}{$\lambda_1$};
\qin{1q0}{2.2}{7}{};
\din{1i1}{2.2}{6.4}{$1$};
\din{1i2}{2.2}{5.8}{$2$};
\qout{1q1}{4}{7}{};
\dout{1o1}{4}{6.4}{$3$};
\dout{1o2}{4}{5.8}{$4$};
\computonPrimitive{3}{3}{1}{1.9}{$\lambda_2$};
\qin{1q0}{2.2}{4.7}{};
\din{1i1}{2.2}{4.1}{$3$};
\din{1i2}{2.2}{3.5}{$4$};
\qout{1q1}{4}{4.7}{};
\dout{1o1}{4}{4.1}{$5$};
\end{scope}
\draw[opacity=0.3,line width=1.5pt, ->, -Latex] (5.5,1.5) to node[pos=0.4,yshift=7,xshift=-8]{\scriptsize $\mathcal{C}\circ \mathfrak{E}$} (6.7,3.5);
\begin{scope}[xshift=7.2cm,yshift=3cm]
\node[opacity=0.2] at (3.4,0.9){\scriptsize Control Flow};\node[opacity=0.2] at (3.4,0.6){\scriptsize Net};
\node[place,label={80:},minimum size=3mm] (4i) at (0,1.3) {};
\node[transition,fill=black,minimum width=0.1mm,minimum height=10mm] (4) at (1,1.3) {};
\node[place,label={80:},minimum size=3mm] (4o) at (2,1.3) {};

\node[place,label={80:},minimum size=3mm] (4ix) at (0,0) {};
\node[transition,fill=black,minimum width=0.1mm,minimum height=10mm] (4x) at (1,0) {};
\node[place,label={80:},minimum size=3mm] (4ox) at (2,0) {};

\draw[-latex,thick] (4i)--(4);\draw[-latex,thick](4)--(4o);
\draw[-latex,thick] (4ix)--(4x);\draw[-latex,thick](4x)--(4ox);
\end{scope}
\draw[opacity=0.3,line width=1.5pt, ->, -Latex] (5.5,1.2) to node[pos=0.4,yshift=-4,xshift=-5]{\scriptsize $\mathcal{D}$} (6.7,0);
\begin{scope}[xshift=7.2cm,yshift=-0.7cm]
\node[opacity=0.2] at (3.4,0.9){\scriptsize Data Flow};\node[opacity=0.2] at (3.4,0.6){\scriptsize Net};
\node[place,label={80:},minimum size=3mm] (4i1) at (0,2) {\scriptsize $1$};
\node[place,label={80:},minimum size=3mm] (4i2) at (0,1) {\scriptsize $2$};
\node[transition,fill=black,minimum width=0.1mm,minimum height=10mm] (4) at (1,1.5) {};
\node[place,label={80:},minimum size=3mm] (4o3) at (2,2) {\scriptsize $3$};
\node[place,label={80:},minimum size=3mm] (4o4) at (2,1) {\scriptsize $4$};

\node[place,label={80:},minimum size=3mm] (4ix3) at (0,0.5) {\scriptsize $3$};
\node[place,label={80:},minimum size=3mm] (4ix4) at (0,-0.5) {\scriptsize $4$};
\node[transition,fill=black,minimum width=0.1mm,minimum height=10mm] (4x) at (1,0) {};
\node[place,label={80:},minimum size=3mm] (4ox) at (2,0) {\scriptsize $5$};

\draw[-latex,thick] (4i1)--(4);\draw[-latex,thick] (4i2)--(4);\draw[-latex,thick](4)--(4o3);\draw[-latex,thick](4)--(4o4);
\draw[-latex,thick] (4ix3)--(4x);\draw[-latex,thick] (4ix4)--(4x);\draw[-latex,thick](4x)--(4ox);
\end{scope}
\draw[opacity=0.3,line width=1.5pt, ->, -Latex] (2,1.2) to node[pos=0.4,yshift=6]{\scriptsize $\mathcal{N}$} (0.8,1.2);
\begin{scope}[xshift=-1.5cm,yshift=0.5cm]
\node[opacity=0.2] at (-1.7,1){\scriptsize Control and};\node[opacity=0.2] at (-1.7,0.7){\scriptsize Data Flow};\node[opacity=0.2] at (-1.7,0.4){\scriptsize Net};
\node[place,label={80:},minimum size=3mm] (4i1) at (0,2) {};
\node[place,label={80:},minimum size=3mm] (c) at (0,0) {\scriptsize $3$};
\node[place,label={80:},minimum size=3mm] (4i2) at (0,1) {\scriptsize $2$};
\node[transition,fill=black,minimum width=0.1mm,minimum height=10mm] (4) at (1,1.5) {};
\node[place,label={80:},minimum size=3mm] (d) at (2,0.3) {};
\node[place,label={80:},minimum size=3mm] (4o3) at (2,2) {};
\node[place,label={80:},minimum size=3mm] (4o4) at (2,1) {\scriptsize $4$};

\node[place,label={80:},minimum size=3mm] (e) at (0,1.5) {\scriptsize $1$};
\node[place,label={80:},minimum size=3mm] (4ix3) at (0,0.5) {};
\node[place,label={80:},minimum size=3mm] (4ix4) at (0,-0.5) {\scriptsize $4$};
\node[transition,fill=black,minimum width=0.1mm,minimum height=10mm] (4x) at (1,0) {};
\node[place,label={80:},minimum size=3mm] (f) at (2,1.5) {\scriptsize $3$};
\node[place,label={80:},minimum size=3mm] (4ox) at (2,-0.3) {\scriptsize $5$};

\draw[-latex,thick] (4i1)--(4);\draw[-latex,thick] (4i2)--(4);\draw[-latex,thick](4)--(4o3);\draw[-latex,thick](4)--(4o4);
\draw[-latex,thick] (4ix3)--(4x);\draw[-latex,thick] (4ix4)--(4x);\draw[-latex,thick](4x)--(4ox);
\draw[-latex,thick] (e)--(4);\draw[-latex,thick] (c)--(4x);\draw[-latex,thick] (4x)--(d);\draw[-latex,thick] (4)--(f);
\end{scope}
\end{tikzpicture}
\caption{Asynchronous parallel control flow and asynchronous parallel data flow encapsulated by the p-async computon from Figure \ref{fig:computon-parallel-async-example}. We label some places for mapping purposes even though Petri nets are not labelled (see Section \ref{sec:operational-semantics}).}
\label{fig:encapsulation-pasync}
\end{figure}

\subsubsection{Synchronous Parallel Computons}\label{sec:p-sync-sub}

Structurally, a \emph{p-sync computon} consists of a fork computon, two arbitrary connected computons and a join computon. The roles of the fork and join are to split into and synchronise control from the connected computons, respectively. Formally, a p-sync computon is constructed from a so-called \emph{p-diagram} which satisfies the requirements imposed by Definition \ref{def:diagram-p}. Such a composite computon is defined as a colimit in $\textbf{Set}^\textbf{Comp}$ which, by Lemma \ref{lem:computon-parallel-sync-exists}, can always be computed via coproduct and pushout constructions.

\begin{definition}[P-Diagram]\label{def:diagram-p}
A p-diagram $\rho$ is a diagram with the following shape in $\textbf{Set}^\textbf{Comp}$:
\[
\begin{tikzcd}
 & & \lambda_4 & & \\
 & \lambda_0 \arrow[ur, "\alpha_0"]\arrow[dl, "\alpha_2"'] & & \lambda_1 \arrow[ul, "\alpha_1"']\arrow[dr, "\alpha_3"] & \\ 
\lambda_6 & & & & \lambda_7 \\
 & \lambda_2 \arrow[ul, "\alpha_4"]\arrow[dr, "\alpha_6"'] & & \lambda_3 \arrow[ur, "\alpha_5"']\arrow[dl, "\alpha_7"] & \\ 
 & & \lambda_5 & & 
\end{tikzcd}
\]
where:
\begin{enumerate}
\item $\lambda_0$, $\lambda_1$, $\lambda_2$ and $\lambda_3$ are unit computons (see Definition \ref{def:computon-unit}), \label{def:computon-parallel-sync-diagram-1}
\item $\lambda_4$ and $\lambda_5$ are connected computons, \label{def:computon-parallel-sync-diagram-2}
\item $\lambda_6$ is a fork computon, \label{def:computon-parallel-sync-diagram-3}
\item $\lambda_7$ is a join computon, \label{def:computon-parallel-sync-diagram-4}
\item ${\lambda_6 \xleftarrow{\alpha_2} \lambda_0 \xrightarrow{\alpha_0} \lambda_4}$, ${\lambda_4 \xleftarrow{\alpha_1} \lambda_1 \xrightarrow{\alpha_3} \lambda_7}$, ${\lambda_5 \xleftarrow{\alpha_7} \lambda_3 \xrightarrow{\alpha_5} \lambda_7}$ and ${\lambda_6 \xleftarrow{\alpha_4} \lambda_2 \xrightarrow{\alpha_6} \lambda_5}$ are partially sequentiable spans of computon morphisms, \label{def:computon-parallel-sync-diagram-5}  
\item $\alpha_2(P_0) \neq \alpha_4(P_2)$ and \label{def:computon-parallel-sync-diagram-6}
\item $\alpha_3(P_1) \neq \alpha_5(P_3)$. \label{def:computon-parallel-sync-diagram-7}
\end{enumerate}
\end{definition}

\begin{definition}[P-Sync Computon]\label{def:computon-parallel-sync}
A p-sync computon is the colimit of a p-diagram.
\end{definition}

\begin{notation} \label{notation:parallel-sync}
For convenience, we use a pipe to reflect the fact that two connected computons are being put into a synchronous parallel structure. For example, we write ${\lambda_{4} \mid_{\rho} \lambda_{5}}$ for the p-sync computon obtained by computing the colimit of the p-diagram $\rho$ shown in Definition \ref{def:diagram-p}. 
\end{notation}

\begin{lemma}\label{lem:computon-parallel-sync-exists}
A p-sync computon can always be constructed in $\textbf{Set}^\textbf{Comp}$.
\end{lemma}
\begin{proof}
Considering the p-diagram shown in Definition \ref{def:diagram-p}, let ${(\beta_1:\lambda_6 \rightarrow \lambda_8, \lambda_8, \beta_2:\lambda_4 \rightarrow \lambda_8)}$, ${(\beta_3:\lambda_4 \rightarrow \lambda_9, \lambda_9, \beta_4:\lambda_7 \rightarrow \lambda_9)}$, ${(\beta_5:\lambda_6 \rightarrow \lambda_{10}, \lambda_{10}, \beta_6:\lambda_5 \rightarrow \lambda_{10})}$ and ${(\beta_7:\lambda_5 \rightarrow \lambda_{11}, \lambda_{11},} \beta_8:\lambda_7 \rightarrow \lambda_{11})$ be the respective pushouts of the partially sequentiable spans ${\lambda_6 \xleftarrow{\alpha_2} \lambda_0 \xrightarrow{\alpha_0} \lambda_4}$, ${\lambda_4 \xleftarrow{\alpha_1} \lambda_1 \xrightarrow{\alpha_3} \lambda_7}$, ${\lambda_6 \xleftarrow{\alpha_4} \lambda_2 \xrightarrow{\alpha_6} \lambda_5}$ and ${\lambda_5 \xleftarrow{\alpha_7} \lambda_3 \xrightarrow{\alpha_5} \lambda_7}$. By Definition \ref{def:computon-sequential} and Lemma \ref{lem:span-sequentiable-pushable}, we know such pushouts can be constructed to yield partial sequential computons. For example, $\lambda_8$ corresponds to the partial sequential computon $\lambda_6\rhd_{\rho_1}\lambda_4$ constructed from the partially sequentiable span ${\rho_1:=\lambda_6 \xleftarrow{\alpha_2} \lambda_0 \xrightarrow{\alpha_0} \lambda_4}$.

We now show that the induced span ${\lambda_8 \xleftarrow{\beta_2} \lambda_4 \xrightarrow{\beta_3} \lambda_9}$ is pushable. For this, let $p_8 \in \beta_2(\vec{o}(\beta_3))$ so there exists some ${p_4 \in \vec{o}(\beta_3)}$ where ${\beta_2(p_4)=p_8}$. Since $p_4 \in \vec{o}(\beta_3)$ and $\beta_3$ is induced by the pushout of the partially sequentiable span ${\lambda_4 \xleftarrow{\alpha_1} \lambda_1 \xrightarrow{\alpha_3} \lambda_7}$, we have $\alpha_1(p_1)=p_4 \in P_4^-$ for the unique port $p_1 \in P_1$ (recall $\lambda_1$ is a unit computon and $\lambda_4$ is an arbitary connected computon --- see Conditions \ref{def:computon-parallel-sync-diagram-1} and \ref{def:computon-parallel-sync-diagram-2} of Definition \ref{def:diagram-p}). By Proposition \ref{prop:computon-sequential-inports-outports-1}, ${\beta_2(P_4^-) \subseteq P_8^-}$ because $\lambda_8$ is the partial sequential computon $\lambda_6\rhd_{\rho_1}\lambda_4$. So, $\beta_2(p_4)=p_8 \in P_8^-$. 

As the other conditions of Definition \ref{def:computon-morphisms-pushable} follow analogously, we have that the pushout ${(\beta_9:\lambda_8 \rightarrow \lambda_{12}, \lambda_{12}, \beta_{10}:\lambda_9 \rightarrow \lambda_{12})}$ of ${\lambda_8 \xleftarrow{\beta_2} \lambda_4 \xrightarrow{\beta_3} \lambda_9}$ can be constructed. A similar approach can be used to prove the existence of the pushout ${(\beta_{11}:\lambda_{10} \rightarrow \lambda_{13}, \lambda_{13}, \beta_{12}:\lambda_{11} \rightarrow \lambda_{13})}$ of the induced span ${\lambda_{10} \xleftarrow{\beta_6} \lambda_5 \xrightarrow{\beta_7} \lambda_{11}}$. 

Now, Proposition \ref{prop:computon-coproduct} says that $\lambda_6+\lambda_7$ can be formed. By the universal property of coproducts, we deduce there are unique computon morphisms ${(\beta_{9}\circ\beta_1,\beta_{10}\circ\beta_4):\lambda_6+\lambda_7 \rightarrow \lambda_{12}}$ and ${(\beta_{11}\circ\beta_5,\beta_{12}\circ\beta_8):\lambda_6+\lambda_7 \rightarrow \lambda_{13}}$. Considering Conditions \ref{def:computon-parallel-sync-diagram-6} and \ref{def:computon-parallel-sync-diagram-7} of Definition \ref{def:diagram-p}, it is routine to check that the pushout $\lambda_{14}$ of ${\lambda_{13} \xleftarrow{(\beta_{11}\circ\beta_5,\beta_{12}\circ\beta_8)} \lambda_6+\lambda_7 \xrightarrow{(\beta_{9}\circ\beta_1,\beta_{10}\circ\beta_4)} \lambda_{12}}$ can be constructed. 

To check that $\lambda_{14}$ is indeed the colimit of the p-diagram shown in Definition \ref{def:diagram-p}, consider the following cone:
\vspace{-7pt}
\[
\begin{tikzcd}[row sep=scriptsize, column sep=scriptsize]
 & & \lambda_4 \arrow[ddddr, opacity=0.4] & & & \lambda_1 \arrow[lll, crossing over, "\alpha_1"']\arrow[rrr, crossing over, "\alpha_3"]\arrow[ddddll, opacity=0.4] & & & \lambda_7 \arrow[ddddlllll, bend right=6, opacity=0.4] \\
 & \lambda_0 \arrow[ur, crossing over, "\alpha_0"]\arrow[dl, crossing over, "\alpha_2"']\arrow[dddrr, opacity=0.4] & & & & & & \lambda_3 \arrow[ur, crossing over, "\alpha_5"']\arrow[dl, crossing over, "\alpha_7"]\arrow[dddllll, bend left=2, opacity=0.4] & \\
\lambda_6 \arrow[ddrrr, opacity=0.4] & & & \lambda_2 \arrow[lll, crossing over, "\alpha_4"]\arrow[rrr, crossing over, "\alpha_6"']\arrow[dd, opacity=0.4] & & & \lambda_5 \arrow[ddlll, bend left=10, opacity=0.4] & & \\
 & & & & & & & & \\
 & & & \lambda_{15} & & & & &
\end{tikzcd}
\]
Since ${\lambda_{8}}$, ${\lambda_{9}}$, ${\lambda_{10}}$ and ${\lambda_{11}}$ are pushouts of the spans of the original p-diagram, by the universal property of pushouts, it is true that there are unique computon morphisms ${\lambda_{8} \rightarrow \lambda_{15}}$, ${\lambda_{9} \rightarrow \lambda_{15}}$, ${\lambda_{10} \rightarrow \lambda_{15}}$ and ${\lambda_{11} \rightarrow \lambda_{15}}$ that make the corresponding diagram commute. Likewise, as such morphisms exist, there also are unique computon morphisms from the respective pushouts of ${\lambda_8 \xleftarrow{\beta_2} \lambda_4 \xrightarrow{\beta_3} \lambda_9}$ and ${\lambda_{10} \xleftarrow{\beta_6} \lambda_5 \xrightarrow{\beta_7} \lambda_{11}}$. That is, $\lambda_{12} \rightarrow \lambda_{15}$ and $\lambda_{13} \rightarrow \lambda_{15}$ exist. 

In the above cone, it is clear there are morphisms ${\lambda_6 \rightarrow \lambda_{15}}$ and ${\lambda_7 \rightarrow \lambda_{15}}$. Using the universal property of coproducts, we deduce the existence of a unique computon morphism ${\lambda_6+\lambda_7 \rightarrow\lambda_{15}}$. Finally, as ${\lambda_{12} \rightarrow \lambda_{15}}$ and ${\lambda_{13} \rightarrow \lambda_{15}}$ exist, we use the universal property of pushouts again to deduce there is a unique computon morphism ${\lambda_{14} \rightarrow \lambda_{15}}$. As ${\lambda_{14} \rightarrow \lambda_{15}}$ makes everything commute in our construction, it is true that ${\lambda_{14}}$ is the colimit of the original p-diagram.
\end{proof}

\begin{corollary} \label{cor:computon-parallel-sync-connected}
A p-sync computon is a connected computon.
\end{corollary}
\begin{proof}
Consider the construction presented in the proof of Lemma \ref{lem:computon-parallel-sync-exists}. By Proposition \ref{prop:pushout-connected}, we have that $\lambda_8$, $\lambda_9$, $\lambda_{10}$ and $\lambda_{11}$ are connected computons because $\lambda_4$, $\lambda_5$, $\lambda_6$ and $\lambda_7$ also are (recall forks and joins are primitive computons which, by Proposition \ref{prop:computon-primitive-connected}, adhere to Definition \ref{def:computon-connected}). Consequently, the pushouts $\lambda_{12}$ and $\lambda_{13}$ (of the induced spans ${\lambda_8 \xleftarrow{\beta_2} \lambda_4 \xrightarrow{\beta_3} \lambda_9}$ and ${\lambda_{10} \xleftarrow{\beta_6} \lambda_5 \xrightarrow{\beta_7} \lambda_{11}}$, respectively) are connected computons too. Using Proposition \ref{prop:pushout-connected} again, we deduce that the pushout $\lambda_{14}$ of the unique span ${\lambda_{13} \xleftarrow{(\beta_{11}\circ\beta_5,\beta_{12}\circ\beta_8)} \lambda_6+\lambda_7 \xrightarrow{(\beta_{9}\circ\beta_1,\beta_{10}\circ\beta_4)} \lambda_{12}}$ is a connected computon. As $\lambda_{14}$ is the colimit of the original p-diagram (shown in Definition \ref{def:diagram-p}), we conclude that every p-sync computon is a connected computon.
\end{proof}

To elucidate the proof of Lemma \ref{lem:computon-parallel-sync-exists}, Figure \ref{fig:computon-parallel-sync-example} provides a complete, self-descriptive example for constructing a p-sync computon from the connected computons used as operands in one of our examples of partial sequential composition (see Figure \ref{fig:computon-sequential-example}). A glance at Figure \ref{fig:computon-parallel-sync-example} reveals that the initial p-diagram $\rho$ (displayed in the middle) specifies the basic building blocks for constructing a p-sync computon, namely four unit computons, a fork computon, a join computon and two connected computons (i.e., the computons being put in parallel). The construction starts by computing four pushout operations that produce a partial sequential computon each as per Definitions \ref{def:computon-sequential} and \ref{def:diagram-p} (see the squares marked with $S$). The induced morphisms of such pushouts form two spans whose respective pushouts freely behave as in Definition \ref{def:pushout-computation}, i.e., they are pushouts that just ``merge'' computons via some common object (see the squares marked with $M$). In this case, such common objects are $\lambda_1$ and $\lambda_2$, respectively.

\begin{figure*}[!h]
\centering
\begin{tikzpicture}
\begin{scope}[scale=0.84]
\begin{scope}[xshift=0cm,yshift=11cm]\fork{0}{1.6}{}{}{};\join{0}{0.8}{}{}{};\end{scope}
\begin{scope}[xshift=2.6cm,yshift=11.3cm]\draw[->,opacity=0.4] (1.25,0) to node[right]{} (0,1);\end{scope} 
\begin{scope}[xshift=2.6cm,yshift=11cm]\draw[->,opacity=0.4] (5.5,0) to node[right]{} (0,1.7);\end{scope} 
\begin{scope}
\draw[->,dashed,opacity=0.4] (0.4,11.3) to node[right,pos=0.2]{\scriptsize $(\beta_{11}\circ\beta_5,\beta_{12}\circ\beta_8)$} (0.4,0.9) -- (3.1,0.9);
\end{scope}
\begin{scope}
\draw[->,dashed,opacity=0.4] (0.4,13.4) to node[right,pos=0.2]{\scriptsize $(\beta_{9}\circ\beta_{1},\beta_{10}\circ\beta_4)$} (0.4,21.7) -- (3.1,21.7);
\end{scope}
\begin{scope}
\draw[->,opacity=0.4] (11.1,21.7) -- (13.8,21.7) to node[right]{\scriptsize $\beta_{13}$} (13.8,13.6);
\end{scope}
\begin{scope}
\draw[->,opacity=0.4] (11.1,0.9) -- (13.8,0.9) to node[right]{\scriptsize $\beta_{14}$} (13.8,8.8);
\end{scope}

\begin{scope}[xshift=11cm,yshift=8.8cm]
  \computonComposite{0.5}{0.3}{6.4}{4.2}
  
	\forkplain{0}{2.5};
	\qin{fq0}{0.2}{2.75}{};
	\qmatch{fq1}{2}{3.2}{};\flow{fq1}{$(fq1)+(1,0)$}{dashed}{};\flowdiag{{1.3,2.9}}{fq1}{dashed}{}{pos=0.5,rotate=27};	
	\qmatch{fq2}{2}{1.9}{};\flow{fq2}{$(fq2)+(1,0)$}{dashed}{};\flowdiag{{1.3,2.6}}{fq2}{dashed}{}{pos=0.52,rotate=311};
	
	\joinplain{5}{2.5};	
	\qmatch{jq0}{5}{3.2}{};\flow{{4,3.2}}{jq0}{dashed}{};\flowdiag{jq0}{{6,2.9}}{dashed}{}{pos=0.52,rotate=350};
	\qmatch{jq1}{5}{1.9}{};\flow{{4,1.9}}{jq1}{dashed}{};\flowdiag{jq1}{{6,2.6}}{dashed}{}{pos=0.5,rotate=33};
	\qout{jq2}{6.3}{2.75}{};

	\computonPrimitive{3}{2.5}{1}{1.9}{$\lambda_1$};
	\dinplain{1i1}{0.2}{4.2}{$1$};\flow{1i1}{$(1i1)+(2.8,0)$}{}{};
	\dinplain{1i2}{0.2}{3.6}{$2$};\flow{1i2}{$(1i2)+(2.8,0)$}{}{};
	\doutplain{1o1}{6.3}{4.2}{$3$};\flow{{4,4.2}}{1o1}{}{};
	\doutplain{1o2}{6.3}{3.6}{$4$};\flow{{4,3.6}}{1o2}{}{};			
  
	\computonPrimitive{3}{0.4}{1}{1.9}{$\lambda_2$};
	\dinplain{2i1}{0.2}{1.5}{$3$};\flow{2i1}{$(2i1)+(2.8,0)$}{}{};
	\dinplain{2i2}{0.2}{0.9}{$4$};\flow{2i2}{$(2i2)+(2.8,0)$}{}{};
	\doutplain{2o1}{6.3}{1.5}{$5$};\flow{{4,1.5}}{2o1}{}{};
\end{scope}

\begin{scope}[xshift=2.4cm,yshift=19cm]\draw[->,opacity=0.4] (0,1) to node[left]{\scriptsize $\beta_{9}$} (1,2);\end{scope} 
\begin{scope}[xshift=10.8cm,yshift=19cm]\draw[->,opacity=0.4] (1,1) to node[right]{\scriptsize $\beta_{10}$} (0,2);\end{scope}

\begin{scope}[xshift=3cm,yshift=15cm]\draw[->,opacity=0.4] (0.9,-3.5) to[bend left=20] node[left]{\scriptsize $\beta_{1}$} (0.4,2);\end{scope} 
\begin{scope}[xshift=10.1cm,yshift=15cm]\draw[->,opacity=0.4] (-0.1,-3.5) to[bend right=20] node[right]{\scriptsize $\beta_{4}$} (0.4,2);\end{scope} 
\begin{scope}[xshift=4.4cm,yshift=15cm]\draw[->,opacity=0.4] (1,1) to node[left]{\scriptsize $\beta_{2}$} (0,2);\end{scope} 
\begin{scope}[xshift=8.3cm,yshift=15cm]\draw[->,opacity=0.4] (0,1) to node[right]{\scriptsize $\beta_{3}$} (1,2);\end{scope}
\begin{scope}[xshift=4.7cm,yshift=12cm]\draw[->] (1,1) to node[left]{\scriptsize $\alpha_0$} (1,2);\end{scope}
\begin{scope}[xshift=8.3cm,yshift=12cm]\draw[->] (0,1) to node[right]{\scriptsize $\alpha_1$} (0,2);\end{scope}
\begin{scope}[xshift=4.7cm,yshift=10.4cm]\draw[->] (1,2) to node[left]{\scriptsize $\alpha_2$} (1,1);\end{scope} 
\begin{scope}[xshift=8.3cm,yshift=10.4cm]\draw[->] (0,2) to node[right]{\scriptsize $\alpha_3$} (0,1);\end{scope} 

\node[opacity=0.4] at (4.1,14.8) {$S$};\node[opacity=0.4] at (9.7,14.8) {$S$};\node[opacity=0.4] at (4.1,8) {$S$};\node[opacity=0.4] at (9.7,8) {$S$};
\node[opacity=0.4] at (7,16.55) {$M$};\node[opacity=0.4] at (7,5.5) {$M$};

\begin{scope}[xshift=3.5cm,yshift=20.1cm]
\computonComposite{0.5}{0}{6.4}{2.3}

	\forkplain{0}{0.3};
	\qin{fq0}{0.2}{0.55}{};
	\qmatch{fq2}{2}{0.7}{};\flow{fq2}{$(fq2)+(1,0)$}{dashed}{};\flow{{1.3,0.7}}{fq2}{dashed}{};
	\qoutplain{fq1}{6.3}{0.1}{};\draw[dashed] (1.3,0.4) to [bend right=5] node [pos=0.935] {\arrowflow} (fq1);	
	
	\computonPrimitive{3}{0.2}{1}{1.9}{$\lambda_1$};
	\dinplain{2i1}{0.2}{1.9}{$1$};\flow{2i1}{$(2i1)+(2.8,0)$}{}{};
	\dinplain{2i2}{0.2}{1.3}{$2$};\flow{2i2}{$(2i2)+(2.8,0)$}{}{};
	\doutplain{2o1}{6.3}{1.9}{$3$};\flow{{4,1.9}}{2o1}{}{};	
	\doutplain{2o2}{6.3}{1.3}{$4$};\flow{{4,1.3}}{2o2}{}{};	
	
	\joinplain{5}{0.3};		
	\qmatch{jq1}{5}{0.7}{};\flow{{4,0.7}}{jq1}{dashed}{};\flow{jq1}{$(jq1)+(1,0)$}{dashed}{};	
	\qinplain{jq0}{0.2}{0.1}{};\draw[dashed] (jq0) to [bend right=5] node [pos=0.065] {\arrowflow} (6,0.4);
	\qout{jq2}{6.3}{0.55}{};
\end{scope}

\begin{scope}[xshift=1.5cm,yshift=17.3cm]
	\computonComposite{0.5}{0}{4.1}{2.3};
  
  \forkplain{0}{0.3};
	\qin{fq0}{0.2}{0.55}{};
	\qoutplain{fq1}{4}{0.1}{};\draw[dashed] (1.3,0.4) to [bend right=5] node [pos=0.9] {\arrowflow} (fq1);
	\qmatch{fq2}{2}{0.7}{};\flow{fq2}{$(fq2)+(1,0)$}{dashed}{};\flow{{1.3,0.7}}{fq2}{dashed}{};
	
	\computonPrimitive{3}{0.2}{1}{1.9}{$\lambda_1$};
	\dinplain{2i1}{0.2}{1.9}{$1$};\flow{2i1}{$(2i1)+(2.8,0)$}{}{};
	\dinplain{2i2}{0.2}{1.3}{$2$};\flow{2i2}{$(2i2)+(2.8,0)$}{}{};
	\dout{2o1}{4}{1.9}{$3$};
	\dout{2o2}{4}{1.3}{$4$};
	\qout{2q1}{4}{0.7}{};	
\end{scope}
\begin{scope}[xshift=7.4cm,yshift=17.3cm]
	\computonComposite{0.4}{0}{4.5}{2.3};	
	
	\computonPrimitive{1}{0.2}{1}{1.9}{$\lambda_1$};	
	\din{2i1}{0.2}{1.9}{$1$};
	\din{2i2}{0.2}{1.3}{$2$};
	\qin{2q0}{0.2}{0.7}{};
	\doutplain{2o1}{4.3}{1.9}{$3$};\draw (2,1.9) to node [pos=0.86] {\arrowflow} (2o1);
	\doutplain{2o2}{4.3}{1.3}{$4$};\draw (2,1.3) to node [pos=0.86] {\arrowflow} (2o2);
	
	\joinplain{3}{0.3};	
	\qmatch{fq2}{3}{0.7}{};\flow{{2,0.7}}{fq2}{dashed}{};\flow{fq2}{$(fq2)+(1,0)$}{dashed}{};
	\qinplain{jq1}{0.2}{0.1}{};\draw[dashed] (jq1) to [bend right=5] node [pos=0.1] {\arrowflow} (4,0.4);
	\qout{jq2}{4.3}{0.55}{};
\end{scope}

\begin{scope}[xshift=5.5cm,yshift=14cm]
	\computonPrimitive{1}{0}{1}{1.9}{$\lambda_1$}  
  \din{1i1}{0.2}{1.7}{$1$};
  \din{1i2}{0.2}{1.1}{$2$};
  \qin{1q0}{0.2}{0.5}{};  
  \dout{1o1}{2}{1.7}{$3$};
  \dout{1o2}{2}{1.1}{$4$};
  \qout{1q1}{2}{0.5}{}
\end{scope}

\begin{scope}[xshift=5.7cm,yshift=12.7cm]\qmatch{0q}{0}{0}{};\end{scope}
\begin{scope}[xshift=8.3cm,yshift=12.7cm]\qmatch{0q}{0}{0}{};\end{scope}

\begin{scope}[xshift=4.7cm,yshift=8.5cm]\draw[->] (1,1) to node[left]{\scriptsize $\alpha_4$} (1,2);\end{scope} 
\begin{scope}[xshift=8.3cm,yshift=8.5cm]\draw[->] (0,1) to node[right]{\scriptsize $\alpha_5$} (0,2);\end{scope} 
\begin{scope}[xshift=4.7cm,yshift=7cm]\draw[->] (1,2) to node[left]{\scriptsize $\alpha_6$} (1,1);\end{scope} 
\begin{scope}[xshift=8.3cm,yshift=7cm]\draw[->] (0,2) to node[right]{\scriptsize $\alpha_7$} (0,1);\end{scope} 
\begin{scope}[xshift=3cm,yshift=8.5cm]\draw[->,opacity=0.4] (0.9,2) to[bend right=20] node[left]{\scriptsize $\beta_{5}$} (0.4,-3.3);\end{scope} 
\begin{scope}[xshift=10.1cm,yshift=8.5cm]\draw[->,opacity=0.4] (-0.1,2) to[bend left=20] node[right]{\scriptsize $\beta_{8}$} (0.4,-3.3);\end{scope} 
\begin{scope}[xshift=4.2cm,yshift=4.2cm]\draw[->,opacity=0.4] (1,2) to node[left]{\scriptsize $\beta_{6}$} (0,1);\end{scope} 
\begin{scope}[xshift=8.5cm,yshift=4.2cm]\draw[->,opacity=0.4] (0,2) to node[right]{\scriptsize $\beta_{7}$} (1,1);\end{scope}
\begin{scope}[xshift=2.4cm,yshift=0.5cm]\draw[->,opacity=0.4] (0,2) to node[left]{\scriptsize $\beta_{11}$} (1,1);\end{scope} 
\begin{scope}[xshift=10.8cm,yshift=0.5cm]\draw[->,opacity=0.4] (1,2) to node[right]{\scriptsize $\beta_{12}$} (0,1);\end{scope} 

\begin{scope}[xshift=3.45cm,yshift=10.7cm]\fork{0}{0}{}{}{};\end{scope}
\begin{scope}[xshift=8.3cm,yshift=10.7cm]\join{0}{0}{}{}{};\end{scope}

\begin{scope}[xshift=5.7cm,yshift=9.25cm]\qmatch{0q}{0}{0}{};\end{scope}
\begin{scope}[xshift=8.3cm,yshift=9.25cm]\qmatch{0q}{0}{0}{};\end{scope}

\begin{scope}[xshift=5.5cm,yshift=6.1cm]
  \computonPrimitive{1}{0}{1}{1.9}{$\lambda_2$}
  \qin{2q0}{0.2}{1.7}{}
  \din{2i1}{0.2}{1.1}{$3$};
  \din{2i2}{0.2}{0.5}{$4$};
  \qout{2q1}{2}{1.7}{}
  \dout{2o1}{2}{1.1}{$5$};  
\end{scope}

\begin{scope}[xshift=1.5cm,yshift=2.8cm]
  \computonComposite{0.5}{0}{4.1}{2.3};
  
  \forkplain{0}{1.55};
	\qin{fq0}{0.2}{1.8}{};
	\qoutplain{fq1}{4}{2.1}{};\draw[dashed] (1.3,1.9) to [bend left=5] node [pos=0.9] {\arrowflow} (fq1);
	\qmatch{fq2}{2}{1.7}{};\flow{fq2}{$(fq2)+(1,0)$}{dashed}{};\flow{{1.3,1.7}}{fq2}{dashed}{};
	
	\computonPrimitive{3}{0.1}{1}{1.9}{$\lambda_2$};
	\dinplain{2i1}{0.2}{1.2}{$3$};\flow{2i1}{$(2i1)+(2.8,0)$}{}{};
	\dinplain{2i2}{0.2}{0.6}{$4$};\flow{2i2}{$(2i2)+(2.8,0)$}{}{};
	\qout{2q1}{4}{1.7}{};
	\dout{2o1}{4}{1.2}{$5$};
\end{scope}
\begin{scope}[xshift=7.4cm,yshift=2.8cm]
  \computonComposite{0.4}{0}{4.5}{2.3};
	\qmatch{fq2}{3}{1.7}{};\flow{{2,1.7}}{fq2}{dashed}{};\flow{fq2}{$(fq2)+(1,0)$}{dashed}{};
	
	\computonPrimitive{1}{0.1}{1}{1.9}{$\lambda_2$};
	\qin{2q0}{0.2}{1.8}{};
	\din{2i1}{0.2}{1.2}{$3$};
	\din{2i2}{0.2}{0.6}{$4$};
	\doutplain{2o1}{4.3}{1.2}{$5$};\draw (2,1.2) to node [pos=0.86] {\arrowflow} (2o1);
	
	\joinplain{3}{1.55};	
	\qinplain{jq1}{0.2}{2.1}{};\draw[dashed] (jq1) to [bend left=5] node [pos=0.1] {\arrowflow} (4,1.9);
	\qout{jq2}{4.3}{1.8}{};
\end{scope}

\begin{scope}[xshift=3.5cm]
	\computonComposite{0.5}{0}{6.4}{2.3}

	\forkplain{0}{1.55};
	\qin{fq0}{0.2}{1.8}{};
	\qoutplain{fq1}{6.3}{2.1}{};\draw[dashed] (1.3,1.9) to [bend left=5] node [pos=0.935] {\arrowflow} (fq1);
	\qmatch{fq2}{2}{1.7}{};\flow{fq2}{$(fq2)+(1,0)$}{dashed}{};\flow{{1.3,1.7}}{fq2}{dashed}{};
	
	\computonPrimitive{3}{0.1}{1}{1.9}{$\lambda_2$};
	\dinplain{2i1}{0.2}{1.2}{$3$};\flow{2i1}{$(2i1)+(2.8,0)$}{}{};
	\dinplain{2i2}{0.2}{0.6}{$4$};\flow{2i2}{$(2i2)+(2.8,0)$}{}{};
	\doutplain{2o1}{6.3}{1.2}{$5$};\flow{{4,1.2}}{2o1}{}{};	
	
	\joinplain{5}{1.55};	
	\qinplain{jq0}{0.2}{2.1}{};\draw[dashed] (jq0) to [bend left=5] node [pos=0.065] {\arrowflow} (6,1.9);
	\qmatch{jq1}{5}{1.7}{};\flow{{4,1.7}}{jq1}{dashed}{};\flow{jq1}{$(jq1)+(1,0)$}{dashed}{};	
	\qout{jq2}{6.3}{1.8}{};
\end{scope}
\end{scope}
\end{tikzpicture}
\caption{Constructing a p-sync computon $\lambda_1 |_{\rho} \lambda_2$ where $\lambda_1$ and $\lambda_2$ are isomorphic to the operands presented in Figure \ref{fig:computon-sequential-example} and $\rho$ is the p-diagram in the middle (whose morphisms are displayed as black arrows). Intuitively, the pushouts marked with `S' are first computed to form four partial sequential computons. Then, the pushouts marked with `M' are computed to merge such partial sequential computons into the top- and bottom-level composites, respectively. Finally, the pushout of the span formed by $(\beta_{11}\circ\beta_5,\beta_{12}\circ\beta_8)$ and $(\beta_{9}\circ\beta_1,\beta_{10}\circ\beta_4)$ is computed to yield the p-sync computon $\lambda_{1} \mid_{\rho} \lambda_{2}$ (displayed on the rightmost part of this figure).}
\label{fig:computon-parallel-sync-example}
\vspace{-2pt}
\end{figure*}

Our construction finalises by computing the pushout of the unique computon morphisms deduced from the universal property of coproducts. The coproduct, in this case, is the juxtaposition of a fork computon and a join computon (in fact a p-async computon), which serves as a common object for the pushout of the unique (induced) span of $(\beta_{11}\circ\beta_5,\beta_{12}\circ\beta_8)$ and $(\beta_{9}\circ\beta_1,\beta_{10}\circ\beta_4)$, i.e., for constructing the p-sync computon $\lambda_{1} \mid_{\rho} \lambda_{2}$ (see Notation \ref{notation:parallel-sync}).

As $\lambda_1 \mid_{\rho} \lambda_2$ is constructed from pushouts that rely on unit computons as apices, only an ec-inport ${p_1 \in C_1^+}$, an ec-outport ${q_1 \in C_1^-}$, an ec-inport $p_2 \in C_2^+$ and an ec-outport $q_2 \in C_2^-$ become i-ports in $\lambda_1 \mid_{\rho} \lambda_2$. The rest of e-inports and e-outports of the arbitrary connected computons become e-inports and e-outports in $\lambda_1 \mid_{\rho} \lambda_2$, respectively. This structural implication is derived from the fact that fork and join computons have control ports only; so, unlike sequential composition and like p-async computons, $\lambda_1$ and $\lambda_2$ do not have any structural means to exchange data when composed into a synchronous parallel structure. To ensure a consistent construction of the p-sync computon $\lambda_1 \mid_{\rho} \lambda_2$, Condition \ref{def:computon-parallel-sync-diagram-6} of Definition \ref{def:diagram-p} intuitively says that $p_1$ and $p_2$ cannot be mapped to the same ec-outport of the fork computon. A similar constraint is imposed by Condition \ref{def:computon-parallel-sync-diagram-7} which states that $q_1$ and $q_2$ cannot be mapped to the same ec-inport of the join computon.

Another difference with respect to sequential composition is that the order of the computons being parallelised does not matter. So, even if $\lambda_1$ and $\lambda_2$ are interchanged in the p-diagram from Figure \ref{fig:computon-parallel-sync-example}, we will always have the same colimit result, i.e., synchronous parallel composition is a commutative operation (see Proposition \ref{prop:computon-parallel-sync-commutative}). Proposition \ref{prop:computon-parallel-sync-associative} shows that, unlike total sequencing, synchronous parallel composition is not associative so that grouping matters. Although such an algebraic property is not satisfied, the result of synchronous parallel composition is always a connected computon (see Corollary \ref{cor:computon-parallel-sync-connected}). Also, any two connected computons can be put into a synchronous parallel structure regardless of the data they require or produce (see Theorem \ref{th:always-parallelisable-sync}). 

\begin{proposition}[Synchronous parallel composition is commutative] \label{prop:computon-parallel-sync-commutative}
There is an isomorphism between ${\lambda_1\mid_{\rho_1}\lambda_2}$ and ${\lambda_2\mid_{\rho_2}\lambda_1}$ for any p-sync computons ${\lambda_1\mid_{\rho_1}\lambda_2}$ and ${\lambda_2\mid_{\rho_2}\lambda_1}$.
\end{proposition}
\begin{proof}
The proof is obvious. It follows from the fact that fork computons are trivially isomorphic, with the same being true for join and unit computons. 
\end{proof}

\begin{proposition}[Synchronous parallel composition is not associative] \label{prop:computon-parallel-sync-associative}
There is no isomorphism between ${(\lambda_1\mid_{\rho_1}\lambda_2)\mid_{\rho_2}\lambda_3}$ and ${\lambda_1\mid_{\rho_4}(\lambda_2\mid_{\rho_3}\lambda_3)}$ for some choice of p-sync computons $\lambda_1\mid_{\rho_1}\lambda_2$, ${(\lambda_1\mid_{\rho_1}\lambda_2)\mid_{\rho_2}\lambda_3}$, $\lambda_2\mid_{\rho_3}\lambda_3$ and ${\lambda_1\mid_{\rho_4}(\lambda_2\mid_{\rho_3}\lambda_3)}$.
\end{proposition}
\begin{proof}
Suppose $\rho_1$, $\rho_2$, $\rho_3$ and $\rho_4$ are p-diagrams. Considering Figure \ref{fig:computon-parallel-sync-associativity}, we let (a), (b), (c) and (d) be the colimits of $\rho_1$, $\rho_2$, $\rho_3$ and $\rho_4$, respectively. As it is clear there is no isomorphism from the p-sync computon (b) to the p-sync computon (d), we conclude that the proposition being proved is true.
\vspace{-18pt}
\begin{figure}[!h]
\centering
\subcaptionbox{Colimit $\lambda_1\mid_{\rho_1}\lambda_2$ of p-diagram $\rho_1$.}
{
\begin{tikzpicture}[scale=0.65]
  \computonComposite{0.5}{0.3}{6.4}{4.2}
  
	\forkplain{0}{2.5};
	\qin{fq0}{0.2}{2.75}{};
	\qmatch{fq1}{2}{3.2}{};\flow{fq1}{$(fq1)+(1,0)$}{dashed}{};\flowdiag{{1.3,2.9}}{fq1}{dashed}{}{pos=0.5,rotate=27};	
	\qmatch{fq2}{2}{1.9}{};\flow{fq2}{$(fq2)+(1,0)$}{dashed}{};\flowdiag{{1.3,2.6}}{fq2}{dashed}{}{pos=0.52,rotate=311};
	
	\joinplain{5}{2.5};	
	\qmatch{jq0}{5}{3.2}{};\flow{{4,3.2}}{jq0}{dashed}{};\flowdiag{jq0}{{6,2.9}}{dashed}{}{pos=0.52,rotate=350};
	\qmatch{jq1}{5}{1.9}{};\flow{{4,1.9}}{jq1}{dashed}{};\flowdiag{jq1}{{6,2.6}}{dashed}{}{pos=0.5,rotate=33};
	\qout{jq2}{6.3}{2.75}{};

	\computonPrimitive{3}{2.5}{1}{1.9}{$\lambda_1$};
	\dinplain{1i1}{0.2}{4.2}{$1$};\flow{1i1}{$(1i1)+(2.8,0)$}{}{};
  
	\computonPrimitive{3}{0.4}{1}{1.9}{$\lambda_2$};
	\dinplain{2i1}{0.2}{1}{$2$};\flow{2i1}{$(2i1)+(2.8,0)$}{}{};
\end{tikzpicture}
}
\subcaptionbox{Colimit ${(\lambda_1\mid_{\rho_1}\lambda_2)\mid_{\rho_2}\lambda_3}$ of p-diagram $\rho_2$.}
{
\begin{tikzpicture}[scale=0.65]
  \computonComposite{0.2}{0.3}{10.7}{6.4}
  \computonComposite{2.3}{2.4}{6.4}{4.2}
  
  \forkplain{-0.2}{3.2};
  \qin{f1q0}{0}{3.45}{};
  \qmatch{f1q1}{2}{4.85}{};\flow{f1q1}{$(f1q1)+(1,0)$}{dashed}{};\flowdiag{{1.1,3.6}}{f1q1}{dashed}{}{pos=0.5,rotate=47};	
  \qmatch{f1q2}{2}{1.95}{};\flow{f1q2}{$(f1q2)+(2.8,0)$}{dashed}{};\flowdiag{{1.1,3.3}}{f1q2}{dashed}{}{pos=0.5,rotate=311};	
    
	\forkplain{1.8}{4.6};
	\qmatch{fq1}{3.8}{5.3}{};\flow{fq1}{$(fq1)+(1,0)$}{dashed}{};\flowdiag{{3.1,5}}{fq1}{dashed}{}{pos=0.5,rotate=27};	
	\qmatch{fq2}{3.8}{4}{};\flow{fq2}{$(fq2)+(1,0)$}{dashed}{};\flowdiag{{3.1,4.7}}{fq2}{dashed}{}{pos=0.52,rotate=311};
	
	\joinplain{6.8}{4.6};	
	\qmatch{jq0}{6.8}{5.3}{};\flow{{5.8,5.3}}{jq0}{dashed}{};\flowdiag{jq0}{{7.8,5}}{dashed}{}{pos=0.52,rotate=350};
	\qmatch{jq1}{6.8}{4}{};\flow{{5.8,4}}{jq1}{dashed}{};\flowdiag{jq1}{{7.8,4.7}}{dashed}{}{pos=0.5,rotate=33};

	\computonPrimitive{4.8}{4.6}{1}{1.9}{$\lambda_1$};
	\dinplain{1i1}{0}{6.3}{$1$};\flowdiag{1i1}{$(1i1)+(4.8,0)$}{}{}{pos=0.1};
  
	\computonPrimitive{4.8}{2.5}{1}{1.9}{$\lambda_2$};
	\dinplain{2i1}{0}{5.1}{$2$};\flowdiag{2i1}{$(2i1)+(4.8,-2)$}{}{bend right=20}{pos=0.1, rotate=311};
		
	\computonPrimitive{4.8}{0.4}{1}{1.9}{$\lambda_3$};
	
	\joinplain{9}{3.2};	
	\qmatch{jq2}{8.9}{4.85}{};\flow{{8.1,4.85}}{jq2}{dashed}{};\flowdiag{jq2}{$(jq2)+(1.1,-1.4)$}{dashed}{}{pos=0.5,rotate=311};	
	\qmatch{jq3}{8.9}{1.95}{};\flow{{5.8,1.95}}{jq3}{dashed}{};\flowdiag{jq3}{$(jq3)+(1.1,1.4)$}{dashed}{}{pos=0.5,rotate=47};	
	\qout{jq4}{10.3}{3.45}{};
\end{tikzpicture}
}
\subcaptionbox{Colimit ${\lambda_2\mid_{\rho_3}\lambda_3}$ of p-diagram $\rho_3$.}
{
\begin{tikzpicture}[scale=0.65]
  \computonComposite{0.5}{0.3}{6.4}{4.2}
  
	\forkplain{0}{2.5};
	\qin{fq0}{0.2}{2.75}{};
	\qmatch{fq1}{2}{3.2}{};\flow{fq1}{$(fq1)+(1,0)$}{dashed}{};\flowdiag{{1.3,2.9}}{fq1}{dashed}{}{pos=0.5,rotate=27};	
	\qmatch{fq2}{2}{1.9}{};\flow{fq2}{$(fq2)+(1,0)$}{dashed}{};\flowdiag{{1.3,2.6}}{fq2}{dashed}{}{pos=0.52,rotate=311};
	
	\joinplain{5}{2.5};	
	\qmatch{jq0}{5}{3.2}{};\flow{{4,3.2}}{jq0}{dashed}{};\flowdiag{jq0}{{6,2.9}}{dashed}{}{pos=0.52,rotate=350};
	\qmatch{jq1}{5}{1.9}{};\flow{{4,1.9}}{jq1}{dashed}{};\flowdiag{jq1}{{6,2.6}}{dashed}{}{pos=0.5,rotate=33};
	\qout{jq2}{6.3}{2.75}{};

	\computonPrimitive{3}{2.5}{1}{1.9}{$\lambda_2$};  
	\dinplain{1i1}{0.2}{4.2}{$2$};\flow{1i1}{$(1i1)+(2.8,0)$}{}{};
	\computonPrimitive{3}{0.4}{1}{1.9}{$\lambda_3$};
\end{tikzpicture}
}
\subcaptionbox{Colimit ${\lambda_1\mid_{\rho_4}(\lambda_2\mid_{\rho_3}\lambda_3)}$ of p-diagram $\rho_4$.}
{
\begin{tikzpicture}[scale=0.65]
  \computonComposite{0.2}{0.2}{10.7}{6.4}
  \computonComposite{2.3}{0.3}{6.4}{4.2}
  
  \forkplain{-0.2}{4};
  \qin{f1q0}{0}{4.25}{};
  \qmatch{f1q1}{2}{5.65}{};\flow{f1q1}{$(f1q1)+(3.2,0)$}{dashed}{};\flowdiag{{1.1,4.4}}{f1q1}{dashed}{}{pos=0.5,rotate=47};	
  \qmatch{f1q2}{2}{2.75}{};\flow{f1q2}{$(f1q2)+(0.8,0)$}{dashed}{};\flowdiag{{1.1,4.1}}{f1q2}{dashed}{}{pos=0.5,rotate=311};	
    
	\forkplain{1.8}{2.5};
	\qmatch{fq1}{3.8}{3.2}{};\flow{fq1}{$(fq1)+(1,0)$}{dashed}{};\flowdiag{{3.1,2.8}}{fq1}{dashed}{}{pos=0.5,rotate=27};	
	\qmatch{fq2}{3.8}{1.9}{};\flow{fq2}{$(fq2)+(1,0)$}{dashed}{};\flowdiag{{3.1,2.6}}{fq2}{dashed}{}{pos=0.52,rotate=311};
	
	\joinplain{6.8}{2.5};	
	\qmatch{jq0}{6.8}{3.2}{};\flow{{5.8,3.2}}{jq0}{dashed}{};\flowdiag{jq0}{{7.8,2.9}}{dashed}{}{pos=0.52,rotate=350};
	\qmatch{jq1}{6.8}{1.9}{};\flow{{5.8,1.9}}{jq1}{dashed}{};\flowdiag{jq1}{{7.8,2.6}}{dashed}{}{pos=0.5,rotate=33};

	\computonPrimitive{4.8}{2.5}{1}{1.9}{$\lambda_2$};
	\dinplain{1i1}{0}{6.3}{$1$};\flowdiag{1i1}{$(1i1)+(4.8,0)$}{}{}{pos=0.1};
	\dinplain{2i1}{0}{3.1}{$2$};\flowdiag{2i1}{$(2i1)+(4.8,1)$}{}{bend left=20}{pos=0.1, rotate=37};
  
	\computonPrimitive{4.8}{0.4}{1}{1.9}{$\lambda_3$};
		
	\computonPrimitive{4.8}{4.6}{1}{1.9}{$\lambda_1$};
	
	\joinplain{9}{4};	
	\qmatch{jq2}{8.9}{5.65}{};\flow{{5.8,5.65}}{jq2}{dashed}{};\flowdiag{jq2}{$(jq2)+(1.1,-1.4)$}{dashed}{}{pos=0.5,rotate=311};	
	\qmatch{jq3}{8.9}{2.75}{};\flow{{8.1,2.75}}{jq3}{dashed}{};\flowdiag{jq3}{$(jq3)+(1.1,1.4)$}{dashed}{}{pos=0.5,rotate=47};	
	\qout{jq4}{10.3}{4.25}{};
\end{tikzpicture}
}
\caption{Counterexample that disproves the associativity property of synchronous parallel composition.}
\label{fig:computon-parallel-sync-associativity}
\end{figure}

\end{proof}

\begin{theorem} \label{th:always-parallelisable-sync}
For every pair $(\lambda_1,\lambda_2)$ of connected computons, there is a p-diagram $\rho$ such that $\lambda_1 \mid_{\rho} \lambda_2$ exists.
\end{theorem}
\begin{proof}
Let ${\lambda_4}$ and ${\lambda_{5}}$ be two arbitrary connected computons, ${\lambda_6}$ a fork computon, ${\lambda_7}$ a join computon, and $\lambda_j$ a unit computon for $j\in\{0,1,2,3\}$. We first construct the following spans of computon morphisms: ${\lambda_6 \xleftarrow{\alpha_2} \lambda_0 \xrightarrow{\alpha_0} \lambda_4}$, ${\lambda_4 \xleftarrow{\alpha_1} \lambda_1 \xrightarrow{\alpha_3} \lambda_7}$, ${\lambda_5 \xleftarrow{\alpha_7} \lambda_3 \xrightarrow{\alpha_5} \lambda_7}$ and ${\lambda_6 \xleftarrow{\alpha_4} \lambda_2 \xrightarrow{\alpha_6} \lambda_5}$. As the common domain of each span is a unit computon, each morphism is a diagram of the form:
\[
\begin{tikzcd}
1\arrow[d, hook] & 1 \arrow[l]\arrow[d] & \emptyset\arrow[l]\arrow[r]\arrow[d] & \emptyset\arrow[d] & \emptyset\arrow[l]\arrow[r]\arrow[d] & 1\arrow[r]\arrow[d] & 1\arrow[d, hook] \\
\Sigma & P\arrow[l] & O\arrow[l]\arrow[r] & U & I\arrow[l]\arrow[r] & P\arrow[r] & \Sigma
\end{tikzcd} 
\]
In the above diagram, it is evident that the only morphism components that are not empty functions are those mapping ports and colours, respectively. As the set of colours of a unit computon is always ${\{0\}}$ and $0$ is in the set of colours of every computon (by Definition \ref{def:computon}), the respective $\Sigma$-component of each morphism can be defined in the obvious way to yield an inclusion function. Now, if ${p_j \in P_j}$, the $P$-component of each morphism is given as follows: ${\alpha_{2}(p_0) \in C_{6}^-}$, ${\alpha_{0}(p_0) \in C_{4}^+}$, ${\alpha_{1}(p_1) \in C_{4}^-}$, ${\alpha_{3}(p_1) \in C_{7}^+}$, ${\alpha_{7}(p_3) \in C_{5}^-}$, ${\alpha_{5}(p_3) \in C_{7}^+}$, ${\alpha_{4}(p_2) \in C_{6}^-}$ and ${\alpha_{6}(p_2) \in C_{5}^+}$ such that ${\alpha_2(p_0) \neq \alpha_4(p_2)}$ and ${\alpha_3(p_1) \neq \alpha_5(p_3)}$.

Since ${\lambda_{4}}$ and ${\lambda_{6}}$ are connected computons and ${\lambda_0}$ is a trivial computon (see Definition \ref{def:computon-connected} and Proposition \ref{prop:computon-connected-alwaysunits}), ${p_0 \in \vec{i}(\alpha_{2}) \cap \vec{o}(\alpha_{0})}$ and, consequently, ${P_0=\vec{i}(\alpha_{2}) \cap \vec{o}(\alpha_{0})}$ because ${P_0=\{p_0\}}$. The facts ${p_0 \in \vec{i}(\alpha_{2}) \cap \vec{o}(\alpha_{0})}$, ${\alpha_{2}(p_0) \in C_{6}^-}$ and ${\alpha_{0}(p_0) \in C_{4}^+}$ allow us to further deduce ${\alpha_{2}(\vec{o}(\alpha_{0})) \subseteq C_{6}^- \subseteq P_{6}^-}$ and ${\alpha_{0}(\vec{i}(\alpha_{2})) \subseteq C_{4}^+ \subseteq P_{4}^+}$. In particular, ${\alpha_{2}(\vec{o}(\alpha_{0})) \subset P_{6}^-}$ because ${|P_0|=1}$ and ${|P_{6}^-|=|C_{6}^-|=2}$ (see the above diagram and Definition \ref{def:computon-fork}). This means that, by Definition \ref{def:span-sequentiable}, ${\lambda_6 \xleftarrow{\alpha_2} \lambda_0 \xrightarrow{\alpha_0} \lambda_4}$ is partially sequentiable. Proving that the other spans are also partially sequentiable is completely analogous.

As our construction corresponds to a p-diagram $\rho$ whose colimit can be computed by Lemma \ref{lem:computon-parallel-sync-exists}, it follows that the p-sync computon $\lambda_4 \mid_{\rho} \lambda_5$ exists in $\textbf{Set}^\textbf{Comp}$ (see Definition \ref{def:computon-parallel-sync} and Notation \ref{notation:parallel-sync}).
\end{proof}

\subsubsection{Operational semantics for p-sync computons (in the theory of Petri nets)}

No matter whether we use any of the three functors presented in Section \ref{sec:operational-semantics} (i.e., $\mathcal{N}$, ${\mathcal{C}\circ\mathfrak{E}}$ or $\mathcal{D}$), the Petri net of a p-sync computon does not introduce any additional places or transitions beyond those from the nets of the computons of the corresponding p-diagram. In the case of $\mathcal{N}$ and ${\mathcal{C}\circ\mathfrak{E}}$, the net of a p-sync has the form depicted in Figure \ref{fig:computon-parallel-sync-net}(a); whereas for $\mathcal{D}$, the corresponding net has the form depicted in Figure \ref{fig:computon-parallel-sync-net}(b). 

\begin{figure}[!h]
\centering
{
\subcaptionbox{For $\mathcal{N}\circ\mathfrak{E}$ or $\mathcal{C}$.}
{
\begin{tikzpicture}
\node[place,minimum size=3mm,label=left:\scriptsize $f$] (f) at (-2,1.5) {};
\node[transition,fill=black,minimum width=0.1mm,minimum height=10mm,label=\scriptsize $\lambda_3$-net] (t1) at (-1.2,1.5) {};

\node[place,label={left:\scriptsize $p_n$},minimum size=3mm] (pn) at (0,2) {};
\node at (0,2.6){$\vdots$};
\node[place,label={left:\scriptsize $p_1$},minimum size=3mm] (p1) at (0,3) {};
\draw[dotted] (0.7,1.7) rectangle (2.2,3.2);\node at (1.4,2.4){\scriptsize $\lambda_1$-net};
\node[place,label={right:\scriptsize $q_1$},minimum size=3mm] (q1) at (2.9,3) {};
\node at (2.9,2.6){$\vdots$};
\node[place,label={right:\scriptsize $q_j$},minimum size=3mm] (qj) at (2.9,2) {};

\node[place,label={left:\scriptsize $r_m$},minimum size=3mm] (rm) at (0,0) {};
\node at (0,0.6){$\vdots$};
\node[place,label={left:\scriptsize $r_1$},minimum size=3mm] (r1) at (0,1) {};
\draw[dotted] (0.7,-0.3) rectangle (2.2,1.2);\node at (1.4,0.4){\scriptsize $\lambda_2$-net};
\node[place,label={right:\scriptsize $s_1$},minimum size=3mm] (s1) at (2.9,1) {};
\node at (2.9,0.6){$\vdots$};
\node[place,label={right:\scriptsize $s_k$},minimum size=3mm] (sk) at (2.9,0) {};

\node[place,minimum size=3mm,label=right:\scriptsize $j$] (j) at (4.9,1.5) {};
\node[transition,fill=black,minimum width=0.1mm,minimum height=10mm,label=\scriptsize $\lambda_4$-net] (t2) at (4.1,1.5) {};

\draw[-latex,thick] (p1) -- ($(p1)+(0.7,-0.2)$);
\draw[-latex,thick] (pn) -- ($(pn)+(0.7,0.2)$);
\draw[-latex,thick] ($(q1)+(-0.7,-0.2)$) -- (q1);
\draw[-latex,thick] ($(qj)+(-0.7,0.2)$) -- (qj);
\draw[-latex,thick] (r1) -- ($(r1)+(0.7,-0.2)$);
\draw[-latex,thick] (rm) -- ($(rm)+(0.7,0.2)$);
\draw[-latex,thick] ($(s1)+(-0.7,-0.2)$) -- (s1);
\draw[-latex,thick] ($(sk)+(-0.7,0.2)$) -- (sk);
\draw[-latex,thick] (f) -- (t1);
\draw[-latex,thick] (t1) -- (pn);
\draw[-latex,thick] (t1) -- (r1);
\draw[-latex,thick] (qj) -- (t2);
\draw[-latex,thick] (s1) -- (t2);
\draw[-latex,thick] (t2) -- (j);
\end{tikzpicture}
}
\subcaptionbox{For $\mathcal{D}$, all places buffer data so the transitions for the fork and join computons do not have any input or output places (cf. Figures \ref{fig:computon-fork}(c) and \ref{fig:computon-join}(c)).}
{
\begin{tikzpicture}
\node[transition,fill=black,minimum width=0.1mm,minimum height=10mm,label=\scriptsize $\lambda_3$-net] (t1) at (-1.2,1.5) {};

\node[place,label={left:\scriptsize $p_n$},minimum size=3mm] (pn) at (0,2) {};
\node at (0,2.6){$\vdots$};
\node[place,label={left:\scriptsize $p_1$},minimum size=3mm] (p1) at (0,3) {};
\draw[dotted] (0.7,1.7) rectangle (2.2,3.2);\node at (1.4,2.4){\scriptsize $\lambda_1$-net};
\node[place,label={right:\scriptsize $q_1$},minimum size=3mm] (q1) at (2.9,3) {};
\node at (2.9,2.6){$\vdots$};
\node[place,label={right:\scriptsize $q_j$},minimum size=3mm] (qj) at (2.9,2) {};

\node[place,label={left:\scriptsize $r_m$},minimum size=3mm] (rm) at (0,0) {};
\node at (0,0.6){$\vdots$};
\node[place,label={left:\scriptsize $r_1$},minimum size=3mm] (r1) at (0,1) {};
\draw[dotted] (0.7,-0.3) rectangle (2.2,1.2);\node at (1.4,0.4){\scriptsize $\lambda_2$-net};
\node[place,label={right:\scriptsize $s_1$},minimum size=3mm] (s1) at (2.9,1) {};
\node at (2.9,0.6){$\vdots$};
\node[place,label={right:\scriptsize $s_k$},minimum size=3mm] (sk) at (2.9,0) {};

\node[transition,fill=black,minimum width=0.1mm,minimum height=10mm,label=\scriptsize $\lambda_4$-net] (t2) at (4.1,1.5) {};

\draw[-latex,thick] (p1) -- ($(p1)+(0.7,-0.2)$);
\draw[-latex,thick] (pn) -- ($(pn)+(0.7,0.2)$);
\draw[-latex,thick] ($(q1)+(-0.7,-0.2)$) -- (q1);
\draw[-latex,thick] ($(qj)+(-0.7,0.2)$) -- (qj);
\draw[-latex,thick] (r1) -- ($(r1)+(0.7,-0.2)$);
\draw[-latex,thick] (rm) -- ($(rm)+(0.7,0.2)$);
\draw[-latex,thick] ($(s1)+(-0.7,-0.2)$) -- (s1);
\draw[-latex,thick] ($(sk)+(-0.7,0.2)$) -- (sk);
\end{tikzpicture}
}
}
\caption{General structure of the Petri net of a p-sync computon $\lambda_1\mid_{\rho}\lambda_2$ constructed from a connected computon $\lambda_1$ with $n$ e-inports and $j$ e-outports, and a connected computon $\lambda_2$ with $m$ e-inports and $k$ e-outports. The places $f$ and $j$ correspond to the only ec-inport and the only ec-outport of the fork and join computons, respectively.}
\label{fig:computon-parallel-sync-net}
\end{figure}

By Proposition \ref{prop:computon-parallel-sync-deadlock} and Remark \ref{rem:parallel-synchronous-deadlock}, the underlying net of any p-sync computon is deadlock-free only if the nets of the composed computons are deadlock-free too.

\begin{proposition}\label{prop:computon-parallel-sync-deadlock}
A Petri net ${\mathcal{N}(\lambda_1\mid_{\rho}\lambda_2)}$ is deadlock-free if ${\mathcal{N}(\lambda_1)}$ and ${\mathcal{N}(\lambda_2)}$ are deadlock-free, and the fork transition of ${\mathcal{N}(\lambda_1\mid_{\rho}\lambda_2)}$ is the only transition enabled in the initial state of ${\mathcal{N}(\lambda_1\mid_{\rho}\lambda_2)}$.
\end{proposition}
\begin{proof}
If the net from Figure \ref{fig:computon-parallel-sync-net}(a) corresponds to $\mathcal{N}(\lambda_1\mid_{\rho}\lambda_2)$, we know by Definition \ref{def:marking} that its initial state $M_i$ is a marking function that puts tokens in $f$ together with the input places of $\mathcal{N}(\lambda_1)$ different than $p_n$ and the input places of $\mathcal{N}(\lambda_2)$ different than $r_1$, while keeping no tokens in all the other places, including those inside $\mathcal{N}(\lambda_1)$ and $\mathcal{N}(\lambda_2)$. If the unique transition of ${\mathcal{N}(\lambda_3)}$ is the only one enabled under $M_i$, firing it reaches a marking that corresponds to the initial states of $\mathcal{N}(\lambda_1)$ and $\mathcal{N}(\lambda_2)$, i.e., only the places ${p_1,\ldots,p_n}$ and ${r_1,\ldots,r_m}$ have tokens. Thus, if $\mathcal{N}(\lambda_1)$ and $\mathcal{N}(\lambda_2)$ are deadlock free, $\mathcal{N}(\lambda_1\mid_{\rho}\lambda_2)$ must be deadlock-free too, considering it is evident that the places and transitions of $\mathcal{N}(\lambda_4)$ do not introduce any deadlocks.
\end{proof}

\begin{remark}\label{rem:parallel-synchronous-deadlock}
Although it is a statement about the functor $\mathcal{N}$, Proposition \ref{prop:computon-parallel-sync-deadlock} is applicable to the functors $\mathcal{C}\circ\mathfrak{E}$ and $\mathcal{D}$ presented in Section \ref{sec:operational-semantics}. The proof is valid for $\mathcal{C}\circ\mathfrak{E}$ since Proposition \ref{prop:functor-control-petri} says ${\mathcal{C}}$ is just a restriction of $\mathcal{N}$ to $\mathfrak{E}(\textbf{Set}^{\textbf{Comp}})$. As $\mathcal{D}(\lambda_1\mid_{\rho}\lambda_2)$ has isolated transitions for fork and join computons (see Figure \ref{fig:computon-parallel-sync-net}(b)), deadlock-freedom follows directly from Remark \ref{rem:parallel-asynchronous-deadlock}.
\end{remark}

By restricting the fork transition of ${\mathcal{N}(\lambda_1\mid_{\rho}\lambda_2)}$ to be the only one enabled under $M_i$, Proposition \ref{prop:computon-parallel-sync-deadlock} disregards the possibility of triggering either the $\lambda_1$-net or the $\lambda_2$-net before/upon firing the unique transition of the $\lambda_3$-net. Should any of these two cases occur, there is a possibility of deadlock. 

\vspace{5pt}

\subsubsection{Encapsulation of control flow and data flow in p-sync computons}

By Definition \ref{def:computon-parallel-sync}, we know a p-sync computon $\lambda_1\mid_{\rho}\lambda_2$ is the colimit of a p-diagram $\rho$ which, by Definition \ref{def:diagram-p}, integrates a fork computon, a join computon, two connected computons and four unit computons. As a result of the colimit construction described in Lemma \ref{lem:computon-parallel-sync-exists}, a p-sync computon connects the ec-outports of the fork and the ec-inports of the join with ec-inports and ec-outports of the connected computons, respectively. Thus, forming a composite that encapsulates synchronous parallel control flow. It is synchronous in the sense the fork enables the parallel invocation of $\lambda_1$ and $\lambda_2$, while the join waits for their termination. As no data ports are linked (through the colimit of $\rho$), there is no data exchange within a p-sync computon so $\lambda_1\mid_{\rho}\lambda_2$ encapsulates up to asynchronous parallel data flow (just as p-async computons do). To give a concrete example, Figure \ref{fig:encapsulation-psync} shows the encapsulation given by the p-sync computon resulting from the colimit construction depicted in Figure \ref{fig:computon-parallel-sync-example}.

\vspace{1.1pt}

Having synchronous control and asynchronous data entails that data does not follow control within a p-sync computon. So, data items can be received before forking control or produced before joining control. Despite of this asynchronous behaviour, the connected computons being paralleised cannot consume data until receiving a control signal from the corresponding fork computon. This is enforced in nets under $\mathcal{N}$. For example, in the scenario depicted in Figure \ref{fig:encapsulation-psync}, $\mathcal{N}(\lambda_1)$ cannot perform any computation until receiving control from the fork as well as $1$- and $2$-coloured data items from the external environment. In other words, the transition representing $\lambda_1$ needs to be enabled by both the transition representing the fork computon and the external environment.

\vspace{1.1pt}

As computons interact via input/output ports, it might seem that the computon model is not yet ready to encode synchronised concurrent behaviours (cf. open automata \cite{katis_spangraph_1997}). We conjecture that an instance of such class of behaviours can be expressed as a p-sync computon to simultaneously trigger computons that encode the transitions being synchronised. Although constructing such composites might be infeasible in practice due to their potentially complex structure, proving this conjecture would support our thesis that control flow is ever present in any (concurrent or sequential) computation.

\begin{figure}[!h]
\centering
\begin{tikzpicture}
\begin{scope}[xshift=-4cm,yshift=0cm]
\computonComposite{0.5}{0.3}{6.4}{4.2}
  
	\forkplain{0}{2.5};
	\qin{fq0}{0.2}{2.75}{};
	\qmatch{fq1}{2}{3.2}{};\flow{fq1}{$(fq1)+(1,0)$}{dashed}{};\flowdiag{{1.3,2.9}}{fq1}{dashed}{}{pos=0.5,rotate=27};	
	\qmatch{fq2}{2}{1.9}{};\flow{fq2}{$(fq2)+(1,0)$}{dashed}{};\flowdiag{{1.3,2.6}}{fq2}{dashed}{}{pos=0.52,rotate=311};
	
	\joinplain{5}{2.5};	
	\qmatch{jq0}{5}{3.2}{};\flow{{4,3.2}}{jq0}{dashed}{};\flowdiag{jq0}{{6,2.9}}{dashed}{}{pos=0.52,rotate=350};
	\qmatch{jq1}{5}{1.9}{};\flow{{4,1.9}}{jq1}{dashed}{};\flowdiag{jq1}{{6,2.6}}{dashed}{}{pos=0.5,rotate=33};
	\qout{jq2}{6.3}{2.75}{};

	\computonPrimitive{3}{2.5}{1}{1.9}{$\lambda_1$};
	\dinplain{1i1}{0.2}{4.2}{$1$};\flow{1i1}{$(1i1)+(2.8,0)$}{}{};
	\dinplain{1i2}{0.2}{3.6}{$2$};\flow{1i2}{$(1i2)+(2.8,0)$}{}{};
	\doutplain{1o1}{6.3}{4.2}{$3$};\flow{{4,4.2}}{1o1}{}{};
	\doutplain{1o2}{6.3}{3.6}{$4$};\flow{{4,3.6}}{1o2}{}{};			
  
	\computonPrimitive{3}{0.4}{1}{1.9}{$\lambda_2$};
	\dinplain{2i1}{0.2}{1.5}{$3$};\flow{2i1}{$(2i1)+(2.8,0)$}{}{};
	\dinplain{2i2}{0.2}{0.9}{$4$};\flow{2i2}{$(2i2)+(2.8,0)$}{}{};
	\doutplain{2o1}{6.3}{1.5}{$5$};\flow{{4,1.5}}{2o1}{}{};
\end{scope}
\draw[opacity=0.3,line width=1.5pt, ->, -Latex] (3.5,2.5) to node[pos=0.4,yshift=7,xshift=-8]{\scriptsize $\mathcal{C}\circ \mathfrak{E}$} (5,3.7);
\begin{scope}[xshift=5cm,yshift=2.7cm]
\node[opacity=0.2] at (0.6,2.5){\scriptsize Control Flow Net};
\node[place,label={80:},minimum size=3mm] (f) at (0,1.3) {};
\node[transition,fill=black,minimum width=0.1mm,minimum height=10mm] (fork) at (1,1.3) {};

\node[place,label={80:},minimum size=3mm] (1i) at (2,2) {};
\node[transition,fill=black,minimum width=0.1mm,minimum height=10mm] (1) at (3,2) {};
\node[place,label={80:},minimum size=3mm] (1o) at (4,2) {};

\node[place,label={80:},minimum size=3mm] (2i) at (2,0.6) {};
\node[transition,fill=black,minimum width=0.1mm,minimum height=10mm] (2) at (3,0.6) {};
\node[place,label={80:},minimum size=3mm] (2o) at (4,0.6) {};

\node[transition,fill=black,minimum width=0.1mm,minimum height=10mm] (join) at (5,1.3) {};
\node[place,label={80:},minimum size=3mm] (j) at (6,1.3) {};

\draw[-latex,thick] (f)--(fork);\draw[-latex,thick] (fork)--(1i);\draw[-latex,thick] (fork)--(2i);
\draw[-latex,thick] (1i)--(1);\draw[-latex,thick] (1)--(1o);
\draw[-latex,thick] (2i)--(2);\draw[-latex,thick] (2)--(2o);
\draw[-latex,thick] (1o)--(join);\draw[-latex,thick] (2o)--(join);\draw[-latex,thick] (join)--(j);
\end{scope}
\draw[opacity=0.3,line width=1.5pt, ->, -Latex] (3.5,2.2) to node[pos=0.4,yshift=-4,xshift=-5]{\scriptsize $\mathcal{D}$} (5,1);
\begin{scope}[xshift=5.5cm,yshift=-0.7cm]
\node[opacity=0.2] at (3.5,0.5){\scriptsize Data Flow Net};
\node[place,label={80:},minimum size=3mm] (4i1) at (0,2) {\scriptsize $1$};
\node[place,label={80:},minimum size=3mm] (4i2) at (0,1) {\scriptsize $2$};
\node[transition,fill=black,minimum width=0.1mm,minimum height=10mm] (4) at (1,1.5) {};
\node[place,label={80:},minimum size=3mm] (4o3) at (2,2) {\scriptsize $3$};
\node[place,label={80:},minimum size=3mm] (4o4) at (2,1) {\scriptsize $4$};

\node[place,label={80:},minimum size=3mm] (4ix3) at (0,0.5) {\scriptsize $3$};
\node[place,label={80:},minimum size=3mm] (4ix4) at (0,-0.5) {\scriptsize $4$};
\node[transition,fill=black,minimum width=0.1mm,minimum height=10mm] (4x) at (1,0) {};
\node[place,label={80:},minimum size=3mm] (4ox) at (2,0) {\scriptsize $5$};

\draw[-latex,thick] (4i1)--(4);\draw[-latex,thick] (4i2)--(4);\draw[-latex,thick](4)--(4o3);\draw[-latex,thick](4)--(4o4);
\draw[-latex,thick] (4ix3)--(4x);\draw[-latex,thick] (4ix4)--(4x);\draw[-latex,thick](4x)--(4ox);
\end{scope}
\draw[opacity=0.3,line width=1.5pt, ->, -Latex] (0,0) to node[pos=0.4,yshift=7,xshift=-8]{\scriptsize $\mathcal{N}$} (0,-0.8);
\begin{scope}[xshift=-3cm,yshift=-3.5cm]
\node[opacity=0.2] at (0,2.8){\scriptsize Control and};\node[opacity=0.2] at (0,2.5){\scriptsize Data Flow Net};\node[opacity=0.2] at (0,2.2){\scriptsize Net};
\node[place,label={80:},minimum size=3mm] (i1) at (2,2.8) {\scriptsize $1$};
\node[place,label={80:},minimum size=3mm] (i2) at (2,2.3) {\scriptsize $2$};
\node[place,label={80:},minimum size=3mm] (i3) at (2,0.3) {\scriptsize $3$};
\node[place,label={80:},minimum size=3mm] (i4) at (2,-0.2) {\scriptsize $4$};

\node[place,label={80:},minimum size=3mm] (f) at (0,1.3) {};
\node[transition,fill=black,minimum width=0.1mm,minimum height=10mm] (fork) at (1,1.3) {};

\node[place,label={80:},minimum size=3mm] (1i) at (2,1.7) {};
\node[transition,fill=black,minimum width=0.1mm,minimum height=10mm] (1) at (3,2) {};
\node[place,label={80:},minimum size=3mm] (1o) at (4,1.7) {};

\node[place,label={80:},minimum size=3mm] (2i) at (2,0.9) {};
\node[transition,fill=black,minimum width=0.1mm,minimum height=10mm] (2) at (3,0.6) {};
\node[place,label={80:},minimum size=3mm] (2o) at (4,0.9) {};

\node[transition,fill=black,minimum width=0.1mm,minimum height=10mm] (join) at (5,1.3) {};
\node[place,label={80:},minimum size=3mm] (j) at (6,1.3) {};

\node[place,label={80:},minimum size=3mm] (o3) at (4,2.8) {\scriptsize $3$};
\node[place,label={80:},minimum size=3mm] (o4) at (4,2.3) {\scriptsize $4$};
\node[place,label={80:},minimum size=3mm] (o5) at (4,0.3) {\scriptsize $5$};

\draw[-latex,thick] (f)--(fork);\draw[-latex,thick] (fork)--(1i);\draw[-latex,thick] (fork)--(2i);
\draw[-latex,thick] (1i)--(1);\draw[-latex,thick] (1)--(1o);
\draw[-latex,thick] (2i)--(2);\draw[-latex,thick] (2)--(2o);
\draw[-latex,thick] (1o)--(join);\draw[-latex,thick] (2o)--(join);\draw[-latex,thick] (join)--(j);
\draw[-latex,thick](i1)--(1);\draw[-latex,thick](i2)--(1);\draw[-latex,thick](i3)--(2);\draw[-latex,thick](i4)--(2);
\draw[-latex,thick](1)--(o3);\draw[-latex,thick](1)--(o4);\draw[-latex,thick](2)--(o5);
\end{scope}
\end{tikzpicture}
\caption{Synchronous parallel control flow and asynchronous parallel data flow encapsulated by the p-sync computon from Figure \ref{fig:computon-parallel-sync-example}. We label some places for mapping purposes even though Petri nets are not labelled (see Section \ref{sec:operational-semantics}).}
\label{fig:encapsulation-psync}
\end{figure}

\subsection{Branching Computons}
\label{sec:branching-computons}

A \emph{branching computon} structurally consists of two connected computons whose e-inports and e-outports overlap, respectively. This overlapping restriction is captured by a so-called \emph{b-diagram} whose morphisms are all markers (see Definition \ref{def:computon-branching-diagram}). 

\begin{definition}[B-Diagram]\label{def:computon-branching-diagram}
A b-diagram is a diagram with the following shape in $\textbf{Set}^\textbf{Comp}$:
\[
\begin{tikzcd}
 & \lambda_0 \arrow[dl, "\lambda_2^+"']\arrow[dr, "\lambda_3^+"] & \\
\lambda_2 & & \lambda_3 \\
 & \lambda_1 \arrow[ul, "\lambda_2^-"]\arrow[ur, "\lambda_3^-"'] & 
\end{tikzcd}
\]
where:
\begin{enumerate}
\item $\lambda_2$ is a connected computon with an in-marker $\lambda_2^+$ and an out-marker $\lambda_2^-$, and
\item $\lambda_3$ is a connected computon with an in-marker $\lambda_3^+$ and an out-marker $\lambda_3^-$.
\end{enumerate}
Evidently, by Definition \ref{def:computon-morphism-markers}, $\lambda_0$ and $\lambda_1$ are trivial computons, serving as domains for the markers involved in the b-diagram.
\end{definition}

A branching computon operates in a non-deterministic manner by exclusively choosing a connected computon out of two possible ones. Its role is to capture the essential structure of decision-making by abstracting away from particular conditions or external environment influences which might be variable, unpredictable or too complex to enumerate exhaustively. Intuitively, a branching computon is like an \emph{if-else} programming construct but without explicit conditions deciding computational paths.\footnote{Another reason for decoupling control flow from external influences is to facilitate formal verification of crucial control flow properties such as deadlock absence. Without decoupling, a verification problem could become intractable since, in general, the external environment cannot be modelled completely. Even when assumptions are made, they might not capture the real-world accurately \cite{van_der_aalst_application_1998}.} 

To construct a branching composite, it suffices to compute the colimit of a b-diagram by performing a pushout operation along a coproduct construction. More specifically, if we consider the b-diagram from Definition \ref{def:computon-branching-diagram}, the operation is $\lambda_2 +_{\lambda_0+\lambda_1} \lambda_3$ such that $\lambda_2$ and $\lambda_3$ are the computons being branched (see Definition \ref{def:computon-branching}). As the apex of such a pushout is the coproduct of a trivial computon $\lambda_0$ (which can be injected into all the e-inports of both operands) and a trivial computon $\lambda_1$ (which can be identified with all the e-outports of both operands), a branching structure can be constructed only from computons with isomorphic interfaces. By Lemma \ref{lem:computon-branching-exists}, this pushout construction can always be computed in $\textbf{Set}^\textbf{Comp}$. 

\begin{definition}[Branching Computon]\label{def:computon-branching}
A branching computon is the colimit of a b-diagram.
\end{definition}

\begin{notation} \label{notation:branching}
For convenience, we use a question mark to reflect the fact that computons are chosen non-deterministically. For example, we write $\lambda_{2} ?_{\rho} \lambda_{3}$ for the colimit of the b-diagram $\rho$ shown in Definition \ref{def:computon-branching-diagram}.
\end{notation}

\begin{lemma}\label{lem:computon-branching-exists}
A branching computon can always be constructed in $\textbf{Set}^\textbf{Comp}$. 
\end{lemma}
\begin{proof}
Consider the b-diagram shown in Definition \ref{def:computon-branching-diagram}. By Proposition \ref{prop:computon-coproduct}, we know that the coproduct $\lambda_0+\lambda_1$ can be formed. As there are markers ${\lambda_j^+:\lambda_0 \rightarrow \lambda_j}$ and ${\lambda_j^-:\lambda_1 \rightarrow \lambda_j}$ for $j \in \{2,3\}$, we use the universal property of coproducts to deduce there also are unique computon morphisms ${(\lambda_2^+,\lambda_2^-):\lambda_0+\lambda_1 \rightarrow \lambda_2}$ and ${(\lambda_3^+,\lambda_3^-):\lambda_0+\lambda_1 \rightarrow \lambda_3}$. Assuming ${\beta_1: \lambda_0 \rightarrow \lambda_0+\lambda_1}$ and ${\beta_2: \lambda_1 \rightarrow \lambda_0+\lambda_1}$ are the canonical injections into ${\lambda_0+\lambda_1}$, we now prove that the induced span ${\lambda_2 \xleftarrow{(\lambda_2^+,\lambda_2^-)} \lambda_0+\lambda_1 \xrightarrow{(\lambda_3^+,\lambda_3^-)} \lambda_3}$ is pushable.

If $p_2 \in (\lambda_2^+,\lambda_2^-)(\vec{i}(\lambda_3^+,\lambda_3^-))$, there is some $q \in \vec{i}(\lambda_3^+,\lambda_3^-)$ for which $(\lambda_2^+,\lambda_2^-)(q)=p_2$, i.e., $\bullet (\lambda_3^+,\lambda_3^-)(q) \setminus (\lambda_3^+,\lambda_3^-)(\bullet q) \neq \emptyset$ which implies $(\lambda_3^+,\lambda_3^-)(q) \notin P_3^+$. Consequently, $(\lambda_3^+,\lambda_3^-)(q)$ is not in the image of the in-marker $\lambda_3^+$ so, by coproduct definition, there must be some ${p_1 \in \vec{i}(\lambda_3^-)}$ such that ${\lambda_3^-(p_1)=(\lambda_3^+,\lambda_3^-)(q) \in P_3^-}$. As ${\lambda_3^-=(\lambda_3^+,\lambda_3^-)\circ\beta_2}$ (by coproduct commutativity), we have ${\lambda_3^-(p_1)=(\lambda_3^+,\lambda_3^-)(q)=(\lambda_3^+,\lambda_3^-)(\beta_2(p_1))}$. That is, ${q=\beta_2(p_1)}$ given that the $P$-component of $(\lambda_3^+,\lambda_3^-)$ is injective. Injectivity follows from the fact that ${\lambda_3^+}$ and ${\lambda_3^-}$ are both injective (see Definition \ref{def:computon-morphism-markers}) and that ${P_3^+\cap P_3^-=\emptyset}$ because $\lambda_3$ is a connected computon (see Definition \ref{def:computon-branching-diagram} and Proposition \ref{prop:computon-connected-isolated-port}).

Now, observing ${\lambda_2^-=(\lambda_2^+,\lambda_2^-)\circ\beta_2}$ and considering that ${\lambda_2^-}$ is an out-marker, we obtain $\lambda_2^-(p_1)=(\lambda_2^+,\lambda_2^-)(\beta_2(p_1))=(\lambda_2^+,\lambda_2^-)(q)=p_2 \in P_2^-$. Hence, $(\lambda_2^+,\lambda_2^-)(\vec{i}(\lambda_3^+,\lambda_3^-)) \subseteq P_2^- \subseteq P_2^- \cup P_2^+$. As the other conditions of Definition \ref{def:computon-morphisms-pushable} can be proved analogously, the pushout ${(\beta_3:\lambda_2 \rightarrow \lambda_4, \lambda_4, \beta_4:\lambda_3 \rightarrow \lambda_4)}$ of ${\lambda_2 \xleftarrow{(\lambda_2^+,\lambda_2^-)} \lambda_0+\lambda_1 \xrightarrow{(\lambda_3^+,\lambda_3^-)} \lambda_3}$ can be constructed. To prove $\lambda_4$ is the colimit of the original b-diagram, suppose there is a cone:
\[
\begin{tikzcd}
 & \lambda_0 \arrow[ddl, crossing over, "\lambda_2^+"']\arrow[drr, crossing over, "\lambda_3^+"]\arrow[dddddr, opacity=0.4] & & \\
 & & & \lambda_3 \arrow[ddddl, opacity=0.4] \\
\lambda_2 \arrow[dddrr, opacity=0.4] & & & \\
 & & \lambda_1 \arrow[ull, crossing over, "\lambda_2^-"]\arrow[uur, crossing over, "\lambda_3^-"']\arrow[dd, opacity=0.4] & \\
 & & & \\
 & & \lambda_5 & 
\end{tikzcd}
\]
As there evidently are morphisms ${\lambda_0 \rightarrow \lambda_5}$ and ${\lambda_1 \rightarrow \lambda_5}$, we use the universal property of coproducts to deduce there is a unique computon morphism ${\lambda_0+\lambda_1 \rightarrow \lambda_5}$ such that the corresponding diagram commutes. As there also are computon morphisms ${\lambda_2 \rightarrow \lambda_5}$ and ${\lambda_3 \rightarrow \lambda_5}$, we use the universal property of pushouts to deduce there is a unique morphism ${\lambda_4 \rightarrow \lambda_5}$ that makes everything commute in our construction. Therefore, $\lambda_4$ is the colimit of the original b-diagram.
\end{proof}

\begin{corollary}\label{cor:computon-branching-connected}
A branching computon is a connected computon.
\end{corollary}
\begin{proof}
As a branching computon is the pushout of a pushable span whose legs are connected computons (see the proof of Lemma \ref{lem:computon-branching-exists} and Definition \ref{def:computon-branching-diagram}), we simply use Proposition \ref{prop:pushout-connected} to deduce that any branching computon is a connected computon.
\end{proof}

To clarify the construction presented in the proof of Lemma \ref{lem:computon-branching-exists}, Figure \ref{fig:computon-branching-example} presents a complete, self-descriptive example for the construction of a branching computon $\lambda_1 ?_\rho \lambda_2$ where $\rho$ is the b-diagram displayed at the top. Particularly, for $j \in \{1,2\}$, $\lambda_j$ is a computon with an in-marker $\lambda_j^+$ and an out-marker $\lambda_j^-$ and the computon morphisms $\beta_1$ and $\beta_2$ are canonical injections satisfying the universal property of coproducts. By this property, there are unique computon morphisms $(\lambda_1^+,\lambda_1^-)$ and $(\lambda_2^+,\lambda_2^-)$ whose pushout yields the branching computon $\lambda_1 ?_\rho \lambda_2$ together with induced morphisms $\gamma_1$ and $\gamma_2$. 

\begin{figure}[!h]
\centering
\begin{tikzpicture}
\begin{scope}
\begin{scope}\draw[->,opacity=0.4] (3.8,4.8) to node[left]{\scriptsize $\gamma_1$} (4.8,3.5);\end{scope}
\begin{scope}\draw[->,opacity=0.4] (8.8,4.8) to node[right]{\scriptsize $\gamma_2$} (8,3.5);\end{scope}
\begin{scope}\draw[->,dashed,opacity=0.4] (5.3,8.4) to node[left]{\scriptsize $(\lambda_1^+,\lambda_1^-)$} (3.8,7.1);\end{scope}
\begin{scope}\draw[->,dashed,opacity=0.4] (7.3,8.4) to node[right]{\scriptsize $(\lambda_2^+,\lambda_2^-)$} (8.8,7.1);\end{scope}
\begin{scope}\draw[->,opacity=0.4] (5.3,9.9) to node[left]{\scriptsize $\beta_1$} (6.1,8.9);\end{scope}
\begin{scope}\draw[->,opacity=0.4] (7.1,9.9) to node[right]{\scriptsize $\beta_2$} (6.3,8.9);\end{scope}
\begin{scope}\draw[<-] (2.7,7.1) to[bend left=25] node[left]{\scriptsize $\lambda_1^+$} (4.4,10);\end{scope}
\begin{scope}\draw[<-] (9.9,7.1) to[bend right=25] node[right]{\scriptsize $\lambda_2^-$} (8.2,10);\end{scope}
\begin{scope}\draw[<-] (4.6,7.1) to[bend left=25] node[left]{\scriptsize $\lambda_1^-$} (7.1,10.1);\end{scope}
\begin{scope}\draw[<-] (8.1,7.1) to[bend right=25] node[right]{\scriptsize $\lambda_2^+$} (5.3,10.1);\end{scope}

\begin{scope}[xshift=5cm,yshift=7.3cm]
  \qmatch{1i0}{0}{3}{};
  \dmatch{1i1}{0}{2.6}{$1$}{left};
  \dmatch{1i2}{0}{2.2}{$2$}{left};
\end{scope}
\begin{scope}[xshift=6.4cm,yshift=7.3cm]
  \qmatch{2i0}{1}{3}{};
  \dmatch{2i1}{1}{2.6}{$3$}{right};
  \dmatch{2i2}{1}{2.2}{$4$}{right};
\end{scope}

\begin{scope}[xshift=5.7cm,yshift=5.7cm]
  \qmatch{1i0}{0}{3}{};
  \dmatch{1i1}{0}{2.6}{$1$}{left};
  \dmatch{1i2}{0}{2.2}{$2$}{left};
  
  \qmatch{2i0}{1}{3}{};
  \dmatch{2i1}{1}{2.6}{$3$}{right};
  \dmatch{2i2}{1}{2.2}{$4$}{right};
\end{scope}

\begin{scope}[xshift=2.5cm,yshift=5cm]
  \computonPrimitive{0.8}{0}{1}{1.9}{$\lambda_1$}
  \qin{1q0}{0}{1.7}{}
  \din{1i1}{0}{1.1}{$1$};
  \din{1i2}{0}{0.5}{$2$};
  \qout{1q1}{1.8}{1.7}{}
  \dout{1o1}{1.8}{1.1}{$3$};
  \dout{1o2}{1.8}{0.5}{$4$};
\end{scope}
\begin{scope}[xshift=7.5cm,yshift=5cm]
  \computonPrimitive{0.8}{0}{1}{1.9}{$\lambda_2$}
  \qin{1q0}{0}{1.7}{}
  \din{1i1}{0}{1.1}{$1$};
  \din{1i2}{0}{0.5}{$2$};
  \qout{1q1}{1.8}{1.7}{}
  \dout{1o1}{1.8}{1.1}{$3$};
  \dout{1o2}{1.8}{0.5}{$4$};
\end{scope}

\begin{scope}[xshift=5cm]
	\computonComposite{0.3}{0}{2.4}{4.5};

	\computonPrimitive{1}{2.4}{1}{1.9}{$\lambda_1$};
	\computonPrimitive{1}{0.2}{1}{1.9}{$\lambda_2$};
	
	\qinplain{q0}{0}{2.7}{};
	\dinplain{j0}{0}{2.3}{$1$};
	\dinplain{k0}{0}{1.9}{$2$};
	\flow{q0}{{1,3.2}}{dashed}{};\flow{q0}{{1,1.3}}{dashed}{};
	\flow{j0}{{1,3}}{}{};\flow{j0}{{1,1.1}}{}{};
	\flow{k0}{{1,2.8}}{}{};\flow{k0}{{1,0.9}}{}{};

	\qoutplain{q1}{2.2}{2.7}{};
	\doutplain{j1}{2.2}{2.3}{$3$};
	\doutplain{k1}{2.2}{1.9}{$4$};
	\flow{{2,3.2}}{q1}{dashed}{};\flow{{2,1.3}}{q1}{dashed}{};
	\flow{{2,3}}{j1}{}{};\flow{{2,1.1}}{j1}{}{};
	\flow{{2,2.8}}{k1}{}{};\flow{{2,0.9}}{k1}{}{};
\end{scope}
\end{scope}
\end{tikzpicture}
\caption{Constructing a branching computon $\lambda_1 ?_\rho \lambda_2$ where $\rho$ is the b-diagram at the top (whose morphisms are displayed as black arrows). Here, $\lambda_1$ and $\lambda_2$ are both isomorphic to the left operand presented in the example of Figure \ref{fig:computon-sequential-example}, and the $\beta$-morphisms are the canonical injections into the coproduct of the trivial computons in $\rho$. The branching computon $\lambda_1 ?_\rho \lambda_2$ is simply the pushout of the unique morphisms induced from the universal property of coproducts, namely $(\lambda_1^+,\lambda_1^-)$ and $(\lambda_2^+,\lambda_2^-)$.}
\label{fig:computon-branching-example}
\end{figure}

Branching is an operation that enables the non-deterministic selection of a computon out of two possible ones. So, even if we interchange $\lambda_1$ and $\lambda_2$ in the construction depicted in Figure \ref{fig:computon-branching-example}, the colimit would be isomorphic, i.e., constructing a branching computon is a commutative operation (see Proposition \ref{prop:computon-branching-commutative}). As grouping does not alter the colimit result either, branching is associative in addition (see Proposition \ref{prop:computon-branching-associative}). 

\begin{proposition}[Branching composition is commutative]\label{prop:computon-branching-commutative}
There is an isomorphism between ${\lambda_1 ?_{\rho_1} \lambda_2}$ and ${\lambda_2 ?_{\rho_2} \lambda_1}$ for any branching computons ${\lambda_1 ?_{\rho_1} \lambda_2}$ and ${\lambda_2 ?_{\rho_2} \lambda_1}$.
\end{proposition}
\begin{proof}
The proof follows directly from the well-known fact that categorical pushout is commutative up to unique isomorphism. 
\end{proof}

\vspace{2pt}

\begin{proposition}[Branching composition is associative]\label{prop:computon-branching-associative}
There is an isomorphism between ${(\lambda_1 ?_{\rho_1} \lambda_2) ?_{\rho_2} \lambda_3}$ and ${\lambda_1 ?_{\rho_4} (\lambda_2 ?_{\rho_3} \lambda_3)}$ for any branching computons ${\lambda_1 ?_{\rho_1} \lambda_2}$, ${(\lambda_1 ?_{\rho_1} \lambda_2) ?_{\rho_2} \lambda_3}$, ${\lambda_2 ?_{\rho_3} \lambda_3}$ and ${\lambda_1 ?_{\rho_4} (\lambda_2 ?_{\rho_3} \lambda_3)}$.
\end{proposition}
\begin{proof}
The proof is similar to that of Proposition \ref{prop:computon-sequential-total-associative}.
\end{proof}

\vspace{2pt}

Unfortunately, not every pair of connected computons is a candidate to define a branching composite. This is because the e-inports of one computon must totally match the e-inports of the other, with the same being true for e-outports (hence the retrictions imposed by the morphisms of a b-diagram --- see Definition \ref{def:computon-branching-diagram}). Nevertheless, when a pair of computons meets such restrictions, Corollary \ref{cor:computon-branching-connected} states that their corresponding branching composite is always a connected computon.

\vspace{2pt}

\subsubsection{Operational semantics for branching computons (in the theory of Petri nets)}

No matter whether we use any of the three functorial constructions presented in Section \ref{sec:operational-semantics}, the underlying Petri net of a branching computon does not have any additional places or transitions beyond those from the composed computon nets. The general structure of a branching computon's net is depicted in Figure \ref{fig:computon-branching-net}. 

\begin{figure}[!h]
\centering
{
\begin{tikzpicture}
\node[place,label={left:\scriptsize $p_n$},minimum size=3mm] (pn) at (0,0.9) {};
\node at (0,1.5){$\vdots$};
\node[place,label={left:\scriptsize $p_1$},minimum size=3mm] (p1) at (0,1.9) {};

\draw[dotted] (0.7,1.7) rectangle (2.2,3.2);\node at (1.4,2.4){\scriptsize $\lambda_1$-net};
\draw[dotted] (0.7,-0.3) rectangle (2.2,1.2);\node at (1.4,0.4){\scriptsize $\lambda_2$-net};

\node[place,label={right:\scriptsize $q_j$},minimum size=3mm] (q1) at (2.9,0.9) {};
\node at (2.9,1.5){$\vdots$};
\node[place,label={right:\scriptsize $q_1$},minimum size=3mm] (qj) at (2.9,1.9) {};

\draw[-latex,thick] (p1) -- ($(p1)+(0.7,0.5)$);\draw[-latex,thick] (p1) -- ($(p1)+(0.7,-1)$);
\draw[-latex,thick] (pn) -- ($(pn)+(0.7,1)$);\draw[-latex,thick] (pn) -- ($(pn)+(0.7,-0.5)$);
\draw[-latex,thick] ($(q1)+(-0.7,1)$) -- (q1);\draw[-latex,thick] ($(q1)+(-0.7,-0.5)$) -- (q1);
\draw[-latex,thick] ($(qj)+(-0.7,0.5)$) -- (qj);\draw[-latex,thick] ($(qj)+(-0.7,-1)$) -- (qj);
\end{tikzpicture}
}
\caption{General structure of the Petri net of a branching computon $\lambda_1?_{\rho}\lambda_2$ constructed from connected computons $\lambda_1$ and $\lambda_2$ that have $n$ e-inports and $j$ e-outports each. This structure is applicable to all the functorial constructions from Section \ref{sec:operational-semantics}, namely $\mathcal{N}$, $\mathcal{C}\circ\mathfrak{E}$ and $\mathcal{D}$.}
\label{fig:computon-branching-net}
\end{figure} 

Unfortunately, a net ${\mathcal{N}(\lambda_1?_{\rho}\lambda_2)}$ is not deadlock-free in general, especially when $\lambda_1$ or $\lambda_2$ rely on partial sequencing. To fully ensure deadlock-freedom, the composed computons need to be ``adapted'' by attaching an in-sync to each of them; thereby, forming total sequential computons that synchronise e-inports. These sequential entities can then composed into a branching structure which is guaranteed to be deadlock-free, provided that ${\mathcal{N}(\lambda_1)}$ and ${\mathcal{N}(\lambda_2)}$ are deadlock-free too. Proposition \ref{prop:computon-branching-deadlock} uses this adaptation process to show that it is possible to convert the net of any branching computon into a deadlock-free one. 

\begin{proposition}\label{prop:computon-branching-deadlock}
Let ${\lambda_1?_{\rho}\lambda_2}$ be a branching computon and assume ${\lambda_3}$ and ${\lambda_4}$ are in-syncs of ${\lambda_1}$ and ${\lambda_2}$, respectively. If ${\mathcal{N}(\lambda_1)}$ and ${\mathcal{N}(\lambda_2)}$ are deadlock-free, there are spans ${\rho_1}$, ${\rho_2}$ and ${\rho_3}$ such that ${\mathcal{N}((\lambda_3\unrhd_{\rho_1}\lambda_1)?_{\rho_3}(\lambda_4\unrhd_{\rho_2}\lambda_2))}$ is deadlock-free.
\end{proposition}
\begin{proof}
If ${\lambda_1?_{\rho}\lambda_2}$ is a branching computon, we know by Definition \ref{def:computon-branching-diagram} that ${\lambda_1}$ and ${\lambda_2}$ are connected computons. If ${\lambda_3}$ and ${\lambda_4}$ are in-syncs of ${\lambda_1}$ and ${\lambda_2}$, respectively, Proposition \ref{prop:sync-in} says there must be spans ${\rho_1}$ and ${\rho_2}$ such that ${\lambda_3\unrhd_{\rho_1}\lambda_1}$ and ${\lambda_4\unrhd_{\rho_2}\lambda_2}$ exist. By Proposition \ref{prop:computon-sequential-inports-outports-2} and Definition \ref{def:sync}, we deduce that the in-markers of ${\lambda_3\unrhd_{\rho_1}\lambda_1}$ and ${\lambda_4\unrhd_{\rho_2}\lambda_2}$ share domain with ${\lambda_1^+}$ and ${\lambda_2^+}$, respectively. Similarly, using Proposition \ref{prop:computon-sequential-inports-outports-2}, we further derive that the out-markers of ${\lambda_3\unrhd_{\rho_1}\lambda_1}$ and ${\lambda_4\unrhd_{\rho_2}\lambda_2}$ share domain with ${\lambda_1^-}$ and ${\lambda_2^-}$, respectively. Considering that the domains of ${\lambda_1^+}$ and ${\lambda_2^+}$ coincide in the b-diagram ${\rho}$ (as per Definition \ref{def:computon-branching-diagram}) and that sequential computons are connected (see Proposition \ref{prop:computon-sequential-connected}), we have just constructed a new b-diagram ${\rho_3}$ out of ${\rho}$. Simply using Lemma \ref{lem:computon-branching-exists}, we deduce ${(\lambda_3\unrhd_{\rho_1}\lambda_1)?_{\rho_3}(\lambda_4\unrhd_{\rho_2}\lambda_2)}$ exists. According to the functorial construction presented in Definition \ref{def:functor-computon-to-petri}, we know the net ${\mathcal{N}((\lambda_3\unrhd_{\rho_1}\lambda_1)?_{\rho_3}(\lambda_4\unrhd_{\rho_2}\lambda_2))}$ must have the following form:

\begin{center}
\begin{tikzpicture}
\node[place,label={left:\scriptsize $p_n$},minimum size=3mm] (pn) at (0,0.9) {};
\node at (0,1.5){$\vdots$};
\node[place,label={left:\scriptsize $p_1$},minimum size=3mm] (p1) at (0,1.9) {};

\node[transition,fill=black,minimum width=0.1mm,minimum height=10mm,label=\scriptsize$\mathcal{N}(\lambda_3)$] (t1) at (0.7,2.4) {};
\node[transition,fill=black,minimum width=0.1mm,minimum height=10mm,label=below:\scriptsize$\mathcal{N}(\lambda_4)$] (t2) at (0.7,0.6) {};

\node[place,label={[yshift=-0.1cm,xshift=0.3cm]:\scriptsize $r_n$},minimum size=3mm] (r1) at (1.4,1.9) {};
\node at (1.4,2.5){$\vdots$};
\node[place,label={[yshift=-0.1cm,xshift=0.3cm]:\scriptsize $r_1$},minimum size=3mm] (rn) at (1.4,2.9) {};
\draw[dotted] (2.3,1.7) rectangle (3.8,3.2);\node at (3,2.4){\scriptsize $\mathcal{N}(\lambda_1)$};

\node[place,label={[yshift=-0.1cm,xshift=0.3cm]:\scriptsize $s_n$},minimum size=3mm] (s1) at (1.4,0) {};
\node at (1.4,0.6){$\vdots$};
\node[place,label={[yshift=-0.1cm,xshift=0.3cm]:\scriptsize $s_1$},minimum size=3mm] (sn) at (1.4,1) {};
\draw[dotted] (2.3,-0.3) rectangle (3.8,1.2);\node at (3,0.4){\scriptsize $\mathcal{N}(\lambda_2)$};

\node[place,label={right:\scriptsize $q_j$},minimum size=3mm] (q1) at (4.5,0.9) {};
\node at (4.5,1.5){$\vdots$};
\node[place,label={right:\scriptsize $q_1$},minimum size=3mm] (qj) at (4.5,1.9) {};

\draw[-latex,thick] (p1) -- ($(p1)+(0.7,0.8)$);\draw[-latex,thick] (p1) -- ($(p1)+(0.7,-1.2)$);
\draw[-latex,thick] (pn) -- ($(pn)+(0.7,1.3)$);\draw[-latex,thick] (pn) -- ($(pn)+(0.7,-0.6)$);
\draw[-latex,thick] ($(q1)+(-0.7,1)$) -- (q1);\draw[-latex,thick] ($(q1)+(-0.7,-0.5)$) -- (q1);
\draw[-latex,thick] ($(qj)+(-0.7,0.5)$) -- (qj);\draw[-latex,thick] ($(qj)+(-0.7,-1)$) -- (qj);

\draw[-latex,thick] (t1) -- (r1);\draw[-latex,thick] (r1) -- ($(r1)+(0.9,0)$);
\draw[-latex,thick] (t1) -- (rn);\draw[-latex,thick] (rn) -- ($(rn)+(0.9,0)$);
\draw[-latex,thick] (t2) -- (s1);\draw[-latex,thick] (s1) -- ($(s1)+(0.9,0)$);
\draw[-latex,thick] (t2) -- (sn);\draw[-latex,thick] (sn) -- ($(sn)+(0.9,0)$);
\end{tikzpicture}
\end{center}

Definition \ref{def:marking} says that the initial state ${M_i}$ of the above net is a marking function where ${M_i(p)>0}$ for all ${p\in\{p_1,\ldots,p_n\}}$ and no tokens in all the other places, including those inside ${\mathcal{N}(\lambda_1)}$ and ${\mathcal{N}(\lambda_2)}$. Although ${M_i}$ enables the only transitions of ${\mathcal{N}(\lambda_3)}$ and ${\mathcal{N}(\lambda_4)}$, it is clear that only one of them will be chosen for firing due to mutual exclusion. Consequently, we have two possible execution paths:
\begin{itemize}
\item ${\mathcal{N}(\lambda_3)}$ fires to put tokens in each place in ${\{r_1,\ldots,r_n\}}$; thereby, reaching a correspondence with the initial state of ${\mathcal{N}(\lambda_1)}$. 
\item ${\mathcal{N}(\lambda_4)}$ fires to put tokens in each place in ${\{s_1,\ldots,s_n\}}$, thereby, reaching a correspondence with the initial state of ${\mathcal{N}(\lambda_2)}$.
\end{itemize}
If we assume that ${\mathcal{N}(\lambda_1)}$ and ${\mathcal{N}(\lambda_2)}$ are both deadlock-free, it is evident that the net ${\mathcal{N}((\lambda_3\unrhd_{\rho_1}\lambda_1)?_{\rho_3}(\lambda_4\unrhd_{\rho_2}\lambda_2))}$ cannot get stuck in any of the two possible execution paths. Therefore, ${\mathcal{N}((\lambda_3\unrhd_{\rho_1}\lambda_1)?_{\rho_3}(\lambda_4\unrhd_{\rho_2}\lambda_2))}$ is deadlock-free, as required.
\end{proof}

\begin{remark}
As ${\lambda_1?_{\rho}\lambda_2}$ identifies all the e-inports of ${\lambda_1}$ with all e-inports of ${\lambda_2}$, it is evident that the in-syncs ${\lambda_3}$ and ${\lambda_4}$ must be isomorphic. Accordingly, we deliberately use the same subindices for $p$-, $r$- and $s$-places so as to show that there is a one-to-one correspondence between the input and output places of ${\mathcal{N}(\lambda_3)}$ and ${\mathcal{N}(\lambda_4)}$. 
\end{remark}

\begin{remark}\label{rem:branching-deadlock}
Although it is a statement about the functor $\mathcal{N}$, Proposition \ref{prop:computon-branching-deadlock} is applicable to the functors $\mathcal{C}\circ\mathfrak{E}$ and $\mathcal{D}$ presented in Section \ref{sec:operational-semantics}. The proof is valid for $\mathcal{C}\circ\mathfrak{E}$ since Proposition \ref{prop:functor-control-petri} says ${\mathcal{C}}$ is just a restriction of $\mathcal{N}$ to $\mathfrak{E}(\textbf{Set}^{\textbf{Comp}})$. 

As we are only interested in checking deadlock-freedom for $\mathcal{D}$-nets with initial and final states (see Remark \ref{rem:deadlock-freedom}), we only need to consider the form depicted in the proof of Proposition \ref{prop:computon-branching-deadlock} for $j,n>0$. As this is exactly what we have for any net under $\mathcal{N}$ (because any computon always has ec-inports and ec-outports), the proof that any $\mathcal{D}$-net (with initial and final states) is deadlock-free is analogous to that of Proposition \ref{prop:computon-branching-deadlock}.
\end{remark}

\vspace{0.5pt}

\subsubsection{Encapsulation of control flow and data flow in branching computons}

A branching computon encapsulates branching control flow and up to branching data flow, as a result of matching all the e-inports/e-outports of one computon with all the e-inports/e-outports of another. It is branching in the sense a corresponding net chooses an execution path out of two possible ones. For instance, Figure \ref{fig:encapsulation-branching} shows the encapsulation given by the branching computon resulting from the colimit construction depicted in Figure \ref{fig:computon-branching-example}.

\begin{figure}[!h]
\centering
\begin{tikzpicture}
\begin{scope}[xshift=-4cm,yshift=0cm]
	\computonComposite{0.3}{0}{2.4}{4.5};

	\computonPrimitive{1}{2.4}{1}{1.9}{$\lambda_1$};
	\computonPrimitive{1}{0.2}{1}{1.9}{$\lambda_2$};
	
	\qinplain{q0}{0}{2.7}{};
	\dinplain{j0}{0}{2.3}{$1$};
	\dinplain{k0}{0}{1.9}{$2$};
	\flow{q0}{{1,3.2}}{dashed}{};\flow{q0}{{1,1.3}}{dashed}{};
	\flow{j0}{{1,3}}{}{};\flow{j0}{{1,1.1}}{}{};
	\flow{k0}{{1,2.8}}{}{};\flow{k0}{{1,0.9}}{}{};

	\qoutplain{q1}{2.2}{2.7}{};
	\doutplain{j1}{2.2}{2.3}{$3$};
	\doutplain{k1}{2.2}{1.9}{$4$};
	\flow{{2,3.2}}{q1}{dashed}{};\flow{{2,1.3}}{q1}{dashed}{};
	\flow{{2,3}}{j1}{}{};\flow{{2,1.1}}{j1}{}{};
	\flow{{2,2.8}}{k1}{}{};\flow{{2,0.9}}{k1}{}{};
\end{scope}
\draw[opacity=0.3,line width=1.5pt, ->, -Latex] (0,2.5) to node[pos=0.4,yshift=7,xshift=-8]{\scriptsize $\mathcal{C}\circ \mathfrak{E}$} (1.5,3.7);
\begin{scope}[xshift=2.2cm,yshift=2.7cm]
\node[opacity=0.2] at (-0.5,1.8){\scriptsize Control Flow};\node[opacity=0.2] at (-0.5,1.5){\scriptsize Net};
\node[place,label={80:},minimum size=3mm] (i) at (0,0.8) {};
\node[transition,fill=black,minimum width=0.1mm,minimum height=10mm] (1) at (1,1.6) {};
\node[transition,fill=black,minimum width=0.1mm,minimum height=10mm] (2) at (1,0) {};
\node[place,label={80:},minimum size=3mm] (o) at (2,0.8) {};

\draw[-latex,thick] (i)--(1);\draw[-latex,thick] (i)--(2);
\draw[-latex,thick] (1)--(o);\draw[-latex,thick] (2)--(o);
\end{scope}
\draw[opacity=0.3,line width=1.5pt, ->, -Latex] (0,2.2) to node[pos=0.4,yshift=-4,xshift=-5]{\scriptsize $\mathcal{D}$} (1.5,1);
\begin{scope}[xshift=2.2cm,yshift=-0.7cm]
\node[opacity=0.2] at (-1,0.9){\scriptsize Data Flow};\node[opacity=0.2] at (-1,0.6){\scriptsize Net};
\node[place,label={80:},minimum size=3mm] (i1) at (0,1.2) {\scriptsize $1$};
\node[place,label={80:},minimum size=3mm] (i2) at (0,0.4) {\scriptsize $2$};
\node[transition,fill=black,minimum width=0.1mm,minimum height=10mm] (1) at (1,1.6) {};
\node[transition,fill=black,minimum width=0.1mm,minimum height=10mm] (2) at (1,0) {};
\node[place,label={80:},minimum size=3mm] (o3) at (2,1.2) {\scriptsize $3$};
\node[place,label={80:},minimum size=3mm] (o4) at (2,0.4) {\scriptsize $4$};

\draw[-latex,thick] (i1)--(1);\draw[-latex,thick] (i1)--(2);\draw[-latex,thick] (i2)--(1);\draw[-latex,thick] (i2)--(2);
\draw[-latex,thick] (1)--(o3);\draw[-latex,thick] (2)--(o3);\draw[-latex,thick] (1)--(o4);\draw[-latex,thick] (2)--(o4);
\end{scope}
\draw[opacity=0.3,line width=1.5pt, ->, -Latex] (-4.7,2.5) to node[pos=0.4,yshift=8]{\scriptsize $\mathcal{N}$} (-5.9,2.5);
\begin{scope}[xshift=-8.5cm,yshift=2cm]
\node[opacity=0.2] at (-1.6,1){\scriptsize Control and};\node[opacity=0.2] at (-1.7,0.7){\scriptsize Data Flow};\node[opacity=0.2] at (-1.6,0.4){\scriptsize Net};
\node[place,label={80:},minimum size=3mm] (ic) at (0,1.4) {};
\node[place,label={80:},minimum size=3mm] (i1) at (0,0.9) {\scriptsize $1$};
\node[place,label={80:},minimum size=3mm] (i2) at (0,0.4) {\scriptsize $2$};
\node[transition,fill=black,minimum width=0.1mm,minimum height=10mm] (1) at (1,1.6) {};
\node[transition,fill=black,minimum width=0.1mm,minimum height=10mm] (2) at (1,0) {};
\node[place,label={80:},minimum size=3mm] (oc) at (2,1.4) {};
\node[place,label={80:},minimum size=3mm] (o3) at (2,0.9) {\scriptsize $3$};
\node[place,label={80:},minimum size=3mm] (o4) at (2,0.4) {\scriptsize $4$};

\draw[-latex,thick] (i1)--(1);\draw[-latex,thick] (i1)--(2);\draw[-latex,thick] (i2)--(1);\draw[-latex,thick] (i2)--(2);
\draw[-latex,thick] (1)--(o3);\draw[-latex,thick] (2)--(o3);\draw[-latex,thick] (1)--(o4);\draw[-latex,thick] (2)--(o4);
\draw[-latex,thick] (ic)--(1);\draw[-latex,thick] (ic)--(2);\draw[-latex,thick] (1)--(oc);\draw[-latex,thick] (2)--(oc);
\end{scope}
\end{tikzpicture}
\caption{Branching control flow and branching data flow encapsulated by the branching computon from Figure \ref{fig:computon-branching-example}. We label some places for mapping purposes even though Petri nets are not labelled (see Section \ref{sec:operational-semantics}).}
\label{fig:encapsulation-branching}
\end{figure}

\subsection{Iterative Computons}
\label{sec:iterative-computons}

An \emph{iterative computon} is structurally made up of four connected computons, one of which corresponds to the computon $\lambda$ being iterated over while another allows the repeated invocation of $\lambda$. The other two connected computons serve as endpoints to respectively enter and exit the iterative computational structure being defined. In this subsection, we describe two classes of iterative computons: \emph{head-iterative} and \emph{tail-iterative}.

\subsubsection{Head-Iterative Computons}

A \emph{head-iterative computon} decides to either continue or terminate an iterative process just before executing an arbitrary connected computon. To construct it, it suffices to compute the colimit of a so-called \emph{h-diagram} which defines basic building blocks, namely four connected computons and two trivial computons, together with six marker morphisms (see Definitions \ref{def:diagram-h} and \ref{def:computon-iterative-head}). By Lemma \ref{lem:computon-iterative-head-exists}, such a colimit always exists in the category $\textbf{Set}^\textbf{Comp}$ so a head-iterative computon can always be constructed. By Corollary \ref{cor:computon-iterative-head-connected}, a head-iterative computon is a connected computon.

\begin{definition}[H-Diagram]\label{def:diagram-h}
An h-diagram is a diagram with the following shape in $\textbf{Set}^\textbf{Comp}$:
\[
\begin{tikzcd}
 & & \lambda_0 \arrow[dll, "\lambda_2^-"']\arrow[dl, "\lambda_4^+"]\arrow[dr, "\lambda^+"]\arrow[drrr, "\lambda_3^-"] & & & \\
\lambda_2 & \lambda_4 & & \lambda & \lambda_1 \arrow[l, "\lambda^-"]\arrow[r, "\lambda_3^+"'] & \lambda_3
\end{tikzcd}
\]
where:
\begin{itemize}
\item $\lambda$ is a connected computon with an in-marker $\lambda^+$ and an out-marker $\lambda^-$,
\item $\lambda_2$ is a connected computon with an out-marker $\lambda_2^-$,
\item $\lambda_3$ is a connected computon with an in-marker $\lambda_3^+$ and an out-marker $\lambda_3^-$, and
\item $\lambda_4$ is a connected computon with an in-marker $\lambda_4^+$.
\end{itemize}
Evidently, by Definition \ref{def:computon-morphism-markers}, $\lambda_0$ and $\lambda_1$ are trivial computons, serving as domains for the markers involved in the h-diagram.
\end{definition}

\begin{definition}[Head-Iterative Computon]\label{def:computon-iterative-head}
A head-iterative computon is the colimit of an h-diagram.
\end{definition}

\begin{notation}\label{notation:computon-iterative-head}
For convenience, we write a star symbol before a computon symbol $\lambda$ to indicate that the decision-making locus that terminates the iterative process is placed just before the computational structure of $\lambda$. For example, we write $*_{\rho}(\lambda)$ for the colimit of the h-diagram $\rho$ shown in Definition \ref{def:diagram-h}.
\end{notation}

\begin{lemma}\label{lem:computon-iterative-head-exists}
A head-iterative computon can always be constructed in $\textbf{Set}^\textbf{Comp}$.
\end{lemma}
\begin{proof}
Considering the h-diagram shown in Definition \ref{def:diagram-h}, by Propositions \ref{prop:pushout-pushable} and \ref{prop:markers-pushable}, we know that the pushouts of ${\lambda_2 \xleftarrow{\lambda_2^-} \lambda_0 \xrightarrow{\lambda_3^-} \lambda_3}$ and ${\lambda_4 \xleftarrow{\lambda_4^+} \lambda_0 \xrightarrow{\lambda^+} \lambda}$ can be constructed. Let us denote them $(\beta_1:\lambda_2 \rightarrow \lambda_5, \lambda_5,\beta_2:\lambda_3 \rightarrow \lambda_5)$ and $(\beta_3:\lambda_4 \rightarrow \lambda_6, \lambda_6,\beta_4:\lambda \rightarrow \lambda_6)$, respectively. By pushout commutativity, we deduce the existence of computon morphisms $f:\lambda_0 \rightarrow \lambda_5$ and $g:\lambda_0 \rightarrow \lambda_6$ such that $f=\beta_1\circ\lambda_2^-=\beta_2\circ\lambda_3^-$ and $g=\beta_3\circ\lambda_4^+=\beta_4\circ\lambda^+$.

Now, by Proposition \ref{prop:computon-coproduct}, we know that the coproduct $\lambda_0+\lambda_1$ can be formed. Using the universal property of coproducts, we also know there must be unique computon morphisms $(f,\beta_2\circ\lambda_3^+)$ and $(g,\beta_4\circ\lambda^-)$. To show that ${\lambda_5 \xleftarrow{(f,\beta_2\circ\lambda_3^+)} \lambda_0+\lambda_1 \xrightarrow{(g,\beta_4\circ\lambda^-)} \lambda_6}$ is pushable, assume $p_6 \in (g,\beta_4\circ\lambda^-)(\vec{o}(f,\beta_2\circ\lambda_3^+))$ so there is some $q \in \vec{o}(f,\beta_2\circ\lambda_3^+)$ such that $(g,\beta_4\circ\lambda^-)(q)=p_6$. As $(f,\beta_2\circ\lambda_3^+)(q)\bullet\setminus(f,\beta_2\circ\lambda_3^+)(q\bullet)\neq\emptyset$, it is true that $(f,\beta_2\circ\lambda_3^+)(q) \notin P_5^-$. By coproduct definition and considering that $\lambda_5$ is the pushout of the span ${\lambda_2 \xleftarrow{\lambda_2^-} \lambda_0 \xrightarrow{\lambda_3^-} \lambda_3}$ of out-marker morphisms, we use Proposition \ref{prop:markers-inout} to deduce there is some $p_1 \in P_1$ where $\beta_2(\lambda_3^+(p_1))=(f,\beta_2\circ\lambda_3^+)(q)$. Again, by coproduct definition, we get $(g,\beta_4\circ\lambda^-)(q)=p_6=\beta_4(\lambda^-(p_1)) \in P_6^-$ (because $\lambda^-$ is an out-marker morphism and $\beta_4$ is an induced morphism for the pushout of ${\lambda_4 \xleftarrow{\lambda_4^+} \lambda_0 \xrightarrow{\lambda^+} \lambda}$). As the other conditions of Definition \ref{def:computon-morphisms-pushable} are proved analogously, it is true that the pushout $\lambda_7$ of $(f,\beta_2\circ\lambda_3^+)$ and $(g,\beta_4\circ\lambda^-)$ can be constructed. We now show that such a pushout satisfies the universal property of the colimit of the original h-diagram by supposing there is a cone:

\[
\begin{tikzcd}
 & & \lambda_0 \arrow[dll, "\lambda_2^-"']\arrow[dl, "\lambda_4^+"]\arrow[dr, "\lambda^+"]\arrow[drrr, "\lambda_3^-"]\arrow[dd, opacity=0.4] & & & \\
\lambda_2 \arrow[drr, bend right=30, opacity=0.4] & \lambda_4 \arrow[dr, bend right=30, opacity=0.4] & & \lambda \arrow[dl, bend left=30, opacity=0.4] & \lambda_1 \arrow[dll, bend left=30, opacity=0.4] \arrow[l, "\lambda^-"]\arrow[r, "\lambda_3^+"'] & \lambda_3 \arrow[dlll, bend left=20, opacity=0.4] \\
 & & \lambda_8 & & &
\end{tikzcd}
\]

Since $\lambda_5$ and $\lambda_6$ are the respective pushouts of ${\lambda_2 \xleftarrow{\lambda_2^-} \lambda_0 \xrightarrow{\lambda_3^-} \lambda_3}$ and ${\lambda_4 \xleftarrow{\lambda_4^+} \lambda_0 \xrightarrow{\lambda^+} \lambda}$, by the universal property of pushouts, there are unique computon morphisms $\lambda_5 \rightarrow \lambda_8$ and $\lambda_6 \rightarrow \lambda_8$ that make the corresponding diagrams commute.

In the above cone, there are computon morphisms ${\lambda_0 \rightarrow \lambda_8}$ and ${\lambda_1 \rightarrow \lambda_8}$. Using the universal property of coproducts, we deduce there is a unique computon morphism ${\lambda_0+\lambda_1 \rightarrow \lambda_8}$. Finally, we use the existence of ${\lambda_5 \rightarrow \lambda_8}$ and ${\lambda_6 \rightarrow \lambda_8}$ and the universal property of pushouts to deduce there is a unique computon morphism ${\lambda_7 \rightarrow \lambda_8}$ that makes everything commute in our construction. Thus, proving that $\lambda_7$ is the colimit of the original h-diagram.
\end{proof}

\begin{corollary}\label{cor:computon-iterative-head-connected}
A head-iterative computon is a connected computon.
\end{corollary}
\begin{proof}
In the construction presented in the proof of Lemma \ref{lem:computon-iterative-head-exists}, the pushouts $\lambda_5$ and $\lambda_6$ are connected computons by the fact that $\lambda$, $\lambda_2$, $\lambda_3$ and $\lambda_4$ also are (see Proposition \ref{prop:pushout-connected}). Consequently, the pushout $\lambda_7$ of the pushable span ${\lambda_5 \xleftarrow{(f,\beta_2\circ\lambda_3^+)} \lambda_0+\lambda_1 \xrightarrow{(g,\beta_4\circ\lambda^-)} \lambda_6}$ is a connected computon. As $\lambda_7$ is the colimit of the (original) h-diagram shown in Definition \ref{def:diagram-h}, we conclude that every head-iterative computon is a connected computon.
\end{proof}

To clarify the construction presented in the proof of Lemma \ref{lem:computon-iterative-head-exists}, Figure \ref{fig:computon-iterative-head-example} illustrates a complete, self-descriptive example for the formation of a head-iterative computon $*_{\rho}(\lambda)$ over a functional computon $\lambda$ where $\rho$ is the h-diagram shown in the middle. 

\begin{figure*}[!h]
\centering
\begin{tikzpicture}[scale=0.77]
\begin{scope}[xshift=3cm,yshift=19.5cm]
  
\end{scope}
\begin{scope}[xshift=-0.5cm,yshift=8.7cm]
\qmatch{i0}{0}{1.7}{};
\dmatch{i1}{0}{1.3}{$1$}{left};
\dmatch{i2}{0}{0.9}{$2$}{left};
\qmatch{o0}{0.5}{1.7}{};
\dmatch{o1}{0.5}{1.3}{$3$}{right};
\end{scope}
\begin{scope}[xshift=10cm,yshift=9.7cm]
  \computonComposite{0.5}{0}{6}{4.9};
  
  \computonPrimitive{2.3}{0.2}{1}{1.9}{$\lambda_2$};
  \qinplain{2q0}{0.1}{1.8}{};\draw[dashed] (2q0) to node [pos=0.55] {\arrowflow} ({2.3,1.8});
  \dinplain{2i1}{0.1}{1.2}{$4$};\draw (2i1) to node [pos=0.55] {\arrowflow} ({2.3,1.2});
 	
  \computonPrimitive{2.3}{2.4}{1}{1.9}{$\lambda_3$};  
  \qmatch{xq1}{1.5}{4}{};
  \dmatch{xi1}{1.5}{3.4}{$3$}{above};
	\flow{xq1}{{2.3,4}}{dashed}{};
	\flow{xi1}{{2.3,3.4}}{}{};
	
	\qmatch{yq1}{4.1}{2.9}{};
  \dmatch{yo1}{4.1}{2.3}{$1$}{above};
  \dmatch{yo2}{4.1}{1.7}{$2$}{above};
  \flowdiag{{3.3,4}}{yq1}{dashed}{}{pos=0.5,rotate=308};
  \flowdiag{{3.3,1.8}}{yq1}{dashed}{}{pos=0.25,rotate=45};
	\flowdiag{{3.3,3.4}}{yo1}{}{}{pos=0.4,rotate=308};\flowdiag{{3.3,1.2}}{yo1}{}{}{pos=0.4,rotate=45};
	\flowdiag{{3.3,2.8}}{yo2}{}{}{pos=0.25,rotate=308};\flowdiag{{3.3,0.6}}{yo2}{}{}{pos=0.5,rotate=45};
	\flowdiag{yq1}{{4.9,1.8}}{dashed}{}{pos=0.75,rotate=308};	
	
	\computonPrimitive{4.9}{2.4}{1}{1.9}{$\lambda$};
	\flowdiag{yq1}{$(yq1)+(0.8,1.1)$}{dashed}{}{pos=0.5,rotate=45};
	\flowdiag{yo1}{$(yo1)+(0.8,1.1)$}{}{}{pos=0.5,rotate=45};
	\flowdiag{yo2}{$(yo2)+(0.8,1.1)$}{}{}{pos=0.75,rotate=45};
	
	\computonPrimitive{4.9}{0.2}{1}{1.9}{$\lambda_4$}; 
  \qoutplain{3q0}{6}{1.8}{};\flow{{5.9,1.8}}{3q0}{dashed}{};
  \doutplain{3i1}{6}{1.2}{$5$};\flow{{5.9,1.2}}{3i1}{}{}; 
  \flowdiag{yo1}{{4.9,1.2}}{}{}{pos=0.5,rotate=308};
	\flowdiag{yo2}{{4.9,0.6}}{}{}{pos=0.5,rotate=308};     
  
  \draw[dashed] ($(yq1)+(1.8,1.1)$) -- ($(yq1)+(2,1.1)$) -- ($(yq1)+(2,1.6)$) to node [pos=0.45] {\invertedarrowflow} (1.2,4.5) -- (1.2,4) -- (xq1);
  \draw ($(yo1)+(1.8,1.1)$) -- ($(yo1)+(2.2,1.1)$) -- ($(yo1)+(2.2,2.4)$) to node [pos=0.45] {\invertedarrowflow} (1,4.7) -- (1,3.4) -- (xi1);
\end{scope}

\begin{scope}
\begin{scope}\draw[->,opacity=0.4,dashed] (-0.35,9.3) to[bend right=30] node[left]{\scriptsize $(g,\beta_4\circ\lambda^-)$} (3,1);\end{scope}
\begin{scope}\draw[->,opacity=0.4,dashed] (-0.3,10.8) to[bend left=30] node[left]{\scriptsize $(f,\beta_2\circ\lambda_3^+)$} (2.4,17);\end{scope}
\begin{scope}\draw[->,opacity=0.4] (6.4,1) to[bend right=30] node[right]{\scriptsize $\beta_6$} (13.5,9.5);\end{scope}
\begin{scope}\draw[->,opacity=0.4] (6.3,17.3) to[bend left=30] node[right, yshift=1.5mm]{\scriptsize $\beta_5$} (13.5,14.8);\end{scope}

\begin{scope}[xshift=0.7cm,yshift=8.4cm]\draw[->,opacity=0.4] (3.2,1.1) to[bend left=25] node[left]{} (0,1.7);\end{scope}
\begin{scope}[xshift=0.7cm,yshift=8.4cm]\draw[->,opacity=0.4] (8.2,0.9) to[bend right=25] node[left]{} (0,2);\end{scope}

\begin{scope}[xshift=6.9cm,yshift=8.4cm]\draw[->] (2,1) to node[right]{\scriptsize $\lambda_3^+$} (0.7,2.8);\end{scope}
\begin{scope}[xshift=6.9cm,yshift=6.6cm]\draw[->] (2,2.6) to node[right]{\scriptsize $\lambda^-$} (0.7,1);\end{scope}

\begin{scope}[xshift=0.7cm,yshift=8.6cm]\draw[->] (2,1) to node[left]{\scriptsize $\lambda_2^-$} (0.7,2.6);\end{scope}
\begin{scope}[xshift=5cm,yshift=8.6cm]\draw[->] (0.7,1) to node[right]{\scriptsize $\lambda_3^-$} (2,2.6);\end{scope}
\begin{scope}[xshift=0.7cm,yshift=12.8cm]\draw[->,opacity=0.4] (0.7,1) to node[left]{\scriptsize $\beta_1$} (2,2.6);\end{scope} 
\begin{scope}[xshift=5.3cm,yshift=12.8cm]\draw[->,opacity=0.4] (2,1) to node[right]{\scriptsize $\beta_2$} (0.7,2.6);\end{scope}

\begin{scope}[xshift=0.7cm,yshift=6.6cm]\draw[->] (2,2.6) to node[left]{\scriptsize $\lambda_4^+$} (0.7,1);\end{scope}
\begin{scope}[xshift=5cm,yshift=6.6cm]\draw[->] (0.7,2.6) to node[right]{\scriptsize $\lambda^+$} (2,1);\end{scope}
\begin{scope}[xshift=0.7cm,yshift=2.4cm]\draw[->,opacity=0.4] (0.7,2.6) to node[left]{\scriptsize $\beta_3$} (2,1);\end{scope} 
\begin{scope}[xshift=5.3cm,yshift=2.4cm]\draw[->,opacity=0.4] (2,2.6) to node[right]{\scriptsize $\beta_4$} (0.7,1);\end{scope} 

\begin{scope}[xshift=3.2cm,yshift=14cm]
 	\computonComposite{0.3}{0}{2.1}{4.5};
  
  \computonPrimitive{0.8}{2.4}{1}{1.9}{$\lambda_3$}
  \qin{2q0}{0}{4}{}
  \din{2i1}{0}{3.4}{$3$};
  \computonPrimitive{0.8}{0.2}{1}{1.9}{$\lambda_2$}
  \qin{3q0}{0}{1.8}{}
  \din{3i1}{0}{1.2}{$4$};
  
  \qoutplain{q1}{1.8}{2.9}{}
  \doutplain{o1}{1.8}{2.3}{$1$};
  \doutplain{o2}{1.8}{1.7}{$2$};
	\flowdiag{{1.8,4}}{q1}{dashed}{}{pos=0.5,rotate=308};\flowdiag{{1.8,1.8}}{q1}{dashed}{}{pos=0.25,rotate=45};
	\flowdiag{{1.8,3.4}}{o1}{}{}{pos=0.35,rotate=308};\flowdiag{{1.8,1.2}}{o1}{}{}{pos=0.25,rotate=45};
	\flowdiag{{1.8,2.8}}{o2}{}{}{pos=0.25,rotate=308};\flowdiag{{1.8,0.6}}{o2}{}{}{pos=0.25,rotate=45};
\end{scope}

\begin{scope}[xshift=0.2cm,yshift=11.8cm]
  \computonPrimitive{0.8}{0}{1}{1.9}{$\lambda_2$}
  \qin{2q0}{0}{1.7}{}
  \din{2i1}{0}{1.1}{$4$};
  \qout{2q1}{1.8}{1.7}{}
  \dout{2o1}{1.8}{1.1}{$1$};
  \dout{2o2}{1.8}{0.5}{$2$};
\end{scope}
\begin{scope}[xshift=6cm,yshift=11.8cm]
  \computonPrimitive{0.8}{0}{1}{1.9}{$\lambda_3$}
  \qin{3q0}{0}{1.7}{}
  \din{3i1}{0}{1.1}{$3$};
  \qout{3q1}{1.8}{1.7}{}
  \dout{3o1}{1.8}{1.1}{$1$};
  \dout{3o1}{1.8}{0.5}{$2$};
\end{scope}

\begin{scope}[xshift=4.3cm,yshift=8.6cm]
\qmatch{i0}{0}{0.8}{};
\dmatch{i1}{0}{0.4}{$1$}{left};
\dmatch{i2}{0}{0}{$2$}{left};
\end{scope}
\begin{scope}[xshift=9.3cm,yshift=8.9cm]
\qmatch{o0}{0}{0.4}{};
\dmatch{o1}{0}{0}{$3$}{left};
\end{scope}

\begin{scope}[xshift=0.2cm,yshift=5.2cm]
\computonPrimitive{0.8}{0}{1}{1.9}{$\lambda_4$}
  \qin{4q0}{0}{1.7}{}
  \din{4i1}{0}{1.1}{$1$};
  \din{4i1}{0}{0.5}{$2$};
  \qout{q1}{1.8}{1.7}{}
  \dout{3o1}{1.8}{1.1}{$5$};
\end{scope}
\begin{scope}[xshift=6cm,yshift=5.2cm]
  \computonPrimitive{0.8}{0}{1}{1.9}{$\lambda$}
  \qin{q0}{0}{1.7}{}
  \din{i1}{0}{1.1}{$1$};
  \din{i2}{0}{0.5}{$2$};
  \qout{q1}{1.8}{1.7}{}
  \dout{o1}{1.8}{1.1}{$3$};
\end{scope}

\begin{scope}[xshift=3.2cm,yshift=0cm]
  \computonComposite{0.3}{0}{2.1}{4.5};
  
  \computonPrimitive{0.8}{2.4}{1}{1.9}{$\lambda$}
  \qout{2q0}{1.8}{4}{}
  \dout{2i1}{1.8}{3.4}{$3$};
  \computonPrimitive{0.8}{0.2}{1}{1.9}{$\lambda_4$}
  \qout{3q0}{1.8}{1.8}{}
  \dout{3i1}{1.8}{1.2}{$5$};
  
  \qinplain{q1}{0}{2.9}{}
  \dinplain{i1}{0}{2.3}{$1$};
  \dinplain{i2}{0}{1.7}{$2$};
	\flowdiag{q1}{{0.8,4}}{dashed}{}{pos=0.5,rotate=45};\flowdiag{q1}{{0.8,1.8}}{dashed}{}{pos=0.75,rotate=308};
	\flowdiag{i1}{{0.8,3.4}}{}{}{pos=0.5,rotate=45};\flowdiag{i1}{{0.8,1.2}}{}{}{pos=0.5,rotate=308};
	\flowdiag{i2}{{0.8,2.8}}{}{}{pos=0.75,rotate=45};\flowdiag{i2}{{0.8,0.6}}{}{}{pos=0.5,rotate=308};
\end{scope}
\end{scope}
\end{tikzpicture}
\caption{Constructing a head-iterative computon $*_{\rho}(\lambda)$ where $\rho$ is the h-diagram shown in the middle (whose morphisms are displayed as black arrows). Here, $f=\beta_1\circ\lambda_2^-=\beta_2\circ\lambda_3^-$ and $g=\beta_3\circ\lambda_4^+=\beta_4\circ\lambda^+$.}
\label{fig:computon-iterative-head-example}
\end{figure*} 

A glance at Figure \ref{fig:computon-iterative-head-example} reveals that $*_{\rho}(\lambda)$ is constructed from three additional connected computons (i.e., $\lambda_2$, $\lambda_3$ and $\lambda_4$) and two trivial computons. One of the trivial computons serves as the domain for the in-markers $\lambda_4^+$ and $\lambda^+$ so the e-inports of $\lambda_4$ and $\lambda$ match. This trivial computon also serves as the domain for the out-markers $\lambda_2^-$ and $\lambda_3^-$. The other trivial computon is the domain of the in-marker $\lambda_3^+$ and the out-marker $\lambda^-$, i.e., the e-inports of $\lambda_3$ are identified with the e-outports of $\lambda$.

The right-most composite in Figure \ref{fig:computon-iterative-head-example} shows that the connected computons $\lambda_2$ and $\lambda_4$ serve as the respective entry and exit points for the whole iterative structure of $*_{\rho}(\lambda)$. Particularly, $\lambda_2$ is needed because, without this, $*_{\rho}(\lambda)$ will enter into a closed loop with no entry points (i.e., no e-inports); thus, violating Definition \ref{def:computon}. Beyond this, there is no special requirement for the e-inports of $\lambda_2$ or the e-outports of $\lambda_4$, as evidenced by the h-diagram shown in Definition \ref{def:diagram-h}. Not enforcing specific requirements on this matter enables a high degree of modelling flexibility. For instance, it is possible to deem $\lambda_2$ as a computon that replicates data (when its e-inports and e-outports are isomorphic) or as a computon that receives data of a certain type, performs some processing on that data and returns data of a different type. In our particular example, as its e-inports and e-outports do not coincide, we can treat $\lambda_2$ as a computon that pre-processes (or filters) information before sending it into the iterative computation defined over $\lambda$. By Theorem \ref{th:computon-iterative-head-always}, a head-iterative computon can always be formed for any arbitrary connected computon, regardless of the data such an arbitrary computon requires or produces.

\begin{theorem}\label{th:computon-iterative-head-always}
$\lambda$ is a connected computon $\iff$ a head-iterative computon $*_\rho(\lambda)$ exists for some h-diagram $\rho$.
\end{theorem}
\begin{proof}
$(\implies)$ Let $\lambda$ be an arbitrary connected computon. By Proposition \ref{prop:markers-always}, we deduce the existence of an in-marker ${\lambda^+:\lambda_0 \rightarrow \lambda}$ and an out-marker ${\lambda^-:\lambda_1 \rightarrow \lambda}$. Now, if $\lambda_2$ and $\lambda_3$ are connected computons and duals of $\lambda$ (see Proposition \ref{prop:markers-connected-dual}), Definition \ref{def:computon-dual} says there is an in-marker ${\lambda_3^+:\lambda_1 \rightarrow \lambda_3}$ as well as out-markers ${\lambda_3^-:\lambda_0 \rightarrow \lambda_3}$ and ${\lambda_2^-:\lambda_0 \rightarrow \lambda_2}$. Finally, if $\lambda_4$ is a computon isomorphic to $\lambda$, $\lambda_4$ must be connected and there must evidently exists an in-marker ${\lambda_4^+:\lambda_0 \rightarrow \lambda_4}$. 

The above construction corresponds to that of an h-diagram $\rho$ so we simply apply Lemma \ref{lem:computon-iterative-head-exists} to deduce that the colimit of $\rho$ exists. Using Definition \ref{def:computon-iterative-head} and Notation \ref{notation:computon-iterative-head}, we conclude such a colimit is the head-iterative computon $*_\rho(\lambda)$.

$(\impliedby)$ This part of the proof follows directly from Definition \ref{def:diagram-h}.
\end{proof}

\subsubsection{Operational semantics for head-iterative computons (in the theory of Petri nets)}

No matter whether we use any of the three functorial constructions presented in Section \ref{sec:operational-semantics}, the Petri net of a head-iterative computon has no additional places or transitions beyond those from the nets of the computons of the corresponding h-diagram. The general structure of a net of this sort is depicted in Figure \ref{fig:computon-head-net}.

\begin{figure}[!h]
\centering
{
\begin{tikzpicture}
\node[place,label={left:\scriptsize $p_1$},minimum size=3mm] (p1) at (-0.8,1) {};
\node at (-0.8,0.6){$\vdots$};
\node[place,label={left:\scriptsize $p_m$},minimum size=3mm] (pm) at (-0.8,0) {};
\draw[dotted] (0.1,-0.3) rectangle (1.6,1.2);\node at (0.8,0.4){\scriptsize $\lambda_2$-net};
\node[place,label={above:\scriptsize $q_1$},minimum size=3mm] (q1) at (2.4,1.8) {};
\node at (2.4,1.4){$\vdots$};
\node[place,label={below:\scriptsize $q_n$},minimum size=3mm] (qn) at (2.4,0.8) {};

\draw[dotted] (3.3,-0.3) rectangle (4.8,1.2);\node at (4,0.4){\scriptsize $\lambda_4$-net};
\node[place,label={right:\scriptsize $r_1$},minimum size=3mm] (r1) at (5.6,1) {};
\node at (5.6,0.6){$\vdots$};
\node[place,label={right:\scriptsize $r_j$},minimum size=3mm] (rj) at (5.6,0) {};

\draw[dotted] (3.3,2) rectangle (4.8,3.5);\node at (4,2.7){\scriptsize $\lambda$-net};

\node[place,label={above:\scriptsize $s_1$},minimum size=3mm] (s1) at (-0.5,3.2) {};
\node at (-0.5,2.8){$\vdots$};
\node[place,label={below:\scriptsize $s_k$},minimum size=3mm] (sk) at (-0.5,2.2) {};
\draw[dotted] (0.1,2) rectangle (1.6,3.5);\node at (0.8,2.7){\scriptsize $\lambda_3$-net};

\draw[-latex,thick] (p1) -- ($(p1)+(0.9,0)$);\draw[-latex,thick] (pm) -- ($(pm)+(0.9,0)$);
\draw[-latex,thick] ($(qn)+(-0.8,0)$)--(q1);\draw[-latex,thick] ($(qn)+(-0.8,-1)$)--(qn);
\draw[-latex,thick] (q1) -- ($(qn)+(0.9,0)$);\draw[-latex,thick] (qn) -- ($(qn)+(0.9,-0.9)$);
\draw[-latex,thick] (q1) -- ($(q1)+(0.9,1.5)$);\draw[-latex,thick] (qn) -- ($(qn)+(0.9,1.5)$);
\draw[-latex,thick] ($(q1)+(-0.8,1.5)$)--(q1);\draw[-latex,thick] ($(q1)+(-0.8,0.6)$)--(qn);
\draw[-latex,thick] ($(r1)+(-0.8,0)$)--(r1);\draw[-latex,thick] ($(rj)+(-0.8,0)$)--(rj);
\draw[-latex,thick] (4.8,3.2) -- (5.1,3.2) -- (5.1,3.8) -- (-1,3.8) -- (-1,3.2) -- (s1);\draw[-latex,thick] (s1) -- ($(s1)+(0.6,0)$);
\draw[-latex,thick] (4.8,2.8) -- (5.3,2.8) -- (5.3,4) -- (-1.2,4) -- (-1.2,2.2) -- (sk);\draw[-latex,thick] (sk) -- ($(sk)+(0.6,0)$);
\end{tikzpicture}
}
\caption{General structure of the Petri net of a head-iterative computon, considering the h-diagram from Definition \ref{def:diagram-h}. This structure is applicable to the functorial constructions ${\mathcal{N}}$, ${\mathcal{C}\circ\mathfrak{E}}$ and ${\mathcal{D}}$ from Section \ref{sec:operational-semantics}.}
\label{fig:computon-head-net}
\end{figure} 

Unfortunately, there is no guarantee every head-iterative's net is deadlock-free even when the nets of the computons from the corresponding h-diagram are. Despite of this, it is still possible to enforce deadlock-freedom by using primitive computons as entry, exit and iteration points. Proposition \ref{prop:computon-primitive-deadlock} and Remark \ref{rem:computon-primitive-deadlock} together say every primitive computon's net is deadlock-free. So, as long as the net of the computon being iterated over never gets stuck, the corresponding head-iterative's net will be deadlock-free (see Proposition \ref{prop:computon-iterative-head-deadlock} and Remark \ref{rem:head-deadlock}).

\begin{proposition}\label{prop:computon-iterative-head-deadlock}
Consider the h-diagram $\rho$ depicted in Definition \ref{def:diagram-h} and assume ${M_i}$ and ${M_f}$ are the initial and final markings of the net ${\mathcal{N}(\lambda)}$, respectively. The net ${\mathcal{N}(*_{\rho}(\lambda))}$ is deadlock-free if:
\begin{enumerate}
\item ${\lambda_j}$ is a primitive computon for ${j=2,3,4}$,\label{prop:computon-iterative-head-deadlock-1}
\item ${\mathcal{N}(\lambda)}$ is deadlock-free, and \label{prop:computon-iterative-head-deadlock-2}
\item for every marking ${M}$ reachable from ${M_i}$, if ${M(s)>0}$ for each output place $s$ of ${\mathcal{N}(\lambda)}$ then ${M=M_f}$.\label{prop:computon-iterative-head-deadlock-3}
\end{enumerate}
\end{proposition}
\begin{proof}
Consider the h-diagram $\rho$ from Definition \ref{def:diagram-h} and assume $\lambda_j$ is a primitive computon for $j=2,3,4$. By Proposition \ref{prop:computon-primitive-connected}, $\rho$ is a well-defined h-diagram because each $\lambda_j$ is a connected computon. Using Lemma \ref{lem:computon-iterative-head-exists}, we deduce the existence of $*_{\rho}(\lambda)$ whose underlying net $\mathcal{N}(*_{\rho}(\lambda))$ has the following form according to the functorial construction presented in Definition \ref{def:functor-computon-to-petri}:
\begin{center}
\begin{tikzpicture}
\node[place,label={left:\scriptsize $p_1$},minimum size=3mm] (p1) at (-0.8,1) {};
\node at (-0.8,0.6){$\vdots$};
\node[place,label={left:\scriptsize $p_m$},minimum size=3mm] (pm) at (-0.8,0) {};
\node[transition,fill=black,minimum width=0.1mm,minimum height=10mm,label=\scriptsize $\mathcal{N}(\lambda_2)$] (l2) at (0.8,0.5) {};
\node[place,label={above:\scriptsize $q_1$},minimum size=3mm] (q1) at (2.4,1.8) {};
\node at (2.4,1.4){$\vdots$};
\node[place,label={below:\scriptsize $q_n$},minimum size=3mm] (qn) at (2.4,0.8) {};

\node[transition,fill=black,minimum width=0.1mm,minimum height=10mm,label=\scriptsize $\mathcal{N}(\lambda_4)$] (l4) at (4,0.5) {};
\node[place,label={right:\scriptsize $r_1$},minimum size=3mm] (r1) at (4.8,1) {};
\node at (4.8,0.6){$\vdots$};
\node[place,label={right:\scriptsize $r_j$},minimum size=3mm] (rj) at (4.8,0) {};

\draw[dotted] (3.3,2) rectangle (4.8,3.5);\node at (4,2.7){\scriptsize $\mathcal{N}(\lambda)$};

\node[place,label={above:\scriptsize $s_1$},minimum size=3mm] (s1) at (-0.5,3.2) {};
\node at (-0.5,2.8){$\vdots$};
\node[place,label={below:\scriptsize $s_k$},minimum size=3mm] (sk) at (-0.5,2.2) {};
\node[transition,fill=black,minimum width=0.1mm,minimum height=10mm,label=\scriptsize $\mathcal{N}(\lambda_3)$] (l3) at (0.8,2.7) {};

\draw[-latex,thick] (p1) -- (l2);\draw[-latex,thick] (pm) -- (l2);
\draw[-latex,thick] (l2) -- (q1);\draw[-latex,thick] (l2) -- (qn);
\draw[-latex,thick] (q1) -- (l4);\draw[-latex,thick] (qn) -- (l4);
\draw[-latex,thick] (q1) -- ($(q1)+(0.9,1)$);\draw[-latex,thick] (qn) -- ($(qn)+(0.9,1.5)$);
\draw[-latex,thick] (l3)--(q1);\draw[-latex,thick] (l3)--(qn);
\draw[-latex,thick] (l4) -- (r1);
\draw[-latex,thick] (l4) -- (rj);
\draw[-latex,thick] (4.8,3.2) -- (5.1,3.2) -- (5.1,3.8) -- (-1,3.8) -- (-1,3.2) -- (s1);\draw[-latex,thick] (s1) -- (l3);
\draw[-latex,thick] (4.8,2.8) -- (5.3,2.8) -- (5.3,4) -- (-1.2,4) -- (-1.2,2.2) -- (sk);\draw[-latex,thick] (sk) -- (l3);
\end{tikzpicture}
\end{center}
The above net evidently has the form depicted in Figure \ref{fig:computon-head-net}. The only difference is that, rather than black-boxing ${\mathcal{N}(\lambda_j)}$, we display its internals which consist of only one transition (because ${\lambda_j}$ is primitive). By Definition \ref{def:marking}, we know the initial state $M_i$ of ${\mathcal{N}(*_{\rho}(\lambda))}$ is a marking function where ${M_i(p)>0}$ for all ${p\in\{p_1,\ldots,p_m\}}$ and no tokens for all the other places, including those inside ${\mathcal{N}(\lambda)}$. This marking evidently enables the only transition of ${\mathcal{N}(\lambda_2)}$ and nothing else. Consequently, firing ${\mathcal{N}(\lambda_2)}$ reaches a state that marks each place in ${\{q_1,\ldots,q_n\}}$. Assuming ${\mathcal{N}(\lambda)}$ is deadlock-free and that only its final state can put tokens in each element from ${\{s_1,\ldots,s_k\}}$ (see Conditions \ref{prop:computon-iterative-head-deadlock-2} and \ref{prop:computon-iterative-head-deadlock-3}), we now have two possible execution paths (as per mutual exclusion):
\begin{enumerate}
\item If ${\mathcal{N}(\lambda_4)}$ is executed, the final state of ${\mathcal{N}(*_{\rho}(\lambda))}$ will immediately be reached with tokens in ${r_1,\ldots,r_j}$. Therefore, ${\mathcal{N}(*_{\rho}(\lambda))}$ will not get stuck.\label{head-deadlock-1}
\item If ${\mathcal{N}(\lambda)}$ is executed, we have two options: \label{head-deadlock-2}
\begin{enumerate}
\item No state of ${\mathcal{N}(\lambda)}$ ever puts tokens in all the places in ${\{s_1,\ldots,s_k\}}$. In this case, even though ${\mathcal{N}(\lambda)}$ never terminates successfully, there is a guarantee ${\mathcal{N}(*_{\rho}(\lambda))}$ will not get stuck because ${\mathcal{N}(\lambda)}$ is deadlock-free.
\item A state of ${\mathcal{N}(\lambda)}$ puts tokens in all the places in ${\{s_1,\ldots,s_k\}}$. If so, the final marking of ${\mathcal{N}(\lambda_3)}$ will be reached, which simply puts a token in each place in ${\{q_1,\ldots,q_n\}}$. As we have the same two execution alternatives again, we simply repeat \ref{head-deadlock-1} or \ref{head-deadlock-2} whichever applies.
\end{enumerate}
\end{enumerate}
By the above, it is evident that all the execution paths lead to a deadlock-free execution. Therefore, we conclude ${\mathcal{N}(*_{\rho}(\lambda))}$ is deadlock-free, as required.
\end{proof}

\begin{remark}\label{rem:head-deadlock}
Although it is a statement about the functor $\mathcal{N}$, Proposition \ref{prop:computon-iterative-head-deadlock} is applicable to the functors $\mathcal{C}\circ\mathfrak{E}$ and $\mathcal{D}$ presented in Section \ref{sec:operational-semantics}. The proof is valid for $\mathcal{C}\circ\mathfrak{E}$ since Proposition \ref{prop:functor-control-petri} says ${\mathcal{C}}$ is just a restriction of $\mathcal{N}$ to $\mathfrak{E}(\textbf{Set}^{\textbf{Comp}})$. 

Remark \ref{rem:deadlock-freedom} says we are only interested in checking deadlock-freedom for $\mathcal{D}$-nets that have initial and final states. A glance at the figure depicted in the proof of Proposition \ref{prop:computon-iterative-head-deadlock} reveals this is satisfied when $j,m>0$. Starting with the initial state $M_i$ that puts tokens in $p_1,\ldots,p_m$, we have the following cases:
\begin{itemize}
\item If $n=0$, $M_i$ enables the only transition of $\mathcal{D}(\lambda_2)$ which, upon firing, makes $\mathcal{D}(*_{\rho}(\lambda))$ enter into a deadlock state. 
\item If $k=0$ and $n>0$, $\mathcal{D}(\lambda)$ will never reach its final state. Despite of this, $\mathcal{D}(*_{\rho}(\lambda))$ is guaranteed to be deadlock-free when $\mathcal{D}(\lambda)$ also is.
\item If $k>0$ and $n>0$, the proof of deadlock-freedom for $\mathcal{D}(*_{\rho}(\lambda))$ is analogous to that of Proposition \ref{prop:computon-iterative-head-deadlock}.
\end{itemize}
Therefore, to guarantee $\mathcal{D}(*_{\rho}(\lambda))$ is deadlock-free, we must consider an h-diagram $\rho$ where the entry and iteration computons have ed-outports, apart from ensuring that $\mathcal{D}(\lambda)$ is deadlock-free and that satisfies Condition \ref{prop:computon-iterative-head-deadlock-3} of Proposition \ref{prop:computon-iterative-head-deadlock}.
\end{remark}

\subsubsection{Encapsulation of control flow and data flow in head-iterative computons}

\vspace{0.2cm}

By Definition \ref{def:computon-iterative-head}, we know a head-iterative computon $*_{\rho}(\lambda)$ is the colimit of an h-diagram $\rho$ which, by Definition \ref{def:diagram-h}, is formed by four connected computons and two trivial computons. One of the connected computons is $\lambda$ (i.e., the computon being iterated over) whereas the others serve as entry, exit and iteration points. Thus, $*_{\rho}(\lambda)$ encapsulates cyclic control flow and up to cyclic data flow. By cyclic, we mean $\lambda$ and the iteration entity are executed repeatedly. In a head-iterative computon, the decision whether to repeat $\lambda$ is made before executing it. To give a concrete example, Figure \ref{fig:encapsulation-head} illustrates the encapsulation given by the head-iterative computon resulting from the colimit construction depicted in Figure \ref{fig:computon-iterative-head-example}.

\begin{figure}[!h]
\centering
\begin{tikzpicture}[scale=0.9]
\begin{scope}[xshift=-8cm,yshift=0cm]
  \computonComposite{0.5}{0}{6}{4.9};
  
  \computonPrimitive{2.3}{0.2}{1}{1.9}{$\lambda_2$};
  \qinplain{2q0}{0.1}{1.8}{};\draw[dashed] (2q0) to node [pos=0.55] {\arrowflow} ({2.3,1.8});
  \dinplain{2i1}{0.1}{1.2}{$4$};\draw (2i1) to node [pos=0.55] {\arrowflow} ({2.3,1.2});
 	
  \computonPrimitive{2.3}{2.4}{1}{1.9}{$\lambda_3$};  
  \qmatch{xq1}{1.5}{4}{};
  \dmatch{xi1}{1.5}{3.4}{$3$}{above};
	\flow{xq1}{{2.3,4}}{dashed}{};
	\flow{xi1}{{2.3,3.4}}{}{};
	
	\qmatch{yq1}{4.1}{2.9}{};
  \dmatch{yo1}{4.1}{2.3}{$1$}{above};
  \dmatch{yo2}{4.1}{1.7}{$2$}{above};
  \flowdiag{{3.3,4}}{yq1}{dashed}{}{pos=0.5,rotate=308};
  \flowdiag{{3.3,1.8}}{yq1}{dashed}{}{pos=0.25,rotate=45};
	\flowdiag{{3.3,3.4}}{yo1}{}{}{pos=0.4,rotate=308};\flowdiag{{3.3,1.2}}{yo1}{}{}{pos=0.4,rotate=45};
	\flowdiag{{3.3,2.8}}{yo2}{}{}{pos=0.25,rotate=308};\flowdiag{{3.3,0.6}}{yo2}{}{}{pos=0.5,rotate=45};
	\flowdiag{yq1}{{4.9,1.8}}{dashed}{}{pos=0.75,rotate=308};	
	
	\computonPrimitive{4.9}{2.4}{1}{1.9}{$\lambda$};
	\flowdiag{yq1}{$(yq1)+(0.8,1.1)$}{dashed}{}{pos=0.5,rotate=45};
	\flowdiag{yo1}{$(yo1)+(0.8,1.1)$}{}{}{pos=0.5,rotate=45};
	\flowdiag{yo2}{$(yo2)+(0.8,1.1)$}{}{}{pos=0.75,rotate=45};
	
	\computonPrimitive{4.9}{0.2}{1}{1.9}{$\lambda_4$}; 
  \qoutplain{3q0}{6}{1.8}{};\flow{{5.9,1.8}}{3q0}{dashed}{};
  \doutplain{3i1}{6}{1.2}{$5$};\flow{{5.9,1.2}}{3i1}{}{}; 
  \flowdiag{yo1}{{4.9,1.2}}{}{}{pos=0.5,rotate=308};
	\flowdiag{yo2}{{4.9,0.6}}{}{}{pos=0.5,rotate=308};     
  
  \draw[dashed] ($(yq1)+(1.8,1.1)$) -- ($(yq1)+(2,1.1)$) -- ($(yq1)+(2,1.6)$) to node [pos=0.45] {\invertedarrowflow} (1.2,4.5) -- (1.2,4) -- (xq1);
  \draw ($(yo1)+(1.8,1.1)$) -- ($(yo1)+(2.2,1.1)$) -- ($(yo1)+(2.2,2.4)$) to node [pos=0.45] {\invertedarrowflow} (1,4.7) -- (1,3.4) -- (xi1);
\end{scope}
\draw[opacity=0.3,line width=1.5pt, ->, -Latex] (-0.3,2.5) to node[pos=0.4,yshift=7,xshift=-8]{\scriptsize $\mathcal{C}\circ \mathfrak{E}$} (1.2,3.7);
\begin{scope}[xshift=2cm,yshift=2.7cm]
\node[opacity=0.2] at (-1.7,2){\scriptsize Control Flow};\node[opacity=0.2] at (-1.7,1.7){\scriptsize Net};
\node[place,label={80:},minimum size=3mm] (i2) at (0,0) {};
\node[transition,fill=black,minimum width=0.1mm,minimum height=10mm] (2) at (1,0) {};
\node[place,label={80:},minimum size=3mm] (o2) at (2,0.8) {};
\node[transition,fill=black,minimum width=0.1mm,minimum height=10mm] (4) at (3,0) {};
\node[place,label={80:},minimum size=3mm] (o4) at (4,0) {};

\node[place,label={80:},minimum size=3mm] (i3) at (0.5,1.5) {};
\node[transition,fill=black,minimum width=0.1mm,minimum height=10mm] (3) at (1,1.5) {};

\node[transition,fill=black,minimum width=0.1mm,minimum height=10mm] (0) at (3,1.5) {};

\draw[-latex,thick] (i2)--(2);\draw[-latex,thick] (2)--(o2);\draw[-latex,thick] (o2)--(4);\draw[-latex,thick] (o2)--(0);\draw[-latex,thick] (4)--(o4);
\draw[-latex,thick] (0)--($(0)+(0.5,0)$)--($(0)+(0.5,0.8)$)--($(0)+(-3.5,0.8)$)--($(0)+(-3.5,0)$)--(i3);
\draw[-latex,thick] (i3)--(3);
\draw[-latex,thick] (3)--(o2);
\end{scope}
\draw[opacity=0.3,line width=1.5pt, ->, -Latex] (-0.3,2.2) to node[pos=0.4,yshift=-4,xshift=-5]{\scriptsize $\mathcal{D}$} (1.2,1);
\begin{scope}[xshift=2cm,yshift=-0.7cm]
\node[opacity=0.2] at (-0.8,0.9){\scriptsize Data Flow};\node[opacity=0.2] at (-0.8,0.6){\scriptsize Net};
\node[place,label={80:},minimum size=3mm] (i2) at (0,0) {\scriptsize $4$};
\node[transition,fill=black,minimum width=0.1mm,minimum height=10mm] (2) at (1,0) {};
\node[place,label={80:},minimum size=3mm] (o2) at (2,1) {\scriptsize $1$};
\node[place,label={80:},minimum size=3mm] (ox) at (2,0.6) {\scriptsize $2$};
\node[transition,fill=black,minimum width=0.1mm,minimum height=10mm] (4) at (3,0) {};
\node[place,label={80:},minimum size=3mm] (o4) at (4,0) {\scriptsize $5$};

\node[place,label={80:},minimum size=3mm] (i3) at (0.5,1.5) {\scriptsize $3$};
\node[transition,fill=black,minimum width=0.1mm,minimum height=10mm] (3) at (1,1.5) {};

\node[transition,fill=black,minimum width=0.1mm,minimum height=10mm] (0) at (3,1.5) {};

\draw[-latex,thick] (i2)--(2);\draw[-latex,thick] (2)--(o2);\draw[-latex,thick] (2)--(ox);\draw[-latex,thick] (o2)--(4);\draw[-latex,thick] (ox)--(4);
\draw[-latex,thick] (o2)--(0);\draw[-latex,thick] (ox)--(0);\draw[-latex,thick] (4)--(o4);
\draw[-latex,thick] (0)--($(0)+(0.5,0)$)--($(0)+(0.5,0.8)$)--($(0)+(-3.5,0.8)$)--($(0)+(-3.5,0)$)--(i3);
\draw[-latex,thick] (i3)--(3);
\draw[-latex,thick] (3)--(o2);\draw[-latex,thick] (3)--(ox);
\end{scope}
\draw[opacity=0.3,line width=1.5pt, ->, -Latex] (-4,-0.2) to node[pos=0.4,xshift=10]{\scriptsize $\mathcal{N}$} (-4,-1);
\begin{scope}[xshift=-6cm,yshift=-3.6cm]
\node[opacity=0.2] at (-1.2,0.8){\scriptsize Control and};\node[opacity=0.2] at (-1.2,0.5){\scriptsize Data Flow};\node[opacity=0.2] at (-1.2,0.2){\scriptsize Net};
\node[place,label={80:},minimum size=3mm] (ic2) at (0,0.5) {};
\node[place,label={80:},minimum size=3mm] (i2) at (0,-0.5) {\scriptsize $4$};
\node[transition,fill=black,minimum width=0.1mm,minimum height=10mm] (2) at (1,0) {};
\node[place,label={80:},minimum size=3mm] (oc2) at (2,1) {};
\node[place,label={80:},minimum size=3mm] (o2) at (2,0.6) {\scriptsize $1$};
\node[place,label={80:},minimum size=3mm] (ox) at (2,0.2) {\scriptsize $2$};
\node[transition,fill=black,minimum width=0.1mm,minimum height=10mm] (4) at (3,0) {};
\node[place,label={80:},minimum size=3mm] (oc4) at (4,0.5) {};
\node[place,label={80:},minimum size=3mm] (o4) at (4,-0.5) {\scriptsize $5$};

\node[place,label={80:},minimum size=3mm] (oc) at (0.5,1.7) {};
\node[place,label={80:},minimum size=3mm] (i3) at (0.5,1.3) {\scriptsize $3$};
\node[transition,fill=black,minimum width=0.1mm,minimum height=10mm] (3) at (1,1.5) {};

\node[transition,fill=black,minimum width=0.1mm,minimum height=10mm] (0) at (3,1.5) {};

\draw[-latex,thick] (i2)--(2);\draw[-latex,thick] (2)--(o2);\draw[-latex,thick] (2)--(ox);\draw[-latex,thick] (o2)--(4);\draw[-latex,thick] (ox)--(4);
\draw[-latex,thick] (o2)--(0);\draw[-latex,thick] (ox)--(0);\draw[-latex,thick] (4)--(o4);
\draw[-latex,thick] ($(0)+(0,-0.3)$)--($(0)+(0.7,-0.3)$)--($(0)+(0.7,1)$)--($(0)+(-3.8,1)$)--($(0)+(-3.8,-0.2)$)--(i3);
\draw[-latex,thick] (i3)--(3);
\draw[-latex,thick] (3)--(o2);\draw[-latex,thick] (3)--(ox);
\draw[-latex,thick] (ic2)--(2);\draw[-latex,thick] (2)--(oc2);\draw[-latex,thick] (oc2)--(4);\draw[-latex,thick] (oc2);
\draw[-latex,thick] (oc2)--(0);
\draw[-latex,thick] (4)--(oc4);
\draw[-latex,thick] (0)--($(0)+(0.5,0)$)--($(0)+(0.5,0.8)$)--($(0)+(-3.5,0.8)$)--($(0)+(-3.5,0.2)$)--(oc);
\draw[-latex,thick] (oc)--(3);
\end{scope}
\end{tikzpicture}
\caption{Cyclic control flow and cyclic data flow encapsulated by the head-iterative computon from Figure \ref{fig:computon-iterative-head-example}. We label some places for mapping purposes even though Petri nets are not labelled (see Section \ref{sec:operational-semantics}).}
\label{fig:encapsulation-head}
\end{figure}

\subsubsection{Tail-Iterative Computons}

\vspace{0.2cm}

A \emph{tail-iterative computon} is structurally similar to a head-iterative one in the sense it is formed from the same basic building blocks, namely four connected computons and two trivial computons, as specified by the notion of a \emph{t-diagram} (see Definition \ref{def:diagram-t}). The difference lies in the position of the structure that non-deterministically chooses continuation or termination of the iterative computation. While a head-iterative computon defines such a structure just before the computon being iterated over, a tail-iterative one specifies it immediately after. Analogically, in the realm of imperative programming languages, a head-iterative computon corresponds to a \emph{while} construct, whereas a tail-iterative is akin to a \emph{do-while} statement. Like head-iterative computons, tail-iteratives are connected computons which can always be constructed in $\textbf{Set}^\textbf{Comp}$ (see Definition \ref{def:computon-iterative-tail}, Lemma \ref{lem:computon-iterative-tail-exists} and Corollary \ref{cor:computon-iterative-tail-connected}). 

\begin{definition}[T-Diagram]\label{def:diagram-t}
A t-diagram is a diagram with the following shape in $\textbf{Set}^\textbf{Comp}$:
\[
\begin{tikzcd}
 & \lambda_0 \arrow[dl, "\lambda_2^-"']\arrow[dr, "\lambda_3^-"]\arrow[ddr, "\lambda^+"'] & & \lambda_1 \arrow[dl, "\lambda_3^+"']\arrow[dr, "\lambda_4^+"]\arrow[ddl, "\lambda^-"] \\
\lambda_2 & & \lambda_3 & & \lambda_4 \\
 & & \lambda & & 
\end{tikzcd}
\]

\vspace{10pt}
\noindent where:
\begin{itemize}
\item $\lambda$ is a connected computon with an in-marker $\lambda^+$ and an out-marker $\lambda^-$,
\item $\lambda_2$ is a connected computon with an out-marker $\lambda_2^-$,
\item $\lambda_3$ is a connected computon with an in-marker $\lambda_3^+$ and an out-marker $\lambda_3^-$, and
\item $\lambda_4$ is a connected computon with an in-marker $\lambda_4^+$.
\end{itemize}
Evidently, by Definition \ref{def:computon-morphism-markers}, $\lambda_0$ and $\lambda_1$ are trivial computons serving as domains for the markers involved in the t-diagram.
\end{definition}

\begin{definition}[Tail-Iterative Computon]\label{def:computon-iterative-tail}
A tail-iterative computon is the colimit of a t-diagram.
\end{definition}

\begin{notation}\label{notation:computon-iterative-tail}
For convenience, we write a star symbol after a computon symbol $\lambda$ to indicate that the decision-making locus that terminates the iterative process is placed immediately after the computational structure of $\lambda$. For example, we write $(\lambda)*_{\rho}$ for the colimit of the t-diagram $\rho$ shown in Definition \ref{def:diagram-t}.
\end{notation}

\begin{lemma}\label{lem:computon-iterative-tail-exists}
A tail-iterative computon can always be constructed in $\textbf{Set}^\textbf{Comp}$.
\end{lemma}
\begin{proof}
Considering the t-diagram shown in Definition \ref{def:diagram-t}, by Propositions \ref{prop:pushout-pushable} and \ref{prop:markers-pushable}, we know that the pushouts of ${\lambda_2 \xleftarrow{\lambda_2^-} \lambda_0 \xrightarrow{\lambda_3^-} \lambda_3}$ and ${\lambda_3 \xleftarrow{\lambda_3^+} \lambda_1 \xrightarrow{\lambda_4^+} \lambda_4}$ can be constructed. Let us denote them $(\beta_1:\lambda_2 \rightarrow \lambda_5, \lambda_5,\beta_2:\lambda_3 \rightarrow \lambda_5)$ and $(\beta_3:\lambda_3 \rightarrow \lambda_6, \lambda_6,\beta_4:\lambda_4 \rightarrow \lambda_6)$, respectively.

To show that the induced span ${\lambda_5 \xleftarrow{\beta_2} \lambda_3 \xrightarrow{\beta_3} \lambda_6}$ is pushable, we just prove $\beta_2(\vec{o}(\beta_3)) \subseteq P_5^+ \cup P_5^-$ since the other conditions of Definition \ref{def:computon-morphisms-pushable} follow analogously. For this, assume ${p_5 \in \beta_2(\vec{o}(\beta_3))}$ so there is some ${p_3 \in \vec{o}(\beta_3)}$ where ${\beta_2(p_3)=p_5}$. As $\beta_3$ is a computon morphism, Definition \ref{def:computon-morphism} says ${\vec{i}(\beta_3)\cup\vec{o}(\beta_3)\subseteq P_3^+\cup P_3^-}$. Supposing for contradiction ${\beta_2(p_3)\notin P_5^+\cup P_5^-}$, we have ${t_5(o_5)=\beta_2(p_3)=s_5(i_5)}$ for some ${o_5\in O_5}$ and some ${i_5\in I_5}$. Accordingly, we consider the following cases:
\begin{enumerate}
\item When ${p_3\in P_3^+}$, we use ${O_5=O_2+_{O_0}O_3}$ to determine the preimage of $o_5$. If ${o_5=\beta_1(o_2)}$ for some ${o_2\in O_2}$, ${\beta_1\circ t_2=t_5\circ\beta_1}$ entails ${\beta_1(t_2(o_2))=t_5(\beta_1(o_2))=t_5(o_5)=\beta_2(p_3)}$. As ${\beta_1}$ and ${\beta_2}$ are morphisms induced by the pushout of a span of out-markers, $\beta_1(t_2(o_2))=\beta_2(p_3)$ if and only if ${t_2(o_2)\in P_2^-}$ and ${p_3\in P_3^-}$. But ${\lambda_3}$ is a connected computon as per Definition \ref{def:diagram-t} so ${p_3\in P_3^-\cap P_3^+}$ cannot hold as per Proposition \ref{prop:computon-connected-isolated-port}. If ${o_5=\beta_2(o_3)}$ for some ${o_3\in O_3}$, ${\beta_2\circ t_3=t_5\circ\beta_2}$ entails ${\beta_2(t_3(o_3))=t_5(\beta_2(o_3))=t_5(o_5)=\beta_2(p_3)}$. As ${t_3(o_3)=p_3 \in P_3^+}$ directly contradicts Definition \ref{def:computon-interface}, we focus on ${t_3(o_3)\neq p_3}$. By considering that the $P$-component of $\beta_2$ can only be non-injective on the image of the $P$-component of $\lambda_3^-$, we deduce ${\beta_2(t_3(o_3))=\beta_2(p_3)}$ with ${t_3(o_3)\neq p_3\iff t_3(o_3),p_3\in P_3^-}$. Again, ${p_3\in P_3^-\cap P_3^+}$ contradicts Proposition \ref{prop:computon-connected-isolated-port} because $\lambda_3$ is a connected computon.
\item When ${p_3\in P_3^-}$, we use ${I_5=I_2+_{I_0}I_3}$ to determine the preimage of $i_5$. If ${i_5=\beta_1(i_2)}$ for some ${i_2\in I_2}$, ${\beta_1\circ s_2=s_5\circ\beta_1}$ entails ${\beta_1(s_2(i_2))=s_5(\beta_1(i_2))=s_5(i_5)=\beta_2(p_3)}$. As ${\beta_1}$ and ${\beta_2}$ are morphisms induced by the pushout of a span of out-markers, $\beta_1(s_2(i_2))=\beta_2(p_3)$ if and only if ${s_2(i_2)\in P_2^-}$ and ${p_3\in P_3^-}$. But ${s_2(i_2)\in P_2^-}$ cannot be true because that would contradict Definition \ref{def:computon-interface}. If ${i_5=\beta_2(i_3)}$ for some ${i_3\in I_3}$, ${\beta_2\circ s_3=s_5\circ\beta_2}$ entails ${\beta_2(s_3(i_3))=s_5(\beta_2(i_3))=s_5(i_5)=\beta_2(p_3)}$. As ${s_3(i_3)=p_3 \in P_3^-}$ directly contradicts Definition \ref{def:computon-interface}, we focus on ${s_3(i_3)\neq p_3}$. By considering that the $P$-component of $\beta_2$ can only be non-injective on the image of the $P$-component of $\lambda_3^-$, we deduce ${\beta_2(s_3(i_3))=\beta_2(p_3)}$ with ${s_3(i_3)\neq p_3}$ if and only if ${s_3(i_3),p_3\in P_3^-}$. Clearly, ${s_3(i_3)\in P_3^-}$ contradicts Definition \ref{def:computon-interface}.
\end{enumerate}

Disproving the above scenarios implies that our initial assumption must be false. So, ${\beta_2(p_3)\in P_5^+\cup P_5^-}$. Showing ${\lambda_5 \xleftarrow{\beta_2} \lambda_3 \xrightarrow{\beta_3} \lambda_6}$ is pushable allows us to use Proposition \ref{prop:pushout-pushable} to construct its pushout, denoted $(\beta_5:\lambda_5 \rightarrow \lambda_7, \lambda_7,\beta_6:\lambda_6 \rightarrow \lambda_7)$. By pushout commutativity, we deduce the existence of computon morphisms $f:\lambda_0 \rightarrow \lambda_7$ and $g:\lambda_1 \rightarrow \lambda_7$ where $f=\beta_5\circ\beta_1\circ\lambda_2^-=\beta_5\circ\beta_2\circ\lambda_3^-=\beta_6\circ\beta_3\circ\lambda_3^-$ and $g=\beta_5\circ\beta_2\circ\lambda_3^+=\beta_6\circ\beta_3\circ\lambda_3^+=\beta_6\circ\beta_4\circ\lambda_4^+$.

Now, by Proposition \ref{prop:computon-coproduct}, we know that the coproduct ${\lambda_0+\lambda_1}$ can be formed. Using the universal property of coproducts, we also know there must be unique computon morphisms ${(\lambda^+,\lambda^-):\lambda_0+\lambda_1 \rightarrow \lambda}$ and ${(f,g):\lambda_0+\lambda_1 \rightarrow \lambda_7}$. To show that ${\lambda_7 \xleftarrow{(f,g)} \lambda_0+\lambda_1 \xrightarrow{(\lambda^+,\lambda^-)} \lambda}$ is pushable, assume ${p_7 \in (f,g)(\vec{i}(\lambda^+,\lambda^-))}$ so there is some ${q \in \vec{i}(\lambda^+,\lambda^-)}$ where ${(f,g)(q)=p_7}$. As ${\bullet(\lambda^+,\lambda^-)(q)\setminus(\lambda^+,\lambda^-)(\bullet q)\neq\emptyset}$, we have ${(\lambda^+,\lambda^-)(q) \notin P^+}$. Consequently, by coproduct definition, ${(\lambda^+,\lambda^-)(q)}$ must be in the image of ${\lambda^-}$. That is, there must be some ${p_1 \in P_1}$ for which ${\lambda^-(p_1)=(\lambda^+,\lambda^-)(q) \in P^-}$. As ${\lambda_3^+}$ and ${\lambda_4^+}$ are in-markers, we can apply Proposition \ref{prop:markers-inout}, while recognising that $\lambda_5$ does not identify $P_3^+$-ports, to further deduce ${g(p_1) \in P_7^+}$. By coproduct definition, we have ${(f,g)(q)=p_7=g(p_1) \in P_7^+}$ and, therefore, ${(f,g)(\vec{i}(\lambda^+,\lambda^-)) \subseteq P_7^+ \cup P_7^-}$. As proving ${(f,g)(\vec{o}(\lambda^+,\lambda^-)) \subseteq P_7^+ \cup P_7^-}$ and $(\lambda^+,\lambda^-)(\vec{i}(f,g))\cup(\lambda^+,\lambda^-)(\vec{o}(f,g)) \subseteq P^+ \cup P^-$ can be done analogously, the pushout ${(\beta_7:\lambda \rightarrow \lambda_8, \lambda_8, \beta_8:\lambda_7 \rightarrow \lambda_8)}$ of ${(\lambda^+,\lambda^-)}$ and ${(f,g)}$ can be constructed. To show such a pushout satisfies the universal property of the colimit of the original t-diagram, suppose there is a cone:
\[
\begin{tikzcd}
 & \lambda_0 \arrow[dl, "\lambda_2^-"']\arrow[dr, "\lambda_3^-"]\arrow[ddr, "\lambda^+"']\arrow[dddr, bend right=30, opacity=0.4] & & \lambda_1 \arrow[dl, "\lambda_3^+"']\arrow[dr, "\lambda_4^+"]\arrow[ddl, "\lambda^-"]\arrow[dddl, bend left=30, opacity=0.4] \\
\lambda_2 \arrow[ddrr, bend right=30, opacity=0.4] & & \lambda_3 \arrow[dd, bend right=30, opacity=0.4] & & \lambda_4 \arrow[ddll, bend left=30, opacity=0.4] \\
 & & \lambda \arrow[d, opacity=0.4] & & \\
 & & \lambda_9 & &
\end{tikzcd}
\]
Since $\lambda_5$ and $\lambda_6$ are the pushouts of ${\lambda_2 \xleftarrow{\lambda_2^-} \lambda_0 \xrightarrow{\lambda_3^-} \lambda_3}$ and ${\lambda_3 \xleftarrow{\lambda_3^+} \lambda_1 \xrightarrow{\lambda_4^+} \lambda_4}$, respectively, we know there are unique computon morphisms $\lambda_5 \rightarrow \lambda_9$ and $\lambda_6 \rightarrow \lambda_9$ that make the corresponding triangles commute. Taking into account these morphisms and considering that $\lambda_7$ is the pushout of the induced span ${\lambda_5 \xleftarrow{\beta_2} \lambda_3 \xrightarrow{\beta_3} \lambda_6}$, we use the universal property of pushouts to deduce there is a unique computon morphism $\lambda_7 \rightarrow \lambda_9$ that also makes the corresponding diagram commute. 

In the above cone, there are computon morphisms ${\lambda_0 \rightarrow \lambda_9}$ and ${\lambda_1 \rightarrow \lambda_9}$. Using the universal property of coproducts, we deduce there is a unique computon morphism ${\lambda_0+\lambda_1 \rightarrow \lambda_9}$. As in the cone there also is ${\lambda \rightarrow \lambda_9}$, we use the universal property of pushouts to deduce the existence of a unique computon morphism ${\lambda_8 \rightarrow \lambda_9}$. Thus, proving that ${\lambda_8}$ is the colimit of the original t-diagram.
\end{proof}

\begin{corollary}\label{cor:computon-iterative-tail-connected}
A tail-iterative computon is a connected computon.
\end{corollary}
\begin{proof}
Considering the construction presented in the proof of Lemma \ref{lem:computon-iterative-tail-exists}, we know that the pushouts $\lambda_5$ and $\lambda_6$ are connected computons because $\lambda_2$, $\lambda_3$ and $\lambda_4$ also are (see Proposition \ref{prop:pushout-connected}). Consequently, the pushout $\lambda_7$ of the induced span ${\lambda_5 \xleftarrow{\beta_2} \lambda_3 \xrightarrow{\beta_3} \lambda_6}$ is also a connected computon.

As $\lambda_7$ and $\lambda$ are connected computons, we use again Proposition \ref{prop:pushout-connected} to deduce that the pushout $\lambda_8$ of the unique (pushable) span ${\lambda_7 \xleftarrow{(f,g)} \lambda_0+\lambda_1 \xrightarrow{(\lambda^+,\lambda^-)} \lambda}$ is a connected computon. As $\lambda_8$ is the colimit of the original t-diagram (shown in Definition \ref{def:diagram-t}), we conclude that every tail-iterative computon is a connected computon.
\end{proof}

\begin{figure*}[!h]
\centering
\begin{tikzpicture}[scale=0.77]
\begin{scope}[xshift=3cm,yshift=19.5cm]
  \computonPrimitive{0.8}{0}{1}{1.9}{$\lambda$}
  \qin{q0}{0}{1.7}{}
  \din{i1}{0}{1.1}{$1$};
  \din{i2}{0}{0.5}{$2$};
  \qout{q1}{1.8}{1.7}{}
  \dout{o1}{1.8}{1.1}{$3$};
\end{scope}
\begin{scope}[xshift=-0.5cm,yshift=14.4cm]
\qmatch{i0}{0}{1.7}{};
\dmatch{i1}{0}{1.3}{$1$}{left};
\dmatch{i2}{0}{0.9}{$2$}{left};
\qmatch{o0}{0.5}{1.7}{};
\dmatch{o1}{0.5}{1.3}{$3$}{right};
\end{scope}
\begin{scope}[xshift=8.5cm,yshift=15.4cm]
  \computonComposite{1.7}{0.3}{7.4}{4.6};
  
  \computonPrimitive{2.3}{1.6}{1}{1.9}{$\lambda_2$}; 
  \qin{2q0}{1.5}{3.2}{};
  \din{2i1}{1.5}{2.6}{$1$};
  \din{2i2}{1.5}{2}{$2$};
 
  \qmatch{yq1}{4.1}{3.2}{};
  \dmatch{yo1}{4.1}{2.6}{$1$}{above};
  \dmatch{yo2}{4.1}{2}{$2$}{above};  
  \flow{{3.3,3.2}}{yq1}{dashed}{};
  \flow{{3.3,2.6}}{yo1}{}{};
  \flow{{3.3,2}}{yo2}{}{};
	
  \computonPrimitive{4.9}{0.5}{1}{1.9}{$\lambda$};
  \qmatch{3q0}{6.8}{2.8}{};
  \dmatch{3i1}{6.8}{2.2}{$3$}{above};
  \flowdiag{{4.9,1.9}}{yq1}{dashed}{}{pos=0.25,rotate=308};
  \flowdiag{{4.9,1.3}}{yo1}{}{}{pos=0.4,rotate=308};
  \flowdiag{{4.9,0.7}}{yo2}{}{}{pos=0.4,rotate=308};
  \flowdiag{{5.9,1.9}}{3q0}{dashed}{}{pos=0.5,rotate=45};
  \flowdiag{{5.9,1.3}}{3i1}{}{}{pos=0.5,rotate=45};
  
  \computonPrimitive{4.9}{2.7}{1}{1.9}{$\lambda_3$};  
  \flowdiag{3q0}{{5.9,4}}{dashed}{}{pos=0.5,rotate=135};
  \flowdiag{3i1}{{5.9,3.4}}{}{}{pos=0.5,rotate=135};
  \flowdiag{{4.9,4.3}}{yq1}{dashed}{}{pos=0.3,rotate=225};
  \flowdiag{{4.9,3.7}}{yo1}{}{}{pos=0.3,rotate=225};
  \flowdiag{{4.9,3.1}}{yo2}{}{}{pos=0.2,rotate=225};
  
  \computonPrimitive{7.6}{1.6}{1}{1.9}{$\lambda_4$};
  \qout{4q0}{8.6}{3.2}{};
  \dout{4o0}{8.6}{2.6}{$3$};
  \flow{3q0}{{7.6,2.8}}{dashed}{};
  \flow{3i1}{{7.6,2.2}}{}{};
    
\end{scope}

\begin{scope}
\begin{scope}\draw[->,opacity=0.4,dashed] (-0.35,15) to[bend right=30] node[left]{\scriptsize $(f,g)$} (6,1);\end{scope}
\begin{scope}\draw[->,opacity=0.4,dashed] (-0.3,16.5) to[bend left=30] node[left]{\scriptsize $(\lambda^+,\lambda^-)$} (2.4,20.5);\end{scope}
\begin{scope}\draw[->,opacity=0.4] (9,1) to[bend right=30] node[right]{\scriptsize $\beta_8$} (15,15.4);\end{scope}
\begin{scope}\draw[->,opacity=0.4] (6.3,20.5) to[bend left=30] node[right, yshift=1.4mm]{\scriptsize $\beta_7$} (10,19.3);\end{scope}

\begin{scope}[xshift=3.9cm,yshift=14.2cm]\draw[->] (0.4,1.3) to node[left]{\scriptsize $\lambda^+$} (0.4,5.1);\end{scope}
\begin{scope}[xshift=6.4cm,yshift=14.2cm]\draw[->] (3.3,1) to node[right]{\scriptsize $\lambda^-$} (-1.8,5.1);\end{scope}

\begin{scope}[xshift=0.7cm,yshift=14.2cm]\draw[->,opacity=0.4] (3.2,1.1) to[bend right=25] node[left]{} (0,1.7);\end{scope}
\begin{scope}[xshift=0.7cm,yshift=14.2cm]\draw[->,opacity=0.4] (9,0.9) to[bend right=25] node[left]{} (0,2);\end{scope}

\begin{scope}[xshift=0.7cm,yshift=12.2cm]\draw[->] (2,2.6) to node[left]{\scriptsize $\lambda_2^-$} (0.7,1);\end{scope}
\begin{scope}[xshift=5cm,yshift=12.2cm]\draw[->] (0.7,2.6) to node[right]{\scriptsize $\lambda_3^-$} (2,1);\end{scope}
\begin{scope}[xshift=6.8cm,yshift=12.2cm]\draw[->] (2,2.6) to node[left]{\scriptsize $\lambda_3^+$} (0.7,1);\end{scope}
\begin{scope}[xshift=11cm,yshift=12.2cm]\draw[->] (0.7,2.6) to node[right]{\scriptsize $\lambda_4^+$} (2,1);\end{scope}
\begin{scope}[xshift=0.7cm,yshift=8.2cm]\draw[->,opacity=0.4] (0.7,2.6) to node[left]{\scriptsize $\beta_1$} (2,1);\end{scope} 
\begin{scope}[xshift=5.3cm,yshift=8.2cm]\draw[->,opacity=0.4] (2,2.6) to node[left]{\scriptsize $\beta_2$} (0.7,1);\end{scope} 
\begin{scope}[xshift=6.6cm,yshift=8.2cm]\draw[->,opacity=0.4] (0.7,2.6) to node[right]{\scriptsize $\beta_3$} (2,1);\end{scope} 
\begin{scope}[xshift=11.1cm,yshift=8.2cm]\draw[->,opacity=0.4] (2,2.6) to node[right]{\scriptsize $\beta_4$} (0.7,1);\end{scope} 
\begin{scope}[xshift=3.7cm,yshift=3cm]\draw[->,opacity=0.4] (0.7,2.6) to node[left]{\scriptsize $\beta_5$} (2,1);\end{scope} 
\begin{scope}[xshift=8.7cm,yshift=3cm]\draw[->,opacity=0.4] (2,2.6) to node[right]{\scriptsize $\beta_6$} (0.7,1);\end{scope} 

\begin{scope}[xshift=4.3cm,yshift=14.4cm]
\qmatch{i0}{0}{0.8}{};
\dmatch{i1}{0}{0.4}{$1$}{left};
\dmatch{i2}{0}{0}{$2$}{left};
\end{scope}
\begin{scope}[xshift=10.1cm,yshift=14.7cm]
\qmatch{o0}{0}{0.4}{};
\dmatch{o1}{0}{0}{$3$}{left};
\end{scope}

\begin{scope}[xshift=0.2cm,yshift=11cm]
  \computonPrimitive{0.8}{0}{1}{1.9}{$\lambda_2$}
  \qin{2q0}{0}{1.7}{}
  \din{2i1}{0}{1.1}{$1$};
  \din{2i2}{0}{0.5}{$2$};
  \qout{2q1}{1.8}{1.7}{}
  \dout{2o1}{1.8}{1.1}{$1$};
  \dout{2o2}{1.8}{0.5}{$2$};
\end{scope}
\begin{scope}[xshift=6cm,yshift=11cm]
  \computonPrimitive{0.8}{0}{1}{1.9}{$\lambda_3$}
  \qin{3q0}{0}{1.7}{}
  \din{3i1}{0}{1.1}{$3$};
  \qout{3q1}{1.8}{1.7}{}
  \dout{3o1}{1.8}{1.1}{$1$};
  \dout{3o1}{1.8}{0.5}{$2$};
\end{scope}
\begin{scope}[xshift=11.7cm,yshift=11cm]
\computonPrimitive{0.8}{0}{1}{1.9}{$\lambda_4$}
  \qin{3q0}{0}{1.7}{}
  \din{3i1}{0}{1.1}{$3$};
  \qout{3q1}{1.8}{1.7}{}
  \dout{3o1}{1.8}{1.1}{$3$};
\end{scope}

\begin{scope}[xshift=3.2cm,yshift=5.8cm]
  \computonComposite{0.3}{0}{2.1}{4.5};
  
  \computonPrimitive{0.8}{2.4}{1}{1.9}{$\lambda_2$}
  \qin{2q0}{0}{4}{}
  \din{2i1}{0}{3.4}{$1$};
  \din{2i2}{0}{2.8}{$2$};  
  \computonPrimitive{0.8}{0.2}{1}{1.9}{$\lambda_3$}
  \qin{3q0}{0}{1.8}{}
  \din{3i1}{0}{1.2}{$3$};
  
  \qoutplain{q1}{1.8}{2.9}{}
  \doutplain{o1}{1.8}{2.3}{$1$};
  \doutplain{o2}{1.8}{1.7}{$2$};
	\flowdiag{{1.8,4}}{q1}{dashed}{}{pos=0.5,rotate=308};\flowdiag{{1.8,1.8}}{q1}{dashed}{}{pos=0.25,rotate=45};
	\flowdiag{{1.8,3.4}}{o1}{}{}{pos=0.35,rotate=308};\flowdiag{{1.8,1.2}}{o1}{}{}{pos=0.25,rotate=45};
	\flowdiag{{1.8,2.8}}{o2}{}{}{pos=0.25,rotate=308};\flowdiag{{1.8,0.6}}{o2}{}{}{pos=0.25,rotate=45};
\end{scope}
\begin{scope}[xshift=8.6cm,yshift=5.8cm]
  \computonComposite{0.3}{0}{2.1}{4.5};
  
  \computonPrimitive{0.8}{2.4}{1}{1.9}{$\lambda_3$}
  \qout{2q0}{1.8}{4}{}
  \dout{2i1}{1.8}{3.4}{$1$};
  \dout{2i2}{1.8}{2.8}{$2$};  
  \computonPrimitive{0.8}{0.2}{1}{1.9}{$\lambda_4$}
  \qout{3q0}{1.8}{1.8}{}
  \dout{3i1}{1.8}{1.2}{$3$};
  
  \qinplain{q1}{0}{2.9}{}
  \dinplain{i1}{0}{2.3}{$3$};
	\flowdiag{q1}{{0.8,4}}{dashed}{}{pos=0.5,rotate=45};\flowdiag{q1}{{0.8,1.8}}{dashed}{}{pos=0.5,rotate=308};
	\flowdiag{i1}{{0.8,3.4}}{}{}{pos=0.5,rotate=45};\flowdiag{i1}{{0.8,1.2}}{}{}{pos=0.5,rotate=308};
\end{scope}

\begin{scope}[xshift=6cm]
	\computonComposite{0.3}{0}{2.1}{6.7};
  
  \computonPrimitive{0.8}{4.6}{1}{1.9}{$\lambda_2$}
  \qin{2q0}{0}{6.2}{}
  \din{2i1}{0}{5.6}{$1$};
  \din{2i2}{0}{5}{$2$};  
 	
  \computonPrimitive{0.8}{2.4}{1}{1.9}{$\lambda_3$}  
  
  \computonPrimitive{0.8}{0.2}{1}{1.9}{$\lambda_4$}
  \qout{3q0}{1.8}{1.8}{}
  \dout{3i1}{1.8}{1.2}{$3$};   
  
  \qinplain{q1}{0}{2.9}{}
  \dinplain{i1}{0}{2.3}{$3$};
	\flowdiag{q1}{{0.8,4}}{dashed}{}{pos=0.5,rotate=45};\flowdiag{q1}{{0.8,1.8}}{dashed}{}{pos=0.5,rotate=308};
	\flowdiag{i1}{{0.8,3.4}}{}{}{pos=0.5,rotate=45};\flowdiag{i1}{{0.8,1.2}}{}{}{pos=0.5,rotate=308};
	
	\qoutplain{q1}{1.8}{5.1}{}
  \doutplain{o1}{1.8}{4.5}{$1$};
  \doutplain{o2}{1.8}{3.9}{$2$};
	\flowdiag{{1.8,6.2}}{q1}{dashed}{}{pos=0.5,rotate=308};\flowdiag{{1.8,4}}{q1}{dashed}{}{pos=0.25,rotate=45};
	\flowdiag{{1.8,5.6}}{o1}{}{}{pos=0.35,rotate=308};\flowdiag{{1.8,3.4}}{o1}{}{}{pos=0.35,rotate=45};
	\flowdiag{{1.8,5}}{o2}{}{}{pos=0.25,rotate=308};\flowdiag{{1.8,2.8}}{o2}{}{}{pos=0.5,rotate=45};
\end{scope}
\end{scope}
\end{tikzpicture}
\caption{Constructing a tail-iterative computon $(\lambda)*_{\rho}$ where $\rho$ is the t-diagram whose morphisms are displayed as black arrows. Here, $f=\beta_5\circ\beta_1\circ\lambda_2^-=\beta_5\circ\beta_2\circ\lambda_3^-=\beta_6\circ\beta_3\circ\lambda_3^-$ and $g=\beta_5\circ\beta_2\circ\lambda_3^+=\beta_6\circ\beta_3\circ\lambda_3^+=\beta_6\circ\beta_4\circ\lambda_4^+$.}
\label{fig:computon-iterative-tail-example}
\end{figure*}

Building upon the proof of Lemma \ref{lem:computon-iterative-tail-exists}, Figure \ref{fig:computon-iterative-tail-example} shows a detailed, self-descriptive example for the construction of a tail-iterative computon $(\lambda)*_{\rho}$ where $\lambda$ is the same connected computon we use in Figure \ref{fig:computon-iterative-head-example} and $\rho$ is the t-diagram whose morphisms are displayed as black arrows. 

Figure \ref{fig:computon-iterative-tail-example} shows that, like head-iterative computons, the connected computons $\lambda_2$ and $\lambda_4$ serve as entry and exit points for the iterative computation, respectively, while the connected computon $\lambda_3$ enables the repeated invocation of $\lambda$. Although in this example the e-inports and e-outports of $\lambda_2$ are isomorphic (the same for $\lambda_4$), there no strict requirement for enforcing this as there is no in-marker $\lambda_2^+$ and no out-marker $\lambda_4^-$ in the corresponding t-diagram $\rho$. Not enforcing this structural feature enables a certain degree of flexibility in the sense the endpoints of a tail-iterative computon can or cannot expose the interface of the computon being iterated over (i.e., $\lambda$). Again, like head-iterative computons, it is possible to operationally implement $\lambda_2$ and $\lambda_4$ in different manners. For instance, in our particular scenario, $\lambda_2$ can be treated as a computon that either duplicates information or transforms data of the same type. As we are dealing with high-level computations, the internals of such functional computons are irrelevant. We just focus on structure from a ``birds-eye viewpoint''. By Theorem \ref{th:computon-iterative-tail-always}, a tail-iterative composite can always be constructed for any arbitrary connected computon, regardless of the data such an arbitrary computon requires or produces.

\begin{theorem}\label{th:computon-iterative-tail-always}
$\lambda$ is a connected computon $\iff$ a tail-iterative computon $(\lambda)*_{\rho}$ exists for some t-diagram $\rho$.
\end{theorem}
\begin{proof}
$(\implies)$ Let $\lambda$ be an arbitrary connected computon. By Proposition \ref{prop:markers-always}, we deduce there is an in-marker ${\lambda^+:\lambda_0 \rightarrow \lambda}$ and an out-marker ${\lambda^-:\lambda_1 \rightarrow \lambda}$. Now, if $\lambda_2$, $\lambda_3$ and $\lambda_4$ are connected computons and duals of $\lambda$ (see Proposition \ref{prop:markers-connected-dual}), Definition \ref{def:computon-dual} says there are in-markers ${\lambda_3^+:\lambda_1 \rightarrow \lambda_3}$ and ${\lambda_4^+:\lambda_1 \rightarrow \lambda_4}$ as well as out-markers ${\lambda_2^-:\lambda_0 \rightarrow \lambda_2}$ and ${\lambda_3^-:\lambda_0 \rightarrow \lambda_3}$. 

As the above construction corresponds to that of a t-diagram $\rho$, by Lemma \ref{lem:computon-iterative-tail-exists}, we have that the colimit of $\rho$ exists. Using Definition \ref{def:computon-iterative-tail} and Notation \ref{notation:computon-iterative-tail}, we conclude such a colimit is the tail-iterative computon $(\lambda)*_\rho$.

$(\impliedby)$ This part of the proof follows directly from Definition \ref{def:diagram-t}. 
\end{proof}

\subsubsection{Operational semantics for tail-iterative computons (in the theory of Petri nets)}

No matter whether we use any of the three functorial constructions from Section \ref{sec:operational-semantics}, the Petri net of a tail-iterative computon has no additional places or transitions beyond those from the nets of the computons of the corresponding t-diagram. The general structure of a net of this sort is depicted in Figure \ref{fig:computon-tail-net}.

\begin{figure}[!h]
\centering
{
\begin{tikzpicture}
\node[place,label={left:\scriptsize $p_1$},minimum size=3mm] (p1) at (-1.7,2.4) {};
\node at (-1.7,2){$\vdots$};
\node[place,label={left:\scriptsize $p_m$},minimum size=3mm] (pm) at (-1.7,1.4) {};
\draw[dotted] (-1.2,1.1) rectangle (0.3,2.6);\node at (-0.5,1.8){\scriptsize $\mathcal{N}(\lambda_2)$};
\node[place,label={above:\scriptsize $q_1$},minimum size=3mm] (q1) at (0.8,2.4) {};
\node at (0.8,2){$\vdots$};
\node[place,label={below:\scriptsize $q_n$},minimum size=3mm] (qn) at (0.8,1.4) {};

\draw[dotted] (1.6,0) rectangle (3.1,1.5);\node at (2.3,0.7){\scriptsize $\mathcal{N}(\lambda)$};
\draw[dotted] (1.6,2.3) rectangle (3.1,3.8);\node at (2.3,3){\scriptsize $\mathcal{N}(\lambda_3)$};

\node[place,label={right:\scriptsize $s_1$},minimum size=3mm] (s1) at (6.5,2.4) {};
\node at (6.5,2){$\vdots$};
\node[place,label={right:\scriptsize $s_k$},minimum size=3mm] (sk) at (6.5,1.4) {};

\draw[dotted] (4.5,1.1) rectangle (6,2.6);\node at (5.2,1.8){\scriptsize $\mathcal{N}(\lambda_4)$};
\node[place,label={above:\scriptsize $r_1$},minimum size=3mm] (r1) at (4,2.4) {};
\node at (4,2){$\vdots$};
\node[place,label={below:\scriptsize $r_j$},minimum size=3mm] (rj) at (4,1.4) {};

\draw[-latex,thick] (p1) -- ($(p1)+(0.5,0)$);\draw[-latex,thick] (pm) -- ($(pm)+(0.5,0)$);
\draw[-latex,thick] ($(q1)+(-0.5,0)$)--(q1);\draw[-latex,thick] ($(qn)+(-0.5,0)$)--(qn);
\draw[-latex,thick] ($(q1)+(0.8,1.2)$)--(q1);\draw[-latex,thick] ($(q1)+(0.8,0.3)$)--(qn);
\draw[-latex,thick] (q1) -- ($(q1)+(0.8,-1.3)$);\draw[-latex,thick] (qn) -- ($(qn)+(0.8,-1)$);
\draw[-latex,thick] (r1) -- ($(r1)+(0.5,0)$);\draw[-latex,thick] (rj) -- ($(rj)+(0.5,0)$);
\draw[-latex,thick] (r1) -- ($(r1)+(-0.9,1.2)$);\draw[-latex,thick] (rj) -- ($(r1)+(-0.9,0.3)$);
\draw[-latex,thick] ($(s1)+(-0.5,0)$)--(s1);\draw[-latex,thick] ($(sk)+(-0.5,0)$)--(sk);
\draw[-latex,thick] ($(r1)+(-0.9,-1.3)$) -- (r1);\draw[-latex,thick] ($(rj)+(-0.9,-1)$) -- (rj);
\end{tikzpicture}
}
\caption{General structure of the Petri net of a tail-iterative computon, considering the t-diagram from Definition \ref{def:diagram-t}. This structure is applicable to all the functorial constructions from Section \ref{sec:operational-semantics}: $\mathcal{N}$, $\mathcal{C}\circ\mathfrak{E}$ and $\mathcal{D}$.}
\label{fig:computon-tail-net}
\end{figure} 

Unfortunately, there is no guarantee every tail-iterative's net is deadlock-free even when the nets of the computons from the corresponding t-diagram are. Despite of this, it is still possible to enforce deadlock-freedom by employing primitive computons as entry, exit and iteration points. Proposition \ref{prop:computon-primitive-deadlock} and Remark \ref{rem:computon-primitive-deadlock} together entail every primitive computon's net is deadlock-free. So, as long as the net of the computon being iterated over never gets stuck, the net of the corresponding tail-iterative computon will be deadlock-free too (see Proposition \ref{prop:computon-iterative-tail-deadlock} and Remark \ref{rem:tail-deadlock}).

\begin{proposition}\label{prop:computon-iterative-tail-deadlock}
Consider the t-diagram $\rho$ depicted in Definition \ref{def:diagram-t} and assume ${M_i}$ and ${M_f}$ are the initial and final markings of the net ${\mathcal{N}(\lambda)}$, respectively. The net ${\mathcal{N}((\lambda)*_{\rho})}$ is deadlock-free if: 
\begin{enumerate}
\item ${\lambda_j}$ is a primitive computon for $j=2,3,4$,\label{prop:computon-iterative-tail-deadlock-1}
\item ${\mathcal{N}(\lambda)}$ is deadlock-free, and \label{prop:computon-iterative-tail-deadlock-2}
\item for every marking ${M}$ reachable from ${M_i}$, if ${M(r)>0}$ for each output place $r$ of ${\mathcal{N}(\lambda)}$ then ${M=M_f}$.\label{prop:computon-iterative-tail-deadlock-3}
\end{enumerate}

\end{proposition}
\begin{proof}
Consider the t-diagram $\rho$ from Definition \ref{def:diagram-t} and assume $\lambda_j$ is a primitive computon for $j=2,3,4$. By Proposition \ref{prop:computon-primitive-connected}, we know $\rho$ is a well-defined t-diagram because each $\lambda_j$ is a connected computon. Then, using Lemma \ref{lem:computon-iterative-tail-exists}, we deduce the existence of $(\lambda)*_{\rho}$ whose underlying net $\mathcal{N}((\lambda)*_{\rho})$ has the following form according to the functorial construction presented in Definition \ref{def:functor-computon-to-petri}:
\begin{center}
\begin{tikzpicture}
\node[place,label={left:\scriptsize $p_1$},minimum size=3mm] (p1) at (-0.8,2.4) {};
\node at (-0.8,2){$\vdots$};
\node[place,label={left:\scriptsize $p_m$},minimum size=3mm] (pm) at (-0.8,1.4) {};
\node[transition,fill=black,minimum width=0.1mm,minimum height=10mm,label=\scriptsize $\mathcal{N}(\lambda_2)$] (l2) at (0,1.9) {};
\node[place,label={above:\scriptsize $q_1$},minimum size=3mm] (q1) at (0.8,2.4) {};
\node at (0.8,2){$\vdots$};
\node[place,label={below:\scriptsize $q_n$},minimum size=3mm] (qn) at (0.8,1.4) {};

\draw[dotted] (1.6,0) rectangle (3.1,1.5);\node at (2.3,0.7){\scriptsize $\mathcal{N}(\lambda)$};
\node[transition,fill=black,minimum width=0.1mm,minimum height=10mm,label=\scriptsize $\mathcal{N}(\lambda_3)$] (l3) at (2.3,2.8) {};

\node[place,label={right:\scriptsize $s_1$},minimum size=3mm] (s1) at (5.6,2.4) {};
\node at (5.6,2){$\vdots$};
\node[place,label={right:\scriptsize $s_k$},minimum size=3mm] (sk) at (5.6,1.4) {};

\node[transition,fill=black,minimum width=0.1mm,minimum height=10mm,label=\scriptsize $\mathcal{N}(\lambda_4)$] (l4) at (4.8,1.9) {};
\node[place,label={above:\scriptsize $r_1$},minimum size=3mm] (r1) at (4,2.4) {};
\node at (4,2){$\vdots$};
\node[place,label={below:\scriptsize $r_j$},minimum size=3mm] (rj) at (4,1.4) {};

\draw[-latex,thick] (p1) -- (l2);\draw[-latex,thick] (pm) -- (l2);
\draw[-latex,thick] (l2) -- (q1);\draw[-latex,thick] (l2) -- (qn);
\draw[-latex,thick] (l3) -- (q1);\draw[-latex,thick] (l3) -- (qn);
\draw[-latex,thick] (q1) -- ($(q1)+(0.8,-1.3)$);\draw[-latex,thick] (qn) -- ($(qn)+(0.8,-1)$);
\draw[-latex,thick] (r1) -- (l3);\draw[-latex,thick] (rj) -- (l3);
\draw[-latex,thick] (r1) -- (l4);\draw[-latex,thick] (rj) -- (l4);
\draw[-latex,thick] (l4) -- (s1);\draw[-latex,thick] (l4) -- (sk);
\draw[-latex,thick] ($(r1)+(-0.9,-1.3)$) -- (r1);\draw[-latex,thick] ($(rj)+(-0.9,-1)$) -- (rj);
\end{tikzpicture}
\end{center}
The above net evidently has the form depicted in Figure \ref{fig:computon-tail-net}. The only difference is that, rather than black-boxing $\mathcal{N}(\lambda_j)$, we display its internals which consist of only one transition (since $\lambda_j$ is primitive). By Definition \ref{def:marking}, we know the initial state $M_i$ of $\mathcal{N}((\lambda)*_{\rho})$ is a marking function where $M_i(p)>0$ for all $p\in\{p_1,\ldots,p_m\}$ and no tokens for all the other places, including those inside $\mathcal{N}(\lambda)$. This marking evidently enables the only transition of $\mathcal{N}(\lambda_2)$ and nothing else. Consequently, firing $\mathcal{N}(\lambda_2)$ reaches a state that marks each place in $\{q_1,\ldots,q_n\}$. This new marking evidently corresponds to the initial state of $\mathcal{N}(\lambda)$. Assuming $\mathcal{N}(\lambda)$ is deadlock-free and that only its final state can put tokens in each element from $\{r_1,\ldots,r_j\}$ (see Conditions \ref{prop:computon-iterative-tail-deadlock-2} and \ref{prop:computon-iterative-tail-deadlock-3}), we now have the following cases:
\begin{enumerate}
\item If no state of $\mathcal{N}(\lambda)$ ever puts tokens in all the places in $\{r_1,\ldots,r_j\}$, $\mathcal{N}(\lambda)$ will never terminate successfully. Despite of this, there is a guarantee $\mathcal{N}((\lambda)*_{\rho})$ will not get stuck because $\mathcal{N}(\lambda)$ is deadlock-free.\label{path-tail-1}
\item If a state of $\mathcal{N}(\lambda)$ puts tokens in all the places in $\{r_1,\ldots,r_j\}$, there are two possible execution paths because $\mathcal{N}(\lambda_3)$ and $\mathcal{N}(\lambda_4)$ are both enabled (due to mutual exclusion):\label{path-tail-2}
\begin{enumerate}
\item If $\mathcal{N}(\lambda_4)$ is triggered, the final state of $\mathcal{N}((\lambda)*_{\rho})$ will be reached with tokens in $s_1,\ldots,s_k$. So, $\mathcal{N}((\lambda)*_{\rho})$ is deadlock-free.\label{path-tail-2a}
\item If $\mathcal{N}(\lambda_3)$ is triggered, only the places in $\{q_1,\ldots,q_n\}$ will be marked in the next state. As the initial state of $\mathcal{N}(\lambda)$ is reached once again, we simply repeat \ref{path-tail-1} or \ref{path-tail-2} whichever applies.\label{path-tail-2b}
\end{enumerate}
\end{enumerate}
By the above, it is evident that all the execution paths lead to a deadlock-free execution. Therefore, we conclude $\mathcal{N}((\lambda)*_{\rho})$ is deadlock-free, as required.
\end{proof}

\begin{remark}\label{rem:tail-deadlock}
Although it is a statement about the functor $\mathcal{N}$, Proposition \ref{prop:computon-iterative-tail-deadlock} is applicable to the functors $\mathcal{C}\circ\mathfrak{E}$ and $\mathcal{D}$ presented in Section \ref{sec:operational-semantics}. The proof is valid for $\mathcal{C}\circ\mathfrak{E}$ since Proposition \ref{prop:functor-control-petri} says ${\mathcal{C}}$ is just a restriction of $\mathcal{N}$ to $\mathfrak{E}(\textbf{Set}^{\textbf{Comp}})$. 

Remark \ref{rem:deadlock-freedom} says we are only interested in checking deadlock-freedom for $\mathcal{D}$-nets that have initial and final states. A glance at the figure depicted in the proof of Proposition \ref{prop:computon-iterative-tail-deadlock} reveals this is satisfied when $k,m>0$. Starting with the initial state $M_i$ that puts tokens in $p_1,\ldots,p_m$, we have following cases:
\begin{itemize}
\item If $n=0$, $M_i$ enables the only transition of $\mathcal{D}(\lambda_2)$ which, upon firing, makes $\mathcal{N}((\lambda)*_{\rho})$ enter into a deadlock state. 
\item If $j=0$ and $n>0$, the net $\mathcal{D}(\lambda)$ will never reach its final state. Despite of this, $\mathcal{N}((\lambda)*_{\rho})$ is guaranteed to be deadlock-free when $\mathcal{D}(\lambda)$ also is.
\item If $j>0$ and $n>0$, the proof of deadlock-freedom for $\mathcal{N}((\lambda)*_{\rho})$ is analogous to that of Proposition \ref{prop:computon-iterative-tail-deadlock}.
\end{itemize}
Hence, to guarantee $\mathcal{D}((\lambda)*_{\rho})$ is deadlock-free, we must consider a t-diagram $\rho$ where the entry and iteration computons have both ed-outports, apart from ensuring that $\mathcal{D}(\lambda)$ is deadlock-free and that satisfies Condition \ref{prop:computon-iterative-tail-deadlock-3} of Proposition \ref{prop:computon-iterative-tail-deadlock}.
\end{remark}

\subsubsection{Encapsulation of control flow and data flow in tail-iterative computons}

By Definition \ref{def:computon-iterative-tail}, we know a tail-iterative computon $(\lambda)*_{\rho}$ is the colimit of a t-diagram $\rho$ which, by Definition \ref{def:diagram-t}, consists of four connected computons and two trivial computons. One of the connected computons is $\lambda$ (i.e., the computon being iterated over) whereas the others serve as entry, exit and iteration points. Thus, like a head-iterative, $(\lambda)*_{\rho}$ encapsulates cyclic control flow and up to cyclic data flow. By cyclic, we mean $\lambda$ and the iteration entity are executed repeatedly. In a tail-iterative computon, the decision whether to repeat $\lambda$ is made after executing it. To give a concrete example, Figure \ref{fig:encapsulation-tail} illustrates the encapsulation given by the tail-iterative computon resulting from the colimit construction depicted in Figure \ref{fig:computon-iterative-tail-example}.

\begin{figure}[!h]
\centering
\begin{tikzpicture}[scale=0.89]
\begin{scope}[xshift=-9cm,yshift=0cm]
  \computonComposite{1.7}{0.3}{7.4}{4.6};
  
  \computonPrimitive{2.3}{1.6}{1}{1.9}{$\lambda_2$}; 
  \qin{2q0}{1.5}{3.2}{};
  \din{2i1}{1.5}{2.6}{$1$};
  \din{2i2}{1.5}{2}{$2$};
 
  \qmatch{yq1}{4.1}{3.2}{};
  \dmatch{yo1}{4.1}{2.6}{$1$}{above};
  \dmatch{yo2}{4.1}{2}{$2$}{above};  
  \flow{{3.3,3.2}}{yq1}{dashed}{};
  \flow{{3.3,2.6}}{yo1}{}{};
  \flow{{3.3,2}}{yo2}{}{};
	
  \computonPrimitive{4.9}{0.5}{1}{1.9}{$\lambda$};
  \qmatch{3q0}{6.8}{2.8}{};
  \dmatch{3i1}{6.8}{2.2}{$3$}{above};
  \flowdiag{{4.9,1.9}}{yq1}{dashed}{}{pos=0.25,rotate=308};
  \flowdiag{{4.9,1.3}}{yo1}{}{}{pos=0.4,rotate=308};
  \flowdiag{{4.9,0.7}}{yo2}{}{}{pos=0.4,rotate=308};
  \flowdiag{{5.9,1.9}}{3q0}{dashed}{}{pos=0.5,rotate=45};
  \flowdiag{{5.9,1.3}}{3i1}{}{}{pos=0.5,rotate=45};
  
  \computonPrimitive{4.9}{2.7}{1}{1.9}{$\lambda_3$};  
  \flowdiag{3q0}{{5.9,4}}{dashed}{}{pos=0.5,rotate=135};
  \flowdiag{3i1}{{5.9,3.4}}{}{}{pos=0.5,rotate=135};
  \flowdiag{{4.9,4.3}}{yq1}{dashed}{}{pos=0.3,rotate=225};
  \flowdiag{{4.9,3.7}}{yo1}{}{}{pos=0.3,rotate=225};
  \flowdiag{{4.9,3.1}}{yo2}{}{}{pos=0.2,rotate=225};
  
  \computonPrimitive{7.6}{1.6}{1}{1.9}{$\lambda_4$};
  \qout{4q0}{8.6}{3.2}{};
  \dout{4o0}{8.6}{2.6}{$3$};
  \flow{3q0}{{7.6,2.8}}{dashed}{};
  \flow{3i1}{{7.6,2.2}}{}{};
\end{scope}
\draw[opacity=0.3,line width=1.5pt, ->, -Latex] (1.1,2.5) to node[pos=0.4,yshift=7,xshift=-8]{\scriptsize $\mathcal{C}\circ \mathfrak{E}$} (2,3.7);
\begin{scope}[xshift=0.8cm,yshift=3.3cm]
\node[opacity=0.2] at (4.8,1.8){\scriptsize Control Flow Net};
\node[place,label={80:},minimum size=3mm] (i2) at (0,0.8) {};
\node[transition,fill=black,minimum width=0.1mm,minimum height=10mm] (2) at (1,0.8) {};
\node[place,label={80:},minimum size=3mm] (o2) at (2,0.8) {};

\node[transition,fill=black,minimum width=0.1mm,minimum height=10mm] (3) at (3,1.6) {};
\node[transition,fill=black,minimum width=0.1mm,minimum height=10mm] (0) at (3,0) {};

\node[place,label={80:},minimum size=3mm] (o) at (4,0.8) {};

\node[transition,fill=black,minimum width=0.1mm,minimum height=10mm] (4) at (5,0.8) {};
\node[place,label={80:},minimum size=3mm] (o4) at (6,0.8) {};

\draw[-latex,thick] (i2)--(2);\draw[-latex,thick] (2)--(o2);\draw[-latex,thick] (o2)--(0);\draw[-latex,thick] (3)--(o2);
\draw[-latex,thick] (o)--(3);\draw[-latex,thick] (0)--(o);
\draw[-latex,thick] (o)--(4);\draw[-latex,thick] (4)--(o4);
\end{scope}
\draw[opacity=0.3,line width=1.5pt, ->, -Latex] (1.1,2.2) to node[pos=0.4,yshift=-4,xshift=-5]{\scriptsize $\mathcal{D}$} (2,1);
\begin{scope}[xshift=0.8cm,yshift=-0.7cm]
\node[opacity=0.2] at (4.8,1.8){\scriptsize Data Flow Net};
\node[place,label={80:},minimum size=3mm] (i2) at (0,0.4) {\scriptsize $2$};
\node[place,label={80:},minimum size=3mm] (i2x) at (0,1.2) {\scriptsize $1$};
\node[transition,fill=black,minimum width=0.1mm,minimum height=10mm] (2) at (1,0.8) {};
\node[place,label={80:},minimum size=3mm] (o2) at (2,1.2) {\scriptsize $1$};
\node[place,label={80:},minimum size=3mm] (o2x) at (2,0.4) {\scriptsize $2$};

\node[transition,fill=black,minimum width=0.1mm,minimum height=10mm] (3) at (3,1.6) {};
\node[transition,fill=black,minimum width=0.1mm,minimum height=10mm] (0) at (3,0) {};

\node[place,label={80:},minimum size=3mm] (o) at (4,0.8) {\scriptsize $3$};

\node[transition,fill=black,minimum width=0.1mm,minimum height=10mm] (4) at (5,0.8) {};
\node[place,label={80:},minimum size=3mm] (o4) at (6,0.8) {\scriptsize $3$};

\draw[-latex,thick] (i2)--(2);\draw[-latex,thick] (i2x)--(2);\draw[-latex,thick] (2)--(o2);\draw[-latex,thick] (2)--(o2x);\draw[-latex,thick] (o2)--(0);draw[-latex,thick] (o2x)--(0);\draw[-latex,thick] (3)--(o2);\draw[-latex,thick] (3)--(o2x);
\draw[-latex,thick] (o2x)--(0);
\draw[-latex,thick] (o)--(3);\draw[-latex,thick] (0)--(o);
\draw[-latex,thick] (o)--(4);\draw[-latex,thick] (4)--(o4);
\end{scope}
\draw[opacity=0.3,line width=1.5pt, ->, -Latex] (-4,0) to node[pos=0.4,xshift=10]{\scriptsize $\mathcal{N}$} (-4,-0.8);
\begin{scope}[xshift=-6cm,yshift=-2.6cm]
\node[opacity=0.2] at (-1.2,1.1){\scriptsize Control and};\node[opacity=0.2] at (-1.2,0.8){\scriptsize Data Flow};\node[opacity=0.2] at (-1.2,0.5){\scriptsize Net};
\node[place,label={80:},minimum size=3mm] (i2c) at (0,1.4) {};
\node[place,label={80:},minimum size=3mm] (i2) at (0,0.9) {\scriptsize $2$};
\node[place,label={80:},minimum size=3mm] (i2x) at (0,0.4) {\scriptsize $1$};
\node[transition,fill=black,minimum width=0.1mm,minimum height=10mm] (2) at (1,0.8) {};
\node[place,label={80:},minimum size=3mm] (o2c) at (2,1.4) {};
\node[place,label={80:},minimum size=3mm] (o2) at (2,0.9) {\scriptsize $1$};
\node[place,label={80:},minimum size=3mm] (o2x) at (2,0.4) {\scriptsize $2$};

\node[transition,fill=black,minimum width=0.1mm,minimum height=10mm] (3) at (3,1.6) {};
\node[transition,fill=black,minimum width=0.1mm,minimum height=10mm] (0) at (3,0) {};

\node[place,label={80:},minimum size=3mm] (oc) at (4,1.1) {};
\node[place,label={80:},minimum size=3mm] (o) at (4,0.6) {\scriptsize $3$};

\node[transition,fill=black,minimum width=0.1mm,minimum height=10mm] (4) at (5,0.8) {};
\node[place,label={80:},minimum size=3mm] (o4c) at (6,1) {};
\node[place,label={80:},minimum size=3mm] (o4) at (6,0.6) {\scriptsize $3$};

\draw[-latex,thick] (i2)--(2);\draw[-latex,thick] (i2x)--(2);\draw[-latex,thick] (2)--(o2);\draw[-latex,thick] (2)--(o2x);\draw[-latex,thick] (o2)--(0);draw[-latex,thick] (o2x)--(0);\draw[-latex,thick] (3)--(o2);\draw[-latex,thick] (3)--(o2x);
\draw[-latex,thick] (o2x)--(0);
\draw[-latex,thick] (o)--(3);\draw[-latex,thick] (0)--(o);
\draw[-latex,thick] (o)--(4);\draw[-latex,thick] (4)--(o4);
\draw[-latex,thick] (i2c)--(2);\draw[-latex,thick] (2)--(o2c);\draw[-latex,thick] (o2c)--(0);\draw[-latex,thick] (0)--(oc);
\draw[-latex,thick] (oc)--(4);\draw[-latex,thick] (oc)--(3);\draw[-latex,thick] (3)--(o2c);\draw[-latex,thick] (4)--(o4c);
\end{scope}
\end{tikzpicture}
\caption{Cyclic control flow and cyclic data flow encapsulated by the tail-iterative computon from Figure \ref{fig:computon-iterative-tail-example}. We label some places for mapping purposes even though Petri nets are not labelled (see Section \ref{sec:operational-semantics}).}
\label{fig:encapsulation-tail}
\end{figure}
\section{Applications of the Proposed Model}\label{sec:applications}

Compositionality is not exclusive of a single domain, but it appears in many spheres, from physical \cite{coecke_compositionality_2023} to artificial systems \cite{arellanes_evaluating_2020}. This section describes the application of the computon model in two different domains: software engineering and artificial intelligence. Although we are not proposing an end-user modelling language but just an MHC to capture the essence of high-level computations, this section serves to demonstrate the suitability of our model for the compositional construction of high-level computations that separate data flow and control flow. 

For each case study, we describe the respective composite computons and show how the separation of concerns can be exploited to analyse control flow independently from data flow (and vice versa), in this case for model transformation. We particularly express control flow as a diagram in Business Process Model and Notation (BPMN) \cite{noauthor_business_2011} which is the standard language that has been widely used for many years, in both academia and industry, to canonically model workflow control flow. This conversion process does not consider data flow at all and is done through the graph transformation system described in Section \ref{sec:appendix-controlflow}. For data flows, we rely upon DFGs in standard notation \cite{johnston_advances_2004} where arrows represent data flow and circles denote consumer or producer computations. The conversion process is performed without considering control flow at all via the graph transformation system described in Section \ref{sec:appendix-dataflow}. Both BPMN diagrams and standard DFGs are far more expressive than Petri nets to express control and data flow within a system. For completeness, for each scenario, we display the Petri net under $\mathcal{C}\circ \mathfrak{E}$ that comprehensively captures system behaviour. Constructing the corresponding nets under $\mathcal{N}$ or $\mathcal{D}$ can easily be done using the mapping from \ref{sec:appendix-mapping}, which adheres to the functorial constructions from Definition \ref{def:functor-computon-to-petri} and Proposition \ref{prop:functor-data-petri}.

For clarity, we do not describe the corresponding colimiting diagrams of composite computons, but their definition is left to the reader as a matter of routine exercise. For each composite, we try to provide as much internal structure as possible. But when this is not possible due to space restrictions, we simply make reference to a previously created composite. Some composites are not shown independently so as to save space and minimise duplication. 

\subsection{Transformation System} \label{sec:transformation-system}

Before presenting our concrete case studies, we describe our model transformation system which consists of two different modules, one for transforming a computon into its corresponding BPMN diagram and another for retrieving the respective DFG in standard notation. To simplify transformation, we operate on computon CFGs and DFGs rather than Petri nets. This is because such constructs are multidirected labelled graphs that embed all the necessary information about computon flows, without adhering to specific operational semantics (see Definitions \ref{def:computon-cfg} and \ref{def:computon-dfg}).

\subsubsection{Transforming a Computon CFG into a BPMN Diagram} \label{sec:appendix-controlflow}

For this, we propose a simple graph transformation system \emph{ad hoc} to our specific case studies, whose aim is to realise the syntax mapping displayed in Figure \ref{fig:mapping-computon-bpmn}. As our purpose is not to provide general rewriting rules but to demonstrate how the separation of control and data flow can be used for model transformation purposes in two concrete scenarios, we do not investigate theoretical properties such as confluence. Instead, we just ensure that rewriting terminates in a correct state. For this, we implemented our system in Groove \cite{ghamarian_modelling_2012}, a well-established tool that supports state-space exploration, which enabled us to validate termination and confirm that, in each case study, all derivations lead to the intended terminal graph. In the future, we would like to investigate more general, application-independent rewriting rules as well as efficient rewriting algorithms. 
	
\begin{figure}[!h]
\centering
\begin{tabular}{ |c|c|c|c|c|c|c|c| } 
 \hline
 \scriptsize{Computon Syntax} & \tikz{\qmatch{q}{0}{0}{};\node (t1) at (0.7,-0.5){};\node (t2) at (0.7,0.5){};\flowdiag{q}{t1}{dashed}{}{pos=0.5,rotate=320};\flowdiag{q}{t2}{dashed}{}{pos=0.5,rotate=40};} & \tikz{\node at (0,0){};\forkplain{-1}{0.2};\flow{{0.3,0.52}}{{0.8,0.52}}{dashed}{};\flow{{0.3,0.35}}{{0.8,0.35}}{dashed}{};} & \tikz{\node at (0,0){};\joinplain{-0.5}{0.2};\flow{{0,0.52}}{{0.5,0.52}}{dashed}{};\flow{{0,0.35}}{{0.5,0.35}}{dashed}{};} & \tikz{\node at (0,0){};\qmatch{q}{0.55}{0.45}{};\flow{{0,0.45}}{q}{dashed}{};\flow{q}{{1.2,0.45}}{dashed}{};} & \tikz{\node at (0,0){};\computonPrimitive{0}{0.1}{0.3}{0.6}{};} & \tikz{\node at (0,0){};\qinplain{}{0}{0.35}{};} & \tikz{\node at (0,0){};\qoutplain{}{-0.8}{0.35}{};} \\
 \hline
 \scriptsize{BPMN Syntax} & \includegraphics[scale=0.3]{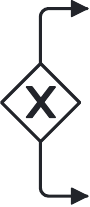} & \includegraphics[scale=0.3]{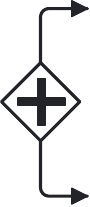} & \includegraphics[scale=0.3]{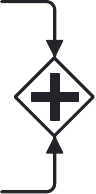} & \tikz{\node at (0,0){};\draw[-latex] (0,0.3)--(1.2,0.3);} & \includegraphics[scale=0.3]{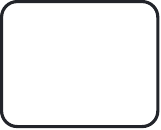} & \tikz{\node at (0,0){};\node at (0,0.3){\includegraphics[scale=0.3]{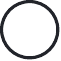}};} & \tikz{\node at (0,0){};\node at (0,0.3){\includegraphics[scale=0.3]{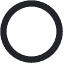}};} \\ 
 \hline
\end{tabular}
\caption{Mapping from computon syntax to BPMN syntax.}
\label{fig:mapping-computon-bpmn}
\end{figure}

To understand the purpose of our transformation system, let us consider the example depicted in Figure \ref{fig:transformation-example-comprehensive} in which we first convert a composite computon into its corresponding CFG via the functorial construction $\mathscr{C}$ described in Section \ref{sec:control-data-structures}. This CFG serves as the input to our system which, in turn, performs the mapping from Figure \ref{fig:mapping-computon-bpmn} to produce a BPMN diagram that captures the control flow structure of our composite. Here, for instance, we can notice that the unique fork construct has been replaced with an AND gateway, whereas ic-ports (together with their adjacent flows) have been substituted with a simple arrow, known as sequence flow in the context of BPMN. These two rewritings are just application instances of the mapping shown in Figure \ref{fig:mapping-computon-bpmn}. 

Although it might seem trivial, realising such a mapping cannot be done directly since technical considerations need to be taken into account to maintain graph integrity while ensuring semantic correctness (e.g., avoiding dangling edges during rewriting). For that reason, we propose a graph transformation system which converts a computon CFG $\mathscr{C}(\lambda)$ into a BPMN diagram $G$ via the sequential application of the injective rules from the set $R=\{r_1,\ldots,r_{12}\}$ displayed in Figure \ref{fig:rewriting-control}. 

\begin{figure}[!h]
\centering
\begin{tikzpicture}
\begin{scope}[yshift=5cm,scale=0.8]
\node[opacity=0.6] at (4,3.9){\scriptsize Composite computon};

\computonComposite{0.2}{-0.9}{7.6}{4.4};
\computonComposite{2.5}{-0.8}{5}{4.2};
\computonComposite{2.8}{0.6}{4.5}{2.7};

\computonPrimitive{0.8}{0.4}{0.5}{1.2}{};
\qin{0q1}{0}{1.1}{};

\qmatch{oq2}{1.9}{1.1}{};\flow{{1.3,1.1}}{oq2}{dashed}{};\flowdiag{oq2}{{2.95,1.95}}{dashed}{}{pos=0.4,rotate=45};\flowdiag{oq2}{{2.9,0.2}}{dashed}{}{pos=0.4,rotate=325};

\forkplain{1.95}{1.7};
\qmatch{fq1}{3.9}{2.1}{};\flow{{3.2,2.1}}{fq1}{dashed}{};\flow{fq1}{{4.7,2.1}}{dashed}{};
\qmatch{fq2}{3.9}{1.8}{};\flow{{3.2,1.8}}{fq2}{dashed}{};\flow{fq2}{{4.7,1.8}}{dashed}{};

\computonPrimitive{4.7}{2}{0.5}{1.2}{};
\computonPrimitive{4.7}{0.7}{0.5}{1.2}{};

\qmatch{jq1}{6}{2.1}{};\flow{{5.2,2.1}}{jq1}{dashed}{};\flow{jq1}{{6.8,2.1}}{dashed}{};
\qmatch{jq2}{6}{1.8}{};\flow{{5.2,1.8}}{jq2}{dashed}{};\flow{jq2}{{6.8,1.8}}{dashed}{};
\joinplain{5.8}{1.7};

\computonPrimitive{2.9}{-0.7}{0.5}{1.2}{};

\dmatch{1o}{8}{1.6}{$9$}{right};\flow{{5.2,2.5}}{1o}{}{bend left=20};\flowdiag{{3.4,-0.4}}{1o}{bend right=45}{}{pos=0.6,rotate=40};
\qmatch{jq3}{8}{1.3}{};\flowdiag{{7.1,1.94}}{jq3}{dashed}{}{pos=0.5,rotate=325};\flowdiag{{3.4,-0.1}}{jq3}{dashed,bend right=60}{}{rotate=20};
\end{scope}

\begin{scope}[yshift=5.3cm,xshift=7cm]
\draw[opacity=0.3,line width=1.5pt, ->, -Latex] (0,0) to node[pos=0.4,yshift=7]{\scriptsize $\mathscr{C}$} (1.5,0);
\end{scope}

\begin{scope}[yshift=5.4cm,xshift=8.9cm,scale=0.8]
\node[opacity=0.6] at (4.6,1.9){\scriptsize Corresponding CFG};
\node(n1) at (0,0.5){$0$};
\node(n2) at (1,0.5){$\kappa$};
\node(n3) at (2,0.5){$0$};

\node(n4) at (3,0){$\kappa$};

\node(n5) at (3,1){$\kappa$};
\node(n6) at (4,1.2){$0$};\node(n61) at (5,1.2){$\kappa$};\node(n62) at (6,1.2){$0$};
\node(n7) at (4,0.8){$0$};\node(n71) at (5,0.8){$\kappa$};\node(n72) at (6,0.8){$0$};
\node(n8) at (7,1){$\kappa$};

\node(n9) at (8,0.5){$0$};

\draw[-latex] (n1) -- (n2);
\draw[-latex] (n2) -- (n3);
\draw[-latex] (n3) -- (n4);
\draw[-latex] (n3) -- (n5);
\draw[-latex] (n5) -- (n6);\draw[-latex] (n6) -- (n61);\draw[-latex] (n61) -- (n62);\draw[-latex] (n62) -- (n8);
\draw[-latex] (n5) -- (n7);\draw[-latex] (n7) -- (n71);\draw[-latex] (n71) -- (n72);\draw[-latex] (n72) -- (n8);
\draw[-latex] (n4) to [bend right=20] (n9);
\draw[-latex] (n8) -- (n9);
\end{scope}

\begin{scope}[yshift=3.3cm,xshift=11cm]
\draw[opacity=0.3,line width=1.5pt, ->, -Latex] (1.4,1.5) to node[pos=0.5,xshift=60,align=left]{\scriptsize Our graph transformation \\\scriptsize system} (-0.1,0);
\end{scope}

\begin{scope}
\node[opacity=0.6,align=left] at (3.2,2){\scriptsize Corresponding BPMN diagram};
\node[anchor=south west,inner sep=0] at (6,0) {\includegraphics[scale=0.4]{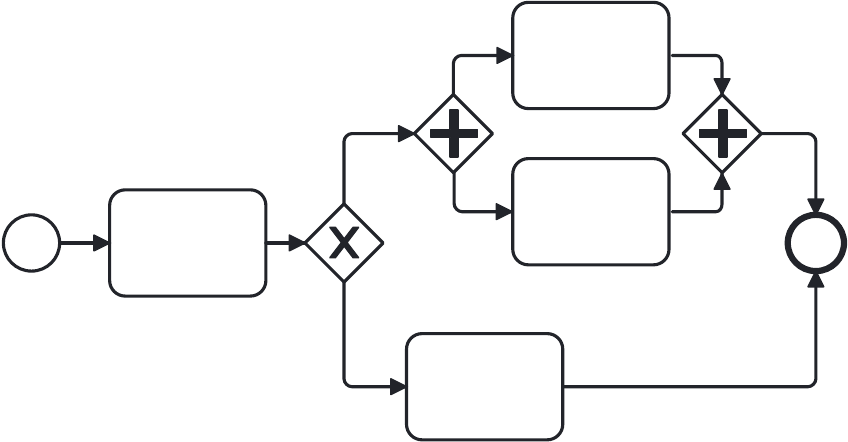}};
\end{scope}
\end{tikzpicture}
\caption{Example to illustrate the intended purpose of our graph transformation system.}
\label{fig:transformation-example-comprehensive}
\vspace{-4pt}
\end{figure}

\newpage

Each of the $R$-rules from Figure \ref{fig:rewriting-control} is applied individually until no further match is found, so our system produces a sequence of derivations of the form $\mathscr{C}(\lambda) \Rightarrow_{r_j} \cdots \Rightarrow_{r_k} G$ for ${j,k\in\{1,\ldots,12\}}$ and ${j \leq k}$. As we rely on the Double-Pushout Approach for graph transformation \cite{habel_double-pushout_2001}, a derivation ${G_1 \Rightarrow_{r_i} G_2}$ exists for ${i\in \{1,\ldots,12\}}$ if there is a context graph $G_0$ that makes the following two squares commute in $\textbf{Set}^{\textbf{Gr}}$:
\[
\begin{tikzcd}
 \mathcal{L}_i \arrow[d] & \mathcal{I}_i \arrow[l]\arrow[r]\arrow[d] & \mathcal{R}_i \arrow[d] \\
 G_1 & G_0 \arrow[l]\arrow[r] & G_2 
\end{tikzcd}
\]
Here, $\mathcal{L}_i$, $\mathcal{R}_i$ and $\mathcal{I}_i$ denote the left-hand side, right-hand side and interface of a rule $r_i$. 

The purpose of $r_1$ is to replace branching sources with XOR gateways (i.e., BPMN elements that denote alternative control flow) through the match of $0$-nodes with exactly two outgoing edges (i.e., control ports with two salient control flows). In our rule set, XOR join gateways are not taken into account to facilitate transformation processes, a consideration that is completely valid since such constructs are not strictly required in BPMN diagrams \cite{silver_bpmn_2011,oracle_corporation_developing_2016}. 

Rule $r_2$ removes intermediate branching sinks by matching $0$-nodes connected from two $\kappa$-nodes to exactly one $\kappa$-node (i.e., control ports linked from two computation units to some other unit). Applying $r_2$ results in the removal of a $0$-node (together with its adjacent edges) and in the addition of new edges from the sources of that node to its single target. Rules $r_3$ and $r_4$ rewrite parallel splits and synchronisation points, respectively. Splits correspond to $\kappa$-nodes with two outgoing edges (i.e., fork computons), whereas synchronisation points are $\kappa$-nodes with two incoming edges (i.e., join computons). Both of them are relabelled as AND gateways, i.e., BPMN constructs for splitting or merging concurrent execution paths. 

Rules $r_5$ to $r_8$ manage sequencing by removing all intermediate $0$-nodes (i.e., ic-ports) together with their adjacent edges. Whenever a node is removed, an arc is put instead, from the source of its left edge to the target of its right one. The purpose of rule $r_9$ is to replace $k$-nodes (i.e., computation units) with BPMN tasks which are atomic activities that represent a unit of computation performed by some computing device. Rules $r_{10}$, $r_{11}$ and $r_{12}$ simply relabel ec-inports and ec-outports as start and end events, respectively. In BPMN, start events denote control flow origin whilst end events represent control flow termination.\footnote{We are aware rules $r_5$, $r_6$, $r_7$ and $r_8$ can be merged into a single one by the use of node restrictions (the same for rules $r_{10}$ and $r_{11}$). Our purpose is not to provide a minimal set of rewriting rules but to demonstrate how the separation of control and data flow can be leveraged to convert a computon CFG into its equivalent BPMN diagram, without the need of analysing data flow at all. We believe that avoiding the use of node restrictions clarifies our transformation system and provides additional expressivity in terms of graph matching.}

\begin{figure}[!h]
\centering
\subcaptionbox{Rules for rewriting branching and parallel constructs.\label{fig:rewriting-control-1}}
{
\begin{tabular}{ |c|c|c|c| } 
 \hline
 \scriptsize{Rule} & \scriptsize{Left-Hand Side} & \scriptsize{Interface} & \scriptsize{Right-Hand Side} \\
 \hline 
 $r_1$ & \tikz{\node(q) at (0,0){\scriptsize$0$};\node (t1) at (0.55,-0.5){$\kappa$};\node (t2) at (0.55,0.5){$\kappa$};\draw[-latex] (q)--(0,-0.5)--(t1);\draw[-latex] (q)--(0,0.5)--(t2);} & \tikz{\node(q) at (0,0){\scriptsize\#};\node (t1) at (0.55,-0.5){$\kappa$};\node (t2) at (0.55,0.5){$\kappa$};\draw[-latex] (q)--(0,-0.5)--(t1);\draw[-latex] (q)--(0,0.5)--(t2);} & \tikz{\node at (0,0){\includegraphics[scale=0.35]{bpmn-branching-fork}};\node at (0.5,-0.5){$\kappa$};\node at (0.5,0.5){$\kappa$};} \\
 \hline
 $r_2$ & \tikz{\node(q) at (0.55,0){\scriptsize$0$};\node (t1) at (0,-0.5){$\kappa$};\node (t2) at (0,0.5){$\kappa$};\node(r) at (1.4,0){$\kappa$};\draw[-latex] (t1)--(0.55,-0.5)--(q);\draw[-latex] (t2)--(0.55,0.5)--(q);\draw[-latex] (q)--(r);} & \tikz{\node(q) at (0.55,0){};\node (t1) at (0,-0.5){$\kappa$};\node (t2) at (0,0.5){$\kappa$};\node(r) at (1.4,0){$\kappa$};} & \tikz{\node(q) at (0.55,0){};\node (t1) at (0,-0.5){$\kappa$};\node (t2) at (0,0.5){$\kappa$};\node(r) at (1.4,0){$\kappa$};\draw[-latex] (t1)--(1.4,-0.5)--(r);\draw[-latex] (t2)--(1.4,0.5)--(r);} \\
 \hline
 $r_3$ & \tikz{\node(q) at (0,0){$\kappa$};\node (t1) at (0.55,-0.5){\scriptsize$0$};\node (t2) at (0.55,0.5){\scriptsize$0$};\draw[-latex] (q)--(0,-0.5)--(t1);\draw[-latex] (q)--(0,0.5)--(t2);} & \tikz{\node(q) at (0,0){\scriptsize\#};\node (t1) at (0.55,-0.5){\scriptsize$0$};\node (t2) at (0.55,0.5){\scriptsize$0$};\draw[-latex] (q)--(0,-0.5)--(t1);\draw[-latex] (q)--(0,0.5)--(t2);} & \tikz{\node at (0,0){\includegraphics[scale=0.35]{bpmn-and-fork}};\node at (0.5,-0.5){\scriptsize$0$};\node at (0.5,0.5){\scriptsize$0$};} \\
 \hline
 $r_4$ & \tikz{\node(q) at (0.55,0){$\kappa$};\node (t1) at (0,-0.5){\scriptsize$0$};\node (t2) at (0,0.5){\scriptsize$0$};\draw[-latex] (t1)--(0.55,-0.5)--(q);\draw[-latex] (t2)--(0.55,0.5)--(q);} & \tikz{\node(q) at (0.55,0){\scriptsize\#};\node (t1) at (0,-0.5){\scriptsize$0$};\node (t2) at (0,0.5){\scriptsize$0$};\draw[-latex] (t1)--(0.55,-0.5)--(q);\draw[-latex] (t2)--(0.55,0.5)--(q);} & \tikz{\node at (0.5,0){\includegraphics[scale=0.35]{bpmn-and-join}};\node at (0,-0.5){\scriptsize$0$};\node at (0,0.5){\scriptsize$0$};} \\
 \hline
\end{tabular}
}
\subcaptionbox{Rules for rewriting sequential constructs.\label{fig:rewriting-control-2}}
{
\begin{tabular}{ |c|c|c|c| } 
 \hline
 \scriptsize{Rule} & \scriptsize{Left-Hand Side} & \scriptsize{Interface} & \scriptsize{Right-Hand Side} \\
 \hline
 $r_5$ & \tikz{\node(k1) at (0,0){$\kappa$};\node (n) at (1,0){\scriptsize$0$};\node(k2) at (2,0){$\kappa$};\draw[-latex] (k1)--(n);\draw[-latex] (n)--(k2);} & \tikz{\node(k1) at (0,0){$\kappa$};\node(k2) at (1.7,0){$\kappa$};} & \tikz{\node(k1) at (0,0){$\kappa$};\node(k2) at (1.5,0){$\kappa$};\draw[-latex] (k1)--(k2);} \\
 \hline
 $r_6$ & \tikz{\node(k1) at (0,0){\includegraphics[scale=0.35]{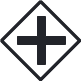}};\node (n) at (1,0){\scriptsize$0$};\node(k2) at (2,0){$\kappa$};\draw[-latex] (k1)--(n);\draw[-latex] (n)--(k2);} & \tikz{\node(k1) at (0,0){\includegraphics[scale=0.35]{bpmn-and}};\node(k2) at (1.7,0){$\kappa$};} & \tikz{\node(k1) at (0,0){\includegraphics[scale=0.35]{bpmn-and}};\node(k2) at (1.5,0){$\kappa$};\draw[-latex] (k1)--(k2);} \\
 \hline
 $r_7$ & \tikz{\node(k1) at (0,0){$\kappa$};\node (n) at (1,0){\scriptsize$0$};\node(k2) at (2,0){\includegraphics[scale=0.35]{bpmn-and}};\draw[-latex] (k1)--(n);\draw[-latex] (n)--(k2);} & \tikz{\node(k1) at (0,0){$\kappa$};\node(k2) at (1.7,0){\includegraphics[scale=0.35]{bpmn-and}};} & \tikz{\node(k1) at (0,0){$\kappa$};\node(k2) at (1.5,0){\includegraphics[scale=0.35]{bpmn-and}};\draw[-latex] (k1)--(k2);} \\
 \hline
 $r_8$ & \tikz{\node(k1) at (0,0){\includegraphics[scale=0.35]{bpmn-and}};\node (n) at (1,0){\scriptsize$0$};\node(k2) at (2,0){\includegraphics[scale=0.35]{bpmn-and}};\draw[-latex] (k1)--(n);\draw[-latex] (n)--(k2);} & \tikz{\node(k1) at (0,0){\includegraphics[scale=0.35]{bpmn-and}};\node(k2) at (1.7,0){\includegraphics[scale=0.35]{bpmn-and}};} & \tikz{\node(k1) at (0,0){\includegraphics[scale=0.35]{bpmn-and}};\node(k2) at (1.5,0){\includegraphics[scale=0.35]{bpmn-and}};\draw[-latex] (k1)--(k2);} \\
 \hline
\end{tabular}
}
\subcaptionbox{Rule for rewriting functional computons.\label{fig:rewriting-control-3}}
{
\begin{tabular}{ |c|c|c|c| } 
 \hline
 \scriptsize{Rule} & \scriptsize{Left-Hand Side} & \scriptsize{Interface} & \scriptsize{Right-Hand Side} \\
 \hline
 $r_9$ & \tikz{\node at (0,0){};\node(k1) at (0,0){$\kappa$};} & \tikz{\node at (0,0){};\node(k1) at (0,0){\scriptsize\#}}  & \includegraphics[scale=0.3]{bpmn-activity} \\
 \hline
\end{tabular}
}
\subcaptionbox{Rules for rewriting ec-inports and ec-outports.\label{fig:rewriting-control-4}}
{
\begin{tabular}{ |c|c|c|c| } 
 \hline
 \scriptsize{Rule} & \scriptsize{Left-Hand Side} & \scriptsize{Interface} & \scriptsize{Right-Hand Side} \\
 \hline
 $r_{10}$ & \tikz{\node(k1) at (1.3,0){\includegraphics[scale=0.3]{bpmn-activity}};\node (n) at (0,0){\scriptsize$0$};\draw[-latex] (n)--(k1);} & \tikz{\node at (0,0){};\node (i) at (0,0.3){\scriptsize\#};\node (a) at (1.3,0.3){\includegraphics[scale=0.3]{bpmn-activity}};\draw[-latex](i)--(a);} & \tikz{\node at (0,0){};\node (i) at (0,0.3){\includegraphics[scale=0.3]{bpmn-start}};\node (a) at (1.3,0.3){\includegraphics[scale=0.3]{bpmn-activity}};\draw[-latex](i)--(a);} \\
 \hline
 $r_{11}$ & \tikz{\node(k1) at (1.3,0){\includegraphics[scale=0.3]{bpmn-and}};\node (n) at (0,0){\scriptsize$0$};\draw[-latex] (n)--(k1);} & \tikz{\node at (0,0){};\node (i) at (0,0.3){\scriptsize\#};\node (a) at (1.3,0.3){\includegraphics[scale=0.3]{bpmn-and}};\draw[-latex](i)--(a);} & \tikz{\node at (0,0){};\node (i) at (0,0.3){\includegraphics[scale=0.3]{bpmn-start}};\node (a) at (1.3,0.3){\includegraphics[scale=0.3]{bpmn-and}};\draw[-latex](i)--(a);} \\
 \hline
 $r_{12}$ & \tikz{\node(k1) at (0,0){\includegraphics[scale=0.3]{bpmn-activity}};\node (n) at (1.3,0){\scriptsize$0$};\draw[-latex] (k1)--(n);} & \tikz{\node(k1) at (0,0){\includegraphics[scale=0.3]{bpmn-activity}};\node (n) at (1.3,0){\scriptsize\#};\draw[-latex] (k1)--(n);} & \tikz{\node(k1) at (0,0){\includegraphics[scale=0.3]{bpmn-activity}};\node (n) at (1.3,0){\includegraphics[scale=0.3]{bpmn-end}};\draw[-latex] (k1)--(n);} \\
 \hline
\end{tabular}
}
\caption{Rewriting rules to transform a computon CFG into a BPMN diagram. Here, a hash symbol \# is used as a wildcard to ensure that structure is preserved. This choice is arbitrary and any other label can be used instead, as long as it is different from the labels used for CFGs or BPMN diagrams. Labels are put in the place of nodes, and object mapping corresponds to graphical arrangement.}
\label{fig:rewriting-control}
\vspace{-11pt}
\end{figure}

To validate the twelve rules from Figure \ref{fig:rewriting-control} on the scenarios described in Sections \ref{sec:case1} and \ref{sec:case2}, we implemented our graph transformation system in Groove \cite{ghamarian_modelling_2012} which is a reference tool for specifying and simulating such kind of systems. With the help of Groove, we verified that our rule set $R$ satisfies dangling conditions and that it correctly produces BPMN diagrams for the CFG of the total sequential computon from Figure \ref{fig:application-AWS-workflow} and the memory cell from Figure \ref{fig:application-LSTM-composites}(e). Particularly, the correct BPMN diagram for Figure \ref{fig:application-AWS-workflow} is produced after exploring 37 states and 36 transitions, whereas the BPMN diagram for Figure \ref{fig:application-LSTM-composites}(e) is obtained after exploring 35 states and 34 transitions. For the simulation, we used linear state exploration which chooses one transition from each open state. For reproduction purposes, our source code is available at \url{https://github.com/damianarellanes/cfg-transformation}.\footnote{Groove implicitly creates interfaces, determines context graphs and computes pushouts, given the left- and right-hand side of a rule as well as a host graph.} 

\subsubsection{Transforming a Computon DFG into a DFG in Standard Notation} \label{sec:appendix-dataflow}

For this, we propose a functor $S$ from the category $\textbf{Set}^{\textbf{Gr}}$ to the category of graphs with labelled vertices and labelled edges, whose behaviour is formalised in Definition \ref{def:computon-dfg-standard}.

\begin{definition}\label{def:computon-dfg-standard}
Given a computon DFG $\mathscr{D}(\lambda)$, the functor $S$ constructs a graph $S(\mathscr{D}(\lambda))$ by letting:
\begin{itemize}
\item the set $L'$ of vertex labels be $L$, 
\item the set $M'$ of edge labels be $L \cup \{\epsilon\}$ ($\epsilon$ denotes the empty label),
\item the set of vertices $V'\subseteq V$ be ${\{v \mid v^-=0~\lor~v^+=0~\lor~l(v)=\kappa\}}$,
\item the set of edges $E' \subseteq V'\times V'\times L'$ be $\{(v,w,\epsilon)\mid (\exists e \in E)[\vec{s}(e)=v\land \vec{t}(e)=w]\}\cup$ $\{(v,w,x)\mid (\exists e_1,e_2 \in E)(\exists y \in V)[\vec{s}(e_1)=v\land \vec{t}(e_1)=y\land \vec{s}(e_2)=y\land \vec{t}(e_2)=w\land l(y)=x]\}$.
\item the source and target functions be mappings ${E'\rightarrow V'}$ given by ${s'(v,w,x)=v}$ and ${t'(v,w,x)=w}$, respectively,
\item the edge labelling function $m': E' \rightarrow M'$ and the vertex labelling function $l': V' \rightarrow L'$ be given by $m'(v,w,x)=x$ and $l'(v)=l(v)$, respectively.
\end{itemize}
For a graph homomorphism $\mathscr{D}(\alpha):\mathscr{D}(\lambda_1)\rightarrow \mathscr{D}(\lambda_2)$, the components of $S(\mathscr{D}(\alpha)):S(\mathscr{D}(\lambda_1))\rightarrow S(\mathscr{D}(\lambda_2))$ are:
\begin{itemize}
\item $S(\mathscr{D}(\alpha))_V:V'_1\rightarrow V'_2$ given by $S(\mathscr{D}(\alpha))_V(v) = \mathscr{D}(\alpha)_V(v)$,
\item $S(\mathscr{D}(\alpha))_E:E'_1\rightarrow E'_2$ given by $S(\mathscr{D}(\alpha))_E(v,w,x) = (\mathscr{D}(\alpha)_V(v),\mathscr{D}(\alpha)_V(w),x)$,
\item $S(\mathscr{D}(\alpha))_L:L'_1\rightarrow L'_2$ given by $S(\mathscr{D}(\alpha))_L(x)=\mathscr{D}(\alpha)_L(x)$, and
\item $S(\mathscr{D}(\alpha))_M:M'_1\rightarrow M'_2$ given by $S(\mathscr{D}(\alpha))_M(x) = x$.
\end{itemize}
Checking the functoriality of $S$ is routine and is analogous to that of Proposition \ref{prop:control-flow-functorial}.
\end{definition}

A glance at Definition \ref{def:computon-dfg-standard} reveals that $S$ is a functor that preserves all the boundary- and $\kappa$-vertices from a computon DFG $\mathscr{D}(\lambda)$, i.e., ed-inports, ed-outports and computation units. The only edges retained in $S(\mathscr{D}(\lambda))$ are those connected from a vertex with no incoming edges to a $\kappa$-node (i.e., data flows from ed-inports to computation units) and edges connected from a $\kappa$-node to a vertex with no outgoing edges (i.e., data flows from computation units to ed-outports). These preserved edges are all empty-labelled to meet the requirements of standard DFG notation. To fully satisfy such a notation, it suffices to replace $\kappa$ vertices from $S(\mathscr{D}(\lambda))$ with $\bigcirc$ symbols. For example, the DFGs in standard notation of the total sequential computon from Figure \ref{fig:application-AWS-workflow} and of the memory cell from Figure \ref{fig:application-LSTM-composites}(e) are displayed in Figures \ref{fig:application-AWS-separation}(b) and \ref{fig:application-LSTM-separation}(b), respectively. All these diagrams are created using the functorial construction $S$ from Definition \ref{def:computon-dfg-standard}. 

Basically, our functor $S$ leverages the separation of concerns of the computon model so as to optimise computon DFGs. On the one hand, $S$ compresses $\mathscr{D}(\lambda)$-paths of the form $v \rightarrow y \rightarrow w$ through the preservation of $v$ and $w$ in $S(\mathscr{D}(\lambda))$ and the creation of an edge $v \rightarrow w$ with the label of $y$. Apart from \emph{path compression}, $S$ performs \emph{multiplicity reduction}, i.e., edges with the same source, target and label are all collapsed onto a single edge.\footnote{As an edge is a triple $(v,w,x)$ and a set cannot contain repeated elements, it follows that there cannot be multiples edges with the same source, same target and same label in $S(\mathscr{D}(\lambda))$.} This behaviour is particularly important to simplify large and complex computon DFGs, leading to more efficient algorithms for data flow analysis and a clearer visualisation of data passing. Although multiplicity reduction is not relevant for any of the scenarios described in this section, we consider it to highlight the benefits of separating data and control towards data flow optimisations that do not consider control and preserve the order of data passing.  

\subsection{Case Study 1: Compositional AWS Infrastructure Deployment} \label{sec:case1}

\emph{AWS Step Functions} \cite{noauthor_aws_2023} is a serverless orchestration framework by Amazon Web Services (AWS), which allows software developers the implementation and management of (multi-step) serverless application workflows in the cloud, using visual and interactive programming constructs. A step function is a workflow that defines a high-level computation for the invocation of web-services in some pre-defined order, with the aim of automating a specific task such as database provisioning, release management or serverless deployment. In this section, we focus on a step function for automatic infrastructure deployment, provided as a use case by the AWS team \cite{mendonca_using_2017}, which follows different execution paths depending on the state of an AWS CloudFormation stack and intermediate processing results. If the stack does not exist, a new stack is created and deployment succeeds. Otherwise, a change set is created before inspecting its resources and deciding whether the change set will be executed or removed. If the change set is executed, then deployment succeeds; otherwise, deployment fails. The step function we consider is presented in Figure \ref{fig:application-AWS-concept}.
\vspace{-6pt}
\begin{figure}[!h]
\centering
\begin{tikzpicture}	
	\node[draw](s) at (0.3,0){\footnotesize Deployment Succeeded};\node[opacity=0.5] at (-1.7,0.15){\small $s_1$};
	\node[draw](b1) at (0.3,1){\footnotesize Stack Created?};
	\node[draw](g1) at (-1,2){\footnotesize Get Stack Creation Status};\node[opacity=0.5] at (-3.3,2.2){\small $g_1$};
	\node[draw](w1) at (0,3){\footnotesize Wait Stack Creation};\node[opacity=0.5] at (-1.9,3.2){\small $w_1$};
	\node[draw](c1) at (0,4){\footnotesize Create Stack};\node[opacity=0.5] at (-1.3,4.2){\small $c_1$};
	\draw[-latex] (b1) -- (s);
	\draw[-latex] (g1) to[bend right=15] (b1);\draw[-latex] ($(b1)+(1,0.25)$) to[bend right=40] (w1);\draw[-latex] (w1) to[bend right=15] (g1);
	\draw[-latex] (c1) -- (w1);
	
	\node[draw](s2) at (5,0){\footnotesize Deployment Succeeded};\node[opacity=0.5] at (3,0.2){\small $s_2$};
	\node[draw](f2) at (9,0){\footnotesize Deployment Failed};\node[opacity=0.5] at (10.75,0.2){\small $f_2$};
	\node[draw](x2) at (5,1){\footnotesize Execute Change Set};\node[opacity=0.5] at (3.1,1.2){\small $x_2$};
	\node[draw](d2) at (9,1){\footnotesize Delete Change Set};\node[opacity=0.5] at (10.75,1.2){\small $d_2$};
	\node[draw](b2) at (7,2){\footnotesize Safe to Update Infra?};
	\node[draw](i2) at (7,3){\footnotesize Inspect Change Set Changes};\node[opacity=0.5] at (4.6,3.2){\small $i_2$};
	\node[draw](b3) at (7,4){\footnotesize Change Set Created?};
	\node[draw](g2) at (5.5,5){\footnotesize Get Change Set Creation Status};\node[opacity=0.5] at (2.8,5.2){\small $g_2$};
	\node[draw](w2) at (7,6){\footnotesize Wait Change Set Creation};\node[opacity=0.5] at (9.35,6.2){\small $w_2$};
	\node[draw](c2) at (7,7){\footnotesize Create Change Set};\node[opacity=0.5] at (8.75,7.2){\small $c_2$};
	\draw[-latex] (x2) -- (s2);
	\draw[-latex] (d2) -- (f2);
	\draw[-latex] (b2) -- (x2);
	\draw[-latex] (b2) -- (d2);
	\draw[-latex] (i2) -- (b2);
	\draw[-latex] (b3) -- (i2);
	\draw[-latex] (g2) to[bend right=15] (b3);\draw[-latex] ($(b3)+(1,0.26)$) to[bend right=40] (w2);\draw[-latex] (w2) to[bend right=15] (g2);
	\draw[-latex] (c2) -- (w2);
	
	\node[draw](b4) at (3.5,8){\footnotesize Does Stack Exist?};
	\node[draw](k) at (3.5,9){\footnotesize Check Stack Existence};\node[opacity=0.5] at (1.55,9.2){\small $k$};
	\node[circle,draw](0) at (3.5,10.3){\footnotesize Start};
	\draw (2.08,8) -- (0,8);\draw[-latex] (0,8) -- (c1);\draw (4.91,8) -- (7,8);\draw[-latex] (7,8) -- (c2);	
	\draw[-latex] (k) -- (b4);
	\draw[-latex] (0) -- (k);
\end{tikzpicture}  
\caption{AWS step function to automate infrastructure deployment.}
\label{fig:application-AWS-concept}
\end{figure}

A step function does not allow the specification of data flow but data exchange is implicit in the processing of the steps involved in the workflow control flow being defined. Before executing the workflow, a JSON file is created to specify the initial input in the form of multiple properties and values. The JSON file is modified as the workflow execution progresses, by particularly appending the output of each intermediate web service invocation.

To compositionally construct the step function workflow, we use the model we propose in this paper by considering every service invocation as a functional computon which explicitly defines data required and produced (i.e., data flow and control flow are both explicit). The only processes we do not consider are those that branch control such as \emph{Stack Created?}. This is because those processes are built-in AWS functions which directly correspond to branching computons in our model. Although, strictly speaking, multiple computon ports can be coloured in the same way to represent the same data type (e.g., a boolean), for clarity concerns and demonstration purposes we treat every deployment parameter as a unique colour. The description of each port colour is presented in Figure \ref{fig:application-AWS-colours}.

\vspace{-2.9pt}

\begin{figure}[!h]
\centering
\resizebox{15.5cm}{!}{
\begin{tabular}{c|l}
 \hline
 Colour & Description \\\hline
 1 & Colour of an environment type on which the infrastructure code will be deployed \\
 & (e.g., development, testing or production). \\
 2 & Colour of a name of an AWS CloudFormation stack. \\
 3 & Colour of a path to an AWS CloudFormation template. \\
 4 & Colour of an identifier of a revision S3 bucket. \\
 5 & Colour of a revision S3 key. \\
 6 & Colour of a flag that specifies whether the AWS CloudFormation stack exists or not. \\
 7 & Colour of a stack creation status. \\
 8 & Colour of a state which can be either success or fail. \\
 9 & Colour of a change set name. \\
 10 & Colour of a change set creation status. \\
 11 & Colour of a change set action which determines whether the stack can be safely updated or not.
\end{tabular}}
\caption{Colours for the AWS step function shown in Figure \ref{fig:application-AWS-concept}.}
\label{fig:application-AWS-colours}
\vspace{-7.4pt}
\end{figure}

As compositionality enforces bottom-up construction, we start by defining the partial sequential computon $w_1 \rhd_{\rho_1} g_1$ where $w_1$ and $g_1$ respectively correspond to the processes \emph{wait stack creation} and \emph{get stack creation status} (see Figure \ref{fig:application-AWS-composites-left}(a)). This partial sequential computon, together with three functional computons ($e_1$, $e_2$ and $e_3$), serve as the basis to form the tail-iterative computon $(w_1 \rhd_{\rho_1} g_1)*_{\rho_2}$ which waits until the stack is created (see Figure \ref{fig:application-AWS-composites-left}(b)). The additional computons $e_1$, $e_2$ and $e_3$ echo data, remove data of colour $7$ and discard all data, respectively. The only branching structure in $(w_1 \unrhd_{\rho_1} g_1)*_{\rho_2}$ determines whether the stack has been created or not. If the stack has not been created, the loop continues; otherwise, $e_3$ is invoked to exit the tail-recursive composite and pass control to the external world.

The tail-iterative composite $(w_1 \rhd_{\rho_1} g_1)*_{\rho_2}$ is then used to construct the total sequential computon $c_1 \unrhd_{\rho_3} ((w_1 \rhd_{\rho_1} g_1)*_{\rho_2})$ where $c_1$ is a functional computon in charge of creating the stack. Such a sequential computon is then used as left operand to construct the (even more complex) total sequential computon $(c_1 \unrhd_{\rho_3} ((w_1 \rhd_{\rho_1} g_1)*_{\rho_2})) \unrhd_{\rho_4} s_1$ in which $s_1$ is a functional computon that marks deployment as successful. This complex total sequential composite, depicted in Figure \ref{fig:application-AWS-composites-left}(c), corresponds to the left path of the step function workflow shown in Figure \ref{fig:application-AWS-concept}. 

For the other path, we first construct the branching computon $(x_2 \unrhd_{\rho_5} s_2) ?_{\rho_7} (d_2 \unrhd_{\rho_6} f_2)$ whose only purpose is to succeed or fail deployment (see Figure \ref{fig:application-AWS-composites-right}(a)). In this composite, $x_2 \unrhd_{\rho_5} s_2$ corresponds to the total sequential computon which \emph{executes a change set} via the functional computon $x_2$, before marking deployment as \emph{successful} via the functional computon $s_2$. The other part of the branching composite triggers the total sequential computon $d_2 \unrhd_{\rho_6} f_2$ which interrupts deployment by first \emph{deleting the change set} via the functional computon $d_2$ and then using the functional computon $f_2$ to indicate that deployment has \emph{failed}. 

Returning to the innermost right part of the step function shown in Figure \ref{fig:application-AWS-concept}, we construct the partial sequential computon $w_2 \rhd_{\rho_8} g_2$ where $w_2$ and $g_2$ correspond to the proceses \emph{wait change set creation} and \emph{get change set creation status}, respectively. Such a sequential computon is then composed into a tail-iterative composite $(w_2 \rhd_{\rho_8} g_2)*_{\rho_9}$ which waits until the change set gets created, and whose structure is similar to that of $(w_1 \rhd_{\rho_1} g_1)*_{\rho_2}$ (see Figures \ref{fig:application-AWS-composites-left}(b) and \ref{fig:application-AWS-composites-right}(b)). 

\vspace{1pt}

\begin{figure}[!h]
\centering
\subcaptionbox{Partial sequential computon to wait stack creation before getting a stack creation status: ${w_1 \rhd_{\rho_1} g_1}$}
{\makebox[1\linewidth][c]{
\begin{tikzpicture}
	\computonComposite{0.2}{0.1}{4.5}{2.5};

	\computonPrimitive{0.8}{2}{0.8}{0.5}{$w_1$};  
	\qin{w1q0}{0}{2.2}{};
  \dinplain{w1i1}{0}{1.9}{$1$}{left};\flow{w1i1}{$(w1i1)+(3.3,0)$}{}{};
  \dinplain{w1i2}{0}{1.6}{$2$}{left};\flow{w1i2}{$(w1i2)+(3.3,0)$}{}{}; 
  \dinplain{w1i3}{0}{1.3}{$3$}{left};\flow{w1i3}{$(w1i3)+(3.3,0)$}{}{};
  \dinplain{w1i4}{0}{1}{$4$}{left};\flow{w1i4}{$(w1i4)+(3.3,0)$}{}{};
  \dinplain{w1i5}{0}{0.7}{$5$}{left};\flow{w1i5}{$(w1i5)+(3.3,0)$}{}{};

	\qmatch{wg0}{2.4}{2.2}{};\flow{{1.6,2.2}}{wg0}{dashed}{};\flow{wg0}{{3.4,2.2}}{dashed}{};
		
	\computonPrimitive{3.3}{0.2}{0.8}{2.2}{$g_1$};
	\qout{g1q1}{4.1}{2.2}{};
  \dout{g1o1}{4.1}{1.9}{$1$};
  \dout{g1o2}{4.1}{1.6}{$2$};  
  \dout{g1o3}{4.1}{1.3}{$3$};
  \dout{g1o4}{4.1}{1}{$4$};
  \dout{g1o5}{4.1}{0.7}{$5$};  
  \dout{g1o6}{4.1}{0.4}{$7$};
\end{tikzpicture}
}}\vspace{10pt}
\subcaptionbox{Tail-iterative computon over (a): ${(w_1 \rhd_{\rho_1} g_1)*_{\rho_2}}$}
{
\begin{tikzpicture}
	\computonComposite{-0.5}{-0.1}{10.4}{8.5};
  
  \computonPrimitive{2.3}{4.8}{0.8}{2}{$e_1$};
	\qinplain{e1q1}{-0.7}{6.5}{};\flowdiag{e1q1}{$(e1q1)+(3,0)$}{dashed}{}{pos=0.22};
  \dinplain{e1i1}{-0.7}{6.2}{$1$};\flowdiag{e1i1}{$(e1i1)+(3,0)$}{}{}{pos=0.22};
  \dinplain{e1i2}{-0.7}{5.9}{$2$};\flowdiag{e1i2}{$(e1i2)+(3,0)$}{}{}{pos=0.22};  
  \dinplain{e1i3}{-0.7}{5.6}{$3$};\flowdiag{e1i3}{$(e1i3)+(3,0)$}{}{}{pos=0.22};
  \dinplain{e1i4}{-0.7}{5.3}{$4$};\flowdiag{e1i4}{$(e1i4)+(3,0)$}{}{}{pos=0.22};
  \dinplain{e1i5}{-0.7}{5}{$5$};\flowdiag{e1i5}{$(e1i5)+(3,0)$}{}{}{pos=0.22};  
 	
  \computonPrimitive{2.3}{2.6}{0.8}{2}{$e_2$};  
  
  \computonPrimitive{2.3}{0}{0.8}{2.2}{$e_3$}; 
  \qoutplain{3q0}{9.4}{1.1}{};\flow{{3.1,1.1}}{3q0}{dashed}{};  
  
  \qmatch{g1q0}{1.5}{4.2}{};\flowdiag{g1q0}{{2.3,4.5}}{dashed}{}{pos=0.5,rotate=45};\flowdiag{g1q0}{{2.3,2}}{dashed}{}{pos=0.5,rotate=308};
  \dmatch{g1d1}{1.5}{3.6}{$1$}{above};\flowdiag{g1d1}{{2.3,4.2}}{}{}{pos=0.5,rotate=45};\flowdiag{g1d1}{{2.3,1.7}}{}{}{pos=0.5,rotate=308};
  \dmatch{g1d2}{1.5}{3}{$2$}{above};\flowdiag{g1d2}{{2.3,3.9}}{}{}{pos=0.5,rotate=45};\flowdiag{g1d2}{{2.3,1.4}}{}{}{pos=0.5,rotate=308};
  \dmatch{g1d3}{1.5}{2.4}{$3$}{above};\flowdiag{g1d3}{{2.3,3.6}}{}{}{pos=0.5,rotate=45};\flowdiag{g1d3}{{2.3,1.1}}{}{}{pos=0.5,rotate=308};
  \dmatch{g1d4}{1.5}{1.8}{$4$}{above};\flowdiag{g1d4}{{2.3,3.3}}{}{}{pos=0.5,rotate=45};\flowdiag{g1d4}{{2.3,0.8}}{}{}{pos=0.5,rotate=308};
  \dmatch{g1d5}{1.5}{1.2}{$5$}{above};\flowdiag{g1d5}{{2.3,3}}{}{}{pos=0.5,rotate=45};\flowdiag{g1d5}{{2.3,0.5}}{}{}{pos=0.5,rotate=308};
  \dmatch{g1d6}{1.5}{0.6}{$7$}{above};\flowdiag{g1d6}{{2.3,2.7}}{}{}{pos=0.5,rotate=45};\flowdiag{g1d6}{{2.3,0.3}}{}{}{pos=0.5,rotate=308};
	
	\qmatch{i1m0}{4.1}{6}{};\flowdiag{{3.2,6.5}}{i1m0}{dashed}{}{pos=0.5,rotate=308};\flowdiag{{3.1,4.5}}{i1m0}{dashed}{}{pos=0.5,rotate=45};
  \dmatch{i1m1}{4.1}{5.4}{$1$}{above};\flowdiag{{3.1,6.2}}{i1m1}{}{}{pos=0.5,rotate=308};\flowdiag{{3.1,4.2}}{i1m1}{}{}{pos=0.5,rotate=45};
  \dmatch{i1m2}{4.1}{4.8}{$2$}{above};\flowdiag{{3.1,5.9}}{i1m2}{}{}{pos=0.5,rotate=308};\flowdiag{{3.1,3.9}}{i1m2}{}{}{pos=0.5,rotate=45};
  \dmatch{i1m3}{4.1}{4.2}{$3$}{above};\flowdiag{{3.1,5.6}}{i1m3}{}{}{pos=0.5,rotate=308};\flowdiag{{3.1,3.6}}{i1m3}{}{}{pos=0.5,rotate=35};
  \dmatch{i1m4}{4.1}{3.6}{$4$}{above};\flowdiag{{3.1,5.3}}{i1m4}{}{}{pos=0.5,rotate=308};\flowdiag{{3.1,3.3}}{i1m4}{}{}{pos=0.5,rotate=25};
  \dmatch{i1m5}{4.1}{3}{$5$}{above};\flowdiag{{3.1,5}}{i1m5}{}{}{pos=0.5,rotate=308};\flowdiag{{3.1,3}}{i1m5}{}{}{pos=0.5};
	
	\begin{scope}[xshift=4.1cm,yshift=3.8cm]
	\computonComposite{0.2}{-1.7}{4.5}{4.3};

	\computonPrimitive{0.8}{2}{0.8}{0.5}{$w_1$};  
	\flow{i1m0}{$(i1m0)+(0.8,0)$}{dashed}{};
  \flow{i1m1}{$(i1m1)+(3.3,0)$}{}{};
  \flow{i1m2}{$(i1m2)+(3.3,0)$}{}{}; 
  \flow{i1m3}{$(i1m3)+(3.3,0)$}{}{};
  \flow{i1m4}{$(i1m4)+(3.3,0)$}{}{};
  \flow{i1m5}{$(i1m5)+(3.3,0)$}{}{};

	\qmatch{wg0}{2.4}{2.2}{};\flow{{1.6,2.2}}{wg0}{dashed}{};\flow{wg0}{{3.4,2.2}}{dashed}{};
		
	\computonPrimitive{3.3}{-1.6}{0.8}{4}{$g_1$};
	\end{scope}	
	
	\draw[dashed] (8.2,6) -- (8.5,6) -- (8.5,7) to node [pos=0.5] {\invertedarrowflow} (0.9,7) -- (0.9,4.2) -- (g1q0);
	\draw (8.2,5.4) -- (8.7,5.4) -- (8.7,7.2) to node [pos=0.5] {\invertedarrowflow} (0.7,7.2) -- (0.7,3.6) -- (g1d1);
	\draw (8.2,4.8) -- (8.9,4.8) -- (8.9,7.4) to node [pos=0.5] {\invertedarrowflow} (0.5,7.4) -- (0.5,3) -- (g1d2);
	\draw (8.2,4.2) -- (9.1,4.2) -- (9.1,7.6) to node [pos=0.5] {\invertedarrowflow} (0.3,7.6) -- (0.3,2.4) -- (g1d3);
	\draw (8.2,3.6) -- (9.3,3.6) -- (9.3,7.8) to node [pos=0.5] {\invertedarrowflow} (0.1,7.8) -- (0.1,1.8) -- (g1d4);
	\draw (8.2,3) -- (9.5,3) -- (9.5,8) to node [pos=0.5] {\invertedarrowflow} (-0.1,8) -- (-0.1,1.2) -- (g1d5);
	\draw (8.2,2.4) -- (9.7,2.4) -- (9.7,8.2) to node [pos=0.5] {\invertedarrowflow} (-0.3,8.2) -- (-0.3,0.6) -- (g1d6);
\end{tikzpicture}
}\vspace{10pt}
\subcaptionbox{Composite computon corresponding to the left path of the step function shown in Figure \ref{fig:application-AWS-concept}: ${(c_1 \unrhd_{\rho_3} ((w_1 \rhd_{\rho_1} g_1)*_{\rho_2})) \unrhd_{\rho_4} s_1}$}
{
\begin{tikzpicture}[scale=0.9]

  \computonComposite{-0.1}{-0.8}{15.5}{9.5};
  \computonComposite{0.2}{-0.6}{12.9}{9.1};
	\computonComposite{2.8}{-0.5}{10.2}{8.9};
	
	\computonPrimitive{0.8}{4.5}{0.8}{2.3}{$c_1$};
	\qinplain{c1q1}{-0.3}{6.5}{}{left};\flowdiag{c1q1}{$(c1q1)+(1.1,0)$}{dashed}{}{pos=0.24};
  \dinplain{c1i1}{-0.3}{6.2}{$1$}{left};\flowdiag{c1i1}{$(c1i1)+(1.1,0)$}{}{}{pos=0.24};
  \dinplain{c1i2}{-0.3}{5.9}{$2$}{left};\flowdiag{c1i2}{$(c1i2)+(1.1,0)$}{}{}{pos=0.24};
  \dinplain{c1i3}{-0.3}{5.6}{$3$}{left};\flowdiag{c1i3}{$(c1i3)+(1.1,0)$}{}{}{pos=0.24};
  \dinplain{c1i4}{-0.3}{5.3}{$4$}{left};\flowdiag{c1i4}{$(c1i4)+(1.1,0)$}{}{}{pos=0.24};
  \dinplain{c1i5}{-0.3}{5}{$5$}{left};\flowdiag{c1i5}{$(c1i5)+(1.1,0)$}{}{}{pos=0.24};
  \dinplain{c1i6}{-0.3}{4.7}{$6$}{left};\flowdiag{c1i6}{$(c1i6)+(1.1,0)$}{}{}{pos=0.24};
  
  \qmatch{c1q2}{2.4}{6.5}{};\flow{{1.6,6.5}}{c1q2}{dashed}{};\flowdiag{c1q2}{$(c1q2)+(3.1,0)$}{dashed}{}{pos=0.17};
  \dmatch{c1o1}{2.4}{6.2}{$1$}{right};\flow{{1.6,6.2}}{c1o1}{}{};\flowdiag{c1o1}{$(c1o1)+(3.1,0)$}{}{}{pos=0.17};
  \dmatch{c1o2}{2.4}{5.9}{$2$}{right};\flow{{1.6,5.9}}{c1o2}{}{};\flowdiag{c1o2}{$(c1o2)+(3.1,0)$}{}{}{pos=0.17};
  \dmatch{c1o3}{2.4}{5.6}{$3$}{right};\flow{{1.6,5.6}}{c1o3}{}{};\flowdiag{c1o3}{$(c1o3)+(3.1,0)$}{}{}{pos=0.17};
  \dmatch{c1o4}{2.4}{5.3}{$4$}{right};\flow{{1.6,5.3}}{c1o4}{}{};\flowdiag{c1o4}{$(c1o4)+(3.1,0)$}{}{}{pos=0.17};
  \dmatch{c1o5}{2.4}{5}{$5$}{right};\flow{{1.6,5}}{c1o5}{}{};\flowdiag{c1o5}{$(c1o5)+(3.1,0)$}{}{}{pos=0.17};

	\begin{scope}[xshift=3.2cm]	  
  \computonPrimitive{2.3}{4.8}{0.8}{2}{$e_1$};
 	
  \computonPrimitive{2.3}{2.6}{0.8}{2}{$e_2$};  
  
  \computonPrimitive{2.3}{-0.3}{0.8}{2.5}{$e_3$}; 
  
  \qmatch{g1q0}{1.5}{4.2}{};\flowdiag{g1q0}{{2.3,4.5}}{dashed}{}{pos=0.5,rotate=45};\flowdiag{g1q0}{{2.3,2}}{dashed}{}{pos=0.5,rotate=308};
  \dmatch{g1d1}{1.5}{3.6}{$1$}{above};\flowdiag{g1d1}{{2.3,4.2}}{}{}{pos=0.5,rotate=45};\flowdiag{g1d1}{{2.3,1.7}}{}{}{pos=0.5,rotate=308};
  \dmatch{g1d2}{1.5}{3}{$2$}{above};\flowdiag{g1d2}{{2.3,3.9}}{}{}{pos=0.5,rotate=45};\flowdiag{g1d2}{{2.3,1.4}}{}{}{pos=0.5,rotate=308};
  \dmatch{g1d3}{1.5}{2.4}{$3$}{above};\flowdiag{g1d3}{{2.3,3.6}}{}{}{pos=0.5,rotate=45};\flowdiag{g1d3}{{2.3,1.1}}{}{}{pos=0.5,rotate=308};
  \dmatch{g1d4}{1.5}{1.8}{$4$}{above};\flowdiag{g1d4}{{2.3,3.3}}{}{}{pos=0.5,rotate=45};\flowdiag{g1d4}{{2.3,0.8}}{}{}{pos=0.5,rotate=308};
  \dmatch{g1d5}{1.5}{1.2}{$5$}{above};\flowdiag{g1d5}{{2.3,3}}{}{}{pos=0.5,rotate=45};\flowdiag{g1d5}{{2.3,0.5}}{}{}{pos=0.5,rotate=308};
  \dmatch{g1d6}{1.5}{0.6}{$7$}{above};\flowdiag{g1d6}{{2.3,2.7}}{}{}{pos=0.5,rotate=45};\flowdiag{g1d6}{{2.3,0.3}}{}{}{pos=0.5,rotate=308};
	
	\qmatch{i1m0}{4.1}{6}{};\flowdiag{{3.2,6.5}}{i1m0}{dashed}{}{pos=0.5,rotate=308};\flowdiag{{3.1,4.5}}{i1m0}{dashed}{}{pos=0.5,rotate=45};
  \dmatch{i1m1}{4.1}{5.4}{$1$}{above};\flowdiag{{3.1,6.2}}{i1m1}{}{}{pos=0.5,rotate=308};\flowdiag{{3.1,4.2}}{i1m1}{}{}{pos=0.5,rotate=45};
  \dmatch{i1m2}{4.1}{4.8}{$2$}{above};\flowdiag{{3.1,5.9}}{i1m2}{}{}{pos=0.5,rotate=308};\flowdiag{{3.1,3.9}}{i1m2}{}{}{pos=0.5,rotate=45};
  \dmatch{i1m3}{4.1}{4.2}{$3$}{above};\flowdiag{{3.1,5.6}}{i1m3}{}{}{pos=0.5,rotate=308};\flowdiag{{3.1,3.6}}{i1m3}{}{}{pos=0.5,rotate=35};
  \dmatch{i1m4}{4.1}{3.6}{$4$}{above};\flowdiag{{3.1,5.3}}{i1m4}{}{}{pos=0.5,rotate=308};\flowdiag{{3.1,3.3}}{i1m4}{}{}{pos=0.5,rotate=25};
  \dmatch{i1m5}{4.1}{3}{$5$}{above};\flowdiag{{3.1,5}}{i1m5}{}{}{pos=0.5,rotate=308};\flowdiag{{3.1,3}}{i1m5}{}{}{pos=0.5};
	
	\begin{scope}[xshift=4.1cm,yshift=3.8cm]
	\computonComposite{0.2}{-1.7}{4.5}{4.3};

	\computonPrimitive{0.8}{2}{0.8}{0.5}{$w_1$};  
	\flow{i1m0}{$(i1m0)+(0.8,0)$}{dashed}{};
  \flow{i1m1}{$(i1m1)+(3.3,0)$}{}{};
  \flow{i1m2}{$(i1m2)+(3.3,0)$}{}{}; 
  \flow{i1m3}{$(i1m3)+(3.3,0)$}{}{};
  \flow{i1m4}{$(i1m4)+(3.3,0)$}{}{};
  \flow{i1m5}{$(i1m5)+(3.3,0)$}{}{};

	\qmatch{wg0}{2.4}{2.2}{};\flow{{1.6,2.2}}{wg0}{dashed}{};\flow{wg0}{{3.4,2.2}}{dashed}{};
		
	\computonPrimitive{3.3}{-1.6}{0.8}{4}{$g_1$};
	\end{scope}	
	
	\draw[dashed] (8.2,6) -- (8.5,6) -- (8.5,7) to node [pos=0.5] {\invertedarrowflow} (0.9,7) -- (0.9,4.2) -- (g1q0);
	\draw (8.2,5.4) -- (8.7,5.4) -- (8.7,7.2) to node [pos=0.5] {\invertedarrowflow} (0.7,7.2) -- (0.7,3.6) -- (g1d1);
	\draw (8.2,4.8) -- (8.9,4.8) -- (8.9,7.4) to node [pos=0.5] {\invertedarrowflow} (0.5,7.4) -- (0.5,3) -- (g1d2);
	\draw (8.2,4.2) -- (9.1,4.2) -- (9.1,7.6) to node [pos=0.5] {\invertedarrowflow} (0.3,7.6) -- (0.3,2.4) -- (g1d3);
	\draw (8.2,3.6) -- (9.3,3.6) -- (9.3,7.8) to node [pos=0.5] {\invertedarrowflow} (0.1,7.8) -- (0.1,1.8) -- (g1d4);
	\draw (8.2,3) -- (9.5,3) -- (9.5,8) to node [pos=0.5] {\invertedarrowflow} (-0.1,8) -- (-0.1,1.2) -- (g1d5);
	\draw (8.2,2.4) -- (9.7,2.4) -- (9.7,8.2) to node [pos=0.5] {\invertedarrowflow} (-0.3,8.2) -- (-0.3,0.6) -- (g1d6);
	\end{scope}
	
	\computonPrimitive{14}{-0.3}{0.8}{2.3}{$s_1$};
	\qmatch{s1q1}{13.2}{0.7}{};\flow{{6.3,0.7}}{s1q1}{dashed}{};\flow{s1q1}{$(s1q1)+(0.8,0)$}{dashed}{};
	\qout{s1q2}{14.8}{0.85}{};
  \dout{s1o1}{14.8}{0.55}{$8$};
\end{tikzpicture}
}
\caption{Constructing the composite computon for the left path of the step function shown in Figure \ref{fig:application-AWS-concept}.}
\label{fig:application-AWS-composites-left}
\end{figure}

\begin{figure}[!h]
\centering
\subcaptionbox{Branching computon that determines whether deployment succeeds or fails: ${(x_2 \unrhd_{\rho_5} s_2) ?_{\rho_7} (d_2 \unrhd_{\rho_6} f_2)}$}
{\makebox[1\linewidth][c]{
\begin{tikzpicture}
	\computonComposite{0.4}{0}{4.4}{6.1};
	\computonComposite{0.8}{3.2}{3.6}{2.7};
	\computonComposite{0.8}{0.2}{3.6}{2.7};
	
	\qinplain{i0}{0}{4.2}{};\flowdiag{i0}{{1,5.6}}{dashed}{}{pos=0.5,rotate=45};\flowdiag{i0}{{1,2.5}}{dashed}{}{pos=0.5,rotate=311};
	\dinplain{i1}{0}{3.9}{$1$};\flowdiag{i1}{{1,5.3}}{}{}{pos=0.5,rotate=45};\flowdiag{i1}{{1,2.2}}{}{}{pos=0.5,rotate=311};
	\dinplain{i2}{0}{3.6}{$2$};\flowdiag{i2}{{1,5}}{}{}{pos=0.5,rotate=45};\flowdiag{i2}{{1,1.9}}{}{}{pos=0.5,rotate=311};
	\dinplain{i3}{0}{3.3}{$3$};\flowdiag{i3}{{1,4.7}}{}{}{pos=0.5,rotate=45};\flowdiag{i3}{{1,1.6}}{}{}{pos=0.5,rotate=311};
	\dinplain{i4}{0}{3}{$4$};\flowdiag{i4}{{1,4.4}}{}{}{pos=0.5,rotate=45};\flowdiag{i4}{{1,1.3}}{}{}{pos=0.5,rotate=311};
	\dinplain{i5}{0}{2.7}{$5$};\flowdiag{i5}{{1,4.1}}{}{}{pos=0.5,rotate=45};\flowdiag{i5}{{1,1}}{}{}{pos=0.5,rotate=311};
	\dinplain{i6}{0}{2.4}{$9$};\flowdiag{i6}{{1,3.8}}{}{}{pos=0.5,rotate=45};\flowdiag{i6}{{1,0.7}}{}{}{pos=0.5,rotate=311};
	\dinplain{i7}{0}{2.1}{$11$};\flowdiag{i7}{{1,3.5}}{}{}{pos=0.5,rotate=45};\flowdiag{i7}{{1,0.4}}{}{}{pos=0.5,rotate=311};
	
	\computonPrimitive{1}{3.4}{0.8}{2.3}{$x_2$};
	\qmatch{x2o1}{2.6}{4.5}{};\flow{$(x2o1)+(-0.8,0)$}{x2o1}{dashed}{};\flow{x2o1}{$(x2o1)+(0.8,0)$}{dashed}{};	
	\computonPrimitive{3.4}{3.4}{0.8}{2.3}{$s_2$};
	
	\computonPrimitive{1}{0.4}{0.8}{2.3}{$d_2$};
	\qmatch{d2o1}{2.6}{1.5}{};\flow{$(d2o1)+(-0.8,0)$}{d2o1}{dashed}{};\flow{d2o1}{$(d2o1)+(0.8,0)$}{dashed}{};	
	\computonPrimitive{3.4}{0.4}{0.8}{2.3}{$f_2$};
	
	\qoutplain{o0}{4.4}{4.2}{};\flowdiag{{4.2,5.6}}{o0}{dashed}{}{pos=0.5,rotate=45};\flowdiag{{4.2,2.5}}{o0}{dashed}{}{pos=0.5,rotate=311};	
	\doutplain{o1}{4.4}{3.9}{$8$};\flowdiag{{4.2,5.3}}{o1}{}{}{pos=0.5,rotate=45};\flowdiag{{4.2,2.2}}{o1}{}{}{pos=0.5,rotate=311};		
\end{tikzpicture}
}}\vspace{10pt}
\subcaptionbox{Total sequential computon that creates a change set, waits until the set is created and then inspects change set changes: ${(c_2 \unrhd_{\rho_{10}} ((w_2 \rhd_{\rho_8} g_2)*_{\rho_9})) \unrhd_{\rho_{11}} i_2}$}
{
\begin{tikzpicture}[scale=0.865]

  \computonComposite{-0.3}{-1.1}{16.1}{10};
  \computonComposite{0}{-0.9}{13.3}{9.6};
	\computonComposite{2.6}{-0.8}{10.6}{9.4};
	
	\computonPrimitive{0.6}{4.5}{0.8}{2.3}{$c_2$};
	\qinplain{c1q1}{-0.5}{6.7}{}{left};\flowdiag{c1q1}{$(c1q1)+(1.1,0)$}{dashed}{}{pos=0.24};
  \dinplain{c1i1}{-0.5}{6.4}{$1$}{left};\flowdiag{c1i1}{$(c1i1)+(1.1,0)$}{}{}{pos=0.24};
  \dinplain{c1i2}{-0.5}{6.1}{$2$}{left};\flowdiag{c1i2}{$(c1i2)+(1.1,0)$}{}{}{pos=0.24};
  \dinplain{c1i3}{-0.5}{5.8}{$3$}{left};\flowdiag{c1i3}{$(c1i3)+(1.1,0)$}{}{}{pos=0.24};
  \dinplain{c1i4}{-0.5}{5.5}{$4$}{left};\flowdiag{c1i4}{$(c1i4)+(1.1,0)$}{}{}{pos=0.24};
  \dinplain{c1i5}{-0.5}{5.2}{$5$}{left};\flowdiag{c1i5}{$(c1i5)+(1.1,0)$}{}{}{pos=0.24};  
  \dinplain{c1i6}{-0.5}{4.9}{$6$}{left};\flowdiag{c1i6}{$(c1i6)+(1.1,0)$}{}{}{pos=0.24};  
  
  \qmatch{c1q2}{2}{6.7}{};\flow{{1.4,6.7}}{c1q2}{dashed}{};\flowdiag{c1q2}{$(c1q2)+(3.5,0)$}{dashed}{}{pos=0.12};
  \dmatch{c1o1}{2}{6.4}{$1$}{right};\flow{{1.4,6.4}}{c1o1}{}{};\flowdiag{c1o1}{$(c1o1)+(3.5,0)$}{}{}{pos=0.12};
  \dmatch{c1o2}{2}{6.1}{$2$}{right};\flow{{1.4,6.1}}{c1o2}{}{};\flowdiag{c1o2}{$(c1o2)+(3.5,0)$}{}{}{pos=0.12};
  \dmatch{c1o3}{2}{5.8}{$3$}{right};\flow{{1.4,5.8}}{c1o3}{}{};\flowdiag{c1o3}{$(c1o3)+(3.5,0)$}{}{}{pos=0.12};
  \dmatch{c1o4}{2}{5.5}{$4$}{right};\flow{{1.4,5.5}}{c1o4}{}{};\flowdiag{c1o4}{$(c1o4)+(3.5,0)$}{}{}{pos=0.12};
  \dmatch{c1o5}{2}{5.2}{$5$}{right};\flow{{1.4,5.2}}{c1o5}{}{};\flowdiag{c1o5}{$(c1o5)+(3.5,0)$}{}{}{pos=0.12};
  \dmatch{c1o6}{2}{4.9}{$9$}{right};\flow{{1.4,4.9}}{c1o6}{}{};\flowdiag{c1o6}{$(c1o6)+(3.5,0)$}{}{}{pos=0.12};

	\begin{scope}[xshift=3.2cm]	  
  \computonPrimitive{2.3}{4.8}{0.8}{2}{$e_4$};
 	
  \computonPrimitive{2.3}{2.3}{0.8}{2.3}{$e_5$};  
  
  \computonPrimitive{2.3}{-0.6}{0.8}{2.8}{$e_6$}; 
  
  \qmatch{g1q0}{1.5}{4.2}{};\flowdiag{g1q0}{{2.3,4.5}}{dashed}{}{pos=0.5,rotate=45};\flowdiag{g1q0}{{2.3,2}}{dashed}{}{pos=0.5,rotate=308};
  \dmatch{g1d1}{1.5}{3.6}{$1$}{above};\flowdiag{g1d1}{{2.3,4.2}}{}{}{pos=0.5,rotate=45};\flowdiag{g1d1}{{2.3,1.7}}{}{}{pos=0.5,rotate=308};
  \dmatch{g1d2}{1.5}{3}{$2$}{above};\flowdiag{g1d2}{{2.3,3.9}}{}{}{pos=0.5,rotate=45};\flowdiag{g1d2}{{2.3,1.4}}{}{}{pos=0.5,rotate=308};
  \dmatch{g1d3}{1.5}{2.4}{$3$}{above};\flowdiag{g1d3}{{2.3,3.6}}{}{}{pos=0.5,rotate=45};\flowdiag{g1d3}{{2.3,1.1}}{}{}{pos=0.5,rotate=308};
  \dmatch{g1d4}{1.5}{1.8}{$4$}{above};\flowdiag{g1d4}{{2.3,3.3}}{}{}{pos=0.5,rotate=45};\flowdiag{g1d4}{{2.3,0.8}}{}{}{pos=0.5,rotate=308};
  \dmatch{g1d5}{1.5}{1.2}{$5$}{above};\flowdiag{g1d5}{{2.3,3}}{}{}{pos=0.5,rotate=45};\flowdiag{g1d5}{{2.3,0.5}}{}{}{pos=0.5,rotate=308};
  \dmatch{g1d6}{1.5}{0.6}{$9$}{above};\flowdiag{g1d6}{{2.3,2.7}}{}{}{pos=0.5,rotate=45};\flowdiag{g1d6}{{2.3,0.2}}{}{}{pos=0.5,rotate=308};
  \dmatch{g1d7}{1.5}{0}{$10$}{above};\flowdiag{g1d7}{{2.3,2.4}}{}{}{pos=0.5,rotate=45};\flowdiag{g1d7}{{2.3,-0.1}}{}{}{pos=0.5,rotate=308};
	
	\qmatch{i1m0}{4.1}{6}{};\flowdiag{{3.2,6.6}}{i1m0}{dashed}{}{pos=0.5,rotate=308};\flowdiag{{3.1,4.5}}{i1m0}{dashed}{}{pos=0.5,rotate=45};
  \dmatch{i1m1}{4.1}{5.4}{$1$}{above};\flowdiag{{3.1,6.3}}{i1m1}{}{}{pos=0.5,rotate=308};\flowdiag{{3.1,4.2}}{i1m1}{}{}{pos=0.5,rotate=45};
  \dmatch{i1m2}{4.1}{4.8}{$2$}{above};\flowdiag{{3.1,6}}{i1m2}{}{}{pos=0.5,rotate=308};\flowdiag{{3.1,3.9}}{i1m2}{}{}{pos=0.5,rotate=45};
  \dmatch{i1m3}{4.1}{4.2}{$3$}{above};\flowdiag{{3.1,5.7}}{i1m3}{}{}{pos=0.5,rotate=308};\flowdiag{{3.1,3.6}}{i1m3}{}{}{pos=0.5,rotate=35};
  \dmatch{i1m4}{4.1}{3.6}{$4$}{above};\flowdiag{{3.1,5.4}}{i1m4}{}{}{pos=0.5,rotate=308};\flowdiag{{3.1,3.3}}{i1m4}{}{}{pos=0.5,rotate=25};
  \dmatch{i1m5}{4.1}{3}{$5$}{above};\flowdiag{{3.1,5.1}}{i1m5}{}{}{pos=0.5,rotate=308};\flowdiag{{3.1,3}}{i1m5}{}{}{pos=0.5};
  \dmatch{i1m6}{4.1}{2.4}{$9$}{above};\flowdiag{{3.1,4.8}}{i1m6}{}{}{pos=0.5,rotate=308};\flowdiag{{3.1,2.7}}{i1m6}{}{}{pos=0.5};
	
	\begin{scope}[xshift=4.1cm,yshift=3.8cm]
	\computonComposite{0.2}{-1.7}{4.5}{4.3};

	\computonPrimitive{0.8}{2}{0.8}{0.5}{$w_2$};  
	\flow{i1m0}{$(i1m0)+(0.8,0)$}{dashed}{};
  \flow{i1m1}{$(i1m1)+(3.3,0)$}{}{};
  \flow{i1m2}{$(i1m2)+(3.3,0)$}{}{}; 
  \flow{i1m3}{$(i1m3)+(3.3,0)$}{}{};
  \flow{i1m4}{$(i1m4)+(3.3,0)$}{}{};
  \flow{i1m5}{$(i1m5)+(3.3,0)$}{}{};
  \flow{i1m6}{$(i1m6)+(3.3,0)$}{}{};

	\qmatch{wg0}{2.4}{2.2}{};\flow{{1.6,2.2}}{wg0}{dashed}{};\flow{wg0}{{3.4,2.2}}{dashed}{};
		
	\computonPrimitive{3.3}{-1.6}{0.8}{4}{$g_2$};
	\end{scope}	
	
	\draw[dashed] (8.2,6) -- (8.5,6) -- (8.5,7) to node [pos=0.5] {\invertedarrowflow} (0.9,7) -- (0.9,4.2) -- (g1q0);
	\draw (8.2,5.6) -- (8.7,5.6) -- (8.7,7.2) to node [pos=0.5] {\invertedarrowflow} (0.7,7.2) -- (0.7,3.6) -- (g1d1);
	\draw (8.2,5.2) -- (8.9,5.2) -- (8.9,7.4) to node [pos=0.5] {\invertedarrowflow} (0.5,7.4) -- (0.5,3) -- (g1d2);
	\draw (8.2,4.8) -- (9.1,4.8) -- (9.1,7.6) to node [pos=0.5] {\invertedarrowflow} (0.3,7.6) -- (0.3,2.4) -- (g1d3);
	\draw (8.2,4.4) -- (9.3,4.4) -- (9.3,7.8) to node [pos=0.5] {\invertedarrowflow} (0.1,7.8) -- (0.1,1.8) -- (g1d4);
	\draw (8.2,4) -- (9.5,4) -- (9.5,8) to node [pos=0.5] {\invertedarrowflow} (-0.1,8) -- (-0.1,1.2) -- (g1d5);
	\draw (8.2,3.6) -- (9.7,3.6) -- (9.7,8.2) to node [pos=0.5] {\invertedarrowflow} (-0.3,8.2) -- (-0.3,0.6) -- (g1d6);
	\draw (8.2,3.2) -- (9.9,3.2) -- (9.9,8.4) to node [pos=0.5] {\invertedarrowflow} (-0.5,8.4) -- (-0.5,0) -- (g1d7);
	\end{scope}
	
	\computonPrimitive{14.4}{-0.6}{0.8}{2.6}{$i_2$};
	\qmatch{s1q1}{13.6}{1.6}{};\flow{{6.3,1.6}}{s1q1}{dashed}{};\flow{s1q1}{$(s1q1)+(0.8,0)$}{dashed}{};
	\dmatch{s1i1}{13.6}{1.3}{$1$}{right};\flow{{6.3,1.3}}{s1i1}{}{};\flow{s1i1}{$(s1i1)+(0.8,0)$}{}{};
	\dmatch{s1i2}{13.6}{1}{$2$}{right};\flow{{6.3,1}}{s1i2}{}{};\flow{s1i2}{$(s1i2)+(0.8,0)$}{}{};
	\dmatch{s1i3}{13.6}{0.7}{$3$}{right};\flow{{6.3,0.7}}{s1i3}{}{};\flow{s1i3}{$(s1i3)+(0.8,0)$}{}{};
	\dmatch{s1i4}{13.6}{0.4}{$4$}{right};\flow{{6.3,0.4}}{s1i4}{}{};\flow{s1i4}{$(s1i4)+(0.8,0)$}{}{};
	\dmatch{s1i5}{13.6}{0.1}{$5$}{right};\flow{{6.3,0.1}}{s1i5}{}{};\flow{s1i5}{$(s1i5)+(0.8,0)$}{}{};
	\dmatch{s1i6}{13.6}{-0.2}{$9$}{right};\flow{{6.3,-0.2}}{s1i6}{}{};\flow{s1i6}{$(s1i6)+(0.8,0)$}{}{};	
	\qout{s1q2}{15.2}{1.6}{};
  \dout{s1o1}{15.2}{1.3}{$1$};
  \dout{s1o2}{15.2}{1}{$2$};
  \dout{s1o3}{15.2}{0.7}{$3$};
  \dout{s1o4}{15.2}{0.4}{$4$};
  \dout{s1o5}{15.2}{0.1}{$5$};
  \dout{s1o6}{15.2}{-0.2}{$9$};
  \dout{s1o7}{15.2}{-0.5}{$11$};
\end{tikzpicture}
}\vspace{10pt}
\subcaptionbox{Composite corresponding to the right path of the step function shown in Figure \ref{fig:application-AWS-concept}: ${((c_2 \unrhd_{\rho_{10}} ((w_2 \rhd_{\rho_8} g_2)*_{\rho_9})) \unrhd_{\rho_{11}} i_2) \unrhd_{\rho_{12}} ((x_2 \unrhd_{\rho_5} s_2) ?_{\rho_7} (d_2 \unrhd_{\rho_6} f_2))}$}
{\makebox[1\linewidth][c]{
\begin{tikzpicture}
  \computonComposite{0.2}{-0.2}{6.8}{2.8};

	\computonComposite{0.8}{0}{2}{2.4};\node at (1.8,1.2){(b)}; 
	\qin{s1i0}{0}{2.2}{};
	\din{s1i1}{0}{1.9}{$1$};
	\din{s1i2}{0}{1.6}{$2$};
	\din{s1i3}{0}{1.3}{$3$};
	\din{s1i4}{0}{1}{$4$};
	\din{s1i5}{0}{0.7}{$5$};
	\din{s1i6}{0}{0.4}{$6$};
	
	\qmatch{s1q2}{3.6}{2.2}{};\flow{$(s1q2)+(-0.8,0)$}{s1q2}{dashed}{};\flowdiag{s1q2}{$(s1q2)+(0.8,0)$}{dashed}{}{pos=0.7};
  \dmatch{s1o1}{3.6}{1.9}{$1$}{right};\flow{$(s1o1)+(-0.8,0)$}{s1o1}{}{};\flowdiag{s1o1}{$(s1o1)+(0.8,0)$}{}{}{pos=0.7};
  \dmatch{s1o2}{3.6}{1.6}{$2$}{right};\flow{$(s1o2)+(-0.8,0)$}{s1o2}{}{};\flowdiag{s1o2}{$(s1o2)+(0.8,0)$}{}{}{pos=0.7};
  \dmatch{s1o3}{3.6}{1.3}{$3$}{right};\flow{$(s1o3)+(-0.8,0)$}{s1o3}{}{};\flowdiag{s1o3}{$(s1o3)+(0.8,0)$}{}{}{pos=0.7};
  \dmatch{s1o4}{3.6}{1}{$4$}{right};\flow{$(s1o4)+(-0.8,0)$}{s1o4}{}{};\flowdiag{s1o4}{$(s1o4)+(0.8,0)$}{}{}{pos=0.7};
  \dmatch{s1o5}{3.6}{0.7}{$5$}{right};\flow{$(s1o5)+(-0.8,0)$}{s1o5}{}{};\flowdiag{s1o5}{$(s1o5)+(0.8,0)$}{}{}{pos=0.7};
  \dmatch{s1o6}{3.6}{0.4}{$9$}{right};\flow{$(s1o6)+(-0.8,0)$}{s1o6}{}{};\flowdiag{s1o6}{$(s1o6)+(0.8,0)$}{}{}{pos=0.7};
  \dmatch{s1o7}{3.6}{0.1}{$11$}{right};\flow{$(s1o7)+(-0.8,0)$}{s1o7}{}{};\flowdiag{s1o7}{$(s1o7)+(0.8,0)$}{}{}{pos=0.7};
  
  \computonComposite{4.4}{0}{2}{2.4};\node at (5.4,1.2){(a)}; 
  \qout{s1o0}{6.4}{2.2}{};
	\dout{s1o1}{6.4}{1.9}{$8$};	
\end{tikzpicture}
}}
\caption{Constructing the composite computon for the right path of the step function shown in Figure \ref{fig:application-AWS-concept}.}
\label{fig:application-AWS-composites-right}
\vspace{-13pt}
\end{figure}
\newpage
The tail-iterative computon ${(w_2 \rhd_{\rho_8} g_2)*_{\rho_9}}$ is subsequently used as a right operand to define the total sequential computon ${c_2 \unrhd_{\rho_{10}} ((w_2 \rhd_{\rho_8} g_2)*_{\rho_9})}$ wherein $c_2$ is a functional computon in charge of \emph{creating the change set}. This newly constructed sequential composite is in turn used as left operand to construct the (even more complex) total sequential computon ${(c_2 \unrhd_{\rho_{10}} ((w_2 \rhd_{\rho_8} g_2)*_{\rho_9})) \unrhd_{\rho_{11}} i_2}$ wherein $i_2$ is a functional computon that \emph{inspects change set changes} to determine whether any of the existing resources need to be deleted or whether the existing stack can be safely updated. The whole structure of $(c_2 \unrhd_{\rho_{10}} ((w_2 \rhd_{\rho_8} g_2)*_{\rho_9})) \unrhd_{\rho_{11}} i_2$ is shown in Figure \ref{fig:application-AWS-composites-right}(b).

The most complex composite for the right path of the step function from Figure \ref{fig:application-AWS-concept} is constructed by taking the total sequential computon ${(c_2 \unrhd_{\rho_{10}} ((w_2 \rhd_{\rho_8} g_2)*_{\rho_9})) \unrhd_{\rho_{11}} i_2}$ and the branching computon ${(x_2 \unrhd_{\rho_5} s_2) ?_{\rho_7} (d_2 \unrhd_{\rho_6} f_2)}$ as left and right operands, respectively, in order to yield ${((c_2 \unrhd_{\rho_{10}} ((w_2 \rhd_{\rho_8} g_2)*_{\rho_9})) \unrhd_{\rho_{11}} i_2) \unrhd_{\rho_{12}} ((x_2 \unrhd_{\rho_5} s_2) ?_{\rho_7} (d_2 \unrhd_{\rho_6} f_2))}$ which is the total sequential computon shown in Figure \ref{fig:application-AWS-composites-right}(c).

Once the left and right paths of the intended step function have been constructed, we compose them into the branching composite $((c_1 \unrhd_{\rho_3} ((w_1 \rhd_{\rho_1} g_1)*_{\rho_2})) \unrhd_{\rho_4} s_1) ?_{\rho_{13}} (((c_2 \unrhd_{\rho_{10}} ((w_2 \rhd_{\rho_8} g_2)*_{\rho_9})) \unrhd_{\rho_{11}} i_2) \unrhd_{\rho_{12}} ((x_2 \unrhd_{\rho_5} s_2) ?_{\rho_7} (d_2 \unrhd_{\rho_6} f_2)))$ which checks whether a new stack needs to be created (via the left path) or whether a change set needs to be created and inspected (via the right path). This branching computon is ultimately composed with $k$ (i.e., a functional computon that determines whether a stack exists or not) into the total sequential computon $k \unrhd_{\rho_{14}} ((c_1 \unrhd_{\rho_3} ((w_1 \rhd_{\rho_1} g_1)*_{\rho_2})) \unrhd_{\rho_4} s_1) ?_{\rho_{13}} (((c_2 \unrhd_{\rho_{10}} ((w_2 \rhd_{\rho_8} g_2)*_{\rho_9})) \unrhd_{\rho_{11}} i_2) \unrhd_{\rho_{12}} ((x_2 \unrhd_{\rho_5} s_2) ?_{\rho_7} (d_2 \unrhd_{\rho_6} f_2)))$ which captures the whole behaviour of the step function for infrastructure deployment shown in Figure \ref{fig:application-AWS-concept}. The structure of such a complex sequential composite is depicted in Figure \ref{fig:application-AWS-workflow} and its behaviour is shown in Figure \ref{fig:application-AWS-behaviour}. 

\begin{figure}[!h]
\centering
{
\begin{tikzpicture}			
	\computonComposite{0.2}{0.1}{6.5}{6};

	\computonPrimitive{0.8}{2.2}{0.8}{2.3}{$k$};
	\qin{k1q1}{0}{4.2}{};\flow{{1.6,4.2}}{{2.6,4.2}}{dashed}{};
  \din{k1i1}{0}{3.9}{$1$};\flow{{1.6,3.9}}{{2.6,3.9}}{}{};
  \din{k1i2}{0}{3.6}{$2$};\flow{{1.6,3.6}}{{2.6,3.6}}{}{};
  \din{k1i3}{0}{3.3}{$3$};\flow{{1.6,3.3}}{{2.6,3.3}}{}{};
  \din{k1i4}{0}{3}{$4$};\flow{{1.6,3}}{{2.6,3}}{}{};
  \din{k1i5}{0}{2.7}{$5$};\flow{{1.6,2.7}}{{2.6,2.7}}{}{};
  \flow{{1.6,2.4}}{{2.6,2.4}}{}{};
	
	\begin{scope}[xshift=2.6cm]
	\computonComposite{0.4}{0.2}{3.6}{5.8};
	
	\qmatch{i0}{0}{4.2}{};\flowdiag{i0}{{1,5.6}}{dashed}{}{pos=0.5,rotate=45};\flowdiag{i0}{{1,2.5}}{dashed}{}{pos=0.5,rotate=311};
	\dmatch{i1}{0}{3.9}{$1$}{left};\flowdiag{i1}{{1,5.3}}{}{}{pos=0.5,rotate=45};\flowdiag{i1}{{1,2.2}}{}{}{pos=0.5,rotate=311};
	\dmatch{i2}{0}{3.6}{$2$}{left};\flowdiag{i2}{{1,5}}{}{}{pos=0.5,rotate=45};\flowdiag{i2}{{1,1.9}}{}{}{pos=0.5,rotate=311};
	\dmatch{i3}{0}{3.3}{$3$}{left};\flowdiag{i3}{{1,4.7}}{}{}{pos=0.5,rotate=45};\flowdiag{i3}{{1,1.6}}{}{}{pos=0.5,rotate=311};
	\dmatch{i4}{0}{3}{$4$}{left};\flowdiag{i4}{{1,4.4}}{}{}{pos=0.5,rotate=45};\flowdiag{i4}{{1,1.3}}{}{}{pos=0.5,rotate=311};
	\dmatch{i5}{0}{2.7}{$5$}{left};\flowdiag{i5}{{1,4.1}}{}{}{pos=0.5,rotate=45};\flowdiag{i5}{{1,1}}{}{}{pos=0.5,rotate=311};
	\dmatch{i6}{0}{2.4}{$6$}{left};\flowdiag{i6}{{1,3.8}}{}{}{pos=0.5,rotate=45};\flowdiag{i6}{{1,0.7}}{}{}{pos=0.5,rotate=311};
	
	\computonComposite{1}{3.4}{2.4}{2.4};\node at (2.2,4.6){Figure \ref{fig:application-AWS-composites-right}(c)}; 	
	\computonComposite{1}{0.4}{2.4}{2.4};\node at (2.2,1.6){Figure \ref{fig:application-AWS-composites-left}(c)}; 	
	
	\qoutplain{o0}{3.6}{3.6}{};\flowdiag{{3.4,5}}{o0}{dashed}{}{pos=0.5,rotate=45};\flowdiag{{3.4,1.9}}{o0}{dashed}{}{pos=0.5,rotate=311};	
	\doutplain{o1}{3.6}{3.3}{$8$};\flowdiag{{3.4,4.7}}{o1}{}{}{pos=0.5,rotate=45};\flowdiag{{3.4,1.6}}{o1}{}{}{pos=0.5,rotate=311};	
	\end{scope}
\end{tikzpicture}
}
\caption{Total sequential computon corresponding to the step function shown in Figure \ref{fig:application-AWS-concept}: \resizebox{0.999\textwidth}{!}{${k \unrhd_{\rho_{14}} ((c_1 \unrhd_{\rho_3} ((w_1 \rhd_{\rho_1} g_1)*_{\rho_2})) \unrhd_{\rho_4} s_1) ?_{\rho_{13}} (((c_2 \unrhd_{\rho_{10}} ((w_2 \rhd_{\rho_8} g_2)*_{\rho_9})) \unrhd_{\rho_{11}} i_2) \unrhd_{\rho_{12}} ((x_2 \unrhd_{\rho_5} s_2) ?_{\rho_7} (d_2 \unrhd_{\rho_6} f_2)))}$}}
\label{fig:application-AWS-workflow}
\end{figure}

\begin{figure}[!h]
\centering
{
\begin{tikzpicture}
\node[place,label={180:},minimum size=3mm] (ik) at (0,1) {};
\node[transition,fill=black,minimum width=0.1mm,minimum height=5mm] (k) at (0.5,1) {};
\node[place,label={180:},minimum size=3mm] (ok) at (1,1) {};

\node[transition,fill=black,minimum width=0.1mm,minimum height=5mm] (c1) at (1.5,1.5) {};
\node[place,label={180:},minimum size=3mm] (oc1) at (2,1.5) {};
\node[transition,fill=black,minimum width=0.1mm,minimum height=5mm] (e1) at (2.5,1.5) {};
\node[place,label={180:},minimum size=3mm] (oe1) at (3,1.5) {};
\node[transition,fill=black,minimum width=0.1mm,minimum height=5mm] (w1) at (3.5,1.5) {};
\node[place,label={180:},minimum size=3mm] (ow1) at (4,1.5) {};
\node[transition,fill=black,minimum width=0.1mm,minimum height=5mm] (g1) at (4.5,1.5) {};
\node[place,label={180:},minimum size=3mm] (og1) at (5,1.5) {};
\node[transition,fill=black,minimum width=0.1mm,minimum height=5mm] (e2) at (4,2) {};
\node[transition,fill=black,minimum width=0.1mm,minimum height=5mm] (e3) at (5.5,1.5) {};
\node[place,label={180:},minimum size=3mm] (oe3) at (6,1.5) {};
\node[transition,fill=black,minimum width=0.1mm,minimum height=5mm] (s1) at (6.5,1.5) {};

\node[transition,fill=black,minimum width=0.1mm,minimum height=5mm] (c2) at (1.5,0.5) {};
\node[place,label={180:},minimum size=3mm] (oc2) at (2,0.5) {};
\node[transition,fill=black,minimum width=0.1mm,minimum height=5mm] (e4) at (2.5,0.5) {};
\node[place,label={180:},minimum size=3mm] (oe4) at (3,0.5) {};
\node[transition,fill=black,minimum width=0.1mm,minimum height=5mm] (w2) at (3.5,0.5) {};
\node[place,label={180:},minimum size=3mm] (ow2) at (4,0.5) {};
\node[transition,fill=black,minimum width=0.1mm,minimum height=5mm] (g2) at (4.5,0.5) {};
\node[place,label={180:},minimum size=3mm] (og2) at (5,0.5) {};
\node[transition,fill=black,minimum width=0.1mm,minimum height=5mm] (e5) at (4,0) {};
\node[transition,fill=black,minimum width=0.1mm,minimum height=5mm] (e6) at (5.5,0.5) {};
\node[place,label={180:},minimum size=3mm] (oe6) at (6,0.5) {};
\node[transition,fill=black,minimum width=0.1mm,minimum height=5mm] (i2) at (6.5,0.5) {};
\node[place,label={180:},minimum size=3mm] (oi2) at (7,0.5) {};

\node[transition,fill=black,minimum width=0.1mm,minimum height=5mm] (x2) at (7.5,1) {};
\node[place,label={180:},minimum size=3mm] (ox2) at (8,1) {};
\node[transition,fill=black,minimum width=0.1mm,minimum height=5mm] (s2) at (8.5,1) {};
\node[transition,fill=black,minimum width=0.1mm,minimum height=5mm] (d2) at (7.5,0) {};
\node[place,label={180:},minimum size=3mm] (od2) at (8,0) {};
\node[transition,fill=black,minimum width=0.1mm,minimum height=5mm] (f2) at (8.5,0) {};

\node[place,label={180:},minimum size=3mm] (o) at (9,0.5) {};

\draw[-latex,thick](ik)--(k);\draw[-latex,thick](k)--(ok);\draw[-latex,thick](ok)--(c1);\draw[-latex,thick](ok)--(c2);

\draw[-latex,thick](c1)--(oc1);\draw[-latex,thick](oc1)--(e1);\draw[-latex,thick](e1)--(oe1);\draw[-latex,thick](oe1)--(w1);\draw[-latex,thick](w1)--(ow1);\draw[-latex,thick](ow1)--(g1);\draw[-latex,thick](g1)--(og1);\draw[-latex,thick](og1) to[bend right=30] (e2);\draw[-latex,thick](e2) to[bend right=30] (oe1);\draw[-latex,thick](og1)--(e3);\draw[-latex,thick](e3)--(oe3);\draw[-latex,thick](oe3)--(s1);

\draw[-latex,thick](c2)--(oc2);\draw[-latex,thick](oc2)--(e4);\draw[-latex,thick](e4)--(oe4);\draw[-latex,thick](oe4)--(w2);\draw[-latex,thick](w2)--(ow2);\draw[-latex,thick](ow2)--(g2);\draw[-latex,thick](g2)--(og2);\draw[-latex,thick](og2) to[bend left=30] (e5);\draw[-latex,thick](e5) to[bend left=30] (oe4);\draw[-latex,thick](og2)--(e6);\draw[-latex,thick](e6)--(oe6);\draw[-latex,thick](oe6)--(i2);\draw[-latex,thick](i2)--(oi2);\draw[-latex,thick](oi2)--(x2);\draw[-latex,thick](x2)--(ox2);\draw[-latex,thick](ox2)--(s2);\draw[-latex,thick](oi2)--(d2);\draw[-latex,thick](d2)--(od2);\draw[-latex,thick](od2)--(f2);

\draw[-latex,thick](s1) to[bend left=60] (o);\draw[-latex,thick](s2)--(o);\draw[-latex,thick](f2)--(o);
\end{tikzpicture}
}
\caption{Behaviour of the step function shown in Figure \ref{fig:application-AWS-concept}, expressed as the Petri net \resizebox{0.999\textwidth}{!}{$\mathcal{C}(\mathfrak{E}({k \unrhd_{\rho_{14}} ((c_1 \unrhd_{\rho_3} ((w_1 \rhd_{\rho_1} g_1)*_{\rho_2})) \unrhd_{\rho_4} s_1) ?_{\rho_{13}} (((c_2 \unrhd_{\rho_{10}} ((w_2 \rhd_{\rho_8} g_2)*_{\rho_9})) \unrhd_{\rho_{11}} i_2) \unrhd_{\rho_{12}} ((x_2 \unrhd_{\rho_5} s_2) ?_{\rho_7} (d_2 \unrhd_{\rho_6} f_2)))}))$}}
\label{fig:application-AWS-behaviour}
\end{figure}

Rather than presenting behaviour as a net under ${\mathcal{N}}$, we decide to use the functor ${\mathcal{C}\circ \mathfrak{E}}$ since control flow captures system behaviour comprehensively. Displaying the corresponding nets under ${\mathcal{N}}$ or ${\mathcal{D}}$ can be easily done using the mapping provided in \ref{sec:appendix-mapping}, which corresponds to the functorial descriptions from Definition \ref{def:functor-computon-to-petri} and Proposition \ref{prop:functor-data-petri}. Here, we show the equivalent BPMN diagram (for control flow) and the corresponding DFG in standard notation (for data flow), using the graph transformation system described in Section \ref{sec:transformation-system}. These models are far more expressive than Petri nets for our purpose which is just demonstrating the separation of control and data for model transformation.

By Proposition \ref{prop:computon-sequential-connected}, the total sequential computon from Figure \ref{fig:application-AWS-workflow} is connected because there is information flow from every non-e-outport to either the unique ec-outport or the unique ed-outport. A glance at this figure reveals that computons are modular by construction, a consequence of compositionality that allows hiding the internals of complex composite structures. For instance, Figure \ref{fig:application-AWS-workflow} hides the structure of the total sequential computon from Figure \ref{fig:application-AWS-composites-right}(c) which, in turn, hides the complexity of the composites \ref{fig:application-AWS-composites-right}(a) and \ref{fig:application-AWS-composites-right}(b). As per Proposition \ref{prop:computon-sequential-connected} and Corollaries \ref{cor:computon-branching-connected} and \ref{cor:computon-iterative-tail-connected}, all the computons we deal with in this example are connected, including primitives (such as $w_1$) and composites (such as ${w_1 \rhd_{\rho_1} g_1}$).

Apart from modularity, another semantic consequence of our model is the separation of data and control which can be leveraged to analyse these two dimensions independently. For instance, we use the functor from Definition \ref{def:computon-cfg} to extract the CFG of the computon from Figure \ref{fig:application-AWS-workflow} which is then converted into its corresponding BPMN diagram via the graph transformation system proposed in Section \ref{sec:appendix-controlflow} which, in turn, realises transformation without considering data flow at all (see Figure \ref{fig:application-AWS-separation}(a)). For data flow, we extract the DFG of the computon from Figure \ref{fig:application-AWS-workflow} through the functor from Definition \ref{def:computon-dfg}, which is then converted into its equivalent DFG in standard notation via the functor described in Section \ref{sec:appendix-dataflow} which do not consider control flow at all (see Figure \ref{fig:application-AWS-separation}(b)). Although the separation of concerns can be leveraged in other ways (e.g., to formally verify reachability of control flow only), the purpose of this section is just to demonstrate how control flow and data flow can be analysed independently for model transformation.

\begin{figure}[!h]
\centering
\subcaptionbox{BPMN Diagram (control flow).}
{
\includegraphics[scale=0.488]{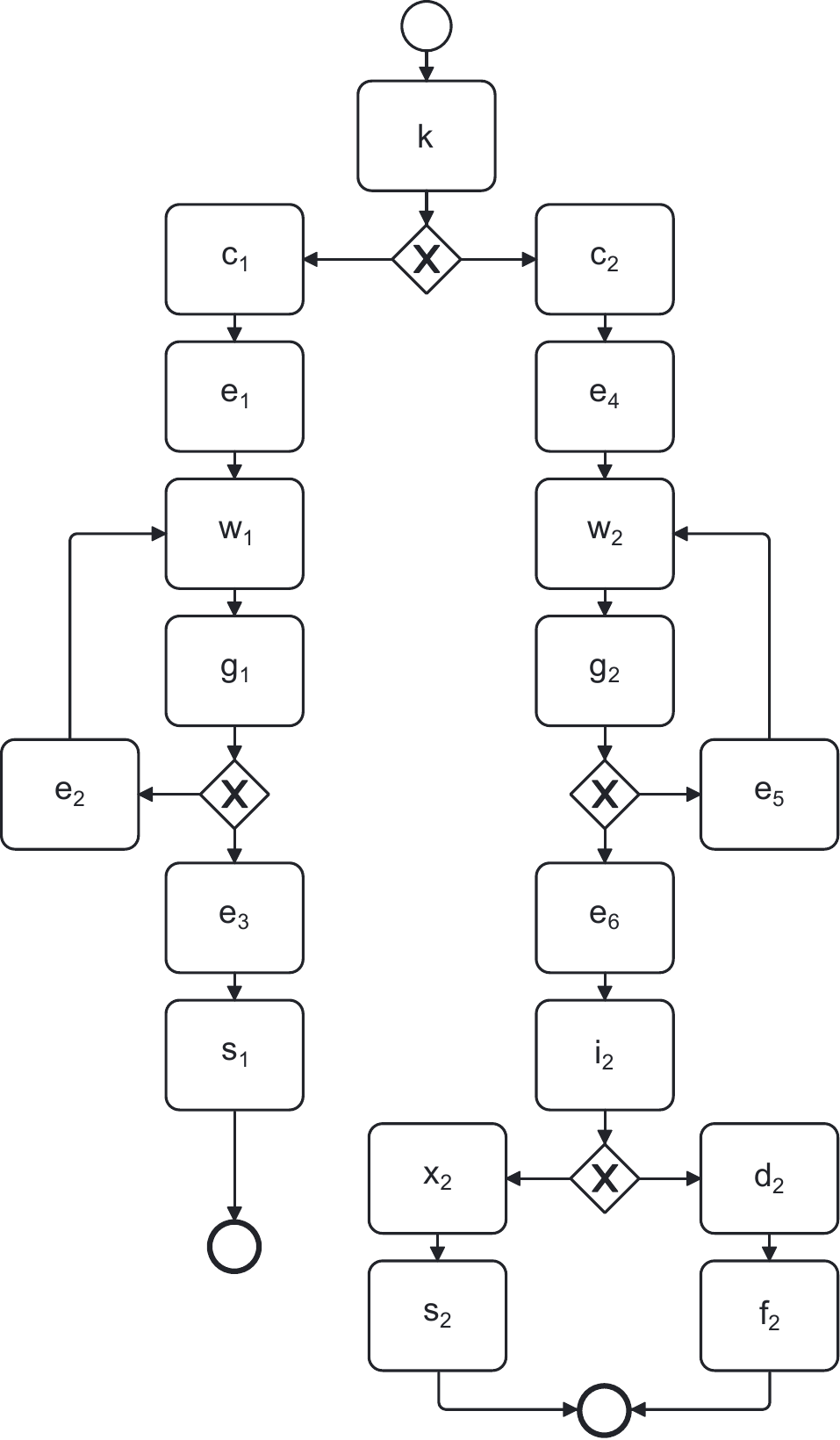}
}
\subcaptionbox{DFG in standard notation (data flow). For readability, we use a dash symbol between numbers $n$ and $m$ to indicate the presence of data flows $\xrightarrow{n}$, $\xrightarrow{n+1}$, $\ldots$, $\xrightarrow{m}$ such that $n \geq 1$ and $m > n$.}
{
\begin{tikzpicture}[scale=0.982]

\node[circle,draw](s1) at (2,3){$s_1$};
\node[circle,draw](e3) at (2,4.5){$e_3$};
\node[circle,draw](e2) at (0,7.5){$e_2$};
\node[circle,draw](g1) at (2,7.5){$g_1$};
\node[circle,draw](w1) at (0,9.5){$w_1$};
\node[circle,draw](e1) at (2,10.5){$e_1$};
\node[circle,draw](c1) at (2,12){$c_1$};
\node (output1) at (1,3){$8$};

\node[circle,draw](s2) at (3,0){$s_2$};
\node[circle,draw](f2) at (5,0){$f_2$};
\node[circle,draw](x2) at (3,1.5){$x_2$};
\node[circle,draw](d2) at (5,1.5){$d_2$};
\node[circle,draw](i2) at (4,3){$i_2$};
\node[circle,draw](e6) at (4,4.5){$e_6$};
\node[circle,draw](e5) at (6,7.5){$e_5$};
\node[circle,draw](g2) at (4,7.5){$g_2$};
\node[circle,draw](w2) at (6,9.5){$w_2$};
\node[circle,draw](e4) at (4,10.5){$e_4$};
\node[circle,draw](c2) at (4,12){$c_2$};
\node (output2) at (2,0){$8$};
\node (output3) at (6,0){$8$};

\node[circle,draw](k) at (3,13.5){$k$};
\node (input1) at (1,13.5){$1-5$};

\draw[-latex] (input1) -- (k);
\draw[-latex] (k) -- node[left]{$1-6$} (c1);
\draw[-latex] (c1) -- node[left]{$1-5$} (e1);
\draw[-latex] (e1) -- node[right,yshift=0.5cm]{$1-5$} (g1);
\draw[-latex] (g1) to[bend right=25] node[above]{$1-5$} (e2);\draw[-latex] (g1) to[bend left=25] node[above]{$7$} (e2);\draw[-latex] (e2) to[bend right=60] node[below]{$1-5$} (g1);
\draw[-latex] (g1) -- node[right,yshift=0.5cm]{$1-5$} (e3);\draw[-latex] (g1) to[bend right=25] node[left]{$7$} (e3);
\draw[-latex] (s1) -- (output1);
\draw[-latex] (k) -- node[right]{$1-6$} (c2);
\draw[-latex] (c2) -- node[left]{$1-5$} (e4);\draw[-latex] (c2) to[bend left=25] node[right]{$9$} (e4);
\draw[-latex] (e4) -- node[left,yshift=-0.5cm]{$1-5$} (g2);\draw[-latex] (e4) to[bend left=25] node[right,yshift=0.5cm]{$9$} (g2);
\draw[-latex] (g2) to[bend right=25] node[left,yshift=-0.5cm]{$1-5$} (e6);\draw[-latex] (g2) -- node[left]{$9$} (e6);\draw[-latex] (g2) to[bend left=25] node[right,yshift=-0.5cm]{$10$} (e6);
\draw[-latex] (g2) to[bend right=25] node[above]{$9$} (e5);\draw[-latex] (g2) to[bend left=25] node[above]{$1-5$} (e5);\draw[-latex] (g2) to[bend right=60] node[above]{$10$} (e5);
\draw[-latex] (e5) to[bend left=75] node[below]{$1-5$} (g2);\draw[-latex] (e5) to[bend right=75] node[above]{$9$} (g2);
\draw[-latex] (e6) -- node[left]{$1-5$} (i2);\draw[-latex] (e6) to[bend left=25] node[right]{$9$} (i2);
\draw[-latex] (i2) to[bend left=25] node[left]{$11$} (x2);\draw[-latex] (i2) to[bend right=25] node[left]{$9$} (x2);\draw[-latex] (i2) to[bend right=60] node[left,yshift=-0.5cm,xshift=-0.2cm]{$1-5$} (x2);
\draw[-latex] (i2) to[bend left=25] node[right]{$1-5$} (d2);\draw[-latex] (i2) to[bend right=25] node[right]{$9$} (d2);\draw[-latex] (i2) to[bend right=60] node[above,yshift=-0.2cm,xshift=0.2cm]{$11$} (d2);
\draw[-latex] (s2) -- (output2);
\draw[-latex] (f2) -- (output3);
\end{tikzpicture}
}
\caption{BPMN diagram and DFG in standard notation to respectively express the control flow and data flow structures of the composite shown in Figure \ref{fig:application-AWS-workflow}.}
\label{fig:application-AWS-separation}
\end{figure}

In the actual implementation of the step function from Figure \ref{fig:application-AWS-concept}, data is appended at every step of the workflow execution \cite{mendonca_using_2017}. This issue is derived from the fact that, apart from being non-compositional, AWS step functions do not separate control flow and data flow, so it is neccesary to pass a bundle of parameters as a single data item (i.e., as a JSON object). In other words, data flow is implicitly defined in the explicit workflow control flow. 

Enabling separation of concerns through our model allows us to remove the data modelling issue of AWS step functions so as to pass only relevant data among computation units. For example, the functional computon $w_1$, which is just in charge of delaying computation, can be implemented as a sleep function to wait for a fixed amount of time, without processing any data at all. Also, the change set creation status can only be used to terminate the loop of the right path, without the need of passing it onto subsequent computations. Certainly, it is still possible to pass all data parameters at every step of the computation in the form of a single port colour (e.g., the JSON object colour), just as in the actual implementation. However, doing this could not be as expressive as the way we model data flow in our scenario. 

\subsection{Case Study 2: Compositional LSTMs} \label{sec:case2}

A Long Short-Term Memory (LSTM) \cite{hochreiter_long_1997} is a Recurrent Neural Network which has been widely used in the field of Deep Learning to learn long-term data dependencies. Typical applications of it include handwriting recognition, automatic language translation and writing generation. The key idea of a LSTM is to use a container to store and process information for an extended period of time, in order to allow for constant error flow during training. Such a container, known as a memory cell, is controlled by three types of interacting gates: an input gate, a forget gate and an output gate, which respectively decide what information can be added to, removed from and sent out of the cell. Particularly, the outcome of the forget and input gates is added so as to produce an updated cell state that can be further consumed by other memory cells. The abstract, high-level schematic representation of an individual memory cell is shown in Figure \ref{fig:application-LSTM-concept}(a). Such a scheme is abstract because the specific behaviour of the components involved is not provided and it is high-level because each component can contain further internal components which are not exposed to the outside world. 
\vspace{-2pt}
\begin{figure}[!h]
  \centering
  \subfloat[Abstract, high-level architecture.]{
\begin{tikzpicture}	
  \draw[dashed] (-0.6,0.7) rectangle ++(5.3,4.6);\node(w) at (3.9,5.1){\scriptsize \emph{Cell State}};
	\draw[dashed] (2.5,0.9) rectangle ++(2,2.5);\node[align=left](x) at (4.1,3.2){\scriptsize\emph{Input}};\node[align=left](xx) at (4.1,2.9){\scriptsize\emph{Gate}};
	\draw[dashed] (0.4,0.9) rectangle ++(1.6,4);\node(y) at (1.5,3.6){\scriptsize\emph{Forget}};\node(yy) at (1.5,3.3){\scriptsize\emph{Gate}};
	\draw[dashed] (5.1,0.7) rectangle ++(2.2,3.5);\node(z) at (6.8,4){\scriptsize\emph{Output}};\node(zz) at (6.8,3.7){\scriptsize\emph{Gate}};		

	\node[draw](m1) at (1,4.5){$m_1$};
	\node[draw](m2) at (3.5,2.5){$m_2$};	
	\node[draw](m3) at (6,2.5){$m_3$};
	\node[draw](s1) at (1,1.5){$s_1$};  
	\node[draw](s2) at (3,1.5){$s_2$};	
	\node[draw](s3) at (6,1.5){$s_3$};
	\node[draw](t1) at (4,1.5){$t_1$};
	\node[draw](t2) at (6,3.5){$t_2$};
	\node[draw](a1) at (3.5,4.5){$a_1$};	
	
	\draw[-to] (-1.1,0.5) -- node[below] {\scriptsize ${h_{t-1}}$}(1,0.5);
	\draw[-to] (1,-0.2) -- node[right] {\scriptsize $x_t$}(1,0.5);
	\draw[-to] (-1.1,4.5) -- node[above] {\scriptsize ${c_{t-1}}$}(m1);
	\draw[-to] (1,0.5) -- (s1);
	\draw[-to] (3,0.5) -- (s2);
	\draw[-to] (4,0.5) -- (t1);
	\draw (1,0.5) -- (6,0.5);\draw[-to] (6,0.5) -- (s3);
	\draw (s2) -- node[left] {\scriptsize $i_t$} (3,2.5);\draw[-to] (3,2.5) -- (m2);
	\draw (t1) -- node[right] {\scriptsize $z_t$} (4,2.5);\draw[-to] (4,2.5) -- (m2);
	\draw[-to] (m2) -- node[left,yshift=0.2cm] {\scriptsize $g_t$} (a1);\draw[-to] (m1) -- node[above] {\scriptsize $j_t$} (a1);
	\draw[-to] (s3) -- node[right] {\scriptsize $o_t$} (m3);\draw[-to] (t2) -- node[right] {\scriptsize $k_t$} (m3);
	\draw[-to] (a1) -- node[above] {\scriptsize $c_t$} (7.6,4.5);\draw[-to] (6,4.5) -- (t2);
	\draw[-to] (m3) -- node[above] {\scriptsize $h_t$} (7.6,2.5);
	
	\draw[-to] (s1) -- node[left] {\scriptsize $f_t$}(m1);
\end{tikzpicture}  
  }%
  \qquad
  \subfloat[Mapping from colours to variables.]{%
  \resizebox{3.3cm}{!}{
    \begin{tabular}{c|c}
      \hline
      Colour & Variable \\ \hline
      $1$ & $h_{t-1}$ \\
      $2$ & $x_t$ \\
      $3$ & $f_t$ \\
      $4$ & $i_t$ \\
      $5$ & $o_t$ \\
      $6$ & $z_t$ \\
      $7$ & $c_t$ \\
      $8$ & $k_t$ \\
      $9$ & $c_{t-1}$ \\
      $10$ & $j_t$ \\
      $11$ & $g_t$ \\
      $12$ & $h_t$     
    \end{tabular}}   
    \label{subtbl:the-table}
  }
  \caption{Conceptual representation of a memory cell.}
  \label{fig:application-LSTM-concept}
  \vspace{-10pt}
\end{figure}

A memory cell can be constructed compositionally using the computon model. Such a construction is done in a bottom-up manner, starting from the most elementary units of computation of the memory cell, which are displayed as non-dashed squares in Figure \ref{fig:application-LSTM-concept}(a). As we are dealing with a model of high-level computation, the specific computation details of such squares are not required, so non-dashed squares can naturally be defined as functional computons, i.e., as black boxes that can perform any kind of computation (see Figure \ref{fig:application-LSTM-functionals}). Typically, for a concrete memory cell, the functional computons $s_i$, $t_j$, $m_i$ and $a_1$ would be required to compute sigmoid, hyperbolic tangent, Hadamard and element-wise summation functions, respectively. For the sake of ``high-levelness'', we just say $s_i$ and $t_j$ are activation computons, while $m_i$ and $a_1$ are multiplication and summation computons, respectively. 

\vspace{-1pt}

\begin{figure*}[!h]
\centering
{
\begin{tikzpicture}
  \computonPrimitive{0.8}{0}{0.8}{1}{$s_1$}
  \qin{s1q0}{0}{0.8}{}
  \din{s1i1}{0}{0.5}{$1$};
  \din{s1i2}{0}{0.2}{$2$};
  \qout{s1q1}{1.6}{0.8}{}
  \dout{s1o1}{1.6}{0.5}{$3$};
  
  \computonPrimitive{4.2}{0}{0.8}{1}{$s_2$}
  \qin{s2q0}{3.4}{0.8}{}
  \din{s2i1}{3.4}{0.5}{$1$};
  \din{s2i2}{3.4}{0.2}{$2$};
  \qout{s2q1}{5}{0.8}{}
  \dout{s2o1}{5}{0.5}{$4$};
  
  \computonPrimitive{7.6}{0}{0.8}{1}{$s_3$}
  \qin{s3q0}{6.8}{0.8}{}
  \din{s3i1}{6.8}{0.5}{$1$};
  \din{s3i2}{6.8}{0.2}{$2$};
  \qout{s3q1}{8.4}{0.8}{}
  \dout{s3o1}{8.4}{0.5}{$5$};
\end{tikzpicture}
}
\subcaptionbox{Activation computons.}
{
\begin{tikzpicture}[scale=0.9]
  \computonPrimitive{0.8}{0}{0.8}{1}{$t_1$}
  \qin{t1q0}{0}{0.8}{}
  \din{t1i1}{0}{0.5}{$1$};
  \din{t1i2}{0}{0.2}{$2$};
  \qout{t1q1}{1.6}{0.8}{}
  \dout{t1o1}{1.6}{0.5}{$6$};
  
  \computonPrimitive{4.2}{0}{0.8}{1}{$t_2$}
  \qin{t2q0}{3.4}{0.8}{}
  \din{t2i1}{3.4}{0.5}{$7$};
  \qout{t2q1}{5}{0.8}{}
  \dout{t2o1}{5}{0.5}{$8$};
\end{tikzpicture}
}
\subcaptionbox{Multiplication computons.}
{
\begin{tikzpicture}[scale=0.9]
  \computonPrimitive{0.6}{0}{0.8}{1}{$m_1$}
  \qin{m1q0}{-0.2}{0.8}{}
  \din{m1i1}{-0.2}{0.5}{$3$};
  \din{m1i2}{-0.2}{0.2}{$9$};
  \qout{m1q1}{1.4}{0.8}{}
  \dout{m1o1}{1.4}{0.5}{$10$};
  
  \computonPrimitive{4.2}{0}{0.8}{1}{$m_2$}
  \qin{m2q0}{3.4}{0.8}{}
  \din{m2i1}{3.4}{0.5}{$4$};
  \din{m2i1}{3.4}{0.2}{$6$};
  \qout{m2q1}{5}{0.8}{}
  \dout{m2o1}{5}{0.5}{$11$};
  
  \computonPrimitive{7.8}{0}{0.8}{1}{$m_3$}
  \qin{m3q0}{7}{0.8}{}
  \din{m3i1}{7}{0.5}{$5$};
  \din{m3i2}{7}{0.2}{$8$};
  \qout{m3q1}{8.6}{0.8}{}
  \dout{m3o1}{8.6}{0.5}{$12$};
\end{tikzpicture}
}
\subcaptionbox{Summation computon.}
{
\begin{tikzpicture}[scale=0.9]
  \computonPrimitive{0.8}{0}{0.8}{1}{$a_1$}
  \qin{p1q0}{0}{0.8}{}
  \din{p1i1}{0}{0.5}{$10$};
  \din{p1i2}{0}{0.2}{$11$};
  \qout{p1q1}{1.6}{0.8}{}
  \dout{p1o1}{1.6}{0.5}{$7$};
  \dout{p1o1}{1.6}{0.2}{$7$};
\end{tikzpicture}
}
\caption{Functional computons for the compositional construction of a memory cell.}
\label{fig:application-LSTM-functionals}
\vspace{-5pt}
\end{figure*}

Strictly speaking, every ed-port shown in Figure \ref{fig:application-LSTM-functionals} must have the same colour (i.e., the vector colour) since data being moved within a cell corresponds to a vector of the same type. But, for clarity and demonstration purposes, we consider 12 different colours, each corresponding to the type of each variable shown in Figure \ref{fig:application-LSTM-concept}(a). The mapping from colours to variables is presented in Figure \ref{fig:application-LSTM-concept}(b). 

Using a long-term memory vector $c_{t-1}$ (of colour $9$), a short-term memory vector $h_{t-1}$ (of colour $1$) and an external input/predictor vector $x_t$ (of colour $2$), the whole memory cell performs some computation and then returns a new long-term memory vector $c_{t}$ (of colour $7$) and a new short-term memory vector $h_{t}$ (of colour $12$) which, in turn, can used by other memory cells to perform subsequent computations. 

For the (high-level) computation of a memory cell, the inner forget gate computes the activation computon $s_1$ in terms of $h_{t-1}$ and $x_t$ to yield the vector $f_t$ (of colour $3$). The result $f_t$ and the long-term memory vector $c_{t-1}$ are then multipled via $m_1$ to obtain $j_t$ (of colour $10$). To capture this computation, Figure \ref{fig:application-LSTM-composites}(a) shows that the forget gate is characterised as the partial sequential computon $s_1 \rhd_{\rho_1} m_1$. 

Constructing the input gate is done differently since it requires two copies of $h_{t-1}$ and two copies of $x_t$ to simultaneously compute the activation computons $s_2$ and $t_1$, in order to produce the vector $i_t$ (of colour $4$) and the vector $z_t$ (of colour $6$). Such results are then multiplied through $m_2$ to produce the vector $g_t$ (of colour $11$). As $s_2$ and $t_1$ are computed in parallel before $m_2$, the input gate is naturally characterised as the total sequential computon $(s_2 \mid_{\rho_2} t_1) \unrhd_{\rho_3} m_2$ whose structure is depicted in Figure \ref{fig:application-LSTM-composites}(b). Figure \ref{fig:application-LSTM-composites}(c) shows that the structure of the output gate is similar to that of $(s_2 \mid_{\rho_2} t_1) \unrhd_{\rho_3} m_2$ so the output gate is precisely the total sequential computon $(s_3 \mid_{\rho_4} t_2) \unrhd_{\rho_5} m_3$ whose left and right operands are the p-sync computon $s_3 \mid_{\rho_4} t_2$ and the multiplication computon $m_3$, respectively. As data is not shared within a p-sync computon, $s_3 \mid_{\rho_4} t_2$ has the same ed-outports as $s_3$ and $t_2$, namely the vector $o_t$ (of colour $5$) and the vector $k_t$ (of colour $8$). The only ed-outport of $(s_3 \mid_{\rho_4} t_2) \unrhd_{\rho_5} m_3$ is $h_{t}$ (of colour $12$) which represents the new short-term memory value to be passed onto subsequent memory cells.

As they have no dependencies among them, the forget gate composite ${s_1 \rhd_{\rho_1} m_1}$ and the input gate composite ${(s_2 \mid_{\rho_2} t_1) \unrhd_{\rho_3} m_2}$ are composed into the p-sync computon $(s_1 \rhd_{\rho_1} m_1) \mid_{\rho_6} ((s_2 \mid_{\rho_2} t_1) \unrhd_{\rho_3} m_2)$ whose ed-ports are inherited from ${s_1 \rhd_{\rho_1} m_1}$ and ${(s_2 \mid_{\rho_2} t_1) \unrhd_{\rho_3} m_2}$. To enable the functional computon $a_1$ to element-wisely add $j_t$ and $g_t$, the newly constructed p-sync and $a_1$ are composed into the total sequential computon $((s_1 \rhd_{\rho_1} m_1) \mid_{\rho_6} ((s_2 \mid_{\rho_2} t_1) \unrhd_{\rho_3} m_2)) \unrhd_{\rho_7} a_1$ which, by Proposition \ref{prop:computon-sequential-inports-outports-1}, has the same ed-outports as $a_1$, i.e., the next cell state $c_t$ (of colour $7$) -- see Figure \ref{fig:application-LSTM-composites}(d). As $c_t$ is sent outside the memory cell and is further required by the output gate, Figure \ref{fig:application-LSTM-functionals}(c) shows that $a_1$ produces two copies of it. \footnote{\samepage We decide to model the summation computon $a_1$ in this way in order to avoid the unnecessary burden of introducing extra functional computons to express data replication.}

\begin{figure}[!h]
\centering
\subcaptionbox{Forget gate: ${s_1 \rhd_{\rho_1} m_1}$}
{
\begin{tikzpicture}
	\computonComposite{0.2}{0.1}{4.5}{1.3};

	\computonPrimitive{0.8}{0.2}{0.8}{1}{$s_1$};  
  \din{s1i1}{0}{1}{$1$};
  \din{s1i2}{0}{0.7}{$2$};
  \qin{s1q0}{0}{0.4}{};

  \dmatch{3d}{2.4}{0.7}{$3$}{above};\flow{{1.6,0.7}}{3d}{}{};\flow{3d}{{3.3,0.7}}{}{};
	\qmatch{3q}{2.4}{0.4}{};\flow{{1.6,0.4}}{3q}{dashed}{};\flow{3q}{{3.3,0.4}}{dashed}{};	
	
	\computonPrimitive{3.3}{0.2}{0.8}{1}{$m_1$};
  \dinplain{m1i1}{0}{1.3}{$9$};\flow{m1i1}{{3.3,1}}{}{bend left=7};  
  \dout{m1o1}{4.1}{0.7}{$10$};
  \qout{m1q1}{4.1}{0.4}{};
\end{tikzpicture}
}
\subcaptionbox{Input gate: ${(s_2 \mid_{\rho_2} t_1) \unrhd_{\rho_3} m_2}$}
{
\begin{tikzpicture}
\computonComposite{0.2}{-0.1}{8.1}{2.5};
	\computonComposite{0.6}{0}{4.8}{2.3};
  
  \dinplain{s2i2}{0}{1.9}{$1$}{left};\flow{s2i2}{$(s2i2)+(2.6,0)$}{}{};
  \dinplain{s2i1}{0}{1.6}{$2$}{left};\flow{s2i1}{$(s2i1)+(2.6,0)$}{}{};  
  \qinplain{f1}{0}{1.2}{};\flow{f1}{$(f1)+(0.8,0)$}{dashed}{};
  \dinplain{t1i1}{0}{0.8}{$1$}{left};\flow{t1i1}{$(t1i1)+(2.6,0)$}{}{};
  \dinplain{t1i2}{0}{0.5}{$2$}{left};\flow{t1i2}{$(t1i2)+(2.6,0)$}{}{};
	
	\forkplain{-0.25}{0.95};
	\qmatch{f2}{1.8}{1.3}{};\flow{$(f2)+(-0.8,0)$}{f2}{dashed}{};\flow{f2}{$(f2)+(0.8,0)$}{dashed}{};
	\qmatch{f3}{1.8}{1}{};\flow{$(f3)+(-0.8,0)$}{f3}{dashed}{};\flow{f3}{$(f3)+(0.8,0)$}{dashed}{};
	
	\computonPrimitive{2.6}{1.2}{0.8}{1}{$s_2$};
  \computonPrimitive{2.6}{0.1}{0.8}{1}{$t_1$}

	\joinplain{4}{0.95};	
	\qmatch{j2}{4.2}{1.3}{};\flow{$(j2)+(-0.8,0)$}{j2}{dashed}{};\flow{j2}{$(j2)+(0.8,0)$}{dashed}{};
	\qmatch{j3}{4.2}{1}{};\flow{$(j3)+(-0.8,0)$}{j3}{dashed}{};\flow{j3}{$(j3)+(0.8,0)$}{dashed}{};
	
	\dmatch{s2o1}{6.1}{1.75}{$4$}{above};\flow{$(s2o1)+(-2.7,0)$}{s2o1}{}{};
	\qmatch{j1}{6.1}{1.2}{};\flow{$(j1)+(-0.8,0)$}{j1}{dashed}{};
	\dmatch{t1o1}{6.1}{0.65}{$6$}{below};\flow{$(t1o1)+(-2.7,0)$}{t1o1}{}{};
	
	\computonPrimitive{6.9}{0.45}{0.8}{1.4}{$m_2$}
	\flow{s2o1}{$(s2o1)+(0.8,0)$}{}{};
	\flow{j1}{$(j1)+(0.8,0)$}{dashed}{};
	\flow{t1o1}{$(t1o1)+(0.8,0)$}{}{};  
  \qout{m2q1}{7.7}{1.45}{}
  \dout{m2o1}{7.7}{0.9}{$11$};  
\end{tikzpicture}
}
\subcaptionbox{Output gate: $(s_3 \mid_{\rho_4} t_2) \unrhd_{\rho_5} m_3$}
{
\begin{tikzpicture}
\computonComposite{0.2}{-0.1}{8.1}{2.5};
	\computonComposite{0.6}{0}{4.8}{2.3};
  
  \dinplain{s2i2}{0}{1.9}{$1$}{left};\flow{s2i2}{$(s2i2)+(2.6,0)$}{}{};
  \dinplain{s2i1}{0}{1.6}{$2$}{left};\flow{s2i1}{$(s2i1)+(2.6,0)$}{}{};  
  \qinplain{f1}{0}{1.2}{};\flow{f1}{$(f1)+(0.8,0)$}{dashed}{};
  \dinplain{t1i1}{0}{0.8}{$7$}{left};\flow{t1i1}{$(t1i1)+(2.6,0)$}{}{};
	
	\forkplain{-0.25}{0.95};
	\qmatch{f2}{1.8}{1.3}{};\flow{$(f2)+(-0.8,0)$}{f2}{dashed}{};\flow{f2}{$(f2)+(0.8,0)$}{dashed}{};
	\qmatch{f3}{1.8}{1}{};\flow{$(f3)+(-0.8,0)$}{f3}{dashed}{};\flow{f3}{$(f3)+(0.8,0)$}{dashed}{};
	
	\computonPrimitive{2.6}{1.2}{0.8}{1}{$s_3$};
  \computonPrimitive{2.6}{0.1}{0.8}{1}{$t_2$}

	\joinplain{4}{0.95};	
	\qmatch{j2}{4.2}{1.3}{};\flow{$(j2)+(-0.8,0)$}{j2}{dashed}{};\flow{j2}{$(j2)+(0.8,0)$}{dashed}{};
	\qmatch{j3}{4.2}{1}{};\flow{$(j3)+(-0.8,0)$}{j3}{dashed}{};\flow{j3}{$(j3)+(0.8,0)$}{dashed}{};
	
	\dmatch{s2o1}{6.1}{1.75}{$5$}{above};\flow{$(s2o1)+(-2.7,0)$}{s2o1}{}{};
	\qmatch{j1}{6.1}{1.2}{};\flow{$(j1)+(-0.8,0)$}{j1}{dashed}{};
	\dmatch{t1o1}{6.1}{0.65}{$8$}{below};\flow{$(t1o1)+(-2.7,0)$}{t1o1}{}{};
	
	\computonPrimitive{6.9}{0.45}{0.8}{1.4}{$m_3$}
	\flow{s2o1}{$(s2o1)+(0.8,0)$}{}{};
	\flow{j1}{$(j1)+(0.8,0)$}{dashed}{};
	\flow{t1o1}{$(t1o1)+(0.8,0)$}{}{};  
  \dout{m2o1}{7.7}{1.45}{$12$};  
  \qout{m2q1}{7.7}{0.9}{}
\end{tikzpicture}
}
\subcaptionbox{Cell state: $((s_1 \rhd_{\rho_1} m_1) \mid_{\rho_6} ((s_2 \mid_{\rho_2} t_1) \unrhd_{\rho_3} m_2)) \unrhd_{\rho_7} a_1$}
{
\begin{tikzpicture}
\computonComposite{0.2}{-0.2}{14}{4.7};
\computonComposite{0.6}{0}{10.9}{4.3};
\computonComposite{2}{2.8}{4.5}{1.35};
\computonComposite{2}{0.1}{8}{2.5};
\computonComposite{2.1}{0.2}{5}{2.3};

  \dinplain{x0}{0}{4}{$9$}{left};\flow{x0}{$(x0)+(5.2,-0.3)$}{}{bend left=7};  
  \dinplain{x1}{0}{3.7}{$1$}{left};\flow{x1}{$(x1)+(2.6,0)$}{}{};
  \dinplain{x2}{0}{3.4}{$2$}{left};\flow{x2}{$(x2)+(2.6,0)$}{}{};  
  
  \qin{x3}{0}{2.25}{};
  
  \dinplain{y0}{0}{1.9}{$1$}{left};\flow{y0}{$(y0)+(5,0)$}{}{};
  \dinplain{y1}{0}{1.6}{$2$}{left};\flow{y1}{$(y1)+(5,0)$}{}{};  
  \dinplain{y2}{0}{0.8}{$1$}{left};\flow{y2}{$(y2)+(5,0)$}{}{};
  \dinplain{y3}{0}{0.5}{$2$}{left};\flow{y3}{$(y3)+(5,0)$}{}{};
	
	\forkplain{-0.25}{2};	
	\flowdiag{{1,2.3}}{{1.8,3.1}}{dashed}{}{pos=0.5,rotate=27};
	\flowdiag{{1,2.1}}{{1.7,1.4}}{dashed}{}{pos=0.5,rotate=311};

\begin{scope}[xshift=1.8cm,yshift=2.7cm]	

	\computonPrimitive{0.8}{0.2}{0.8}{1}{$s_1$};  
	\qmatch{s1i1}{0}{0.4}{};\flow{s1i1}{$(s1i1)+(0.8,0)$}{dashed}{};

  \dmatch{3d}{2.4}{0.7}{$3$}{above};\flow{{1.6,0.7}}{3d}{}{};\flow{3d}{{3.3,0.7}}{}{};
	\qmatch{3q}{2.4}{0.4}{};\flow{{1.6,0.4}}{3q}{dashed}{};\flow{3q}{{3.3,0.4}}{dashed}{};	
	
	\computonPrimitive{3.3}{0.2}{0.8}{1}{$m_1$};
  \dmatch{m1o1}{10.4}{0.7}{$10$}{above};\flow{$(m1o1)+(-6.3,0)$}{m1o1}{}{};  
  \qmatch{m1q1}{8.5}{0.4}{};\flow{$(m1q1)+(-4.3,0)$}{m1q1}{dashed}{};
\end{scope}

\begin{scope}[xshift=1.8cm,yshift=0.2cm]	
	
  \qmatch{f1}{0}{1.2}{};\flow{f1}{$(f1)+(0.8,0)$}{}{dashed};
	
	\forkplain{-0.25}{0.95};
	\qmatch{f2}{1.8}{1.3}{};\flow{$(f2)+(-0.8,0)$}{f2}{dashed}{};\flow{f2}{$(f2)+(0.8,0)$}{dashed}{};
	\qmatch{f3}{1.8}{1}{};\flow{$(f3)+(-0.8,0)$}{f3}{dashed}{};\flow{f3}{$(f3)+(0.8,0)$}{dashed}{};
	
	\computonPrimitive{2.6}{1.2}{0.8}{1}{$s_2$};
  \computonPrimitive{2.6}{0.1}{0.8}{1}{$t_1$}

	\joinplain{4}{0.95};	
	\qmatch{j2}{4.2}{1.3}{};\flow{$(j2)+(-0.8,0)$}{j2}{dashed}{};\flow{j2}{$(j2)+(0.8,0)$}{dashed}{};
	\qmatch{j3}{4.2}{1}{};\flow{$(j3)+(-0.8,0)$}{j3}{dashed}{};\flow{j3}{$(j3)+(0.8,0)$}{dashed}{};
	
	\dmatch{s2o1}{6.1}{1.75}{$4$}{above};\flow{$(s2o1)+(-2.7,0)$}{s2o1}{}{};
	\qmatch{j1}{6.1}{1.2}{};\flow{$(j1)+(-0.8,0)$}{j1}{dashed}{};
	\dmatch{t1o1}{6.1}{0.65}{$6$}{below};\flow{$(t1o1)+(-2.7,0)$}{t1o1}{}{};
	
	\computonPrimitive{6.9}{0.45}{0.8}{1.4}{$m_2$}
	\flow{s2o1}{$(s2o1)+(0.8,0)$}{}{};
	\flow{j1}{$(j1)+(0.8,0)$}{dashed}{};
	\flow{t1o1}{$(t1o1)+(0.8,0)$}{}{};    
  \qmatch{m2q1}{8.5}{1.45}{};\flow{$(m2q1)+(-0.8,0)$}{m2q1}{dashed}{};
  \dmatch{m2o1}{10.4}{0.9}{$11$}{below};\flow{$(m2o1)+(-2.7,0)$}{m2o1}{}{};  
\end{scope}

	\joinplain{10.1}{2};	
	\flowdiag{m1q1}{$(m1q1)+(0.8,-0.7)$}{dashed}{}{pos=0.5,rotate=311};
	\flowdiag{m2q1}{$(m2q1)+(0.8,0.5)$}{dashed}{}{pos=0.5,rotate=27};
	\qmatch{jx}{12.2}{2.25}{};\flow{$(jx)+(-0.8,0)$}{jx}{dashed}{};	
	
	\flowdiag{m1o1}{$(m1o1)+(0.8,-0.8)$}{}{}{pos=0.5,rotate=311};
	\flow{jx}{$(jx)+(0.8,0)$}{dashed}{};
	\flowdiag{m2o1}{$(m2o1)+(0.8,0.7)$}{}{}{pos=0.5,rotate=27};

	\computonPrimitive{12.8}{1.5}{0.8}{1.4}{$a_1$}
  \qout{p1q1}{13.6}{2.6}{}
  \dout{p1o1}{13.6}{2.3}{$7$};
  \dout{p1o2}{13.6}{2}{$7$};
\end{tikzpicture}
}
\subcaptionbox{Memory cell: $(((s_1 \rhd_{\rho_1} m_1) \mid_{\rho_6} ((s_2 \mid_{\rho_2} t_1) \unrhd_{\rho_3} m_2)) \unrhd_{\rho_7} a_1) \unrhd_{\rho_{8}} ((s_3 \mid_{\rho_4} t_2) \unrhd_{\rho_5} m_3)$}
{
\begin{tikzpicture}
	\computonComposite{0.2}{0}{5.8}{3.3};

	\computonComposite{0.8}{0.2}{1.5}{2.4};\node[align=center] at (1.55,1.5){Cell\\State};  

	\dinplain{gi0}{0}{3.1}{$1$}{left};
	\dinplain{gi1}{0}{2.8}{$2$}{left};	
	\din{ci0}{0}{2.5}{$9$};
	\din{ci1}{0}{2.2}{$1$};	
	\din{ci2}{0}{1.9}{$2$};
	\qin{cq0}{0}{1.6}{};
  \din{ci3}{0}{1.3}{$1$};
  \din{ci4}{0}{1}{$2$};  
  \din{ci5}{0}{0.7}{$1$};
  \din{ci6}{0}{0.4}{$2$};
  
  \qmatch{lq}{3.1}{1.6}{};\flow{$(lq)+(-0.8,0)$}{lq}{dashed}{};\flow{lq}{$(lq)+(0.8,0)$}{dashed}{};		
  \dmatch{ld}{3.1}{1.3}{$7$}{below};\flow{$(ld)+(-0.8,0)$}{ld}{}{};\flow{ld}{$(ld)+(0.8,0)$}{}{};
  
  \computonComposite{3.9}{1}{1.5}{1.6};\node[align=center] at (4.6,1.7){Output\\Gate}; 
  \flow{gi0}{{3.9,2.2}}{}{bend left=20};
  \flow{gi1}{{3.9,1.9}}{}{bend left=20};
  
  \qout{go0}{5.4}{1.6}{};
  \dout{go1}{5.4}{1.3}{$12$};
  \doutplain{co0}{5.4}{0.7}{$7$}{right};\flow{$(co0)+(-3.9,0)$}{co0}{}{};

  \node at (11,0){};
\end{tikzpicture}
}
\caption{Subfigures (a)-(d) are composite computons that serve as building blocks to construct the memory cell depicted in (e). Each of them model a dotted box with the same name in Figure \ref{fig:application-LSTM-concept}(a). For example, the composite (a) models the forget gate by providing ed-inports of colour $1$, $2$ and $9$ which respectively correspond to $h_{t-1}$, $x_t$ and $c_{t-1}$ (according to Figure \ref{fig:application-LSTM-concept}(b)). The functional computon $s_1$ uses $h_{t-1}$ and $x_t$ to produce a $3$-coloured data item (i.e., $f_t$) which is sequentially passed to the functional computon $m_1$. The output of $m_1$ is the output of (a), viz., a $10$-coloured data item corresponding to $j_t$. Evidently, (a) has a single ec-inport to allow the forget gate to be invoked from outside, and a ec-outport to pass control to other computons. For example, in (d), (a) passes control to the join computon used to synchronise the composites (a) and (b).}
\label{fig:application-LSTM-composites}
\end{figure}

The final step in our compositional construction is forming the composite structure that represents the whole memory cell. Figure \ref{fig:application-LSTM-composites}(e) shows that such a structure is precisely the partial sequential computon $(((s_1 \rhd_{\rho_1} m_1) \mid_{\rho_6} ((s_2 \mid_{\rho_2} t_1) \unrhd_{\rho_3} m_2)) \unrhd_{\rho_7} a_1) \rhd_{\rho_{8}} ((s_3 \mid_{\rho_4} t_2) \unrhd_{\rho_5} m_3)$ in which the composite $((s_1 \rhd_{\rho_1} m_1) \mid_{\rho_6} ((s_2 \mid_{\rho_2} t_1) \unrhd_{\rho_3} m_2)) \unrhd_{\rho_7} a_1$ and the output gate $(s_3 \mid_{\rho_4} t_2) \unrhd_{\rho_5} m_3$ are the left and right operands, respectively. For visualisation purposes, we treat such operands as black boxes. 

The resulting memory cell composite can also be treated as a black box for defining even more complex composites such as a complete Recurrent Neural Network. Black boxing is possible because computons are modular by construction so that their internals can be hidden without any side effects. In this scenario, the memory cell hides the complexity of the total sequential computons from Figures \ref{fig:application-LSTM-composites}(c) and \ref{fig:application-LSTM-composites}(d). By Proposition \ref{prop:computon-sequential-connected} and Corollary \ref{cor:computon-parallel-sync-connected}, all the composites and primitive computons we deal with are connected in the sense of Definition \ref{def:computon-connected}. This can be easily verified by observing in Figure \ref{fig:application-LSTM-composites} that there is information flow from every non-e-outport to either an ec-outport or an ed-outport. For example, there is information flow from the ec-inport of the p-sync computon of the output gate to the unique ec-outport of $m_3$. Although a purely sequentially-driven construction could have been used instead, we decide to rely upon p-sync computons so as to emphasise the parallel nature of computations occuring within a memory cell. Using p-sync composites instead of p-async constructions allows us to semantically express the fact that data needs to be synchronised (via control) before being consumed by subsequent computations.

The behaviour of the whole memory cell is displayed in Figure \ref{fig:application-LSTM-behaviour}, which corresponds to the net under $\mathcal{C}\circ \mathfrak{E}$ of the composite from Figure \ref{fig:application-LSTM-composites}(e). Like in the previous example, we decide to just display the net that encapsulates control flow since it comprehensively captures the computational behaviour of the memory cell. The corresponding nets under $\mathcal{N}$ and under $\mathcal{D}$ can be easily constructed using the mapping from \ref{sec:appendix-mapping}, which captures the mapping given by the functorial constructions from Definition \ref{def:functor-computon-to-petri} and Proposition \ref{prop:functor-data-petri}. In Figure \ref{fig:application-LSTM-separation}, we show the equivalent BPMN diagram (for control flow) and the DFG in standard notation (for data flow) using the graph transformation system described in Section \ref{sec:transformation-system}.

\begin{figure}[!h]
\centering
{
\begin{tikzpicture}
\node[place,label={180:},minimum size=3mm] (if1) at (0,1) {};
\node[transition,fill=black,minimum width=0.1mm,minimum height=5mm] (f1) at (0.5,1) {};
\node[place,label={180:},minimum size=3mm] (o1f1) at (1,1.5) {};
\node[place,label={180:},minimum size=3mm] (o2f1) at (1,0.5) {};

\node[transition,fill=black,minimum width=0.1mm,minimum height=5mm] (s1) at (1.5,1.5) {};
\node[place,label={180:},minimum size=3mm] (os1) at (2,1.5) {};
\node[transition,fill=black,minimum width=0.1mm,minimum height=5mm] (m1) at (3.5,1.5) {};
\node[place,label={180:},minimum size=3mm] (om1) at (5,1.5) {};

\node[transition,fill=black,minimum width=0.1mm,minimum height=5mm] (f2) at (1.5,0.5) {};
\node[place,label={180:},minimum size=3mm] (o1f2) at (2,1) {};
\node[place,label={180:},minimum size=3mm] (o2f2) at (2,0) {};
\node[transition,fill=black,minimum width=0.1mm,minimum height=5mm] (s2) at (2.5,1) {};
\node[place,label={180:},minimum size=3mm] (os2) at (3,1) {};
\node[transition,fill=black,minimum width=0.1mm,minimum height=5mm] (t1) at (2.5,0) {};
\node[place,label={180:},minimum size=3mm] (ot1) at (3,0) {};
\node[transition,fill=black,minimum width=0.1mm,minimum height=5mm] (j2) at (3.5,0.5) {};
\node[place,label={180:},minimum size=3mm] (oj2) at (4,0.5) {};
\node[transition,fill=black,minimum width=0.1mm,minimum height=5mm] (m2) at (4.5,0.5) {};
\node[place,label={180:},minimum size=3mm] (om2) at (5,0.5) {};
\node[transition,fill=black,minimum width=0.1mm,minimum height=5mm] (j1) at (5.5,1) {};

\node[place,label={180:},minimum size=3mm] (oj1) at (6,1) {};
\node[transition,fill=black,minimum width=0.1mm,minimum height=5mm] (a1) at (6.5,1) {};
\node[place,label={180:},minimum size=3mm] (oa1) at (7,1) {};
\node[transition,fill=black,minimum width=0.1mm,minimum height=5mm] (f3) at (7.5,1) {};
\node[place,label={180:},minimum size=3mm] (o1f3) at (8,1.5) {};
\node[place,label={180:},minimum size=3mm] (o2f3) at (8,0.5) {};
\node[transition,fill=black,minimum width=0.1mm,minimum height=5mm] (s3) at (8.5,1.5) {};
\node[place,label={180:},minimum size=3mm] (os3) at (9,1.5) {};
\node[transition,fill=black,minimum width=0.1mm,minimum height=5mm] (t2) at (8.5,0.5) {};
\node[place,label={180:},minimum size=3mm] (ot2) at (9,0.5) {};
\node[transition,fill=black,minimum width=0.1mm,minimum height=5mm] (j3) at (9.5,1) {};
\node[place,label={180:},minimum size=3mm] (oj3) at (10,1) {};
\node[transition,fill=black,minimum width=0.1mm,minimum height=5mm] (m3) at (10.5,1) {};
\node[place,label={180:},minimum size=3mm] (om3) at (11,1) {};

\draw[-latex,thick](if1)--(f1);\draw[-latex,thick](f1)--(o1f1);\draw[-latex,thick](f1)--(o2f1);
\draw[-latex,thick](o1f1)--(s1);\draw[-latex,thick](s1)--(os1);\draw[-latex,thick](os1)--(m1);\draw[-latex,thick](m1)--(om1);\draw[-latex,thick](om1)--(j1);
\draw[-latex,thick](o2f1)--(f2);\draw[-latex,thick](f2)--(o1f2);\draw[-latex,thick](o1f2)--(s2);\draw[-latex,thick](s2)--(os2);\draw[-latex,thick](os2)--(j2);
\draw[-latex,thick](f2)--(o2f2);\draw[-latex,thick](o2f2)--(t1);\draw[-latex,thick](t1)--(ot1);\draw[-latex,thick](ot1)--(j2);
\draw[-latex,thick](j2)--(oj2);\draw[-latex,thick](oj2)--(m2);\draw[-latex,thick](m2)--(om2);\draw[-latex,thick](om2)--(j1);

\draw[-latex,thick](j1)--(oj1);\draw[-latex,thick](oj1)--(a1);\draw[-latex,thick](a1)--(oa1);\draw[-latex,thick](oa1)--(f3);
\draw[-latex,thick](f3)--(o1f3);\draw[-latex,thick](o1f3)--(s3);\draw[-latex,thick](s3)--(os3);\draw[-latex,thick](os3)--(j3);
\draw[-latex,thick](f3)--(o2f3);\draw[-latex,thick](o2f3)--(t2);\draw[-latex,thick](t2)--(ot2);\draw[-latex,thick](ot2)--(j3);
\draw[-latex,thick](j3)--(oj3);\draw[-latex,thick](oj3)--(m3);\draw[-latex,thick](m3)--(om3);
\end{tikzpicture}
}
\caption{Behaviour of the memory cell computon presented in Figure \ref{fig:application-LSTM-composites}(e), expressed as the Petri net \resizebox{0.69\textwidth}{!}{$\mathcal{C}(\mathfrak{E}({(((s_1 \rhd_{\rho_1} m_1) \mid_{\rho_6} ((s_2 \mid_{\rho_2} t_1) \unrhd_{\rho_3} m_2)) \unrhd_{\rho_7} a_1) \unrhd_{\rho_{8}} ((s_3 \mid_{\rho_4} t_2) \unrhd_{\rho_5} m_3)}))$}}
\label{fig:application-LSTM-behaviour}
\end{figure}

\begin{figure}[!h]
\centering
\subcaptionbox{BPMN Diagram (control flow).}
{
\includegraphics[scale=0.4]{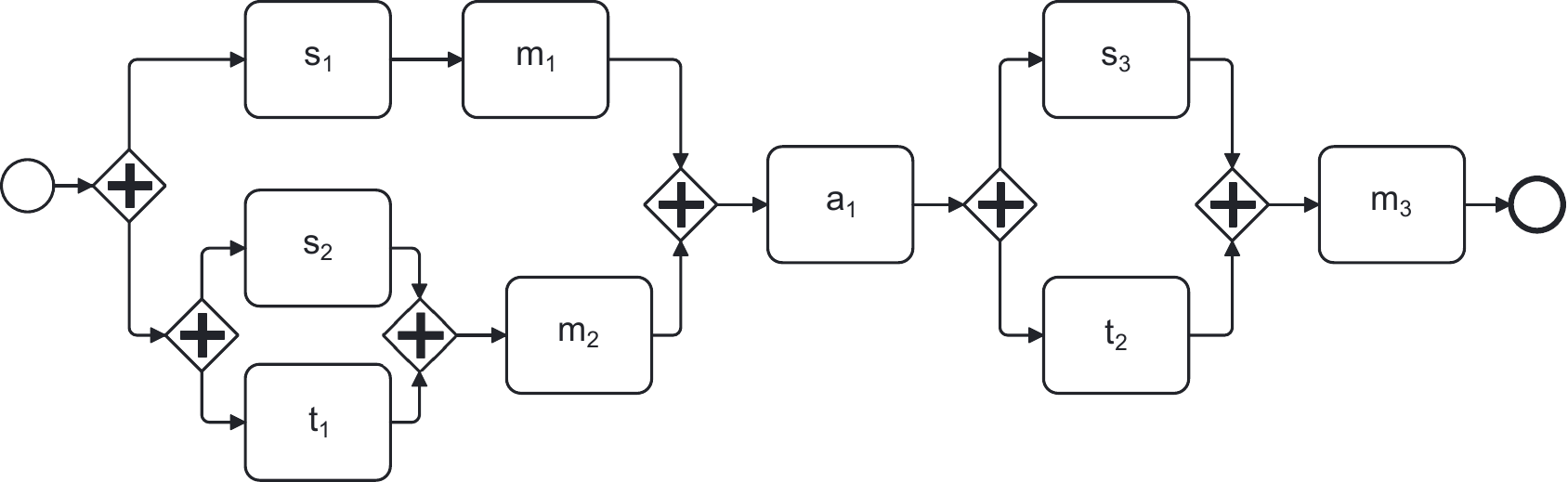}
}
\subcaptionbox{DFG in standard notation (data flow). For readability and since they do not produce or receive any data, we omit the nodes corresponding to fork or join computons.}
{
\begin{tikzpicture}[scale=0.82]
  \node(d1) at (0,5){$1$};
  \node(d2) at (0,4){$2$};
  \node(d3) at (0,3){$1$};
  \node(d4) at (0,2){$2$};
	\node(d5) at (0,1){$1$};	
	\node(d6) at (0,0){$2$};	
	
	\node[circle,draw=black](s1) at (1,4.5){$s_1$};
	\node[circle,draw=black](s2) at (1,2.5){$s_2$};
	\node[circle,draw=black](t1) at (1,0.5){$t_1$};
	
	\node(d7) at (2,5){$9$};
	\node[circle,draw=black](m1) at (3,4.5){$m_1$};
	\node[circle,draw=black](m2) at (3,1.5){$m_2$};
	
	\node[circle,draw=black](a1) at (5,3){$a_1$};
	
	\node(d8) at (6,5){$1$};
	\node(d9) at (6,4){$2$};
	\node[circle,draw=black](s3) at (7,4.5){$s_3$};
	\node(d10) at (6.5,3){$7$};
	\node[circle,draw=black](t2) at (7,1.5){$t_2$};
	
	\node[circle,draw=black](m3) at (9,3){$m_3$};
	\node(d11) at (10.5,3){$12$};
	
	\draw[-latex] (d1) -- (s1);\draw[-latex] (d2) -- (s1);
	\draw[-latex] (d3) -- (s2);\draw[-latex] (d4) -- (s2);
	\draw[-latex] (d5) -- (t1);\draw[-latex] (d6) -- (t1);
	\draw[-latex] (d7) -- (m1);\draw[-latex] (s1) -- node[below] {$3$} (m1);
	\draw[-latex] (s2) -- node[above] {$4$} (m2);
	\draw[-latex] (t1) -- node[below] {$6$} (m2);
	\draw[-latex] (m1) -- node[above, yshift=0.5mm] {$10$} (a1);
	\draw[-latex] (m2) -- node[below, yshift=-0.5mm] {$11$} (a1);
	\draw[-latex] (d8) -- (s3);\draw[-latex] (d9) -- (s3);
	\draw[-latex] (a1) -- (d10);\draw[-latex] (a1) -- node[below] {$7$} (t2);
	\draw[-latex] (s3) -- node[above] {$5$} (m3);
	\draw[-latex] (t2) -- node[below] {$8$} (m3);
  \draw[-latex] (m3) -- (d11);
\end{tikzpicture}
}
\caption{BPMN diagram and DFG in standard notation to respectively express the control flow and data flow structures of the memory cell $(((s_1 \rhd_{\rho_1} m_1) \mid_{\rho_6} ((s_2 \mid_{\rho_2} t_1) \unrhd_{\rho_3} m_2)) \unrhd_{\rho_7} a_1) \rhd_{\rho_{8}} ((s_3 \mid_{\rho_4} t_2) \unrhd_{\rho_5} m_3)$ depicted in Figure \ref{fig:application-LSTM-composites}(e).}
\label{fig:application-LSTM-separation}
\end{figure}

The process described in Section \ref{sec:transformation-system} for transforming a CFG into a BPMN diagram does not consider data flow at all, whereas the process for transforming a DFG does not consider control flow at all. Although the separation of concerns can be leveraged for other purposes (e.g., to formally verify termination of control flow only), the purpose of this section is just to demonstrate how control flow and data flow can be analysed independently for model transformation.

Figure \ref{fig:application-LSTM-separation} demonstrates that, as every composite computon captures both control flow and data flow within a single compositional structure, the proposed model can be perceived as a unification (or conciliation) of workflow control flow languages \cite{van_der_aalst_application_1998} and data flow languages \cite{johnston_advances_2004} within a single compositional setting. 
\section{Related Work} \label{sec:related-work}

In this section, we present the related work of our proposal, namely related compositional approaches and component models that separate concerns.

The CommUnity component model \cite{fiadeiro_categories_2005} separates computation from interaction and provides categorical semantics for architectural configurations. At its core, software components (called designs) are objects in a category, interrelated through morphisms. The most well-known structure preserving relationships are the so-called superposition morphisms, which allow embedding designs into more complex ones, similar to the action of computon morphisms. The difference is that our morphisms force computons to interact exclusively at the boundaries and that our colimit constructions define explicit control flow structures. CommUnity colimits serve to merge (or amalgamate) designs compositionally à la Definition \ref{def:pushout-computation}, leaving control flow implicit without any support to separate it from data. 

Prosave \cite{bures_procom_2008} is a design language built on top of the ProCom component model, which was inspired on \cite{hanninen_rubus_2008} to allow the definition of nested structures of interconnected components. Like computons, Procom components are passive units of computation with explicit separation of control ports and data ports. Despite of this similarity, Procom is not compositional since it does not provide algebraic operators to perform control-based composition, but just informal programming constructs for connecting ports either directly or indirectly. Indirect connection is done through so-called connectors which establish control or data flow interaction between components via message passing. As the model is not compositional, Procom composites do not offer a clear separation of concerns like their internal components. They rather have data ports only where both control flow and data flow terminate. 

SCADE \cite{colaco_scade_2017} is a similar component model which integrates an imperative language (i.e., Esterel \cite{berry_esterel_1992}) and a functional language (i.e., Lustre \cite{halbwachs_synchronous_1991}) to define control flow and data flow, respectively. Particularly, so-called Safe-State Machines (SSMs) model the discrete control part of a system, whereas Lustre blocks serve to continuously process data. Like Prosave, SCADE does not provide formal operators for defining control-based composite blocks, but just programming constructs to non-compositionally assemble a system.

In the same line of work, \cite{arellanes_decentralized_2023,lau_component_2011,stepan_design_2011} describe a component model that provides two orthogonal dimensions to manage control flow and data flow separately. The model encapsulates control since it offers composition operators to define sequential, parallel or branching composites in a hierarchical, bottom-up manner. Unfortunately, the semantics of the model is semi-formal \cite{arellanes_evaluating_2020,arellanes_exogenous_2017} so it is not possible to precisely determine whether the model is fully compositional or not. Also, components do not have separate ports for data and control, but just control ports. Consequently, the data dimension is implicitly defined in the underlying composition mechanism whose goal is to build complex workflows from simpler ones.

Workflow Nets (WF-nets) \cite{van_der_aalst_application_1998,van_der_aalst_soundness_2011} provide support for modelling workflow processes in the form of control-driven computations. As they offer well-founded semantics built upon Petri nets, WF-nets formalise the notion of workflow graphs which are traditionally specified through industry-oriented languages such as UML diagrams, Event-driven Process Chains or the BPMN notation. WF-nets do not separate control from data and do not provide formal operators for explicitly and compositionally defining sequential, parallel, branching or iterative composites. The issue of the separation of concerns is resolved by RWFN-nets \cite{prisecaru_resource_2008} which unify extended WF-nets and so-called resource nets for separating the process and resource perspectives of a workflow. Although the model provides a clear separation of concerns, there is not a clear distinction between input and output data, and composition is not algebraically defined. Therefore, RWFN-nets do not separate data and control compositionally. Other Petri net based approaches, for workflow construction, that non-compositionally separate control flow and data flow are the functor model \cite{ohba_functor_1981}, extended-time nets \cite{peng_automated_1994}, the FunState model \cite{thiele_funstate-internal_1999} and dual flow nets \cite{varea_dual_2006}. 

Existing compositional approaches built upon Petri net foundations rely on the notion of open interfaces to the external world. Specially designated open places are particularly used by open Petri nets (ONets) \cite{baez_open_2020} to construct complex behaviours from simpler ones. In this framework, ONet composition is realised by gluing the output places of one net with the input places of another. As this composition mechanism is characterised as a pushout in a categorical setting \cite{baldan_compositional_2005}, ONets are compositional. An ONet morphism resembles a computon morphism in the sense that input and output places can be preserved upon transformation (see Proposition \ref{prop:computon-morphism-inports-outports}). Nevertheless, like Definition \ref{def:pushout-computation}, a pushout operation just serves for merging two ONets via a common object so that there are no specific operators for explicitly defining sequential, parallel, branching or iterative composites (i.e., ONets do not encapsulate explicit control flow). Petri box calculus \cite{best_box_2002}, Open WF-nets \cite{wolf_does_2009}, Petri nets with interface \cite{baldan_modular_2015}, nets with boundaries \cite{bruni_connector_2013} and Petri net components \cite{kindler_compositional_1997} also rely on the notion of open interfaces. Like ONets, all these Petri-net-based approaches do not separate data from control.

Although they do not separate concerns, Whole-grain Petri nets \cite{kock_whole-grain_2022} deserve a mention since, unlike classical Petri net theory and like computons, they abolish the traditional notion of multisets of places, typically expressed as a free commutative monoid $S^\oplus$ on a set $S$ of places (cf. Definition \ref{def:petri-net-cat}). Accordingly, they also work upon a similar categorical scheme to \textbf{Comp}, in order to define concrete instances of Whole-grain nets (cf. \cite{kock_whole-grain_2022,patterson_categorical_2022}). The difference is that \textbf{Comp} has objects that enable computons to have a clear distinction between control and data ports. Another difference is that our theory identifies particular classes of computon objects that can be used as building blocks to define more complex computons through sequencing, parallelising, branching or iteration operations. Although primitive computons are isomorphic to Whole-grain corollas, Whole-grain Petri nets do not distinguish between different types of corollas (e.g., join or fork corollas).

Within the realm of related compositional models, we also find string diagrams \cite{piedeleu_introduction_2025,hinze_introducing_2023,pavlovic_programs_2023}, whose origins trace back some 50 years to Penrose's graphical notation for tensor networks \cite{penrose_applications_1971}, later given a sound and complete categorical foundation by Joyal--Street \cite{joyal_geometry_1991}. To date, such diagrams have been widely applied across several domains, including quantum mechanics \cite{coecke_picturing_2017}, electrical engineering \cite{baez_compositional_2018}, control theory \cite{baez_categories_2015} and natural language processing \cite{coecke_mathematical_2010}, just to name a few. String diagrams are becoming increasingly popular because they offer well-founded syntax to graphically represent symmetric monoidal categories, where boxes represent processes/morphisms and wires express inputs or outputs for those processes \cite{selinger_survey_2011,coecke_interacting_2011,bonchi_full_2015,coecke_mathematical_2016,fong_categorical_2016,bonchi_diagrammatic_2019,fong_string_2020,piedeleu_string_2021,bonchi_string_2022,bonchi_string_2022-2,ghica_rewriting_2023}. In this model, string diagrams can be composed sequentially via morphism composition \cite{bonchi_string_2022}, in parallel via monoidal product \cite{selinger_survey_2011} and into iterative structures via traces \cite{joyal_traced_1996,ghica_rewriting_2023}. Extensions to support probabilistic branching have also been developed \cite{villoria_enriching_2025,piedeleu_complete_2025}. Since sequential composition is done by totally matching outputs with inputs (or domain with codomain) and there are not distinguished wires for representing control, it follows that, unlike computons, not every string diagram can be composed sequentially with one another (cf. Theorem \ref{th:always-sequentiable}). Moreover, there is no distinction between control flow and data flow, as evidenced by the uniform treatment of wires to carry data only \cite{bonchi_string_2022-1}. Some efforts have been made to colouring wires but not to explicitly separate concerns \cite{earnshaw_string_2023}. The separation of concerns has more recently been addressed through tape diagrams \cite{bonchi_deconstructing_2023,bonchi_diagrammatic_2025,bonchi_tape_2025}, which are similar in spirit to \cite{comfort_sheet_2020}, but that allow the formation of nested string diagrams where an inner layer represents data flow and an outer part expresses control flow. Technically, the control/data interplay is captured by the laws of rig categories with finite biproducts, featuring two symmetric monoidal structures on relations, ${(Rel,\oplus,0)}$ and ${(Rel,\otimes,1)}$, which respectively serve to model control and data flow. Both monoidal structures can admit traces to model iteration \cite{bonchi_diagrammatic_2025} and, like string diagrams, total sequencing, asynchronous parallelising and branching are supported. Unfortunately, the notions of synchronous parallelising and partial sequencing are not considered, since it is assumed that data always follows control \cite{bonchi_tape_2025}. As a result, tape diagrams lack the expressive power to model scenarios where asynchronous data can delay process execution independently of control. Moreover, no independent extraction of control flow and data flow graphs from tape diagrams has yet been attempted in a categorical setting. The computon model accommodates this through the functors presented in Section \ref{sec:control-data-structures}. 

The colimit-based composition operators described in Section \ref{sec:composite-computons} might resemble works from category-theoretic cybernetics. For example, \cite{ehresmann_memory_2007} proposes the use of general colimits to bind components of complex systems. Although binding is hierarchical and bottom-up, as in the computon model, the proposed operations do not define explicit control flow structures capable of operationally enacting what the authors describe as causal and informational interactions, which correspond to those enabled by computons. Composition machines \cite{arellanes_composition_2024} attempt to address this limitation by composing components into total sequential structures. Rather than using colimits, composition is realised by morphism composition, applied dynamically to generate spaces of sequential composites in discrete time. Unlike computons, there are no operations for forming branching, iterative or parallel composites. A glance at Figure \ref{fig:application-LSTM-composites}(d) in Section \ref{sec:applications} reveals that the structure of a composite computon is like a membrane in which other computons reside and that can be part of another membrane/composite. An edge connected to/from a composite's e-port is akin to a fibre which can traverse other membranes, as long as the e-port it is connected to/from does not become an i-port. This analogy resembles the structural organisation of a P-system \cite{paun_computing_2000} where membranes are delimiting compartments of multisets of objects that evolve according to bio-inspired rules. Other models resembling this structural analogy include Architectural Design Rewriting \cite{bruni_formal_2011}, Fractal \cite{bruneton_fractal_2006} and Robin Milner's bigraphs \cite{milner_space_2009}. Unfortunately, all these models do not separate data and control, and some of them just consider these dimensions implicitly.
\section{Conclusions and Future Directions} \label{sec:conclusions}

In this paper, we presented a model of high-level computation in which computons are first-class semantic entities which structurally possess a number of computation units that can be connected to/from two types of ports: control ports and data ports. Computons are objects in a functor category, denoted ${\textbf{Set}^\textbf{Comp}}$, where two major classes of objects reside. The first class is that of trivial computons which have just ports and no computation units. The second class pertains to primitive computons which are fully connected entities in the sense they have a unique computation unit to which all ports are attached. These two classes serve as building blocks to define complex computons via category-theoretic operations. We particularly presented separate operations to inductively form sequential, parallel, branching and iterative composite computons. In Section \ref{sec:composite-computons}, we proved that all of them are connected in the sense of Definition \ref{def:computon-connected}, meaning there is a sequence of information flows from every non e-outport to some e-outport. As the model is compositional, composites exhibit the same properties as their constituents, i.e., they have the same structure with a clear separation of control flow and data flow. 

Generally speaking, both control flow and data flow are inextricably present in any classical high-level computation (e.g., a workflow process), so it is crucial to separately reason about them for verification, maintainability and optimisation purposes. For example, in Section \ref{sec:applications}, we leveraged the separation of concerns of the proposed model to show how control flow can be transformed into a BPMN diagram without analysing data flow at all, and how data flow can be converted into a DFG in standard notation without considering control flow at all. Evidently, model transformation is not the only way of exploiting the separation of concerns of our proposal. By leveraging the fact that the behaviour of a computon can be expressed as a token game, it is also possible to use standard Petri net tools or relevant graph-based analysis techniques to separately verify computing properties, such as reachability of control flow only or data flow only. Taking advantage of graph-based techniques can also enable an optimal implementation in which functional computons exchange data decentrally while composites coordinate control flow hierarchically \cite{arellanes_decentralized_2023}. Although our model does not consider explicit structures for data processing (e.g., map-reduce or filter constructs), because data flow is ultimately governed by control flow, we acknowledge that introducing them is important to increase the expressivity of composite computons. However, doing this in a compositional manner requires further investigation. 

Enabling compositionality is also important to induce modularity which is a well-known feature for reusing computations at scale. Modularity does not imply compositionality because modules can be constructed in many different ways (not necessarily algebraically). When an algebraic composition mechanism is used to realise this feature, computation properties are preserved across all composition levels. In our proposal, the separation of control flow and data flow is one of such properties. Thus, as computons only interact through their respective e-ports, composite computons can be perceived as modular black-boxes that encapsulate control and data flow structures. Although branching is supported by our theory, not every pair of computons is a candidate for defining a branching structure, as described in Section \ref{sec:branching-computons}. 

In Section \ref{sec:iterative-computons}, we showed that a head- or a tail-iterative structure can be formed for any connected computon (see Theorems \ref{th:computon-iterative-head-always} and \ref{th:computon-iterative-tail-always}). Likewise, in Sections \ref{sec:sequential-computons} and \ref{sec:parallel-computons}, we proved that any pair of connected computons can always be composed sequentially (see Theorem \ref{th:always-sequentiable}) or in parallel (see Theorems \ref{th:computon-parallel-async-always} and \ref{th:always-parallelisable-sync}) regardless of the data they require or produce. This is because, intuitively, a computon has at least one ec-outport that can always be matched with at least one ec-inport of another. Matching all the e-outports of one computon with all the e-inports of another one gives rise to total sequential composition which, to the best of our knowledge, is the \emph{de facto} way of sequencing computations nowadays (cf. \cite{fong_algebra_2016}). 

In this paper, we argue that sequencing is a particular form of merging because the former can be expressed in terms of the latter. Particularly, in our proposal, merging corresponds to a pushout operation in $\textbf{Set}^\textbf{Comp}$ (see Definition \ref{def:pushout-computation}), while sequencing is characterised as a pushout with restrictions in the same category (see Definition \ref{def:computon-sequential}). As sequencing cannot only be done totally but also partially, our sequencing mechanism is more general than those prevailing in the existing literature. Partial composition entails that non-matching e-ports are preserved across every composition level (e.g., see Figures \ref{fig:computon-sequential-example} and \ref{fig:computon-parallel-sync-example}).  

If computons are seen as relations from e-inports to e-outports, our composition mechanism provides the basis to redefine the current notion of composition of relations which states that the composite ${S \circ R}$ of ${R \subseteq X \times Y}$ and ${S \subseteq Y \times Z}$ is given by ${\{(x,z) \mid \exists y[R(x,y)~\land~S(y,z)]\}}$. Since ${S \circ R}$ is a subset of ${X \times Z}$, it is evident that some relations in ${X \times Y}$ and in ${Y \times Z}$ are lost. By resorting to the foundations laid in this paper, a more structure-preserving definition emerges: ${(S \circ R) \cup [(X \times Y) \setminus (S \circ R)] \cup [(Y \times Z) \setminus (S \circ R)]}$. Thus, rather than being a subset of ${X \times Z}$, a composite relation would be a subset of ${(X \cup Y) \times (Y \cup Z)}$. In the future, we would like to further investigate this new notion derived from the foundations of partial sequential composition.

Defining computons as preorders in a categorical setting can be achieved by borrowing ideas from resource theories \cite{coecke_mathematical_2016}. We hypothesise there are symmetric monoidal categories in which computons are morphisms and ports are objects. Defining categories of this sort can be helpful to study the operational semantics of composite computons through the arrow of time. Particularly, \emph{v-categories} and \emph{v-profunctors} can provide theoretical underpinnings for formally answering specific questions about the execution of computons. Another potential direction is to study the operational semantics of computons from the lenses of polynomial-style finite-set configurations and etale maps in the context of Whole-grain Petri nets and processes. Studying operational semantics from different angles is possible due to the separation between composition and execution semantics of the proposed model. 

\appendix

\section{Mapping Between Petri Net Syntax and Computon Syntax} \label{sec:appendix-mapping}
\setcounter{table}{0}

Table \ref{tab:syntax-mapping} presents a mapping from Petri net syntax to computon syntax, given by any of the three functorial constructions presented in Section \ref{sec:operational-semantics}, which is useful to discuss the operational semantics of computons. A glance at this table reveals that, in general, places with no incoming arrows correspond to e-inports, whereas places with no outgoing arrows correspond to e-outports. This reflects the fact that e-inports and e-outports receive and send information from and to the external world, respectively. 

\begin{table}[!h]
  \begin{center}
    \caption{Mapping from computon syntax to Petri net syntax where $n_j$ is a natural number greater than zero for all ${j=1,\ldots,14}$.}
    \label{tab:syntax-mapping}
    \begin{tabular}{ |c|c|c|c| } 
 		 \hline
		 & & Computon syntax & Petri net syntax \\
		 \hline
		 \multirow{3}{*}{Control} & ec-inport &
		 \begin{tikzpicture}		    
		 	\qin{q0}{0}{0}{};
		 \end{tikzpicture} &
		 \begin{tikzpicture}
		     \node at (-0.25,0){};
		 	 \node[place,label={180:},minimum size=3mm] (q0) at (0.1,0) {}; 
		 	 \draw[-latex,thick] (q0) -- (0.6,0);			 
		 \end{tikzpicture}
		 \\ \cline{2-4}
		 & ec-outport &		  
		 \begin{tikzpicture}		 
		     \node at (0,0){};
			 \qout{q1}{0.1}{0}{};
		 \end{tikzpicture} &
		 \begin{tikzpicture}
		     \node at (0.9,0){};
		 	 \node[place,label={360:},minimum size=3mm] (q1) at (0.5,0) {};
			 \draw[-latex,thick] (0,0) -- (q1);		     
		 \end{tikzpicture}
		 \\ \cline{2-4}
		 & ec-inoutport &		  
		 \begin{tikzpicture}
		 	 \node at (0,0){};		 
		 	 \qmatch{q1}{-0.1}{0}{};
		 \end{tikzpicture} &
		 \begin{tikzpicture}
		 	 \node[place,label={},minimum size=3mm] (q1) at (0.5,0) {};
		 \end{tikzpicture}
		 \\ \cline{2-4}
		 & ic-port &
		 \hspace*{0em}		  
		 \begin{tikzpicture}
		   \qmatch{qx}{0.4}{0}{};
		 	 \flow{{-0.2,0}}{qx}{dashed}{};\flow{qx}{{1,0}}{dashed}{};
		 \end{tikzpicture} &
		 \begin{tikzpicture}
			 \node[place,label={},minimum size=3mm] (p1) at (0.5,0) {};
			 \draw[-latex,thick] (0,0) -- (p1);\draw[-latex,thick] (p1) -- (1,0);
		 \end{tikzpicture}
		 \\ \hline
		 \multirow{3}{*}{Data} & ed-inport &		  
		 \begin{tikzpicture}
			 \din{o1}{0}{0}{$n_1$};
		 \end{tikzpicture} &
		 \begin{tikzpicture}
		     \node at (0.1,0){};
			 \node[place,label={180:},minimum size=3mm] (i1) at (0.5,0) {};
		 	 \draw[-latex,thick] (i1) -- (1,0);
		 \end{tikzpicture}
		 \\ \cline{2-4}
		 & ed-outport &		  
		 \begin{tikzpicture}		 	
		    \node at (0,0){};
		 	\dout{i1}{0.3}{0}{$n_2$};
		 \end{tikzpicture} &
		 \begin{tikzpicture}
		    \node at (0.47,0){};
		    \node[place,label={360:},minimum size=3mm] (o1) at (0.1,0) {};
			\draw[-latex,thick] (-0.4,0) -- (o1);		 	
		 \end{tikzpicture}
		 \\ \cline{2-4}
		 & ed-inoutport &		  
		 \begin{tikzpicture}
		 	\node at (0,0){};		 	
		 	\dmatch{i1}{0.2}{0}{$n_3$}{above};
		 \end{tikzpicture} &
		 \begin{tikzpicture}		 	
		    \node[place,label={},minimum size=3mm] (o1) at (0.7,0) {};
		 \end{tikzpicture}		 		  
		 \\ \cline{2-4}
		 & id-port &
		 \hspace*{0em}		  
		 \begin{tikzpicture}
		 	 \dmatch{dx}{0.4}{0}{$n_4$}{above};
		 	 \flow{{-0.2,0}}{dx}{}{};\flow{dx}{{1,0}}{}{};
		 \end{tikzpicture} &
		 \begin{tikzpicture}
			 \node[place,label={},minimum size=3mm] (p1) at (0.5,0) {};
			 \draw[-latex,thick] (0,0) -- (p1);\draw[-latex,thick] (p1) -- (1,0);
		 \end{tikzpicture}
		 \\ \hline
		 \multicolumn{2}{|c|}{Trivial Computon} &		 		     	
    	\begin{tikzpicture}    	 
\qmatch{q1}{1.2}{2.7}{}
\node[label={$\vdots$}] (dots1) at (1.2,1.85) {};
\qmatch{qi}{1.2}{1.8}{}
\dmatch{d1}{1.2}{1.4}{$n_5$}{left};
\node[label={$\vdots$}] (dots3) at (1.2,0.51) {};
\dmatch{dj}{1.2}{0.5}{$n_6$}{left};
\node(del) at (1.55,0.5){};
    	\end{tikzpicture} &
    	\begin{tikzpicture}
\node(del) at (0,2.9){};
\node[place,label={[xshift=-0.6cm,yshift=-0.45cm]:},minimum size=3mm] (q0) at (1.2,2.7) {};
\node[label={$\vdots$}] (dots1) at (1.2,1.85) {};
\node[place,label={[xshift=-0.6cm,yshift=-0.45cm]:},minimum size=3mm] (qi) at (1.2,1.8) {};
\node[place,label={[xshift=-0.6cm,yshift=-0.45cm]:},minimum size=3mm] (d1) at (1.2,1.4) {};
\node[label={$\vdots$}] (dots3) at (1.2,0.51) {};
\node[place,label={[xshift=-0.6cm,yshift=-0.45cm]:},minimum size=3mm] (dj) at (1.2,0.5) {};
\node(del2) at (2.4,0.5){};
    	\end{tikzpicture}
    	\\ \hline
		 \multicolumn{2}{|c|}{Functional Computon} &		 		     	
    	\begin{tikzpicture}
	    	\computonPrimitive{1}{0}{1}{1.9}{$\lambda$}
	    	
      \qin{1q0}{0.2}{1.7}{}
			\din{1i1}{0.2}{1.1}{$n_7$};
			\node[label={$\vdots$}] (1i2) at (0.6,0.25) {};
			\din{1in}{0.2}{0.2}{$n_8$};

			\qout{1q1}{2}{1.7}{}
			\dout{1o1}{2}{1.1}{$n_9$};
			\node[label={$\vdots$}] (1o2) at (2.4,0.25) {};
			\dout{1om}{2}{0.2}{$n_{10}$};
    	\end{tikzpicture} &
    	\begin{tikzpicture}
			\node[transition,fill=black,minimum width=0.5mm,minimum height=10mm,label=$ $] (t1) at (1,0.7) {};

			\node[place,label={180:},minimum size=3mm] (q0) at (0,1.7) {};
			\node[place,label={180:},minimum size=3mm] (i1) at (0,1.1) {};
			\node[label={$\vdots$}] (i2) at (0,0.15) {};
			\node[place,label={180:},minimum size=3mm] (in) at (0,0) {};

			\node[place,label={0:},minimum size=3mm] (q1) at (2,1.7) {};
			\node[place,label={0:},minimum size=3mm] (o1) at (2,1.1) {};
			\node[label={$\vdots$}] (o2) at (2,0.15) {};
			\node[place,label={0:},minimum size=3mm] (om) at (2,0) {};

			\draw[-latex,thick] (q0) -- (t1);
			\draw[-latex,thick] (i1) -- (t1);
			\draw[-latex,thick] (in) -- (t1);
      		\draw[-latex,thick] (t1) -- (q1);
      		\draw[-latex,thick] (t1) -- (o1);
      		\draw[-latex,thick] (t1) -- (om);
    	\end{tikzpicture}
    	\\ \hline
		 \multicolumn{2}{|c|}{Fork Computon} &		 		     	
    	\begin{tikzpicture}	    
			\fork{0}{0}{}{}{}
    	\end{tikzpicture} &
    	\begin{tikzpicture}
			\node[transition,fill=black,minimum width=0.5mm,minimum height=10mm,label=$ $] (t1) at (1,0.7) {};

			\node[place,label={180:},minimum size=3mm] (q0) at (0,0.7) {};			
			\node[place,label={0:},minimum size=3mm] (q1) at (2,0.5) {};			
			\node[place,label={0:},minimum size=3mm] (q2) at (2,1) {};

			\draw[-latex,thick] (q0) -- (t1);
 	    \draw[-latex,thick] (t1) -- (q1);
      \draw[-latex,thick] (t1) -- (q2);
    	\end{tikzpicture}
    	\\ \hline
    	\multicolumn{2}{|c|}{Join Computon} &		 		     	
    	\begin{tikzpicture}	    
			\join{0}{0}{}{}{}
    	\end{tikzpicture} &
    	\begin{tikzpicture}
			\node[transition,fill=black,minimum width=0.5mm,minimum height=10mm,label=$ $] (t1) at (1,0.7) {};

			\node[place,label={180:},minimum size=3mm] (q0) at (0,0.5) {};			
			\node[place,label={180:},minimum size=3mm] (q1) at (0,1) {};			
			\node[place,label={0:},minimum size=3mm] (q2) at (2,0.7) {};

			\draw[-latex,thick] (q0) -- (t1);
 	    \draw[-latex,thick] (q1) -- (t1);
      \draw[-latex,thick] (t1) -- (q2);
    	\end{tikzpicture}
    	\\ \hline
    	\multicolumn{2}{|c|}{Composite Computon} &		 		 
    	\begin{tikzpicture}
	    	\node at (-0.3,0){};
	    	\computonComposite{1.14}{-0.35}{1}{2.25}
	    	
    		\qin{1q0}{0.34}{1.7}{}
    		\node[label={$\vdots$}] (1oy) at (0.74,0.9) {};
    		\qin{1q1}{0.34}{0.9}{}
			\din{1i1}{0.34}{0.6}{$n_{11}$};
			\node[label={$\vdots$}] (1i2) at (0.74,-0.2) {};
			\din{1in}{0.34}{-0.15}{$n_{12}$};

			\qout{1q2}{2.14}{1.7}{}
			\node[label={$\vdots$}] (1ox) at (2.54,0.9) {};
			\qout{1q3}{2.14}{0.9}{}
			\dout{1o1}{2.14}{0.6}{$n_{13}$};
			\node[label={$\vdots$}] (1o2) at (2.54,-0.2) {};
			\dout{1om}{2.14}{-0.15}{$n_{14}$};
    	\end{tikzpicture} &
    	\begin{tikzpicture}
			\node[place,label={180:},minimum size=3mm] (q0) at (0,1.7) {};
			\node[label={$\vdots$}] (ix) at (0,0.8) {};
			\node[place,label={180:},minimum size=3mm] (q1) at (0,0.8) {}; 			
			\node[place,label={180:},minimum size=3mm] (i1) at (0,0.4) {};
			\node[label={$\vdots$}] (i2) at (0,-0.5) {};
			\node[place,label={180:},minimum size=3mm] (in) at (0,-0.5) {};
			
			\node at (1,1.2){$\cdots$};
			\node at (1,0.6){$\cdots$};			
			\node at (1,-0.1){$\cdots$};

			\node[place,label={0:},minimum size=3mm] (q2) at (2,1.7) {};
			\node[label={$\vdots$}] (iy) at (2,0.8) {};
			\node[place,label={0:},minimum size=3mm] (q3) at (2,0.8) {};
			\node[place,label={0:},minimum size=3mm] (o1) at (2,0.4) {};
			\node[label={$\vdots$}] (o2) at (2,-0.5) {};
			\node[place,label={0:},minimum size=3mm] (om) at (2,-0.5) {};

			\draw[-latex,thick] (q0) -- (0.5,1.7);
			\draw[-latex,thick] (q1) -- (0.5,0.8);
			\draw[-latex,thick] (i1) -- (0.5,0.4);
			\draw[-latex,thick] (in) -- (0.5,-0.5);
			\draw[-latex,thick] (1.5,1.7) -- (q2);
			\draw[-latex,thick] (1.5,0.8) -- (q3);
			\draw[-latex,thick] (1.5,0.4) -- (o1);
			\draw[-latex,thick] (1.5,-0.5) -- (om);
    	\end{tikzpicture}
    	\\ \hline
	\end{tabular}
  \end{center}
\end{table}

\bibliographystyle{unsrtnat} 
\bibliography{refs}

\end{document}